%% LyX 2.1.2.2 created this file.  For more info, see http://www.lyx.org/.
%% Do not edit unless you really know what you are doing.
\documentclass[11pt,english]{article}
\usepackage[T1]{fontenc}
\usepackage[latin9]{inputenc}
\usepackage{amsthm}
\usepackage{amsmath}
\usepackage{amssymb}

\usepackage{algpseudocode}
\usepackage{algorithm}

\usepackage{graphicx}

\usepackage{subcaption}

\usepackage[top=1in, bottom=1in, left=1in, right=1in]{geometry}

\makeatletter
%%%%%%%%%%%%%%%%%%%%%%%%%%%%%% Textclass specific LaTeX commands.
\theoremstyle{plain}
\newtheorem{thm}{\protect\theoremname}[section]
  \theoremstyle{definition}
  \newtheorem{defn}[thm]{\protect\definitionname}
  \theoremstyle{remark}
  \newtheorem{claim}[thm]{\protect\claimname}
  \theoremstyle{plain}
  \newtheorem{lem}[thm]{\protect\lemmaname}
  \theoremstyle{plain}
  
  \newtheorem{fact}[thm]{Fact}

%%%%%%%%%%%%%%%%%%%%%%%%%%%%%% User specified LaTeX commands.
\usepackage{fullpage}
\usepackage{color}

\makeatother

\usepackage{babel}
  \providecommand{\claimname}{Claim}
  \providecommand{\definitionname}{Definition}
  \providecommand{\lemmaname}{Lemma}
\providecommand{\theoremname}{Theorem}

\mathchardef\mhyphen="2D

\newcommand{\Next}{\textup{Next}}
\newcommand{\cmatch}{c_{\mathrm{match}}}
\newcommand{\cdelx}{c_{\mathrm{del}\mhyphen\mathrm{x}}}
\newcommand{\cdely}{c_{\mathrm{del}\mhyphen\mathrm{y}}}
\newcommand{\csubst}{c_{\mathrm{subst}}}
\newcommand{\OVH}{{OVH}}
\newcommand{\UOVH}{{UOVH}}
\newcommand{\x}{{\textsc{x}}}
\newcommand{\y}{{\textsc{y}}}
\newcommand{\strc}{{\cal S}}
\newcommand{\aligngad}{alignment}
\newcommand{\Aligngad}{Alignment}
\newcommand{\EDIT}{\textup{Edit}}
\newcommand{\Edit}{\EDIT}
\newcommand{\EDITallc}{\ensuremath{\EDIT(\cdelx,\cdely,\cmatch,\csubst)}}
\newcommand{\EDITc}{\ensuremath{\EDIT(\csubst)}}
\newcommand{\DTW}{\textup{DTW}}
\newcommand{\dDTW}{\delta_\textup{DTW}}
\newcommand{\dLCS}{\delta_\textup{LCS}}

\newcommand{\LPS}{\textup{LPS}}
\newcommand{\LTS}{\textup{LTS}}

\newcommand{\dEDITc}{\delta_{\EDITc}}
\newcommand{\dEDIT}{\delta_{\EDIT}}
\newcommand{\eps}{\ensuremath{\varepsilon}}

\newcommand{\Val}[1]{\delta(#1)}
\newcommand{\rev}{\mathrm{rev}}

\newcommand{\claref}[1]{Claim~\ref{cla:#1}}
\renewcommand{\algref}[1]{Algorithm~\ref{alg:#1}}

\newcommand{\figref}[1]{Figure~\ref{fig:#1}}

\newcommand{\defref}[1]{Definition~\ref{def:#1}}
\newcommand{\facref}[1]{Fact~\ref{fac:#1}}

\newcommand{\thmref}[1]{Theorem~\ref{thm:#1}}
\newcommand{\thmrefs}[2]{Theorems~\ref{thm:#1} and~\ref{thm:#2}}
\newcommand{\thmrefss}[3]{Theorems~\ref{thm:#1}, \ref{thm:#2}, and~\ref{thm:#3}}

\newcommand{\lemref}[1]{Lemma~\ref{lem:#1}}

\newcommand{\secref}[1]{Section~\ref{sec:#1}}

\newcommand{\eq}[1]{equation~\eqref{eq:#1}}
\newcommand{\ineq}[1]{inequality~\eqref{eq:#1}}

\begin{document}
\global\long\def\SC{\mathbf{\Pi}}

\global\long\def\CG{\mathrm{CG}}

\global\long\def\VG{\mathrm{VG}}
\global\long\def\NVG{\mathrm{NVG}}

\global\long\def\CandA{\mathrm{GA}}

\global\long\def\zeroes{\mathrm{zeroes}}

\global\long\def\guard{\mathrm{G}}

\global\long\def\dist{\mathrm{dist}}

\global\long\def\Vinner{V_{\mathrm{in}}}

\global\long\def\Vouter{V_{\mathrm{out}}}

\global\long\def\Xz{X^{\mathbf{0}}}

\global\long\def\Xo{X^{\mathbf{1}}}

\global\long\def\Yz{Y^{\mathbf{0}}}

\global\long\def\Yo{Y^{\mathbf{1}}}

\global\long\def\Cz{C^{\mathbf{0}}}

\global\long\def\Co{C^{\mathbf{1}}}

\global\long\def\Gz{G^{\mathbf{0}}}

\global\long\def\Go{G^{\mathbf{1}}}

\global\long\def\inputs{{\cal I}}

\global\long\def\algn{{\cal A}}

\global\long\def\type{\mathrm{type}}

\global\long\def\occ{\mathrm{occ}}

\global\long\def\poly{\mathrm{poly}}

\global\long\def\bigOh{{\cal O}}

\global\long\def\zleft{\mathbf{0}_{\x}}

\global\long\def\zright{\mathbf{0}_{\y}}

\global\long\def\oleft{\mathbf{1}_{\x}}

\global\long\def\oright{\mathbf{1}_{\y}}

\global\long\def\dtw{\DTW}
\global\long\def\cost{d}

\global\long\def\LCS{\textup{LCS}}
\global\long\def\OV{\textup{OV}}
\global\long\def\SETH{\textup{SETH}}
\global\long\def\kSAT{\textup{$k$-SAT}}

\global\long\def\Oh{\bigOh}

\newcommand{\sugg}[2]{{\color{green} #1} {\color{red}\tiny  #2}}

\title{Quadratic Conditional Lower Bounds for String Problems \\ and Dynamic Time Warping}
\author{Karl Bringmann\thanks{Institute of Theoretical Computer Science, ETH Zurich, Switzerland, \texttt{karlb@inf.ethz.ch}} \and Marvin K\"unnemann\thanks{Max Planck Institute for Informatics, Saarbr\"ucken, Germany, \texttt{marvin@mpi-inf.mpg.de}}}
\maketitle

\medskip

\begin{abstract}
  
  Classic similarity measures of strings are longest common subsequence and Levenshtein distance (i.e., the classic edit distance). A classic similarity measure of curves is dynamic time warping.
  These measures can be computed by simple $\Oh(n^2)$ dynamic programming algorithms, and despite much effort no algorithms with significantly better running time are known. 
  
  We prove that, even restricted to binary strings or one-dimensional curves, respectively, these measures do not have strongly subquadratic time algorithms, i.e., no algorithms with running time $\Oh(n^{2-\eps})$ for any $\eps > 0$, unless the Strong Exponential Time Hypothesis fails. 
  We generalize the result to edit distance for arbitrary fixed costs of the four operations (deletion in one of the two strings, matching, substitution), by identifying trivial cases that can be solved in constant time, and proving quadratic-time hardness on binary strings for all other cost choices. 
  This improves and generalizes the known hardness result for Levenshtein distance [Backurs, Indyk STOC'15] by the restriction to binary strings and the generalization to arbitrary costs, and
  adds important problems to a recent line of research showing conditional lower bounds for a growing number of quadratic time problems.

  As our main technical contribution, we introduce a framework 
  for proving quadratic-time hardness of similarity measures. To apply the framework it suffices to construct a single gadget, which encapsulates all the expressive power necessary to emulate a reduction from satisfiability. 
  
  Finally, we prove quadratic-time hardness for longest palindromic subsequence and longest tandem subsequence via reductions from longest common subsequence, showing that conditional lower bounds based on the Strong Exponential Time Hypothesis also apply to string problems that are not necessarily similarity measures.
\end{abstract}

\section{Introduction}

For many classic polynomial time problems the worst-case running time is stagnant for decades, e.g., a classic algorithm solves the problem in time $\tilde \Oh(n^2)$, up to logarithmic factors, but it is unknown whether any faster algorithms exist. 
For these problems we would like to explain why it is hard to find faster algorithms. One type of explanation is a \emph{conditional lower bound}. Here we assume that some problem $P$ has no algorithms faster than a long-standing time barrier and prove resulting lower bounds for other problems, via reductions from $P$. 
The most prominent such approach is 3SUM-hardness, which dates back to 1995~\cite{GajentaanO95}: Assuming that 3SUM has no (strongly) subquadratic algorithms, many lower bounds have been shown, especially for problems in computational geometry. However, for many other problems it seems to be impossible to find a reduction from 3SUM. 

In the last years, new assumptions emerged that allow to prove conditional lower bounds for problems where 3SUM-hardness does not seem to apply. The prime example is the Strong Exponential Time Hypothesis (\SETH), which was introduced by Impagliazzo and Paturi~\cite{ImpagliazzoP01} and asserts that satisfiability has no algorithms that are much faster than exhaustive search.

\paragraph{\textup{\textbf{Hypothesis}} \SETH:} \emph{For no $\eps > 0$, \kSAT\ can be solved in time $\Oh(2^{(1-\eps)N})$ for all $k \ge 3$.} 

\paragraph{}
Note that exhaustive search takes time $\Oh(2^N)$ and the best-known algorithms for \kSAT\ have a running time of the form $\Oh(2^{(1-c/k)N})$ for some constant $c>0$~\cite{paturi2005}. Thus, \SETH\ is a reasonable hypothesis and, due to lack of progress in the last decades, can be considered unlikely to fail. 

The idea to use \SETH\ to prove conditional lower bounds for polynomial time problems dates back to 2005~\cite{williams05}, but only in recent years more and more such conditional lower bounds have been proven, see, e.g.,~\cite{
abboud_quadratic_2015,AbboudVW14,AbboudWW14,Arthurs,
bringmann_walking_2014,PatrascuW10,RodittyVW13}. Two recent examples, that motivated this paper, are the conditional lower bounds for Fr\'echet distance~\cite{bringmann_walking_2014} and Levenshtein distance~\cite{Arthurs}.
Both problems are natural \emph{similarity measures} between two sequences (curves or strings, respectively). In this paper we study additional classic similarity measures between strings and curves. We propose a framework for proving lower bounds for such similarity measures. This allows us to prove quadratic-time hardness of the following problems. 

\paragraph{Edit Distance} 
Given two strings $x,y$ of length $n,m$ ($n \ge m$), 
we start in their first symbols at positions $(1,1)$ and traverse them up to their last symbols at positions $(n,m)$ using the following operations: If we are at positions $(i,j)$ we may (1) delete a symbol in $x$ (this costs $\cdelx$ and we advance to $(i+1,j)$), (2) delete a symbol in $y$ (this costs $\cdely$ and we advance to $(i,j+1)$), (3) match the current symbols, which is only possible if $x[i]=y[j]$ (this costs $\cmatch$ and we advance to $(i+1,j+1)$), or (4) substitute the current symbols, which is only possible if $x[i] \ne y[j]$ (this costs $\csubst$ and we advance to $(i+1,j+1)$). The minimum total cost of such a sequence of operations is called the edit distance of $x$ and $y$, and we denote the problem of computing the edit distance by \EDITallc. The Levenshtein distance (i.e., the classic edit distance) is $\EDIT(1,1,0,1)$. An important special case is the longest common subsequence (LCS) of two strings, which can be seen to be equivalent to $\EDIT(1,1,0,2)$. One obtains more variants for other cost choices, e.g., for aligning DNA sequences a classic choice is $\EDIT(2,2,-1,1)$~\cite{setubalmeidanis1997}.

Edit distance has a natural dynamic programming algorithm with running time $\Oh(nm)$, which is taught in many undergraduate algorithms courses. Since such string distance measures have many applications in bioinformatics and data comparison, Levenshtein distance and LCS are well-studied with a rich literature focussing on approximation algorithms (see, e.g.,~\cite{AndoniKO10}) and algorithms that perform well on special cases (see, e.g.,~\cite{hirschberg1977} and see~\cite{bergroth2000} for a survey). 
However, the best-known worst-case running time (of an exact algorithm) is $\Oh(n m / \log n + n)$~\cite{masek1980}, i.e., algorithms are stuck slightly below quadratic time. Even if we restrict the input to strings over a binary alphabet $\{0,1\}$ no better worst-case running time is known. In this paper we present a possible explanation for this situation by proving conditional lower bounds for edit distance on binary strings, thus improving and generalizing the known quadratic-time hardness for the Levenshtein distance on alphabet size~4~\cite{Arthurs}.

\paragraph{Dynamic Time Warping (\DTW)} 
Fix a metric space $(M,d)$. A sequence of points in $M$ is called a \emph{curve}. 
Consider two curves $x,y$ of length $n,m$ ($n \ge m$). We may \emph{traverse} $x$ and $y$ by starting in their first entries, in any time step advancing to the next entry in $x$ or $y$ or both, and ending in their last entries (see \secref{preliminaries} for details). The cost of such a traversal is the sum over all points in time of the distance between the current entries. The dynamic time warping distance of $x$ and $y$ is the minimal cost of any traversal. This similarity measure can, e.g., readily detect whether two given signals are equal up to time accelerations or decelerations. This property, among others, makes it a very useful measure in practice, with many applications in comparing temporal data such as video and audio, e.g., for speech recognition or music processing (see, e.g.,~\cite{sakoe1978}). The best-known worst-case running time is achieved by a simple dynamic programming algorithm that computes the \DTW\ distance of $x$ and $y$ in time $\Oh(nm)$. To break this apparent barrier in practice, many heuristics have been designed for this problem (see, e.g.,~\cite{salvador2007}).

An important special case that frequently arises in practice is \emph{dynamic time warping on one-dimensional curves}. Here the metric space is $M = \mathbb{R}$ and the distance measure is $d(a,b) := |a-b|$ for any $a,b \in \mathbb{R}$. Even for this important special case the best-known algorithm takes time $\Oh(n m)$. We provide a possible explanation for this situation by proving a conditional lower bound for \DTW\ on one-dimensional curves.

\subsection{Our Results}

\paragraph{Dynamic Time Warping}

As our first main result, we prove a conditional lower bound for \DTW. This shows that strongly subquadratic algorithms for \DTW\ can be considered unlikely to exist. Specifically, obtaining such algorithms is at least as hard as a breakthrough for satisfiability. 

\begin{thm} \label{thm:dtw}
  \DTW\ on one-dimensional curves taking values in $\{0,1,2,4,8\} \subseteq \mathbb{R}$ has no $\Oh(n^{2-\eps})$ algorithm for any $\eps>0$, unless \SETH\ fails.
\end{thm}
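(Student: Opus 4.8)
The plan is to reduce the Orthogonal Vectors problem to $\DTW$ on one-dimensional curves over $\{0,1,2,4,8\}$, following the now-standard recipe for $\SETH$-based lower bounds on similarity measures. Recall that $\SETH$ implies that $\OV$ --- given sets $A=\{a_1,\dots,a_n\}$ and $B=\{b_1,\dots,b_n\}$ of Boolean vectors of dimension $d=\poly(\log n)$, decide whether some $a_i,b_j$ are orthogonal, i.e.\ share no coordinate where both are $1$ --- has no $\Oh(n^{2-\eps})$ algorithm~\cite{williams05}. So it suffices to construct in near-linear time curves $\x,\y$ of length $\Oh(n\cdot\poly(d))=n^{1+o(1)}$ and a threshold $T$ with $\dtw(\x,\y)\le T$ on ``yes''-instances and $\dtw(\x,\y)\ge T+1$ on ``no''-instances (the unit gap is meaningful because all curve values, hence the $\DTW$ distance, are integers).

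The core ingredient is a coordinate gadget. For a bit $\alpha$ appearing as a coordinate of an $A$-vector I define a short curve $\CG_\x(\alpha)$ with values in $\{0,1,2\}$, and analogously $\CG_\y(\beta)$ for $B$-vectors, with two properties: (i) $\dtw(\CG_\x(\alpha),\CG_\y(\beta))$ equals a fixed value $c_0$ whenever $\alpha\wedge\beta=0$, and $c_0+1$ when $\alpha=\beta=1$ --- an ``$\mathrm{AND}$''-type behaviour with a unit gap; and (ii) all four curves have the same length and the same $\DTW$ cost against a flat run of $0$s. Property~(ii) is what later makes vector gadgets interchangeable when matched to padding, and it rules out the naive idea of encoding bits by single symbols. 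Designing such a gadget must exploit the asymmetry that $\DTW$ may stretch a curve but must traverse every point of both curves --- the value $2$ (against the ambient $0$) supplying a ``clash'' cost strictly larger than a ``mismatch'' cost of $1$ --- and one has to verify that no stretched or compressed traversal can average this unit gap away.

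Next, build vector gadgets $\VG_\x(a)=\guard\concat\CG_\x(a_1)\concat\guard\concat\CG_\x(a_2)\concat\cdots\concat\guard$, and likewise $\VG_\y(b)$, where $\guard$ is a short guard block formed from the large values $4$ and $8$. These values are chosen large enough that, in any traversal whose cost stays below the relevant threshold, matching a $4$ or $8$ against a small value --- or an $8$ against a $4$ --- is prohibitively expensive, so the guard blocks of $\VG_\x(a)$ must be aligned essentially one-to-one with those of $\VG_\y(b)$; this forces $\CG_\x(a_k)$ to be aligned with $\CG_\y(b_k)$ for every $k$. Summing the coordinate-gadget costs and adding the fixed guard contribution then gives $\dtw(\VG_\x(a),\VG_\y(b))=\rho+|\{k:a_k=b_k=1\}|$ for a constant $\rho$, which equals $\rho$ exactly when $a\perp b$ and is at least $\rho+1$ otherwise; moreover, property~(ii) implies that all vector gadgets have equal length and equal $\DTW$ cost against a flat run of $0$s.

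Finally, combine the vector gadgets by the standard two-level padding construction. At the inner level, for a fixed $a$ one concatenates $\VG_\y(b_1),\dots,\VG_\y(b_n)$, separated and flanked by guard blocks, into a curve $\y_B$ and compares it against $\x_a:=0^{p}\concat\VG_\x(a)\concat 0^{p}$; since runs of $0$s match each other at zero cost, the optimal traversal aligns $\VG_\x(a)$ with the single cheapest $\VG_\y(b_j)$ and matches the rest of $\y_B$ against the $0$-padding at a cost depending only on $n,d$ (here property~(ii) is essential), so $\dtw(\x_a,\y_B)=C_1+\min_j\dtw(\VG_\x(a),\VG_\y(b_j))$. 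Applying the same idea at the outer level --- concatenating the $\x_{a_i}$ with $0$-padding and comparing against a padded copy of $\y_B$ --- produces curves $\x,\y$ of length $n^{1+o(1)}$ with $\dtw(\x,\y)=C+\min_{i,j}\dtw(\VG_\x(a_i),\VG_\y(b_j))$ for a fixed constant $C$; taking $T:=C+\rho$ completes the reduction. The main obstacle is the gadget engineering inside the five-value alphabet $\{0,1,2,4,8\}$, together with the accompanying ``forcing'' arguments: proving that the large values really pin down the guard alignment, that the coordinate gadget's unit gap survives arbitrary stretching while keeping property~(ii), and that the two layers of $0$-padding decompose the global $\DTW$ exactly into a constant plus the minimum of pairwise distances --- this is where essentially all the case analysis lives.
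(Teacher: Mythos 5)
Your outline has the right overall ingredients --- a reduction from $\OV$, coordinate gadgets with an ``AND''-type gap, guard/separator blocks made from the new large values, two levels of padding, and the alphabet bookkeeping that explains $\{0,1,2,4,8\}$ --- but the construction you propose differs from the paper's in a way that, for $\DTW$ specifically, introduces a gap I do not think can be closed.

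The crux is your use of $0$-padding as the glue at both levels (``$\x_a := 0^p \concat \VG_\x(a) \concat 0^p$'', ``runs of $0$s match each other at zero cost''). This is the right intuition for $\LCS$ and edit distance, where padding zeros must be matched or deleted one by one and therefore incur a controllable cost, but it is precisely what fails for $\DTW$: a long run of $0$s can be collapsed onto a single $0$ on the other side at zero cost, so $0$-padding exerts no ``force'' whatsoever on the traversal. Nothing prevents the optimal traversal from compressing your entire $\VG_\x(a)$ onto a single symbol of $\y_B$, or from stretching one symbol of $\x_a$ over several $\VG_\y(b_j)$'s, and your claimed exact identity $\dtw(\x,\y) = C + \min_{i,j}\dtw(\VG_\x(a_i),\VG_\y(b_j))$ has no mechanism to rule these out. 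You defer this to ``where essentially all the case analysis lives,'' but the case analysis is not just tedious --- it has no lever to pull, because zeros give the traversal too much slack. The paper's Definition~6.3 replaces every occurrence of $0$-padding by a block $M^\kappa$ with $M := 2z$ (twice the current maximum value); since matching $M$ against anything smaller is expensive, these blocks act as anchors and force the paired $M$-blocks of $x$ and $y$ into a planar one-to-one correspondence (Claims in the proof of Lemma~6.4: planarity, no isolated vertices, no path of length~$3$). That forcing argument is the $\DTW$-specific insight your proposal is missing, and it is also what naturally produces the sequence of separator values $2,4,8$ over the three invocations of the gadget.

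A second, smaller divergence: you aim for a pure \emph{selection} structure ($\min_{i,j}$), whereas the paper proves the weaker two-sided sandwich of Definition~3.1/Equation~(2) --- the $\DTW$ cost lies between the best \emph{partial} alignment cost and the best \emph{structured} alignment cost of a window of $m$ indices against $n$. Because of this, the paper needs the normalization step via the auxiliary gadget $S$ (the $\NVG$ construction in Section~3.1), which pins each normalized vector-gadget pair to exactly one of two values $\rho'_0 < \rho'_1$; without it, $\dDTW(\VG(a_i),\VG(b_j))$ is only bounded from below on non-orthogonal pairs, and the sum over $m$ aligned pairs in the outer layer cannot be controlled. Your selection route, if it could be made exact, would sidestep the need for normalization, which is an appealing simplification --- but as argued above it cannot be made exact with $0$-padding in $\DTW$, so you would in any case end up needing something like the alignment-gadget sandwich and the normalization trick.

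Finally, your coordinate gadget over $\{0,1,2\}$ with a unit gap is a reasonable alternative design, but it is not what the paper does: the paper's coordinate values live in $\{0,1\}$ (e.g.\ $\oleft = 1100$, $\zright = 1010$) with $\dDTW(\oleft,\oright)=4$ versus $1$ in the other three cases, and the value $2$ only enters as the first separator $M$. This matters for the alphabet count: the five values $\{0,1,2,4,8\}$ arise as $\{0,1\}$ plus three successive doublings, one per gadget level, not from a three-valued coordinate gadget plus a two-valued guard.
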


\paragraph{Edit Distance}

Our second main result is a classification of \EDITallc\ for all operation costs $\cdelx,\cdely,\cmatch,\csubst$: 
We identify trivial variants where the edit distance is independent of the input $x,y$, and only depends on $n,m$. In this case, it can be computed in constant time. For all remaining choices of the operation costs we prove quadratic-time hardness, even restricted to binary strings. This includes quadratic-time hardness of LCS and Levenshtein distance on binary strings. Compared to the known lower bound for Levenshtein distance~\cite{Arthurs}, our result decreases the alphabet size from 4 to 2 and adds hardness of a large class of problems including LCS.

\begin{thm} \label{thm:EDIT}
  \EDITallc\ can be solved in constant time if $\csubst = \cmatch$ or $\cdelx + \cdely \le \min\{\cmatch, \csubst\}$. Otherwise, \EDITallc\ on binary strings has no $\Oh(n^{2-\eps})$ algorithm for any $\eps>0$, unless \SETH\ fails.
\end{thm}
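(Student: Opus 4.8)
For the first assertion I would use the following normal form for the cost. Fix the input lengths $n\ge m$ and set $\alpha:=\csubst-\cdelx-\cdely$ and $\beta:=\cmatch-\csubst$. An alignment from $(1,1)$ to $(n,m)$ that makes $c$ diagonal moves (matches or substitutions), exactly $k$ of them matches, necessarily makes $n-1-c$ deletions in $\x$ and $m-1-c$ deletions in $\y$, and hence has cost
\[
  (n-1)\cdelx+(m-1)\cdely+\alpha c+\beta k .
\]
So $\dEDITallc(x,y)$ equals this length-dependent constant plus the minimum of $\alpha c+\beta k$ over the realizable pairs $(c,k)$, and which pairs are realizable does not depend on the four costs. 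If $\csubst=\cmatch$ then $\beta=0$, only $c$ matters, and since for every $c\in\{0,\dots,m-1\}$ some alignment uses $c$ diagonal moves irrespective of $x,y$ (take $c$ diagonal moves and then all remaining deletions), the minimum is $\min\{0,(m-1)\alpha\}$, a function of $n,m$ alone. If $\cdelx+\cdely\le\min\{\cmatch,\csubst\}$ then every diagonal move from $(i,j)$ to $(i+1,j+1)$ can be replaced by a deletion in $\x$ followed by a deletion in $\y$ without increasing the cost, so some optimal alignment makes no diagonal move; all such alignments cost $(n-1)\cdelx+(m-1)\cdely$, again independent of $x,y$. In both cases $\dEDITallc$ is computable in $\Oh(1)$.

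\paragraph{Hardness: reduction from Orthogonal Vectors.}
For the second assertion assume $\csubst\ne\cmatch$ and $\cdelx+\cdely>\min\{\cmatch,\csubst\}$. The plan is to instantiate the alignment-gadget framework of this paper for the measure $\dEDITallc$: it turns an instance $\inputs=(A,B)$ of $d$-dimensional Orthogonal Vectors into binary strings $x,y$ of length $\Oh\big((|A|+|B|)\poly(d)\big)$ with $\dEDITallc(x,y)$ equal to a fixed, efficiently computable constant plus $\min_{a\in A,\,b\in B}\dEDITallc(\VG(a),\VG(b))$, once one supplies the framework's building blocks over $\{0,1\}$: constant-length \emph{coordinate gadgets} $\zleft,\oleft$ (used in $x$) and $\zright,\oright$ (used in $y$) whose pairwise $\dEDITallc$-values separate the outcomes $a_ib_i=0$ and $a_ib_i=1$; \emph{vector gadgets} $\VG(a)$ built by interleaving the $d$ coordinate gadgets of $a$; their normalized versions $\NVG$; and \emph{guard gadgets} $\guard$. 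A hypothetical $\Oh(n^{2-\eps})$ algorithm for \EDITallc\ on binary strings would then solve Orthogonal Vectors in time $\Oh\big((|A|+|B|)^{2-\eps}\poly(d)\big)$, refuting \OVH\ and hence \SETH. Before building the gadgets I would normalize the costs: complementing one of the two input strings leaves the edit distance unchanged while exchanging the roles of matching and substituting, so we may assume $\csubst>\cmatch$; shifting all four costs by $(s,s,2s,2s)$ adds $(n+m-2)s$ to every alignment's cost, so (taking $s=-\cmatch/2$) we may assume $\cmatch=0$; and since the input-dependent term above involves $\cdelx,\cdely$ only through $\cdelx+\cdely$, after absorbing the remaining term $(n-1)\cdelx+(m-1)\cdely$ into the threshold and rescaling all costs by a positive factor we may assume $\cdelx=\cdely=1$. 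It therefore suffices to construct the gadgets for $\EDITc=\EDIT(1,1,0,\csubst)$ with $\csubst>0$, which I would do by a short case analysis according to whether $\csubst<2$, $\csubst=2$ (which is $\LCS$), or $\csubst>2$ — equivalently, on the sign of $\alpha=\csubst-2$.

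\paragraph{The gadget, and where the difficulty lies.}
In all three cases $\beta=-\csubst<0$, so matches are rewarded and $\dEDITallc(\VG(a),\VG(b))$ is, up to a base constant, controlled by how many symbols an optimal alignment can match. I would design the coordinate gadgets so that, in an alignment pairing coordinate $i$ of $a$ with coordinate $i$ of $b$, exactly one fewer symbol can be matched when $a_i=b_i=1$ than otherwise; then $\dEDITallc(\VG(a),\VG(b))$ is strictly smaller for $\langle a,b\rangle=0$ than for $\langle a,b\rangle\ge1$, so $\dEDITallc(x,y)$ falls below the threshold precisely when $A,B$ contain an orthogonal pair. I expect the real work to lie in two places. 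First, this ``one extra match'' behaviour must be realized over a \emph{binary} alphabet and must survive all of the freedom available to an optimal alignment; in the regimes $\csubst<2$ and $\csubst>2$ the number $c$ of aligned positions interacts with the match count through the coefficient $\alpha$, so the gadget lengths and the normalization $\NVG$ must be tuned to pin $c$ down. Second, and I think this is the crux, the guards $\guard$ — which over $\{0,1\}$ cannot be fresh separator symbols but must be long runs of a single letter — have to force every optimal alignment of $x$ and $y$ to compare the whole block of $a$-gadgets against exactly one $b$-gadget at a time, which requires the cost of deleting or crossing a guard to dominate the combined $\alpha,\beta$-contribution of the surrounding gadgets, uniformly across the three cost cases; this is exactly the delicate balancing that the single-gadget framework is designed to isolate.
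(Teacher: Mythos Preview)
Your overall plan matches the paper's: the trivial-case analysis is essentially identical (the paper writes the cost as $A\cmatch+B\csubst+C(\cdelx+\cdely)+(n-m)\cdelx$ with $A+B+C=m$, which is your normal form up to an inessential off-by-one), and your normalization to $\EDIT(1,1,0,\csubst)$ via bit-complementing one string and an affine change of costs is the content of the paper's \lemref{editreduction}.

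Two points where your plan diverges from what actually works. First, your three-way split on $\csubst$ is unnecessary: once you are at $\EDIT(1,1,0,\csubst)$ with $\csubst>0$, any substitution can be replaced by a deletion in $x$ plus a deletion in $y$ at cost $2$, so for $\csubst>2$ the problem coincides with $\EDIT(1,1,0,2)=\LCS$. The paper therefore caps $\csubst$ at $2$ and gives a \emph{single} construction, parametrized by $\csubst\in(0,2]$, rather than three separate ones.

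Second, and this is the real gap, your expectation that the guards can be ``long runs of a single letter'' is exactly where the construction would fail for small $\csubst$. With a guard $1^\gamma$, a misalignment that shifts by $\Delta$ (prepending $\Delta$ zeroes and dropping $\Delta$ trailing ones) can be absorbed by $\Delta$ substitutions at cost $\Delta\csubst$, which for $\csubst\ll 1$ is negligible and defeats the guard's purpose. The paper's fix is to use \emph{alternating} blocks: $\guard(z)=(1^{\gamma_1}0^{\gamma_1})^{\rho}\,z\,(0^{\gamma_1}1^{\gamma_1})^{\rho}$ with $\rho:=2\lceil 1/\csubst\rceil$. A shift by $\Delta$ now forces either $2\Delta$ deletions or roughly $2\rho\Delta$ substitutions costing $\ge 4\Delta$, so the deletion route wins and misalignment is punished by $\Omega(\Delta)$ uniformly in $\csubst$. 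Single-letter runs do suffice for the $\LCS$ case ($\csubst=2$), which may be the intuition you were carrying over, but they do not survive the passage to arbitrary $\csubst\in(0,2]$. The paper's coordinate values, incidentally, are $\oleft=11100$, $\zleft=10011$, $\oright=00111$, $\zright=11001$, giving $\dEDIT(\oleft,\oright)=\min\{4,4\csubst\}>\min\{2,2\csubst\}$ for the other three pairs, uniformly in $\csubst$.
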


As first step of the hardness part of this theorem, for some $0 < \csubst' \le 2$ depending on $\cdelx,\cdely,\cmatch,\csubst$ we reduce $\EDIT(1,1,0,\csubst')$ to $\EDITallc$. This reduction is what fails for the trivial cases. Then we prove hardness of $\EDIT(1,1,0,\csubst')$ using a construction that is parameterized by $\csubst'$.

\paragraph{Unbalanced Inputs}
Our main results are most meaningful for inputs with $n \approx m$. It is conceivable that for unbalanced inputs, i.e., $m \ll n$, faster algorithms exist, say the running time of $\Oh(nm)$ could be reduced to $\tilde \Oh(n+m^2)$. For \DTW\ we show that such an improvement is unlikely, by proving that ``for any $m$'' no algorithm with running time $\Oh((nm)^{1-\eps})$ exists, assuming SETH. This is analogous to the situation for Fr\'echet distance~\cite{bringmann_walking_2014}.

\begin{thm} \label{thm:unbalanced}
  Unless \SETH\ fails, \DTW\ on one-dimensional curves taking values in $\{0,1,2,4,8\}$ has no  $\Oh((nm)^{1-\eps})$ algorithm for any $\eps>0$, and this even holds restricted to instances with $n^{\alpha-o(1)} \le m \le n^{\alpha+o(1)}$ for any $0 < \alpha < 1$.
\end{thm}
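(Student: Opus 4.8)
The plan is to combine the balanced quadratic-time hardness of \thmref{dtw} with an \emph{unbalanced} split of the variables in the underlying satisfiability instance, in the spirit of the analogous result for the Fr\'echet distance~\cite{bringmann_walking_2014}. Recall that the hardness direction of \thmref{dtw} factors through Orthogonal Vectors: starting from a \kSAT\ instance on $N$ variables, one applies the Sparsification Lemma (with a sufficiently small parameter, losing only a factor $2^{o(N)}$ in the number of instances, uniformly for every fixed $k$) to reduce to $M=\Oh(N)$ clauses; then one splits the variables into two blocks and enumerates the partial assignments of each block, obtaining two sets $P,T$ of $M$-dimensional Boolean vectors with $|P|=|T|=2^{N/2}$ such that some $p\in P$ is orthogonal to some $t\in T$ iff the formula is satisfiable; finally the gadget construction turns $(P,T)$ into one-dimensional curves $x,y$ over $\{0,1,2,4,8\}$, in which $x$ is a concatenation of one bounded-size gadget per vector of $P$ (interspersed with guard gadgets) and $y$ is built the same way from $T$, so that $|x|=\Theta(|P|\cdot\ell)$ and $|y|=\Theta(|T|\cdot\ell)$ for a common gadget length $\ell=\poly(M)=\poly(N)$, and $\DTW(x,y)$ lies below an explicit threshold iff the OV instance is a yes-instance.

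Given this, fix $0<\alpha<1$ and set $\beta:=1/(1+\alpha)\in(1/2,1)$, so that $(1-\beta)/\beta=\alpha$. We run the reduction with the variables split into a first block of $\lceil\beta N\rceil$ variables and a second block of the remaining $\lfloor(1-\beta)N\rfloor$ variables, obtaining $P,T$ with $|P|=2^{\lceil\beta N\rceil}$ and $|T|=2^{\lfloor(1-\beta)N\rfloor}$, and we feed $P$ into the long side of the curve construction. This produces one-dimensional curves over $\{0,1,2,4,8\}$ of lengths $n=|x|=\Theta(2^{\lceil\beta N\rceil}\poly(N))$ and $m=|y|=\Theta(2^{\lfloor(1-\beta)N\rfloor}\poly(N))$; in particular $n\ge m$ for all large $N$, and $nm=\Theta(2^{N}\poly(N))$.

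Now suppose \DTW\ on such curves admitted an $\Oh((nm)^{1-\eps})$ algorithm for some $\eps>0$. Then \kSAT\ could be decided by performing the sparsification and reduction --- which, per produced instance, costs the $\Theta(n)=\Theta(2^{\beta N}\poly(N))$ time to write down the curves --- and then running the assumed algorithm, taking time $\Oh((nm)^{1-\eps})=\Oh(2^{(1-\eps)N}\poly(N))$. Since $\beta<1$, summing over all $2^{o(N)}$ sub-instances the total is still $\Oh(2^{(1-\eps')N})$ for $\eps':=\tfrac13\min\{\eps,\,1-\beta\}>0$, and this holds for every $k$, contradicting \SETH. It remains to check that the instances produced have the claimed shape: $\log n=\beta N+\Oh(\log N)$ and $\log m=(1-\beta)N+\Oh(\log N)$, so $\log m/\log n=\big((1-\beta)/\beta\big)\big(1+\Oh((\log N)/N)\big)=\alpha(1+o(1))$, i.e.\ $n^{\alpha-o(1)}\le m\le n^{\alpha+o(1)}$. (If exact lengths are wanted, one appends copies of the last entry of a curve, which leaves its \DTW\ distance to the other curve unchanged.)

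The step I expect to be the main obstacle is the structural one hidden in the first paragraph: verifying that the curve construction underlying \thmref{dtw} genuinely produces curves whose lengths are \emph{independently} proportional to $|P|$ and $|T|$ --- one bounded-size gadget per vector on each side, rather than, say, one side having size $\Theta(|P|\cdot|T|)$ --- and, crucially, that its correctness proof is robust under a large length imbalance $|P|\gg|T|$: the guard gadgets must still force every traversal of cost below the threshold to line up a single pair of vector gadgets, so that the imbalance cannot be exploited. Once this property of the \thmref{dtw} construction is confirmed, the rest is the routine unbalanced-split bookkeeping sketched above.
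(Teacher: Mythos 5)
Your proposal is correct and takes essentially the same route as the paper, which simply modularizes it: \lemref{C&Adtw} (with \defref{C&Adtw}) verifies that the DTW gadget is an \emph{unbalanced} \aligngad\ gadget, i.e.\ $|x|=\Oh(n(\ell_\x+\ell_\y))$ and $|y|=\Oh(m(\ell_\x+\ell_\y))$ independently --- precisely the structural fact you flag as the main obstacle --- and Lemma 2.1 proves that \SETH\ implies \UOVH\ via exactly your $\beta=1/(1+\alpha)$ split of the variables, so that \thmref{main} then delivers the unbalanced lower bound. One small erratum in a side remark: appending copies of a curve's last entry does \emph{not} in general preserve its DTW distance to the other curve (e.g.\ $x=(0)$, $y=(1)$, $y'=(1,1)$ gives $\dDTW(x,y)=1$ but $\dDTW(x,y')=2$); this is harmless since the theorem only requires $m$ to lie in a multiplicative $n^{\pm o(1)}$ window around $n^{\alpha}$, which your construction already achieves without padding.
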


For edit distance, \thmref{EDIT} implies that there is no $\Oh(m^{2-\eps})$ algorithm for any $\eps > 0$ (in the worst case over all strings $x,y$ with $|x| \le n$ and $|y| \le m$ for any $n \ge m$). Our reduction from \SETH\ cannot result in unbalanced strings, and thus we are not able to prove better lower bounds than $\Oh(m^{2-\eps})$. This behaviour hints at the possibility of an $\tilde O(n+m^2)$ algorithm for edit distance - and indeed there is an algorithm for LCS from '77 due to Hirschberg~\cite{hirschberg1977} matching this time complexity. For completeness, we show that this algorithm can be generalized to edit distance.

\begin{thm} \label{thm:EDITalgo}
  \EDITallc\ has an $\tilde \Oh(n + m^2)$ algorithm.
\end{thm}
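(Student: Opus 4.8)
The plan is to adapt Hirschberg's sparse LCS algorithm to edit distance with arbitrary costs. The first step is a change of variables. In any alignment of $x$ and $y$, if $a$ denotes the number of diagonal steps (matches and substitutions), then it performs exactly $n-a$ deletions in $x$ and $m-a$ deletions in $y$, so its cost equals $n\cdelx + m\cdely + \sum_{\text{diagonal steps}} w$, where $w = c_= := \cmatch - \cdelx - \cdely$ for a matched equal pair and $w = c_\ne := \csubst - \cdelx - \cdely$ for a substituted unequal pair. Since the diagonal steps of an alignment form an order-preserving partial matching between the positions of $x$ and $y$, and every such matching extends to an alignment, computing $\EDITallc(x,y)$ is equivalent, up to the additive constant $n\cdelx + m\cdely$, to minimizing $c_= E + c_\ne S$ over order-preserving partial matchings, where $E$ and $S$ count the matched equal and unequal pairs. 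This also recovers the trivial cases of \thmref{EDIT}: if $c_=,c_\ne \ge 0$ (equivalently $\cdelx + \cdely \le \min\{\cmatch,\csubst\}$) the optimum is the empty matching, and if $c_= = c_\ne$ (equivalently $\cmatch = \csubst$) the objective is $c_=\cdot(E+S)$, whose attainable values $\{0,\dots,m\}$ do not depend on $x,y$.

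It then remains to solve this weighted order-preserving matching problem in $\tilde\Oh(n+m^2)$ time, which I would do by a case analysis on the signs of $c_=$ and $c_\ne$. When $c_\ne \ge 0$ a substitution never decreases the cost, so only equal pairs are matched; the optimum equals $\min\{0,\, c_= \cdot \LCS(x,y)\}$ (added to the constant), and when it is nontrivial ($c_= < 0$) it is computed in $\tilde\Oh(n+m^2)$ by Hirschberg's algorithm: it is classical that the number of dominant matches is $\Oh(p\,(m-p+1)) = \Oh(m^2)$ where $p = \LCS(x,y) \le m$, and each is produced in amortized $\tilde\Oh(1)$ time using a ``next occurrence of a symbol $\sigma$ strictly after position $i$ in $x$'' data structure, built in $\Oh(n)$ time over a binary (or constant-size) alphabet and in $\Oh(n\log n)$ time in general (one sorted position list per symbol, queried by binary search). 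In the remaining cases ($c_\ne < 0$) I would run the analogous contour dynamic program for the weighted objective: it maintains, for each attainable pair $(t,E)$ of (number of matched pairs so far, number of matched equal pairs so far) --- of which there are $\Oh(m^2)$ --- the frontier grid point $(i,j)$ at which it can be realized, and sweeps through $x$ with the next-occurrence structure, spending $\tilde\Oh(n+m^2)$ in total.

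The main obstacle is that the substitution operation a priori destroys the sparsity that Hirschberg's method exploits: when substitutions help ($c_\ne < 0$) \emph{every} pair $(i,j)$ is a candidate diagonal step, so the naive set of ``matches'' has size $\Theta(nm)$. The point is to show that only $\Oh(m^2)$ configurations are ever dominant. This rests on a structural lemma for the case $c_\ne<0$: some optimal alignment never has a maximal run of $x$-deletions immediately adjacent to a maximal run of $y$-deletions, since replacing the shorter of the two runs by substitutions strictly decreases the cost by $(\cdelx+\cdely-\csubst)>0$ per step. Consequently, in such an optimal alignment two consecutive diagonal steps advance one of the two pointers by exactly one, which is what limits the number of distinct reachable contour states to $\Oh(m^2)$ and keeps every update to amortized $\tilde\Oh(1)$. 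The routine but somewhat technical verification of these bounds, together with the case analysis above, is the bulk of the argument.
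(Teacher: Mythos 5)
Your change of variables is correct: writing the cost of an alignment as $n\cdelx+m\cdely+c_=E+c_\ne S$ with $c_==\cmatch-\cdelx-\cdely$ and $c_\ne=\csubst-\cdelx-\cdely$ is a clean reformulation, and it does recover the two trivial cases of \thmref{EDIT}. The sub-case $c_\ne\ge 0$, $c_=<0$ reducing to LCS and Hirschberg is also fine. However, the case $c_\ne<0$ --- the one that actually requires new work --- has a real gap, and it is exactly the place where the paper's idea differs from yours.

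You propose to index the dynamic program by the pair $(t,E)$ = (number of diagonal steps so far, number of matched equal pairs so far). That pair already has $\Theta(m^2)$ attainable values ($0\le E\le t\le m$). But a correct DP of this shape must additionally record where in the grid the prefix matching ends: either you carry the row index $j$ (giving a table $I[j,t,E]$ = minimal $i$ realizing $(t,E)$ for $x[1..i],y[1..j]$, which has $\Theta(m^3)$ entries), or you carry a Pareto frontier of $(i,j)$ pairs per $(t,E)$, which can have $\Theta(m)$ corners each, again $\Theta(m^3)$ total. Storing ``the frontier grid point $(i,j)$'' as a single point per $(t,E)$ is not well-defined --- there is no unique minimizer, and the final answer needs all of row $j=m$. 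Your structural lemma (no $x$-deletion run adjacent to a $y$-deletion run when $c_\ne<0$) is true and pleasant, but it only constrains the shape of one optimal path; it does not bound the number of distinct reachable $(j,t,E)$ states, which is what the running time depends on. Note also that the structural reasoning concerns runs of opposite-type deletions, whereas the state blow-up comes from the two-dimensional bookkeeping of $(t,E)$; the lemma does not connect to that.

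The paper's fix is to \emph{collapse} $(t,E)$ into a single scalar: after first reducing to positive \emph{integer} operation costs via \lemref{editreduction}.(2), it tracks the normalized cost
$k \;=\; \dEDIT(x[1..i],y[1..j]) - \cdelx(i-j),$
and proves $0\le k\le(\cdelx+\cdely)m=O(m)$. In your notation, $k=(\cdelx+\cdely)j + c_=E + c_\ne S$ for an optimal prefix alignment, so the scalar $k$ is exactly the combination $c_=E+c_\ne S$ shifted to make it a nonnegative integer of size $O(m)$; tracking the objective itself rather than its two components $(t,E)$ is what brings the state count from $\Theta(m^3)$ down to $O(m)$ per row, hence $O(m^2)$ total. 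The table entry is then $I[j,k]=\min\{i:\dEDIT(x[1..i],y[1..j])-\cdelx(i-j)=k\}$, updated via the same ``next occurrence / next non-occurrence'' structure you describe. So your $\mathrm{Next}$-structure and sweep are the right tools, and your formula for the cost in terms of $c_=,c_\ne$ is exactly the right place to start; what is missing is the integer-cost reduction and the observation that the \emph{weighted cost}, not the pair $(t,E)$, is the state variable with only $O(m)$ values.
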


Thus, for unbalanced inputs \DTW\ and edit distance differ in their behaviour, but using \SETH\ we can readily explain this difference.

\paragraph{Reductions from Longest Common Subsequence}

Note that any near-linear time reduction from LCS to another problem $P$ transfers the quadratic-time lower bound of LCS to $P$. We think that this notion of \emph{LCS-hardness} could be used to prove lower bounds for many string problems (not only distance measures). To support this claim, we present two easy results in this direction. 

A \emph{palindromic subsequence} (also called symmetric subsequence) of a string $x$ of length $n$ is a subsequence $z$ that is the same as its reverse $\rev(z)$. Computing a longest palindromic subsequence is a popular exercise in undergraduate text books (e.g., \cite[Exercise 15-2]{CLRSthird}), since it can be easily solved by a reduction to LCS or adapting the dynamic programming solution of LCS, both resulting in an $\bigOh(n^2)$ algorithm. A \emph{tandem subsequence} of a string $x$ is a subsequence $z$ that can be written as the concatenation $z=yy$ of a string $y$ with itself. In contrast to longest palindromic subsequence, it is non-trivial to compute a longest tandem subsequence in time $\Oh(n^2)$~\cite{Kosowski2004}. We present reductions from LCS to both of these problems, which yields the following lower bounds.

\begin{thm} \label{thm:LPSLTS}
  On binary strings, longest palindromic subsequence and longest tandem subsequence 
  have no $\Oh(n^{2-\eps})$ algorithms for any $\eps>0$, unless \SETH\ fails.
\end{thm}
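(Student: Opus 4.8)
The plan is to reduce from \LCS\ on binary strings, which is legitimate because $\LCS = \EDIT(1,1,0,2)$ and this cost vector is not among the trivial cases of \thmref{EDIT}: indeed $\csubst = 2 \ne 0 = \cmatch$ and $\cdelx + \cdely = 2 > 0 = \min\{\cmatch,\csubst\}$, so by \thmref{EDIT} binary \LCS\ has no $\Oh(n^{2-\eps})$ algorithm unless \SETH\ fails. It therefore suffices to give, for each of the two problems, an $\Oh(n)$-time reduction mapping binary strings $x,y$ of length at most $n$ to a single binary string $s$ of length $\Oh(n)$ from which $\LCS(x,y)$ can be recovered; since $|s| = \Oh(n)$, an $\Oh(|s|^{2-\eps})$ algorithm for the target problem would solve binary \LCS\ in time $\Oh(n^{2-\eps})$. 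Throughout I use the standard reformulations $\LPS(s) = \LCS(s, \rev(s))$ (a centrally symmetric optimal alignment turns a longest common subsequence of $s$ and $\rev(s)$ into a palindrome) and $\LTS(s) = 2\cdot \max_{0\le p\le |s|}\LCS(s[1..p],\, s[p+1..|s|])$ (cut the first copy of an optimal square subsequence off at its last position).

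For longest palindromic subsequence I would take $M := n$, the palindrome $Z := 0^M 1^{2M} 0^M$, and $s := x \circ Z \circ \rev(y)$. The easy direction is that any common subsequence $q$ of $x$ and $y$ yields the palindromic subsequence $q \circ Z \circ \rev(q)$ of $s$, so $\LPS(s) \ge |Z| + 2\LCS(x,y) = 4M + 2\LCS(x,y)$. For the matching upper bound I classify an optimal palindromic subsequence $p = u\,\rev(u)$ (odd length is handled similarly) by where its center falls. If the center lies inside $x$ or inside $\rev(y)$, then $u$ lies entirely inside $x$ (resp.\ inside $\rev(y)$), so $|p| \le 2n = 2M \le 4M$. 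If the center lies inside a $0^M$-block of $Z$ or inside $\rev(y)$, one side of the center is short and $|p| \le 2(M+n) \le 4M$. The remaining case is a center inside the middle $1^{2M}$-block; then $u$ is a common subsequence of $x\,0^M 1^a$ and of $y\,0^M 1^{2M-a}$ for some $a$, and I would bound $|u| \le 2M + \LCS(x,y)$ by decomposing the matching witnessing $u$ into at most $M$ ones matched inside $1$-runs, at most $M$ zeros matched inside the $0^M$-blocks, and a remaining part which --- together with any zeros of $x$ matched ``across'' into a $0^M$-block --- is a common subsequence of $x$ and $y$. This gives $\LPS(s) = 4M + 2\LCS(x,y)$, whence $\LCS(x,y) = (\LPS(s) - 4M)/2$.

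For longest tandem subsequence I would take $M := n$, $B := 0^M 1^M$, and $s := B \circ x \circ B \circ B \circ y \circ B$. Cutting a square subsequence at the junction between the $B\,x\,B$-part and the $B\,y\,B$-part, and matching $B$ with $B$, $x$ with $y$, $B$ with $B$, gives $\LTS(s) \ge 2(2|B| + \LCS(x,y)) = 8M + 2\LCS(x,y)$. For the upper bound I bound $\LCS(s[1..p], s[p+1..|s|])$ over all cut positions $p$: a cut inside a run, inside $x$ or $y$, or at a non-intended $B$-boundary leaves one side with at most $2M+n < 4M$ exploitable length, using that a monochromatic matching matches at most the smaller of the two relevant symbol-counts, which over the framing is at most $2M$; the intended cut yields exactly $4M + \LCS(x,y)$ by the same ``matched-across-runs combines with an $x$-to-$y$ matching'' argument as above. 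Hence $\LTS(s) = 8M + 2\LCS(x,y)$ and $\LCS(x,y)$ is again recoverable.

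The only genuine difficulty is the upper bound in each reduction, and it stems entirely from the binary alphabet: one cannot introduce a fresh separator symbol, and --- unlike over larger alphabets --- one cannot encode $x$ and $y$ to be free of long palindromic or square subsequences, since every binary string of length $\ell$ contains a palindromic (and a square) subsequence of length at least $\ell/2$. The construction thus cannot suppress this structure but must account for it exactly: the blocks $0^k 1^k$ are used precisely because they are the least exploitable binary strings of their length --- their longest palindromic and square subsequences have length exactly $k$ --- so that the spacer $Z$, respectively the frame of $B$-blocks, contributes a fixed, computable amount to the optimum and leaves only the $2\LCS(x,y)$ term free to vary. Carrying out the case analysis --- over center locations for \LPS, over cut positions for \LTS --- to show that no palindromic resp.\ square subsequence can combine the spacer/frame with $x$ and $y$ so as to beat the intended common-subsequence-based solution is where the (still short) work goes.
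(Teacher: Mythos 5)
Your plan---reduce from binary $\LCS$ via \thmref{EDIT} and then appeal to $|\LPS(s)|=|\LCS(s,\rev(s))|$ and $|\LTS(s)|=2\max_{p}|\LCS(s[1..p],s[p+1..|s|])|$---is exactly the paper's approach, and your $\LPS$ string $x\,0^M 1^{2M} 0^M\,\rev(y)$ mirrors the paper's $x\,0^\kappa 1^\kappa 0^\kappa\,\rev(y)$; your $\LTS$ string has a somewhat different framing than the paper's $0^\kappa x\,1^\kappa 0^\kappa y\,1^\kappa$ but is in the same spirit. The genuine gap is in the block size and, consequently, the upper-bound argument. The paper takes $\kappa\ge |x|+|y|$, and this is precisely what makes \facref{lcsgreedyStrong} close the upper bound: the key application yields $\dLCS(z_1,\rev(z_2))\ge\min\{\kappa,\dLCS(x,y)\}$, which equals $\dLCS(x,y)$ only because $\kappa\ge |x|+|y|\ge\dLCS(x,y)$. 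You take $M=n$, which can be as small as $\max(|x|,|y|)$; already $x=0^n,\;y=1^n$ has $\dLCS(x,y)=2n>M$, so the Fact-based bound $\min\{M,\dLCS(x,y)\}=M$ is strictly too weak.

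Your substitute decomposition does not repair this. You cap ones matched inside $1$-runs by $M$ and zeros inside $0^M$-blocks by $M$ and try to charge the remainder, together with $x$-zeros matched across into a $0^M$-block, to $|\LCS(x,y)|$; but an optimal matching may also send ones of $x$ into the trailing $1^{2M-a}$-run, or symbols of the framing on the $z_1$-side into $y$, and once such cross-matches are admitted the naive sum of block-wise caps exceeds $2M+|\LCS(x,y)|$. For $\LTS$ the sketch is looser still: the assertion that a non-intended $B$-boundary cut ``leaves one side with at most $2M+n<4M$ exploitable length'' is false, since a cut inside the $1^M$ of the central $BB$ produces a side of length up to $5M$, and bounding the LCS there needs the same care. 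Your constructions do appear to be correct; to salvage the $\LPS$ bound with $M=n$ one can split on the last symbol of an LCS of $x\,0^M$ and $y\,0^M 1^{2M-2a}$: a final $1$ confines the entire match to $x$, giving size at most $|x|\le M$; a final $0$ confines it to $y\,0^M$, giving size at most $|\LCS(x,y)|+M$; hence $|\LCS(z_1,\rev(z_2))|\le|\LCS(x,y)|+M+a\le|\LCS(x,y)|+2M$ for $a\le M$. But this is a different argument from the one you sketch, and the cleanest fix is simply to take $M\ge|x|+|y|$, after which the paper's Fact-based proof goes through essentially verbatim.
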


These results show that \SETH-based lower bounds via LCS are applicable to string problems that are not necessarily similarity measures.

\subsection{Technical Contribution}

We introduce a framework for proving \SETH-based lower bounds for similarity measures. It is based on a construction that we call \emph{\aligngad\ gadget}. Given instances $x_1,\ldots,x_n$ and $y_1,\ldots,y_m$, $m \le n$, an \aligngad\ gadget consists of two instances $x,y$ whose similarity $\delta(x,y)$ is closely related to $\sum_{(i,j) \in A} \delta(x_{i},y_j)$, where $A = \{(i_1,1),\ldots,(i_m,m)\}$ is the best-possible ordered alignment of the numbers in $[m]$ to $[n]$ (for details see \secref{framework}).
We prove a quadratic lower bound for any similarity measure admitting an \aligngad\ gadget. This proof is a simplified version of a construction in the known lower bound for Levenshtein distance~\cite{Arthurs}, which is also closely related to the lower bound for Fr\'echet distance~\cite{bringmann_walking_2014}. 

Working with our framework has two advantages: First, it unifies three constructions that are separate proof steps in other SETH-based lower bounds~\cite{Arthurs,bringmann_walking_2014}, thus reducing the amount of work necessary to prove SETH-based lower bounds. Second, it hides the reduction from satisfiability, providing a level of abstraction that allows to ignore the details of the satisfiability problem and instead focus on the details of the problem we reduce to. 
This makes it possible to tackle general problems such as \EDITallc, where the reduction depends on parameters of the problem, without resulting in an overly complex proof. 

We present \aligngad\ gadgets for edit distance and dynamic time warping. This part needs careful problem-specific constructions. In particular, we have to construct instances where the optimal sequence of edit distance operations has some exploitable structure, which is made difficult by the fact that we work over binary alphabet, so that in principle any two zeroes and any two ones can be matched.

\subsection{Related Work}

Independently of our work, similar lower bounds for \LCS\ and \DTW\ have been shown by Abboud et al.~\cite{abboud_quadratic_2015}. Let us briefly compare our approaches. 
Our main technical contribution is the \aligngad-framework, which allows us to give shorter hardness proofs. The proofs of Abboud et al.\ are longer, in particular since they are using the lower bound for Levenshtein distance~\cite{Arthurs}, while our proofs are self-contained. The main technical contribution of Abboud et al., apart from careful reductions, seems to be that they reduce from a novel problem that they call Most-Orthogonal Vectors. Regarding the problem \LCS, our hardness result is stronger, since we show hardness on binary strings, while Abboud et al.\ need alphabet size~7. Regarding \DTW, we prove hardness of different special cases, as we consider \DTW\ on one-dimensional curves over alphabets of size 5 (where the distance of two numbers is their absolute difference), while Abboud et al.\ consider \DTW\ on strings over alphabets of size 5 (where the distance of two symbols is 1 or 0, depending on whether they are equal or not). On top of these core results, Abboud et al.\ generalize their result for \LCS\ to $k$-\LCS, the longest common subsequence of $k$ strings. We classify the complexity of edit distance for arbitrary operation costs and prove hardness of additional string problems via reductions from LCS. 

\subsection{Organization}

In \secref{preliminaries} we fix notation and discuss alternative assumptions to \SETH\ that can be used to prove our results.
We present our framework for obtaining quadratic lower bounds in \secref{framework}. 
We then first prove a conditional lower bound for LCS in \secref{lcs}; this proof is superseded by the conditional lower bound for edit distance in \secref{edit}, but it is shorter and might be more accessible. 
Quadratic-time hardness of dynamic time warping follows in \secref{dtw}.
Finally, in \secref{LPSLTS} we prove hardness of longest palindromic subsequence and longest tandem subsequence.

\section{Preliminaries} \label{sec:preliminaries}

For a sequence $x$, we write $|x|$ for its length, $x[k]$ for its $k$-th entry, $x[k..\ell]$ for the substring from $x[k]$ to $x[\ell]$, and $\rev(x)$ for the reversed sequence. For sequences $x,y$ we denote their concatenation by $x \, y$. A \emph{traversal} of two sequences $x,y$ of length $n,m$, respectively, is a sequence of pairs $( (a_1,b_1),\ldots,(a_t,b_t) )$ with $t \in \mathbb{N}$ satisfying (1) $(a_1,b_1) = (1,1)$, (2) $(a_t,b_t) = (n,m)$, and (3) $(a_{i+1},b_{i+1})$ is either of $(a_i+1,b_i)$, $(a_i,b_i+1)$, or $(a_i+1,b_i+1)$ for all $1 \le i < t$. 

\paragraph{Edit Distance} Let $x,y$ be strings over an alphabet $\Sigma$ of length $n,m$ ($n \ge m$), respectively. For a traversal $T = ( (a_1,b_1),\ldots,(a_t,b_t) )$ of $x,y$ we say that its $i$-th operation, $1 \le i < t$, is (1) a \emph{deletion in $x$} if $(a_{i+1},b_{i+1}) = (a_i+1,b_i)$, (2) a \emph{deletion in $y$} if $(a_{i+1},b_{i+1}) = (a_i,b_i+1)$, (3) a \emph{matching} if $(a_{i+1},b_{i+1}) = (a_i+1,b_i+1)$ and $x[a_i] = y[b_i]$, or (4) a \emph{substitution} if $(a_{i+1},b_{i+1}) = (a_i+1,b_i+1)$ and $x[a_i] \ne y[b_i]$. These four operations incur costs of $\cdelx, \cdely, \cmatch$, and $\csubst$, respectively. We will always assume that these costs are rational constants, so that we can ignore representation issues.
The cost $\dEDIT(T)$ of a traversal $T$ is the total cost of all its operations. The edit distance $\dEDIT(x,y)$ is the minimal cost of any traversal of $x,y$. We write $\EDITallc$ for the problem of computing the edit distance of two given strings with costs  $\cdelx, \cdely, \cmatch$, and $\csubst$. We write $\EDITc$ as a shorthand for $\EDIT(1,1,0,\csubst)$. Note that for these problems the costs of all four operations are constant, i.e., they stay fixed with growing $n,m$. We will mostly consider edit distance over binary strings, i.e., we set $\Sigma = \{0,1\}$.

\paragraph{Dynamic Time Warping (\DTW)}
Let $(M,d)$ be any metric space. Let $x,y$ be curves, i.e., sequences over $M$ of length $n,m$ ($n \ge m$), respectively. The cost $\dDTW(T)$ of a traversal $T = ( (a_1,b_1),\ldots,(a_t,b_t) )$ is $\sum_{i=1}^t d(x[a_i],y[b_i])$. The dynamic time warping distance $\dDTW(x,y)$ is the minimal cost of any traversal of $x$ and $y$. 
We obtain the special case of \emph{dynamic time warping on one-dimensional curves} by setting $M = \mathbb{R}$ and $d(a,b) := |a-b|$ for any $a,b \in \mathbb{R}$.

\subsection{Hardness Assumptions}

Consider the \emph{Orthogonal Vectors problem} (\OV): Given sets $A,B$ of vectors in $\{0,1\}^d$, $|A|=n, |B|=m$, decide whether there is a pair of vectors $a \in A, b \in B$ such that $a[k]\cdot b[k] = 0$ for all $k$ (which we denote by $\langle a,b\rangle=0$). Clearly, this problem can be solved in time $\Oh(n^2 d)$. The best-known algorithm runs in time $n^{2-1/\Oh(\log(d/\log n))}$~\cite{abboud2015polynomial}, which is only slightly subquadratic for $d \gg \log n$. Thus, the following hypotheses are reasonable.

\paragraph{\textup{\textbf{Orthogonal Vectors Hypothesis}} (\OVH):} \emph{For no $\eps > 0$ there is an algorithm for OV, restricted to $n=m$, that runs in time $\Oh(n^{2-\eps} \poly(d))$.} 

\paragraph{\textup{\textbf{Unbalanced Orthogonal Vectors Hypothesis}} (\UOVH):} \emph{Let $0 < \alpha \le 1$. For no $\eps > 0$ there is an algorithm for OV, restricted to $m= \Theta(n^\alpha)$ and $d \le n^{o(1)}$, that runs in time $\Oh((nm)^{1-\eps})$.} 

\paragraph{}
It is well-known that \SETH\ implies \OVH~\cite{williams05}. A slight generalization shows that \SETH\ also implies \UOVH. Hence, these hypotheses are weaker assumptions than \SETH.

\begin{lem}
  \SETH\ implies \OVH\ and \UOVH.
\end{lem}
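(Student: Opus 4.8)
The plan is to establish the two implications $\SETH \Rightarrow \OVH$ and $\SETH \Rightarrow \UOVH$ via the standard split-and-list reduction from $k$-SAT, originally due to Williams~\cite{williams05}, adapted to produce the required balance of the two vector sets.

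First I would recall the basic reduction. Given a $k$-SAT instance $\varphi$ on $N$ variables with $M = \poly(N)$ clauses, split the variables into two halves $V_1, V_2$ of size $N/2$ each. For each of the $2^{N/2}$ partial assignments $\sigma$ to $V_1$, create a vector $a_\sigma \in \{0,1\}^M$ whose $\ell$-th coordinate is $0$ if $\sigma$ already satisfies clause $\ell$, and $1$ otherwise; analogously build vectors $b_\tau$ for partial assignments $\tau$ to $V_2$. Let $A$ be the set of all such $a_\sigma$ and $B$ the set of all such $b_\tau$, so $|A| = |B| = n := 2^{N/2}$ and $d = M = \poly(N) = \poly(\log n)$. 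Then $\langle a_\sigma, b_\tau\rangle = 0$ iff every clause is satisfied by $\sigma$ or by $\tau$, i.e.\ iff the combined assignment satisfies $\varphi$. Hence $\varphi$ is satisfiable iff this OV instance (with $n = m$) has an orthogonal pair. An $\Oh(n^{2-\eps}\poly(d))$ algorithm for OV would then decide $k$-SAT in time $\Oh(2^{(1-\eps/2)N}\poly(N))$ for every $k$, contradicting $\SETH$; this gives $\SETH \Rightarrow \OVH$.

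For $\UOVH$, fix $0 < \alpha \le 1$. Now I would split the variables \emph{unevenly}: put $\beta N$ variables in $V_1$ and $(1-\beta)N$ in $V_2$, where $\beta = \tfrac{1}{1+\alpha}$ (so that $1 - \beta = \tfrac{\alpha}{1+\alpha}$). Building the vector sets exactly as above yields $|A| = 2^{\beta N} =: n$ and $|B| = 2^{(1-\beta)N}$. Since $1-\beta = \alpha\beta$, we get $|B| = 2^{\alpha \beta N} = n^{\alpha}$, so $m = \Theta(n^\alpha)$ as required, and again $d = M = \poly(N) = n^{o(1)}$. The equivalence with satisfiability of $\varphi$ is unchanged. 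An $\Oh((nm)^{1-\eps})$ algorithm for OV in this regime would solve $k$-SAT in time $\Oh\big((2^{\beta N} \cdot 2^{\alpha\beta N})^{1-\eps}\poly(N)\big) = \Oh(2^{(1+\alpha)\beta(1-\eps)N}\poly(N)) = \Oh(2^{(1-\eps)N}\poly(N))$, again contradicting $\SETH$. (A minor technical point: $\beta N$ need not be an integer, so one rounds; the resulting discrepancy only changes $n$ and $m$ by constant factors, which is absorbed into the $\Theta(\cdot)$ and does not affect the exponent.)

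I do not anticipate a serious obstacle here, since this is a well-known folklore argument; the only point requiring care is bookkeeping the exponents so that the unbalanced split simultaneously achieves $m = \Theta(n^\alpha)$ \emph{and} $nm = 2^{(1\pm o(1))N}$, which is exactly what the choice $\beta = 1/(1+\alpha)$ delivers. I would also explicitly note that $d = \poly(N)$ is subpolynomial in $n$ (it is $\poly(\log n)$), so both the $\poly(d)$ factor in $\OVH$ and the constraint $d \le n^{o(1)}$ in $\UOVH$ are met.
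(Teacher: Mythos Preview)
Your argument is correct and follows the same split-and-list idea as the paper, including the uneven split $\beta = 1/(1+\alpha)$ for \UOVH. The one substantive difference is that the paper first applies the sparsification lemma of Impagliazzo--Paturi--Zane to reduce to instances with $f(k,\eps)\cdot N$ clauses (then pads to exactly $N^2$), whereas you simply use that a $k$-CNF has at most $O(N^k)$ distinct clauses and hence $d = \poly(N) = \poly(\log n) = n^{o(1)}$ for each fixed $k$. Your route is more elementary and avoids the extra tool; the paper's route yields a dimension bound $d = N^2$ that is uniform in $k$, but this uniformity is not actually needed since SETH quantifies over $k$ and the hidden constants may depend on $k$. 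You should make explicit the step ``$M = \poly(N)$'' (remove duplicate clauses so $M \le (2N)^k$), since as written it reads like an assumption on the input rather than a consequence of $k$ being fixed.
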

\begin{proof}
  For \OVH\ the statement follows from~\cite{williams05}.
  Let $0 < \eps < 1/2$ and $0 < \alpha \le 1$. Assume that Orthogonal Vectors, restricted to $m = \Theta(n^\alpha)$ and $d \le n^{o(1)}$, has an $\Oh((nm)^{1-\eps})$ algorithm. We show that this contradicts \SETH.
  To this end, let $\varphi$ be an instance of \kSAT\ with $N$ variables and $M$ clauses. We use the sparsification lemma~\cite{ImpagliazzoPZ01}, which yields $t := 2^{\eps N/2}$ \kSAT\ instances $\varphi_1,\ldots,\varphi_{t}$ with $N$ variables and $f(k,\eps)\cdot N$ clauses such that $\varphi$ is satisfiable if and only if some $\varphi_i$ is satisfiable. 
  If $N \le f(k,\eps)$ then we decide each $\varphi_i$ in time $\Oh_{k,\eps}(1)$. Otherwise, $\varphi_i$ has at most $N^2$ clauses, and we can assume equality by duplicating clauses. In this case, we construct an instance of Orthogonal Vectors as follows. 
  Let $x_1,\ldots,x_N$ be the variables and $C_1,\ldots,C_{N^2}$ be the clauses of $\varphi_i$. We set $d := N^2$ and split the variables into the left half $x_1,\ldots,x_{N/(1+\alpha)}$ and the right half $x_{N/(1+\alpha)+1},\ldots,x_N$. The set $A$ consists of one vector $a_z \in \mathbb{R}$ for every assignment $z$ of true and false to the left half of the variables. If $z$ causes clause $C_i$ to be true, i.e., some unnegated variable of $C_i$ is set to true in $z$ or some negated variable of $C_i$ is set to false in $z$, then we set $a_z[i] := 0$. Otherwise, we set $a_z[i] := 1$. Similarly, set $B$ has a vector $b_{z'}$ for any assignment $z'$ of true or false to the right half of the variables and $b_{z'}[i] = 0$ or 1, depending on whether $z'$ causes clause $C_i$ to be true. Then $\langle a_{z},b_{z'}\rangle=0$ if and only if $(z,z')$ forms a satisfying assignment of $\varphi_i$. Thus, we can decide $\varphi_i$ by solving the constructed instance of Orthogonal Vectors. Note that $n = |A| = 2^{N/(1+\alpha)}$ and $m = |B| = 2^{N \alpha/(1+\alpha)}$, so that indeed $m = \Theta(n^\alpha)$. Moreover, $d = N^2 \le 2^{o(N)} = n^{o(1)}$. Thus, we can apply the algorithm for Orthogonal Vectors, that we assumed to exist, running in time $\Oh((nm)^{1-\eps}) = \Oh(2^{(1-\eps)N})$. Running this procedure for all $\varphi_i$ decides $\varphi$ in time $\Oh(t \cdot 2^{(1-\eps)N}) = \Oh(2^{(1-\eps/2)N})$, contradicting \SETH.
\end{proof}

Thus, any lower bound conditional on \OVH\ or \UOVH\ also holds conditional on \SETH. In fact, we prove all of our results by reductions from Orthogonal Vectors, so that in our results we may replace the assumption \SETH\ by \OVH\ or \UOVH. Specifically, in \thmrefss{dtw}{EDIT}{LPSLTS} we can replace \SETH\ by \OVH, and in \thmref{unbalanced} we can replace \SETH\ by \UOVH.
We remark that a version of \OVH\ has also been used in~\cite{abboud_quadratic_2015} and is implicit in many other \SETH-based lower bounds.

\section{Framework} \label{sec:framework}

We consider a similarity (or distance) measure $\delta:\inputs\times\inputs\to\mathbb{N}_0$, where $\inputs$ denotes the set of inputs, e.g.,
all binary strings or all one-dimensional curves. By a reduction from Orthogonal Vectors, we prove that computing this similarity measure cannot
be done in strongly subquadratic time unless SETH fails if $\delta$ admits a gadget
that allows us to exactly realize alignments of inputs $x_{1},\dots,x_{n}\in\inputs$ and $y_{1},\dots,y_{m}\in\inputs$.
To formally state the requirement, we start by introducing the following notions.

\paragraph{Types} In this paper, we define the \emph{type} of a sequence $x \in \inputs$ to be its length and the sum of its entries, i.e., $\type(x) := (|x|, \sum_i x[i])$ (where for binary strings $\sum_k x[k]$ is to be interpreted as the number of ones in $x$). The definition of types can be customized to the similarity measure under consideration and is chosen to work for the problems considered in this paper.
We define $\inputs_{t}:=\{x \in \inputs\mid\type(x)=t\}$ as the
set of inputs of type $t$. 

\paragraph{Alignments}
Let $n \ge m$. A \emph{(partial) alignment} is a set $A = \{(i_1,j_1),\ldots,(i_k,j_k)\}$ with $0 \le k \le m$ such that $1 \le i_1 < \ldots < i_k \le n$ and $1 \le j_1 < \ldots < j_k \le m$. We say that $(i,j) \in A$ are \emph{aligned}. Any $i \in [n]$ or $j \in [m]$ that is not contained in any pair in $A$ is called \emph{unaligned}.
We denote the set of all partial alignments (with respect to $n,m$) by $\algn_{n,m}$.

We call the partial alignment $\{(\Delta+1,1),\ldots,(\Delta+m,m)\}$, with $0 \le \Delta \le n-m$, a \emph{structured alignment}. We denote the set of all structured alignments by $\strc_{n,m}$.

For any $x_1,\ldots,x_n \in \inputs$ and $y_1,\ldots,y_m \in \inputs$ we define the \emph{cost} of alignment $A \in \algn_{n,m}$ by 
$$ \delta(A) = \delta^{x_1,\ldots,x_n}_{y_1,\ldots,y_m}(A) := \sum_{(i,j) \in A} \delta(x_i,y_j) + (m-|A|) \max_{i,j} \delta(x_i,y_j). $$
In other words, for any $j \in [m]$ which is aligned to some $i$ we pay the distance $\delta(x_i,y_j)$, while for any unaligned $j$ we pay the maximal distance of any $(x_{i'},y_{j'})$ (note that there are $m-|A|$ unaligned $j \in [m]$, see \figref{alignments}). This means that we get punished for any unaligned~$j$. 

\begin{figure}

\begin{subfigure}{0.49\textwidth}
  \includegraphics[width=\textwidth]{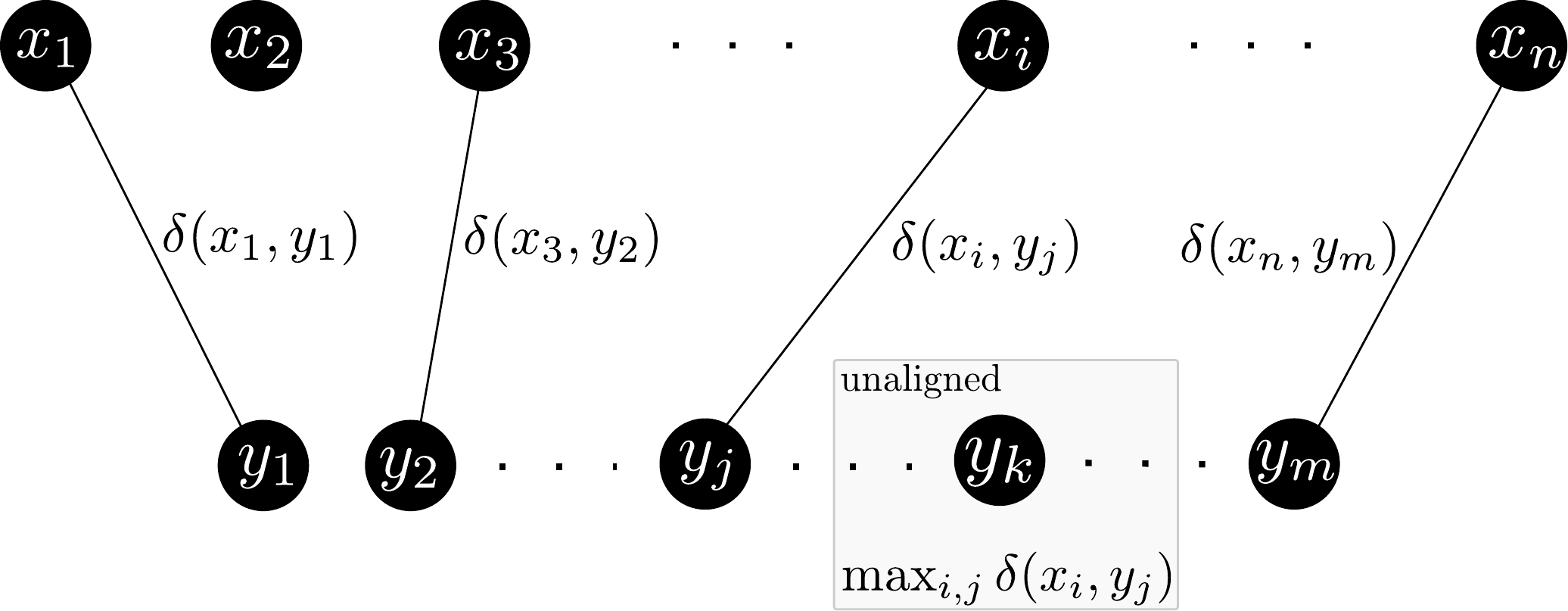}
  \caption{Cost $\delta(A)$ of a partial alignment $A\in \algn_{n,m}$}
  \label{fig:partialA}
\end{subfigure}
\qquad
\begin{subfigure}{0.49\textwidth}
  \includegraphics[width=\textwidth]{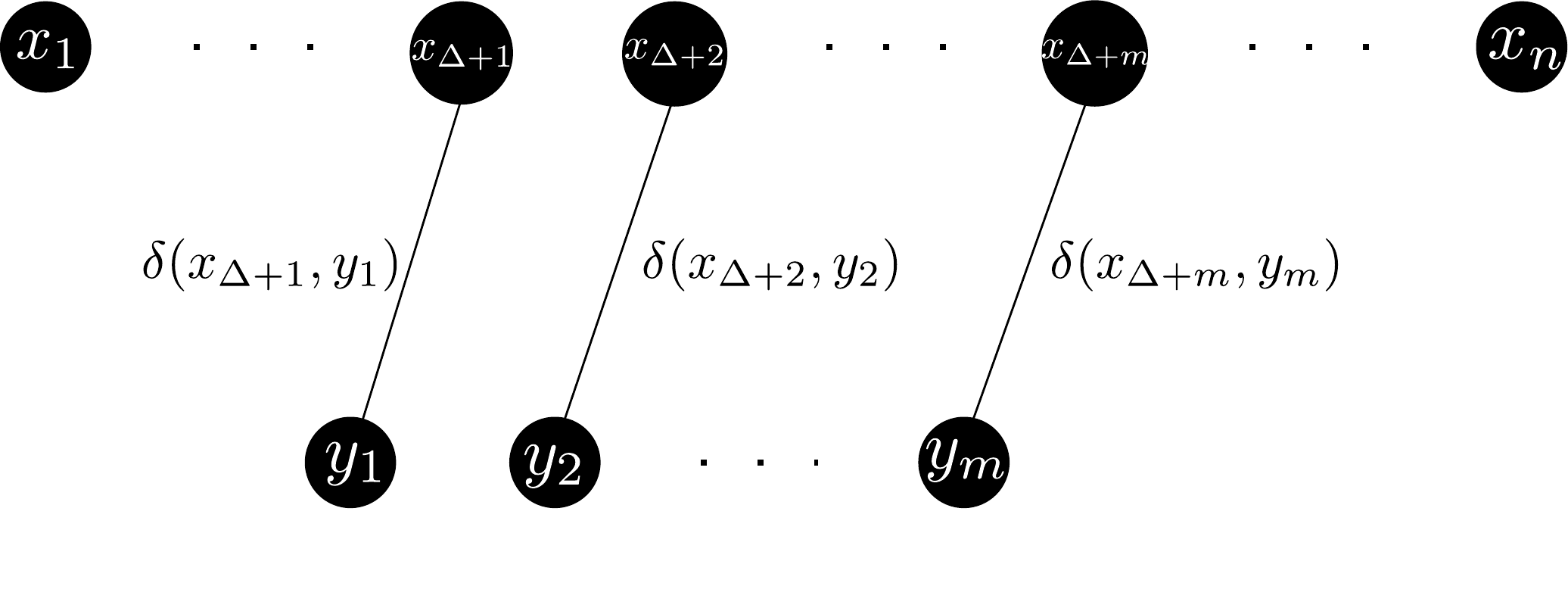}
  \caption{Cost $\delta(A)$ of a structured alignment $A\in \strc_{n,m}$}
  \label{fig:structuredA}
\end{subfigure}

\caption{Costs of Alignments}
\label{fig:alignments}
\end{figure}

\paragraph{\Aligngad\ Gadget}
We start with some intuition. Consider the problem of computing the value $\min_{A \in \strc_{n,m}} \Val{A}$. This can be solved in time $\Oh(nm)$ if each $\delta(x_i,y_j)$ can be evaluated in constant time, since $|\strc_{n,m}| = \Oh(n)$ and evaluating $\Val{A}$ amounts to computing $m$ values $\delta(x_i,y_j)$. Moreover, intuitively it should not be possible to compute this value in strongly subquadratic time.
We will show that in some sense it is even hard to compute, in strongly subquadratic time, \emph{any value} $v$ with
\begin{align}
  \min_{A \in \algn_{n,m}} \Val{A} \le v \le \min_{A \in \strc_{n,m}} \Val{A}.  \label{Aineqs}
\end{align}
Now, an \aligngad\ gadget is simply a pair of instances $(x,y)$ such that from $\delta(x,y)$ we can infer\footnote{For us ``infer'' will simply mean that $v = \delta(x,y)-C$ for an appropriate $C$.} a value $v$ as above. The main reason to relax our goal from computing $\min_{A \in \strc_{n,m}} \Val{A}$ to satisfying \eqref{Aineqs} is that this makes constructing \aligngad\ gadgets much easier.
Note that for the \aligngad\ gadget $(x,y)$ computing $\delta(x,y)$ is as hard as computing $\min_{A \in \strc_{n,m}} \Val{A}$ (in an approximate sense as given by \eqref{Aineqs}), which we argued above should take quadratic time. This informal discussion motivates the following definition.

\begin{defn}
\label{def:cagadget}
The similarity measure $\delta$ admits an 
\emph{\aligngad\ gadget, }if the following
conditions hold: 
Given instances $x_{1},\dots,x_{n}\in\inputs_{t_{\x}}$, $y_{1},\dots,y_{m}\in\inputs_{t_{\y}}$ with $m\le n$ and types $t_\x=(\ell_\x,s_\x),t_\y=(\ell_\y,s_\y)$, we can
construct new instances
$x=\CandA_{\x}^{m,t_{\y}}(x_{1},\dots,x_{n})$ and $y=\CandA_{\y}^{n,t_{\x}}(y_{1},\dots,y_{m})$
and $C\in\mathbb{Z}$ such that 
\begin{align} \label{eq:Cone}
\min_{A \in \algn_{n,m}} \Val{A} \le \delta(x,y) - C \le \min_{A \in \strc_{n,m}} \Val{A}.
\end{align}
Moreover, $\type(x)$ and $\type(y)$ only depend on $n,m,t_\x$, and $t_\y$. Finally, this construction runs in time $\bigOh((n+m)(\ell_\x + \ell_\y))$.

If the construction additionally fulfills $|x|=\Oh(n(\ell_\x + \ell_\y))$
and $|y|=\Oh(m(\ell_\x + \ell_\y))$, then we say that $\delta$
admits an \emph{unbalanced \aligngad\ gadget}.
\end{defn}

Note that the types serve the purpose of simplifying the algorithmic
problem in the above definition by restricting the inputs
to same-type objects. If we can construct suitable $x$ and $y$ for
\emph{arbitrary} inputs $x_{1},\dots,x_{n}$ and $y_{1},\dots,y_{m}$ then
we may completely disregard types. 

\begin{defn}\label{def:coordValues}
The similarity measure $\delta$ admits \emph{coordinate values}, if there exist $\zleft,\zright,\oleft,\oright\in\inputs$ satisfying
\[
\delta(\oleft,\oright)>\delta(\zleft,\oright)=\delta(\zleft,\zright)=\delta(\oleft,\zright),
\]
and moreover, $\type(\zleft)=\type(\oleft)$ and $\type(\zright)=\type(\oright)$.
\end{defn}

\begin{thm} \label{thm:main}
Let $\delta$ be a similarity measure admitting an \aligngad\
gadget and coordinate values and consider the problem of computing $\delta(x,y)$ with $|x| \le n$, $|y| \le m$, and $m \le n$. For no $\eps > 0$ this problem can be solved in time $\bigOh(m^{2-\varepsilon})$ unless \OVH\ fails.
If $\delta$ even admits an unbalanced \aligngad\ gadget, then
for no $\eps > 0$ this problem can be solved in time $\bigOh((n m)^{1-\varepsilon})$, unless \UOVH\ fails. Both statements hold restricted to $n^{\alpha - o(1)} \le m \le n^{\alpha + o(1)}$ for any $0<\alpha \le 1$.
\end{thm}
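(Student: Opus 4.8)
The plan is to reduce Orthogonal Vectors to computing $\delta$, using the \aligngad\ gadget and coordinate values as the two building blocks. Given an \OV\ instance $(A,B)$ with $A = \{a_1,\dots,a_n\}$, $B = \{b_1,\dots,b_m\}$, $a_i,b_j \in \{0,1\}^d$, I first build, for each vector $a_i$, a small ``vector gadget'' $x_i$ as a concatenation of the coordinate instances: for coordinate $k$ use $\oleft$ if $a_i[k]=1$ and $\zleft$ if $a_i[k]=0$; similarly build $y_j$ from $\oright,\zright$ according to $b_j$. Because $\type(\zleft)=\type(\oleft)$ and $\type(\zright)=\type(\oright)$, all the $x_i$ share a common type $t_\x$ and all the $y_j$ share a common type $t_\y$, which is exactly what the \aligngad\ gadget requires. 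The key property I want from coordinate values is that $\delta(x_i,y_j)$ takes one of exactly two values: a ``large'' value $L$ when $a_i,b_j$ are \emph{not} orthogonal (some coordinate where both are $1$, contributing $\delta(\oleft,\oright)$), and a strictly smaller ``small'' value $S$ when they \emph{are} orthogonal. I would either prove this directly from \defref{coordValues} (if $\delta(x_i,y_j)$ decomposes coordinatewise, e.g. via subadditivity/additivity of $\delta$ on concatenations — which holds for edit distance and \DTW) or, more robustly, absorb this into the definition of coordinate values by allowing the vector gadget construction itself to be problem-specific; I expect the paper takes the former route with an auxiliary lemma stating $\delta(x_i,y_j) = S$ iff $\langle a_i,b_j\rangle = 0$.

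Next I feed $x_1,\dots,x_n$ and $y_1,\dots,y_m$ into the \aligngad\ gadget to obtain $x = \CandA_\x^{m,t_\y}(x_1,\dots,x_n)$, $y = \CandA_\y^{n,t_\x}(y_1,\dots,y_m)$ and the constant $C$, so that by \eqref{eq:Cone},
\[
\min_{A \in \algn_{n,m}} \Val{A} \;\le\; \delta(x,y) - C \;\le\; \min_{A \in \strc_{n,m}} \Val{A}.
\]
Now I analyze the two sides. For the upper side: a structured alignment $A \in \strc_{n,m}$ aligns each $j \in [m]$ to $\Delta+j$ for some fixed shift $\Delta$; if the \OV\ instance has \emph{no} orthogonal pair, then $\delta(x_i,y_j) = L$ for all $i,j$, hence $\max_{i,j}\delta(x_i,y_j) = L$ and $\Val{A} = mL$ for every partial alignment $A$, so both bounds force $\delta(x,y) - C = mL$. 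For the lower side: if there \emph{is} an orthogonal pair $(a_{i^*}, b_{j^*})$, I want a partial alignment of cost at most $(m-1)L + S < mL$. The natural choice is the singleton alignment $A = \{(i^*, j^*)\}$, which has $|A| = 1$, cost $\delta(x_{i^*},y_{j^*}) + (m-1)\max_{i,j}\delta(x_i,y_j) = S + (m-1)L$; since $S < L$ this is strictly below $mL$, hence $\delta(x,y) - C \le (m-1)L + S < mL$. Thus $\delta(x,y) - C = mL$ if and only if $(A,B)$ has no orthogonal pair, and from $\delta(x,y)$ we can decide \OV.

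For the running time: the vector gadgets have length $\Oh(d)$, so $\ell_\x, \ell_\y = \Oh(d)$, and the \aligngad\ construction runs in time $\Oh((n+m)(\ell_\x+\ell_\y)) = \Oh(n d)$; together with the $\Oh(nd)$ to build all vector gadgets and evaluating $\delta(x,y)$ once, a hypothetical $\Oh(m^{2-\eps})$ algorithm for $\delta$ (with $|y| = \Oh(m \cdot d)$ so that the relevant size parameter is $m$ up to $\poly(d) = n^{o(1)}$ factors) would solve \OV\ in time $\Oh(m^{2-\eps}\poly(d) + nd)$, contradicting \OVH\ in the balanced case $m = \Theta(n)$, and more generally $n^{\alpha - o(1)} \le m \le n^{\alpha+o(1)}$ since the reduction preserves the ratio between the two side lengths. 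In the unbalanced-gadget case we additionally have $|x| = \Oh(n d)$ and $|y| = \Oh(m d)$, so an $\Oh((nm)^{1-\eps})$ algorithm would give an $\Oh((nm)^{1-\eps}\poly(d))$ algorithm for \OV\ with $d \le n^{o(1)}$, contradicting \UOVH. The main obstacle is the problem-specific lemma that $\delta(x_i,y_j)$ really is two-valued with the strict inequality $S < L$ and that the gadget types are uniform — this is where \defref{coordValues} is used, and getting the concatenation of coordinate instances to behave additively (or at least to separate the orthogonal and non-orthogonal cases by a fixed gap) is the delicate point; everything downstream, including handling the $\poly(d)$ factors and the exact alignment-cost bookkeeping, is routine.
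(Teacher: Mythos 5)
There are two genuine gaps here, both centered on how \eqref{eq:Cone} can actually be used.

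First, you misuse the direction of the sandwich. \Eqref{eq:Cone} says
$\min_{A \in \algn_{n,m}} \Val{A} \le \delta(x,y) - C \le \min_{A \in \strc_{n,m}} \Val{A}$,
so a partial alignment $A$ only gives a \emph{lower} bound on $\delta(x,y)-C$; to upper-bound $\delta(x,y)-C$ (which is what you need in the YES case, to certify that an orthogonal pair pushes the distance down) you must exhibit a \emph{structured} alignment, i.e.\ one of the form $\{(\Delta+1,1),\dots,(\Delta+m,m)\}$ aligning all of $[m]$. Your singleton $\{(i^*,j^*)\}$ is not structured, so the conclusion $\delta(x,y)-C \le (m-1)L+S$ does not follow. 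This also reveals why the paper doubles the list of $a$-gadgets ($\NVG(a_1),\dots,\NVG(a_n),\NVG(a_1),\dots,\NVG(a_n)$): with a singly-listed $x$ there may be no $\Delta \in \{0,\dots,n-m\}$ with $\Delta+j^* = i^*$ (e.g.\ when $i^* < j^*$), so no structured alignment aligns the orthogonal pair, and the structured alignment must still pay cost $L$ on all the \emph{other} $m-1$ coordinates — which is only harmless if every $\delta(x_i,y_j)$ takes one of the two values $S,L$ \emph{exactly}, which brings us to the second gap.

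Second, the ``two-valued'' lemma you flag as the delicate point is not merely a verification detail; it does not hold for your construction. Concatenating coordinate instances and invoking additivity is not available in the abstract framework (nothing in Definitions~\ref{def:cagadget} or \ref{def:coordValues} gives you $\delta$-additivity of concatenation), and even where one can use the alignment gadget recursively to build vector gadgets, the non-orthogonal case only yields a \emph{lower} bound $\delta(\VG(a_i),\VG(b_j)) \ge C + d\rho_0 + \rho_1$ (again because only one side of \eqref{eq:Cone} points the right way), not equality. The paper's fix is a normalization step: introduce an extra $(d+1)$-th coordinate ($\zleft$ in $a$-gadgets, $\oright$ in $b$-gadgets) so that the all-zero string $S := \CandA_\x^{d+1,t_\y}(\zleft,\dots,\zleft,\oleft)$ satisfies $\delta(S,\VG(b_j)) = C + d\rho_0 + \rho_1$ exactly, and then set $\NVG(a_i) := \CandA_\x^{1,t_\y'}(S,\VG(a_i))$ and $\NVG(b_j) := \CandA_\y^{2,t_\x'}(\VG(b_j))$. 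The $\algn_{2,1}$-alignment inside this second layer forces $\delta(\NVG(a_i),\NVG(b_j))$ to equal \emph{exactly} one of two values $\rho_0',\rho_1'$ depending on orthogonality, which is what the final step needs. Without this normalization the reduction does not pin down the threshold $\rho$, and without the doubled $x$-list and the resulting structured alignment it cannot certify the YES side, so the argument does not go through as written.
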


\subsection{Proof of \thmref{main}}

We present a reduction from \OV\ to the problem of computing $\delta$. This uses constructions and arguments similar to \cite{bringmann_walking_2014,Arthurs}.
Consider an instance $a_{1},\dots,a_{n}\in\{0,1\}^{d}$
and $b_{1},\dots,b_{m}\in\{0,1\}^{d}$ of \OV, $n \ge m$. We construct $x,y\in\inputs$
and $\rho\in\mathbb{N}_0$ such that $\delta(x,y)\le\rho$ if and only
if there are $i \in [n]$ and $j \in [m]$ with $\langle a_{i},b_{j}\rangle=0$. 
To this end, let $a_{i}[k]$ denote the $k$-th component of $a_{i}$. For all
$i \in [n]$ and $j \in [m]$, we construct \emph{coordinate gadgets} as follows
\begin{align*}
\CG(a_{i},k) & :=\begin{cases}
\zleft & \text{if }a_{i}[k]=0\\
\oleft & \text{if }a_{i}[k]=1
\end{cases}\quad1\le k\le d, & \CG(a_{i},d+1) & :=\zleft,\\
\CG(b_{j},k) & :=\begin{cases}
\zright & \text{if }b_{j}[k]=0\\
\oright & \text{if }b_{j}[k]=1
\end{cases}\quad1\le k\le d, & \CG(b_{j},d+1) & :=\oright.
\end{align*}
Note that we have $\type(\CG(a_{i},1))=\cdots=\type(\CG(a_{i},d+1)) =: t_\x$ and $\type(\CG(b_{j},1))=\cdots=\type(\CG(b_{j},d+1)) =: t_\y$ by definition of coordinate values.
This allows us to use the alignment gadget to obtain the following \emph{vector gadgets}
\begin{eqnarray*}
\VG(a_{i}) & := & \CandA_{\x}^{d+1,t_{\y}}(\CG(a_{i},1),\dots,\CG(a_{i},d+1)),\\
\VG(b_{j}) & := & \CandA_{\y}^{d+1,t_{\x}}(\CG(b_{j},1),\dots,\CG(b_{j},d+1)),\\
S & := & \CandA_{\x}^{d+1,t_{\y}}(\underbrace{\zleft,\dots,\zleft,\oleft}_{d+1}),
\end{eqnarray*}
Note that $\type(\VG(a_{1}))= \ldots = \type(\VG(a_{n}))=\type(S) =: t_\x'$
and $\type(\VG(b_{1})) = \ldots = \type(\VG(b_{m})) =: t_\y'$, because the type of the output of the alignment gadget only depends on the number of input elements and their type, which are all $t_\x$ or all $t_\y$, respectively. We introduce \emph{normalized vector gadgets} as follows
\begin{eqnarray*}
\NVG(a_{i}) & := & \CandA_{\x}^{1,t_\y'}(S,\VG(a_{i})),\\
\NVG(b_{j}) & := & \CandA_{\y}^{2,t_\x'}(\VG(b_{j})).
\end{eqnarray*}
Note that we have $\type(\NVG(a_{1})) = \ldots = \type(\NVG(a_{n})) =: t_\x''$ and $\type(\NVG(b_{1})) = \ldots = \type(\NVG(b_{m})) =: t_\y''$.
We finally obtain $x$ and $y$ by setting
\begin{eqnarray*}
x & := & \CandA^{m,t_\y''}_{\x}(\NVG(a_{1}),\dots,\NVG(a_{n}),\NVG(a_{1}),\dots,\NVG(a_{n})),\\
y & := & \CandA^{2n,t_\x''}_{\y}(\NVG(b_{1}),\dots,\NVG(b_{m})).
\end{eqnarray*}
We denote by $C, C', C''$ the value $C$ in the three invocations of Property~\eqref{eq:Cone} of the alignment gadget.

Observe that $x$ and $y$ have length $\Oh((n+m)d)$ and can be constructed
in time $\bigOh((n+m)d)$ by applying the algorithm implicit in \defref{cagadget}
three times. Moreover, if $\delta$ admits an \emph{unbalanced} \aligngad\ gadget, then we have $|x| = \Oh(nd)$ and $|y| = \Oh(md)$. It remains to show that if we know $\delta(x,y)$ then we can decide the given \OV\ instance in constant time, i.e., correctness of our construction, which we do below. This finishes our reduction from \OV\ to the problem of computing $\delta$. 
To obtain \thmref{main}, let $0 < \alpha \le 1$ and assume that $\delta(x',y')$ can be computed in time $\Oh(M^{2-\eps})$ whenever $|x'| \le N$, $|y'| \le M$, and $N^{\alpha-o(1)} \le M \le N^{\alpha+o(1)}$. Then in particular for $n=m$ we can compute $\delta(x,y)$ in time $\Oh(\min\{|x|,|y|\}^{2-\eps} + |x| + |y|) = \Oh(((n+m)d)^{2-\eps}) = \Oh((nd)^{2-\eps})$, contradicting \OVH. In case of an unbalanced \aligngad\ gadget, assume that $\delta(x',y')$ can be computed in time $\Oh((N M)^{1-\eps})$ whenever $|x'| \le N$, $|y'| \le M$, and $N^{\alpha-o(1)} \le M \le N^{\alpha+o(1)}$. Then for $m = \Theta(n^\alpha)$ and $d \le n^{o(1)}$ we can compute $\delta(x,y)$ in time $\Oh((|x| |y|)^{1-\eps} + |x|+|y|) = \Oh( ((nd) (md))^{1-\eps} + (n+m)d) = \Oh((nm)^{1-\eps/2})$, contradicting \UOVH. This proves \thmref{main}.

\paragraph{Correctness}
We now prove correctness of our construction and refer to \figref{NVGs} for an intuition for coordinate, vector, and normalized vector gadgets. Let $\rho_{0}:=\delta(\zleft,\zright)=\delta(\zleft,\oright)=\delta(\oleft,\zright)$ and $\rho_{1}:=\delta(\oleft,\oright)$. Recall that $\rho_0 < \rho_1$.

\begin{figure}

\begin{subfigure}{0.47\textwidth}
  \includegraphics[width=\textwidth]{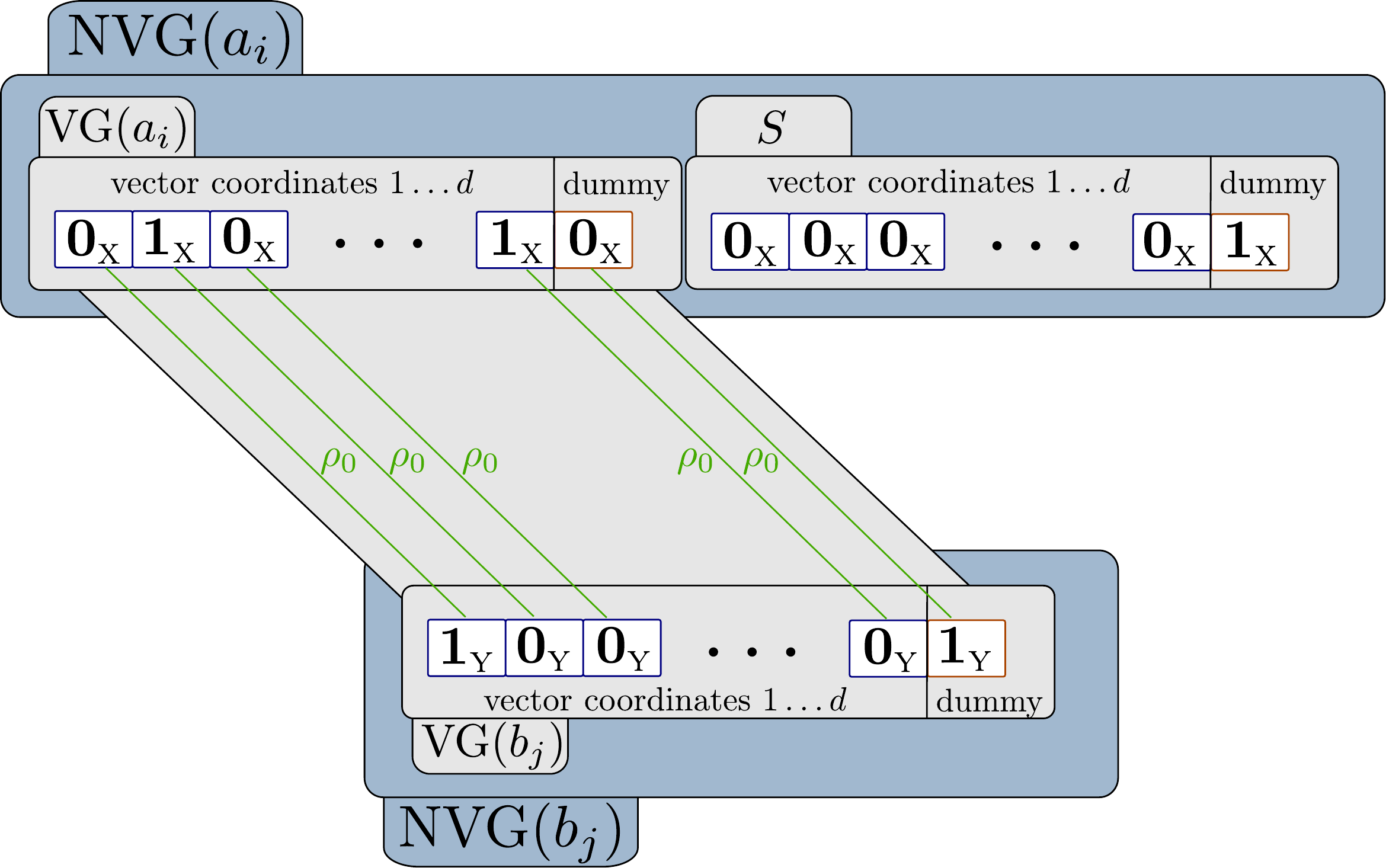}
  \caption{Case $\langle a_i, b_j \rangle = 0$. Aligning $\VG(b_j)$ with $\VG(a_i)$ achieves an alignment cost of $(d+1)\rho_0$.}
  \label{fig:NVGs-sat}
\end{subfigure}
\qquad
\begin{subfigure}{0.47\textwidth}
  \includegraphics[width=\textwidth]{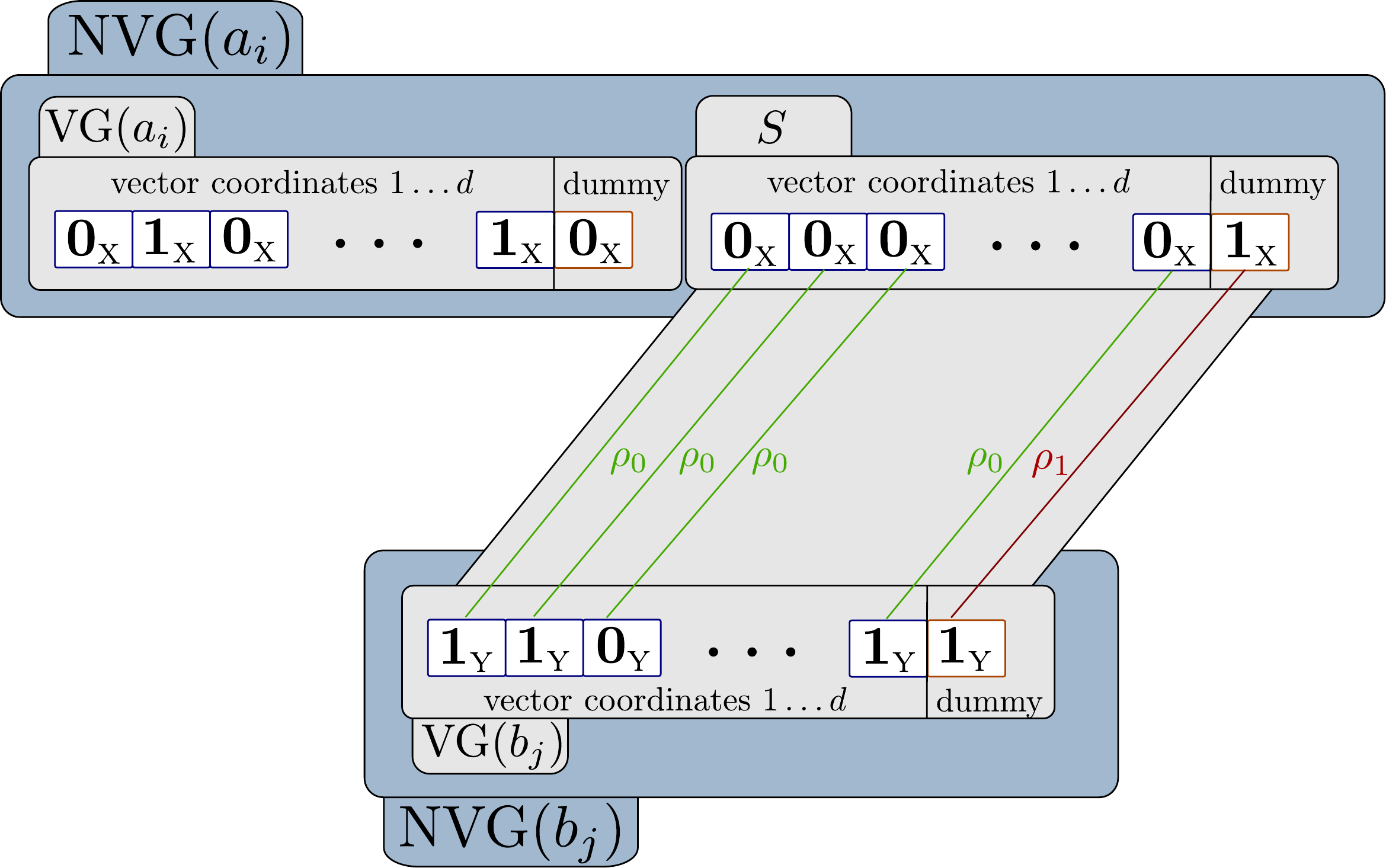}
  \caption{Case $\langle a_i, b_j \rangle > 0$. Aligning $\VG(b_j)$ with $S$ achieves an alignment cost of $d\rho_0 + \rho_1$.}
  \label{fig:NVGs-unsat}
\end{subfigure}

\caption{Schematic illustration of the coordinate, vector, and normalized vector gadgets.}
\label{fig:NVGs}
\end{figure}

\begin{claim} \label{cla:AG}
For any $i \in [n]$, $j \in [m]$, if $\langle a_{i},b_{j}\rangle=0$, then $\delta(\VG(a_{i}),\VG(b_{j})) = C+(d+1)\rho_{0}$.
Otherwise, $\delta(\VG(a_{i}),\VG(b_{j}))\ge C+d\rho_{0}+\rho_{1}$. Moreover, $\delta(S,\VG(b_j)) = C+d\rho_0+\rho_1$.
\end{claim}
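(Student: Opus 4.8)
The plan is to unfold the definitions of the vector gadgets $\VG(a_i),\VG(b_j),S$ as outputs of the \aligngad\ gadget applied to the coordinate gadgets, and then to use Property~\eqref{eq:Cone} to sandwich $\delta(\VG(a_i),\VG(b_j))$ between $\min_{A\in\algn_{d+1,d+1}}$ and $\min_{A\in\strc_{d+1,d+1}}$ of the alignment cost, where the inputs are the coordinate gadgets $\CG(a_i,1),\dots,\CG(a_i,d+1)$ and $\CG(b_j,1),\dots,\CG(b_j,d+1)$. Since here $n=m=d+1$, both the alignment set and the structured-alignment set are very restricted: the only structured alignment is the identity alignment $\{(1,1),\dots,(d+1,d+1)\}$, so $\min_{A\in\strc}\delta(A) = \sum_{k=1}^{d+1}\delta(\CG(a_i,k),\CG(b_j,k))$. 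Meanwhile any partial alignment $A\in\algn_{d+1,d+1}$ with $|A|=d+1$ must also be the identity, and any $A$ with $|A|<d+1$ pays $(d+1-|A|)\cdot\max\delta \ge (d+1-|A|)\rho_1$ for the unaligned indices, so dropping alignments only helps if a coordinate gadget pair has cost $\rho_1$, and even then it never helps strictly. The key elementary observation, using \defref{coordValues}/coordinate values, is that $\delta(\CG(a_i,k),\CG(b_j,k)) = \rho_1$ exactly when $a_i[k]=b_j[k]=1$ and $=\rho_0$ otherwise, for $1\le k\le d$, while for $k=d+1$ we always get $\delta(\zleft,\oright)=\rho_0$.

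Given this, the two cases follow by a short counting argument. If $\langle a_i,b_j\rangle=0$, then no coordinate $k\le d$ has $a_i[k]=b_j[k]=1$, and coordinate $d+1$ contributes $\rho_0$, so $\min_{A\in\strc}\delta(A)=(d+1)\rho_0$; on the other hand $\min_{A\in\algn}\delta(A)\ge (d+1)\rho_0$ as well, since each of the $d+1$ values of $j$ contributes at least $\rho_0$ (either an aligned pair of cost $\ge\rho_0$, or an unaligned index of cost $\rho_1>\rho_0$). Hence both bounds in \eqref{eq:Cone} equal $(d+1)\rho_0$, forcing $\delta(\VG(a_i),\VG(b_j))=C+(d+1)\rho_0$. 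If $\langle a_i,b_j\rangle>0$, then at least one coordinate $k\le d$ has cost $\rho_1$, so the identity alignment costs at least $d\rho_0+\rho_1$; and for the lower bound $\min_{A\in\algn}\delta(A)$, since $n=m$ means every alignment either aligns index $k$ to index $k$ (there is no "slack", as $i_1<\dots<i_k$ and $j_1<\dots<j_k$ in $[d+1]$ forces aligned pairs to be "diagonal" only when $|A|=d+1$) or leaves indices unaligned at cost $\rho_1$ each — a case analysis shows that any such $A$ still costs $\ge d\rho_0+\rho_1$. Thus $\delta(\VG(a_i),\VG(b_j))\ge C+d\rho_0+\rho_1$ by the left inequality of \eqref{eq:Cone} (actually we only need the left inequality here). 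For the statement about $S=\CandA_\x^{d+1,t_\y}(\zleft,\dots,\zleft,\oleft)$ paired with $\VG(b_j)$: coordinate $d+1$ now pairs $\oleft$ with $\CG(b_j,d+1)=\oright$, costing $\rho_1$, while coordinates $1,\dots,d$ pair $\zleft$ with something, costing $\rho_0$ each; so $\min_{A\in\strc}\delta(A)=d\rho_0+\rho_1$, and as before $\min_{A\in\algn}\delta(A)\ge d\rho_0+\rho_1$ (any unaligned $j$ costs $\rho_1$, any aligned pair costs $\ge\rho_0$, and the pair at $d+1$, if aligned diagonally, costs $\rho_1$; dropping it to avoid $\rho_1$ makes it unaligned at cost $\rho_1$ anyway). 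So both bounds coincide and $\delta(S,\VG(b_j))=C+d\rho_0+\rho_1$ exactly.

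The main obstacle I expect is the lower-bound direction $\min_{A\in\algn_{d+1,d+1}}\delta(A)\ge(d+1)\rho_0$ (resp.\ $\ge d\rho_0+\rho_1$) — i.e., verifying that allowing \emph{partial} alignments, rather than only the structured/diagonal one, cannot decrease the cost below the target. The point is that the penalty term $(m-|A|)\max_{i,j}\delta(x_i,y_j)$ is calibrated to exactly $\rho_1$ per unaligned index, which is the worst possible pairwise cost, so unaligning an index never saves anything relative to any alignment it could have participated in. Making this precise is a routine but slightly fiddly argument: one shows $\delta(A)\ge\sum_{j\in[m]}\big(\text{cost charged to }j\big)$ where each $j$'s charge is $\ge\rho_0$ always, and $\ge\rho_1$ whenever $j$ is unaligned, then matches this against the explicit identity-alignment cost computed above. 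Everything else is bookkeeping with the coordinate-value inequalities and the additive constant $C$ from \eqref{eq:Cone}.
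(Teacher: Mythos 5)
Your proposal is correct and takes essentially the same route as the paper: apply Property~\eqref{eq:Cone} to the inner invocation of the \aligngad\ gadget, compute the structured (identity) alignment cost explicitly, and lower-bound $\min_{A\in\algn_{d+1,d+1}}\Val{A}$ by observing that every alignment's cost is a sum of $d+1$ terms, each of which is some $\delta(u_\x,u_\y)\ge\rho_0$, with at least one term equal to $\rho_1$ when $\langle a_i,b_j\rangle>0$ (either a diagonal pair at a witnessing coordinate, or a penalty term). The "slightly fiddly" lower-bound step you flag is exactly what the paper does, and it is not actually fiddly: one need not reason about off-diagonal partial alignments at all, since every aligned summand is $\ge\rho_0$ regardless of which indices are paired, and every unaligned index contributes the fixed penalty $\max_{k,\ell}\delta(\CG(a_i,k),\CG(b_j,\ell))$, which is $\rho_1$ once any coordinate of $a_i$ is $1$ (and $b_j$'s $(d+1)$-st coordinate is always $\oright$), or, for $S$ vs $\VG(b_j)$, since $S$'s $(d+1)$-st coordinate is $\oleft$. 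So the charge-per-$j$ scheme you sketch at the end is precisely the paper's argument, and your proof is complete.
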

\begin{proof}
  If $\langle a_{i},b_{j}\rangle=0$, then the structured alignment $\{(1,1),\ldots,(d+1,d+1)\}$ has cost $\delta(A) = \sum_{k=1}^{d+1} \delta(\CG(a_i,k),\CG(b_j,k)) = (d+1) \rho_0$, since in each component $k$ at least one value is $\zleft$ or $\zright$, incurring a cost of $\rho_0$ (indeed even in position $k=d+1$ we have $\CG(a_i,d+1)=\zleft$). By definition of alignment gadgets, we obtain $\delta(\VG(a_{i}),\VG(b_{j})) - C \le (d+1)\rho_0$. Moreover, since the cost $\delta(A)$ of any alignment $A \in \algn_{d+1,d+1}$ consists of $d+1$ summands of the form $\delta(u_\x,u_\y)$ with $u_\x \in \{\zleft,\oleft\}, u_\y \in \{\zright,\oright\}$, we also have $\delta(\VG(a_i),\VG(b_j))-C \ge (d+1) \rho_0$.
  
  If $\langle a_{i},b_{j}\rangle>0$, then consider any $A \in \algn_{n,m}$. If $|A|=d+1$ then $A = \{(1,1),\ldots,(d+1,d+1)\}$, and this alignment incurs a cost of at least $d \rho_0 + \rho_1$, since in at least one position $k$ we have $\CG(a_i,k) = \oleft$ and $\CG(b_j,k) = \oright$. Otherwise, if $|A| < d+1$, then $\delta(A)$ consists of $d+1$ summands of the form $\delta(u_\x,u_\y)$ with $u_\x \in \{\zleft,\oleft\}, u_\y \in \{\zright,\oright\}$, and at least one of these summands is the punishment term $\max_{k,\ell} \delta(\CG(a_i,k),\CG(b_j,\ell))$ because $|A| < d+1$. Since $\langle a_{i},b_{j}\rangle=1$, the punishment term is $\rho_1$ and we obtain $\delta(A) \ge d \rho_0 + \rho_1$. By definition of alignment gadgets, we have $\delta(\VG(a_i),\VG(b_j))-C \ge d \rho_0 + \rho_1$.
  
  We argue similarly for $\delta(S,\VG(b_j))$: The alignment $\{(1,1),\ldots,(d+1,d+1)\}$ incurs a cost of $d\rho_0+\rho_1$, since the $(d+1)$-th component of $S$ is $\oleft$ and of $\VG(b_j)$ is $\VG(b_j,d+1) = \oright$, while all other components of $S$ are $\zleft$. Moreover, any alignment with $|A| < d+1$ incurs a punishment term, so that it incurs cost of at least $d\rho_0+\rho_1$.
\end{proof}

\begin{claim}
For any $i \in [n]$, $j \in [m]$, if $\langle a_{i},b_{j}\rangle=0$ then $\delta(\NVG(a_{i}),\NVG(b_{j})) = C+C'+(d+1)\rho_{0}=:\rho'_{0}$.
Otherwise, $\delta(\NVG(a_{i}),\NVG(b_{j}))=C+C'+d\rho_{0}+\rho_{1}=:\rho'_{1}$.
\end{claim}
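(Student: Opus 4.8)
The plan is to combine the previous claim (Claim~\ref{cla:AG} on vector gadgets) with the alignment-gadget property~\eqref{eq:Cone}, applied to the specific two-element and one-element constructions defining $\NVG(a_i)$ and $\NVG(b_j)$. Recall $\NVG(a_i) = \CandA_{\x}^{1,t_\y'}(S,\VG(a_i))$ is built from the two instances $S,\VG(a_i)$ (so here $n=2$), while $\NVG(b_j) = \CandA_{\y}^{2,t_\x'}(\VG(b_j))$ is built from the single instance $\VG(b_j)$ (so here $m=1$). By the alignment-gadget property with $C'$ the relevant constant, $\delta(\NVG(a_i),\NVG(b_j)) - C'$ is sandwiched between $\min_{A \in \algn_{2,1}} \delta^{S,\VG(a_i)}_{\VG(b_j)}(A)$ and $\min_{A \in \strc_{2,1}} \delta^{S,\VG(a_i)}_{\VG(b_j)}(A)$.

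The key observation is that for $n=2$, $m=1$ these two minima coincide: $\algn_{2,1}$ consists of the partial alignments $\emptyset$, $\{(1,1)\}$, $\{(2,1)\}$, whereas $\strc_{2,1}$ consists of exactly $\{(1,1)\}$ and $\{(2,1)\}$. Since the empty alignment has cost equal to the punishment term $\max\{\delta(S,\VG(b_j)), \delta(\VG(a_i),\VG(b_j))\}$, which is at least each of the two nonempty-alignment costs $\delta(S,\VG(b_j))$ and $\delta(\VG(a_i),\VG(b_j))$, the empty alignment is never strictly better. Hence both the $\algn$-minimum and the $\strc$-minimum equal $\min\{\delta(S,\VG(b_j)),\, \delta(\VG(a_i),\VG(b_j))\}$, and therefore $\delta(\NVG(a_i),\NVG(b_j)) = C' + \min\{\delta(S,\VG(b_j)),\, \delta(\VG(a_i),\VG(b_j))\}$ exactly.

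It remains to plug in Claim~\ref{cla:AG}. We have $\delta(S,\VG(b_j)) = C + d\rho_0 + \rho_1$ always. If $\langle a_i,b_j\rangle = 0$ then $\delta(\VG(a_i),\VG(b_j)) = C + (d+1)\rho_0 = C + d\rho_0 + \rho_0 < C + d\rho_0 + \rho_1$ (using $\rho_0 < \rho_1$), so the minimum is $C + (d+1)\rho_0$, giving $\delta(\NVG(a_i),\NVG(b_j)) = C + C' + (d+1)\rho_0 = \rho_0'$. If $\langle a_i,b_j\rangle > 0$ then $\delta(\VG(a_i),\VG(b_j)) \ge C + d\rho_0 + \rho_1 = \delta(S,\VG(b_j))$, so the minimum is exactly $C + d\rho_0 + \rho_1$, giving $\delta(\NVG(a_i),\NVG(b_j)) = C + C' + d\rho_0 + \rho_1 = \rho_1'$.

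The only mild subtlety — and the step I would be most careful about — is making sure that the single application of the alignment gadget to a two-element left input and one-element right input is legitimate (i.e.\ $m=1 \le n=2$, and $S,\VG(a_i)$ share the common type $t_\x'$ while $\VG(b_j)$ has type $t_\y'$, which was noted when these gadgets were introduced), and that the punishment-term bookkeeping in the cost $\delta(A)$ for the empty alignment is correctly identified as a maximum over the two relevant distances so that $\emptyset$ is dominated. Everything else is a direct substitution.
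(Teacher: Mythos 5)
Your proof is correct and follows essentially the same approach as the paper: identify that $\algn_{2,1}$ adds only the empty alignment to $\strc_{2,1}$, observe that the empty alignment's cost is the maximum of the two nonempty-alignment costs and hence dominated, conclude that the sandwich in Property~\eqref{eq:Cone} collapses to an equality, and then substitute the values from Claim~\ref{cla:AG}. No gaps; you have simply spelled out the "both minima coincide" step a little more explicitly than the paper does.
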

\begin{proof}
Note that $\{(1,1)\}$, $\{(2,1)\}$, and $\emptyset$ are the only alignments in $\algn_{2,1}$, which corresponds to aligning
$(S,\VG(b_{j}))$ or $(\VG(a_{i}),\VG(b_{j}))$ or nothing. Moreover, the structured alignments are $\{(1,1)\}$ and $\{(2,1)\}$. Observe that the cost of the alignment $\emptyset$ is simply the maximum of the other two alignments. By \claref{AG}, if $\langle a_{i},b_{j}\rangle=0$ then the minimal cost is $C+(d+1)\rho_0$, attained by alignment $\{(2,1)\}$. Otherwise, the minimal cost is $C+d \rho_0 + \rho_1$, attained by alignment $\{(1,1)\}$. By definition of alignment gadgets, this yields
that $\delta(\NVG(a_i),\NVG(b_j)) - C'$ is equal to $C+(d+1)\rho_0$ or $C+d \rho_0 + \rho_1$, respectively.
\end{proof}

The claim shows that $\delta(\NVG(a_{i}),\NVG(b_{j}))$ attains one
of only two values, depending on whether $\langle a_{i},b_{j}\rangle=0$.

\begin{claim}
If there is no $i \in [n], j \in [m]$ with $\langle a_{i},b_{j}\rangle=0$,
then $\delta(x,y)\ge C''+m\rho'_{1}$. Otherwise, $\delta(x,y)\le C''+(m-1)\rho'_{1}+\rho'_{0}$.\end{claim}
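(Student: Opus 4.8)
The plan is to apply the alignment-gadget guarantee \eqref{eq:Cone} to the outermost invocation of the gadget, the one that produced $x$ and $y$ from the $2n$ instances $\NVG(a_{1}),\dots,\NVG(a_{n}),\NVG(a_{1}),\dots,\NVG(a_{n})$ and the $m$ instances $\NVG(b_{1}),\dots,\NVG(b_{m})$ with constant $C''$. For $i\in[2n]$ write $\sigma(i):=i$ if $i\le n$ and $\sigma(i):=i-n$ if $i>n$, so the $i$-th instance feeding into $x$ is $\NVG(a_{\sigma(i)})$. Then the cost of an alignment $A\in\algn_{2n,m}$ is $\delta(A)=\sum_{(i,j)\in A}\delta(\NVG(a_{\sigma(i)}),\NVG(b_{j}))+(m-|A|)\max_{i',j'}\delta(\NVG(a_{\sigma(i')}),\NVG(b_{j'}))$, and by the previous claim each $\delta(\NVG(a_{\sigma(i)}),\NVG(b_{j}))$ equals $\rho'_{0}$ if $\langle a_{\sigma(i)},b_{j}\rangle=0$ and $\rho'_{1}$ otherwise, where $\rho'_{0}<\rho'_{1}$ (since $\rho'_{1}-\rho'_{0}=\rho_{1}-\rho_{0}>0$).

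For the first statement, assume $\langle a_{i},b_{j}\rangle\neq 0$ for all $i,j$. Then every term $\delta(\NVG(a_{\sigma(i)}),\NVG(b_{j}))$ equals $\rho'_{1}$, so does the punishment term, and hence $\delta(A)=|A|\rho'_{1}+(m-|A|)\rho'_{1}=m\rho'_{1}$ for every $A\in\algn_{2n,m}$. The left inequality of \eqref{eq:Cone} now gives $\delta(x,y)-C''\ge\min_{A\in\algn_{2n,m}}\delta(A)=m\rho'_{1}$, as claimed.

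For the second statement, pick $i^{*}\in[n],j^{*}\in[m]$ with $\langle a_{i^{*}},b_{j^{*}}\rangle=0$. By the right inequality of \eqref{eq:Cone} it suffices to produce a structured alignment $A\in\strc_{2n,m}$ of cost at most $(m-1)\rho'_{1}+\rho'_{0}$. This is exactly where the duplication of the $a$-list is used: set $\Delta:=i^{*}-j^{*}$ if $i^{*}\ge j^{*}$ and $\Delta:=n+i^{*}-j^{*}$ otherwise. One checks in both cases that $0\le\Delta\le 2n-m$, so $A:=\{(\Delta+1,1),\dots,(\Delta+m,m)\}$ is a structured alignment, and by choice of $\Delta$ we have $\sigma(\Delta+j^{*})=i^{*}$. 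Since $|A|=m$ there is no punishment term; the summand for the pair $(\Delta+j^{*},j^{*})$ is $\delta(\NVG(a_{i^{*}}),\NVG(b_{j^{*}}))=\rho'_{0}$, and each of the remaining $m-1$ summands is at most $\rho'_{1}$. Hence $\delta(A)\le(m-1)\rho'_{1}+\rho'_{0}$, and therefore $\delta(x,y)\le C''+(m-1)\rho'_{1}+\rho'_{0}$.

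The only point that needs a little care is the verification that the shift $\Delta$ chosen from an orthogonal pair indeed satisfies $0\le\Delta\le 2n-m$; this is a short case analysis using $1\le i^{*}\le n$, $1\le j^{*}\le m$, and $m\le n$, and it is precisely the reason $x$ was built from two back-to-back copies of the normalized vector gadgets, so that an arbitrary ``wrap-around'' pairing of $b_{j^{*}}$ with $a_{i^{*}}$ can be realized by a contiguous block of indices.
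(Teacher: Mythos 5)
Your proof is correct and takes essentially the same approach as the paper: the first half applies the lower-bound inequality of the alignment gadget after noting every summand (including the punishment term) equals $\rho'_1$, and the second half constructs the same shift $\Delta$ (with the $i<j$ wrap-around case handled by the duplicated $\NVG(a_i)$'s) to obtain a structured alignment of cost at most $(m-1)\rho'_1 + \rho'_0$. You merely spell out the verification that $0 \le \Delta \le 2n - m$ and that $\sigma(\Delta+j^*) = i^*$, which the paper leaves implicit.
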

\begin{proof}
If $\langle a_{i},b_{j}\rangle>0$ for all $i,j$, then by the previous claim we have $\delta(\NVG(a_i),\NVG(b_j)) \ge \rho_1'$ for all $i,j$. Since the cost of any alignment consists of $m$ summands of the form $\delta(\NVG(a_i),\NVG(b_j))$ for some $i,j$, the cost of any alignment is at least $m \rho_1'$. By definition of alignment gadgets, we obtain $\delta(x,y) - C'' \ge m  \rho_1'$.

If $\langle a_{i},b_{j}\rangle=0$ for some $i,j$, then consider the structured alignment $A = \{(\Delta+1,1),\ldots,(\Delta+m,m)\}$ with $\Delta := i-j$ if $i\ge j$, or $\Delta := n+i-j$ if $i < j$. Its cost consists of $m$ summands, of which one is $\delta(\NVG(a_i),\NVG(b_j)) = \rho_0'$ and all others are at most $\rho_1'$. Hence, the cost of $A$ is at most $(m-1)\rho_1' + \rho_0'$ and by definition of alignment gadgets, we obtain $\delta(x,y) - C'' \le (m-1)\rho_1' + \rho_0'$.
\end{proof}
By setting $\rho := C''+(m-1)\rho'_{1}+\rho'_{0}$ we have found a threshold such that $\delta(x,y)\le\rho$ if and only
if there is a pair $(i,j)$ with $\langle a_{i},b_{j}\rangle=0$. Thus, computing $\delta(x,y)$ allows to decide the given \OV\ instance. This finishes the proof of \thmref{main}.

\section{Longest Common Subsequence} \label{sec:lcs}

In this section, we present an alternative hardness proof for longest common subsequence (\LCS), which is shorter than for the more general problem \EDITallc\ in \secref{edit}. 
Given two strings $x,y$ over an alphabet $\Sigma$, a longest common subsequence is a binary string $z$ that appears in $x$ and in $y$ as a subsequence and has maximal length. We denote by $\LCS(x,y)$ some longest common subsequence of $x$ and $y$, and by $|\LCS(x,y)|$ the length of any longest common subsequence of $x$ and $y$. 

We present an \aligngad\ gadget and coordinate values for \LCS\ over binary strings, i.e., we consider the set of inputs ${\cal I} := \bigcup_{k \ge 0} \{0,1\}^k$.
Note that \LCS\ is a maximization problem, but \defref{cagadget} implicitly assumes a minimization problem, so we instead consider the number of unmatched symbols $\dLCS(x,y) := |x|+|y|-2|\LCS(x,y)|$ for binary strings $x,y$. Note that this is equivalent to \EDITallc\ for $\cdelx = \cdely = 1$, $\cmatch = 0$, and $\csubst = 2$.

\begin{lem}
\LCS\ admits coordinate values by setting
$$ \oleft := 11100,\; \zleft := 10011,\; \oright := 00111,\; \zright := 11001. $$
\end{lem}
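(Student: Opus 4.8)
The plan is to verify \defref{coordValues} directly for the four given strings. Writing $\dLCS(u,v)=|u|+|v|-2|\LCS(u,v)|$, and noting that all four strings have length $5$, the condition to check becomes
\[
\dLCS(\oleft,\oright) > \dLCS(\zleft,\oright) = \dLCS(\zleft,\zright) = \dLCS(\oleft,\zright),
\]
which, after substituting $|u|=|v|=5$, is equivalent to
\[
|\LCS(\oleft,\oright)| < |\LCS(\zleft,\oright)| = |\LCS(\zleft,\zright)| = |\LCS(\oleft,\zright)|.
\]
The type condition $\type(\zleft)=\type(\oleft)$ and $\type(\zright)=\type(\oright)$ is immediate: each of the four strings has length $5$ and exactly three $1$s, so all four have type $(5,3)$.

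First I would compute the three "equal" LCS lengths and show each equals $4$, i.e.\ exhibit a common subsequence of length $4$ in each of $(\zleft,\oright)=(10011,00111)$, $(\zleft,\zright)=(10011,11001)$, and $(\oleft,\zright)=(11100,11001)$, and argue no length-$5$ common subsequence exists (which is clear, since a length-$5$ common subsequence of two length-$5$ strings would force the strings to be identical, and no two of the four are equal). For instance $0011$ is a subsequence of both $10011$ and $00111$; $1001$ is a subsequence of both $10011$ and $11001$; and $1100$ is a subsequence of both $11100$ and $11001$. Then I would show $|\LCS(\oleft,\oright)|=|\LCS(11100,00111)|=2$: the string $00111$ has its single maximal block of $1$s at the end while $11100$ has it at the front, so any common subsequence uses at most one $1$ together with... more carefully, a common subsequence consists of some $1$s followed by some $0$s (matching the shape of $11100$) while also matching the shape of $00111$ (some $0$s followed by some $1$s), hence it is either all $0$s or all $1$s or a single $0$ followed by a single $1$; the longest is $00$ (length $2$), or $11$ (length $2$). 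Since $2 < 4$, the strict inequality holds.

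The verification is entirely finite — twelve small LCS computations at worst — so there is no real obstacle; the only care needed is the upper-bound argument for $|\LCS(\oleft,\oright)|=2$, where one must rule out length $3$. This follows because a common subsequence of $11100$ and $00111$ must simultaneously be of the form $1^a0^b$ and of the form $0^c1^d$, forcing $\min\{a,c\}=0$ or $\min\{b,d\}=0$ in the right way: if it contains both a $1$ and a $0$ then (from $11100$) every $1$ precedes every $0$, and (from $00111$) every $0$ precedes every $1$, so it contains exactly one $1$ and one $0$, giving length $2$; otherwise it is $0^k$ with $k\le 2$ (only two $0$s in $11100$... actually two $0$s, wait $11100$ has two trailing $0$s and $00111$ has two leading $0$s, so $k\le 2$) or $1^k$ with $k\le 3$ but also $k \le$ (number of $1$s, fine) — however as a subsequence of $00111$ that is $1^k$ we need $k\le 3$ and of $11100$ we need $k\le 3$, giving up to $111$; but $111$ is not a subsequence of... it \emph{is} a subsequence of both. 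So in fact $|\LCS(\oleft,\oright)|=3$? Let me recheck: $111$ appears in $11100$ and in $00111$, so $|\LCS(11100,00111)|\ge 3$. Then the claimed inequality $3<4$ still holds, and I would simply report the correct value $|\LCS(\oleft,\oright)|=3$ (one cannot do better: length $4$ would need, say, $1110$ or $0111$, neither of which embeds in both). Since $3<4$, we conclude
\[
|\LCS(\oleft,\oright)|=3 < 4 = |\LCS(\zleft,\oright)| = |\LCS(\zleft,\zright)| = |\LCS(\oleft,\zright)|,
\]
equivalently $\dLCS(\oleft,\oright) > \dLCS(\zleft,\oright)=\dLCS(\zleft,\zright)=\dLCS(\oleft,\zright)$, which together with the type equalities establishes that $\dLCS$ admits coordinate values. $\qed$
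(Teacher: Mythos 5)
Your proposal is correct and takes essentially the same approach as the paper: direct verification that all four strings have type $(5,3)$, that $|\LCS(\zleft,\oright)|=|\LCS(\zleft,\zright)|=|\LCS(\oleft,\zright)|=4$, and that $|\LCS(\oleft,\oright)|=3$ (the paper simply lists the witnesses $0011$, $1001$, $1100$, $111$). Your write-up contains a visible false start where you momentarily claim $|\LCS(11100,00111)|=2$ before correctly noticing that $111$ is a common subsequence, but you catch and fix the error, so the final argument is sound.
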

\begin{proof}
All four values have the same length and the same number of 1s, so they have equal type. Short calculations show that $\LCS(\oleft,\oright) = 111$,  $\LCS(\oleft,\zright) = 1100$, $\LCS(\zleft,\oright) = 0011$, and $\LCS(\zleft,\zright) = 1001$. Thus, 
  $4 = \dLCS(\oleft,\oright) > \dLCS(\oleft,\zright) = \dLCS(\zleft,\oright) = \dLCS(\zleft,\oright) = 2$.
\end{proof}

\begin{defn} \label{def:lcs}
Consider instances $x_1,\ldots,x_n \in \inputs_{t_\x}$ and $y_1,\ldots,y_m \in \inputs_{t_\y}$ with $n \ge m$ and types $t_\x = (\ell_\x,s_\x), t_\y = (\ell_\y,s_\y)$. Set $\gamma_1 := \ell_\x+\ell_\y, \gamma_2 := 6(\ell_\x+\ell_\y), \gamma_3 := 10(\ell_\x+\ell_\y)+2s_\x-\ell_\x, \gamma_4 := 13(\ell_\x+\ell_\y)$.
We \emph{guard} the input strings by blocks of zeroes and ones, setting $\guard(z) := 1^{\gamma_2} 0^{\gamma_1} z 0^{\gamma_1} 1^{\gamma_2}$. 
We define the \aligngad\ gagdet as 
\begin{align*}
x &:=\qquad \, \guard(x_1) \; 0^{\gamma_3} \; \guard(x_2) \; 0^{\gamma_3} \,\ldots\, \guard(x_{n-1}) \; 0^{\gamma_3} \; \guard(x_n),  \\
y &:= \, 0^{n\gamma_4} \; \guard(y_1) \; 0^{\gamma_3} \; \guard(y_2) \; 0^{\gamma_3} \,\ldots\, \guard(y_{m-1}) \; 0^{\gamma_3} \; \guard(y_m) \; 0^{n\gamma_4}.
\end{align*}
\end{defn}

\begin{lem} \label{lem:lcscorrect}
  \defref{lcs} realizes an \aligngad\ gadget for \LCS.
\end{lem}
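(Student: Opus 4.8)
The plan is to verify the two inequalities in Property~\eqref{eq:Cone} for the constant $C := n\gamma_4 + (\text{fixed contributions from the guards})$, which I will pin down precisely in the course of the argument. Concretely, writing $L_\x := |x|$ and $L_\y := |y|$, we have $\dLCS(x,y) = L_\x + L_\y - 2|\LCS(x,y)|$, so the task reduces to proving matching upper and lower bounds on $|\LCS(x,y)|$: roughly, $|\LCS(x,y)| = (\text{const}) - \tfrac12 \min_{A \in \algn_{n,m}} \delta(A)$ up to the one-sided slack allowed by \eqref{eq:Cone}.

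For the \textbf{upper bound} direction (which gives $\dLCS(x,y) - C \le \min_{A \in \strc_{n,m}} \delta(A)$), I would take an arbitrary structured alignment $A = \{(\Delta+1,1),\dots,(\Delta+m,m)\}$ and exhibit an explicit common subsequence of $x$ and $y$ whose length realizes the claimed value: align each $\guard(y_j)$ with $\guard(x_{\Delta+j})$, matching the outer $1^{\gamma_2}$ and $0^{\gamma_1}$ blocks fully, matching $y_j$ against $x_{\Delta+j}$ optimally (contributing $\ell_\x + \ell_\y - \dLCS(x_{\Delta+j}, y_j)$ to the LCS, i.e.\ the $\delta(x_{\Delta+j},y_j)$ term), and absorbing the unused material — the $x_i$'s with $i$ outside $[\Delta+1,\Delta+m]$, all the $0^{\gamma_3}$ spacers, and the $0^{n\gamma_4}$ pads — into the $0$-runs on the $y$ side. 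The point of choosing $\gamma_4$ so large and padding $y$ on both ends with $0^{n\gamma_4}$ is exactly that the $n - m$ leftover guarded $x$-blocks each contain only $\gamma_1$ zeroes inside (i.e.\ $2\gamma_1 + \ell_\x$ zeroes counting the inner structure is small), so their zeroes can all be matched into the pad, while their $1^{\gamma_2}$ blocks are simply dropped — and the bookkeeping has to confirm that dropping $(n-m)$ worth of $1^{\gamma_2}$-blocks costs exactly a fixed amount independent of the alignment, which is what lets it be folded into $C$.

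For the \textbf{lower bound} direction (which gives $\min_{A \in \algn_{n,m}} \delta(A) \le \dLCS(x,y) - C$), I would start from an optimal common subsequence $z$ of $x$ and $y$ and argue that, after a normalization step, it must respect the guard structure: because each $1^{\gamma_2}$-block is much longer than the total length of any single $x_i$ or $y_j$ (by choice of $\gamma_2 = 6(\ell_\x+\ell_\y)$) and much longer than $\gamma_1$, an optimal $z$ cannot afford to ``misalign'' a $1^{\gamma_2}$-block of $y$ against anything other than a $1^{\gamma_2}$-block of $x$ (more precisely, a standard exchange argument shows we lose nothing by assuming each surviving $1^{\gamma_2}$ of $y$ is matched to a full $1^{\gamma_2}$ of $x$, and similarly for the $0^{\gamma_1}$ collars which then lock the enclosed $y_j$ to sit against a single $x_i$). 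This induces a partial alignment $A \in \algn_{n,m}$ recording which $\guard(x_i)$ each $\guard(y_j)$ got matched to; a $y_j$ whose collars were not fully matched is declared unaligned, and the cost accounting shows it then contributes at least the punishment term $\max_{i,j}\delta(x_i,y_j)$ (this is where I must check $\max_{i,j}\delta(x_i,y_j) \le \gamma_1 + \ell_\y$ or so, guaranteed since $\delta(x_i,y_j) \le \ell_\x + \ell_\y$ and the collars/spacers are sized to dominate — this sizing is precisely the role of $\gamma_3$). Summing over $j$ yields $\dLCS(x,y) - C \ge \delta(A) \ge \min_{A' \in \algn_{n,m}} \delta(A')$.

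The main obstacle I anticipate is the lower-bound normalization: over a binary alphabet, an adversarial common subsequence can in principle shuffle matches across block boundaries — matching part of one $1^{\gamma_2}$-block of $y$ to the tail of one $1^{\gamma_2}$-block of $x$ and part to the head of the next, or matching zeroes of $y_j$ into a $0^{\gamma_3}$ spacer rather than into $\guard(x_i)$. Ruling these out requires a careful potential/exchange argument quantifying that any such ``boundary-crossing'' strictly wastes at least one symbol relative to the block-respecting alternative, leveraging the strict size separations $\gamma_1 \ll \gamma_2$, $\ell_\x,\ell_\y,s_\x \ll \gamma_3$, and $\gamma_3 \ll \gamma_4$ baked into \defref{lcs}; getting all these inequalities to line up simultaneously (including the somewhat delicate choice $\gamma_3 = 10(\ell_\x+\ell_\y) + 2s_\x - \ell_\x$, whose $2s_\x - \ell_\x$ term must be there to make the per-block contributions cancel cleanly into a type-dependent constant $C$) is the technical heart of the proof. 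Everything else is routine counting once the block structure is established, and the running-time and type-dependence claims in \defref{cagadget} are immediate from the construction.
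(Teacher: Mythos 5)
Your upper-bound plan matches the paper's: for a structured alignment $A = \{(\Delta+1,1),\dots,(\Delta+m,m)\}$, partition $x$ so that $\guard(y_j)$ faces $\guard(x_{\Delta+j})$ and the leftover $x$-material sits against the two $0^{n\gamma_4}$ pads, then exploit the balance property induced by the choice of $\gamma_3$ (each $\guard(x_i)0^{\gamma_3}$ has exactly as many ones as zeroes) so that each pad contributes exactly $n\gamma_4$ to $\dLCS$ independently of $\Delta$, giving $C = 2n\gamma_4$. You phrase this as ``dropping the $1^{\gamma_2}$-blocks at fixed cost,'' which is the same observation.

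For the lower bound, however, you take a genuinely different — and weaker — route. You propose to normalize an optimal common subsequence by an exchange argument so that each surviving $1^{\gamma_2}$ block of $y$ aligns with a single $1^{\gamma_2}$ block of $x$, the $0^{\gamma_1}$ collars lock $y_j$ against a single $x_i$, and boundary-crossing matches can always be eliminated. This normalization is exactly the thing that is hard to establish over a binary alphabet, and the paper deliberately avoids trying to prove it. Instead, the paper proves a structural decomposition lemma (\claref{lcsobs}): for any fixed split $y = y_1\cdots y_k$ there is an \emph{ordered partition} $x = x(y_1)\cdots x(y_k)$ of $x$ with $\dLCS(x,y) = \sum_j \dLCS(x(y_j),y_j)$. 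This holds unconditionally — no claim is made that the LCS ``respects'' block boundaries. One then fixes the canonical split of $y$ into $L^\y$, the $\guard(y_j)$'s, the $Z^\y_j$'s, and $R^\y$, takes the induced partition of $x$, and defines the alignment $A$ combinatorially: align $j$ to the leftmost $i$ such that $x(\guard(y_j))$ contains more than half of $x_i$ (this guarantees $A$ is a valid partial alignment). The remaining work is to lower-bound each piece: $\dLCS(x(L^\y),L^\y), \dLCS(x(R^\y),R^\y) \ge n\gamma_4$ (again via the ones-versus-zeroes balance), and $\dLCS(x(\guard(y_j)),\guard(y_j)) \ge \dLCS(x_i,y_j)$ if $j$ is aligned to $i$, or $\ge \ell_\x+\ell_\y$ if unaligned — both proven by greedy peeling (\claref{lcsgreedy}) on the contiguous substring $x(\guard(y_j))$, whose structure is controlled because it is a window in $x$.

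The gap in your proposal is that the exchange argument you lean on is not a routine step, and there is no indication of how to carry it out; in fact, over $\{0,1\}$ it is unclear that one can always shift a match of a $y$-one from one $1^{\gamma_2}$-block of $x$ to another without losing. The ordered-partition decomposition is precisely the device that sidesteps this: it gives you a split of $x$ for free and lets you argue locally about each $\dLCS(x(\guard(y_j)),\guard(y_j))$ without ever normalizing the global matching. If you want to salvage your route, the honest fix is to replace the normalization step by the partition lemma and the per-piece bounds; otherwise the lower-bound half of the proof is incomplete.
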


Thus, \thmref{main} is applicable, implying a lower bound of $\Oh(m^{2-\eps})$ for \LCS. 
We remark that our construction is no \emph{unbalanced} \aligngad\ gadget, as the length of $y$ grows linearly in $n$, not necessarily in $m \le n$. Thus, we do not obtain a conditional lower bound of $\Oh((n m)^{1-\eps})$ (for $m \approx n^\alpha$ for any $0 < \alpha < 1$).

\begin{proof}[Proof of \lemref{lcscorrect}]
  Observe that indeed $x$ only depends on $m,t_\y$, and $x_1,\ldots,x_n$, and $\type(x)$ only depends on $n,m,t_\x$, and $t_\y$, and similarly for $y$. Moreover, $x$ and $y$ can clearly be constructed in time $\Oh((n+m)(\ell_\x+\ell_\y))$, where $\ell_\x = |x_1| = \ldots = |x_n|$ and $\ell_\y = |y_1| = \ldots = |y_m|$.
  
  It remains to prove that by setting $C := 2n \gamma_4$ we have
  \begin{align} \label{eq:toshow}
\min_{A \in \algn_{n,m}} \Val{A} \le \delta(x,y) - C \le \min_{A \in \strc_{n,m}} \Val{A}.
  \end{align}
  
  We first prove the following three useful observations. Here for a string $z$ and indices $a\le b$ we denote the substring from $z[a]$ to $z[b]$ by $z[a..b]$.

\begin{claim} \label{cla:lcsobs}
  Let $x$ and $z_1,\ldots,z_k$ be binary strings. Set $z = z_1 \ldots z_n$. Then we have
  $$ \dLCS(x,z) = \min_{x(z_1),\ldots,x(z_k)} \sum_{j=1}^k \dLCS(x(z_j),z_j),  $$
  where $x(z_1),\ldots,x(z_k)$ range over all \emph{ordered partitions} of $x$ into $k$ substrings, i.e., $x(z_1) = x[i_0+1..i_1], x(z_2) = x[i_1+1..i_2], \ldots, x(z_k) = x[i_{k-1}+1..i_k]$ for any $0 = i_0 \le i_1 \le \ldots \le i_{k} = |x|$.
\end{claim}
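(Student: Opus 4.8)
The plan is to establish both inequalities of the identity
$\dLCS(x,z) = \min \sum_{j=1}^k \dLCS(x(z_j),z_j)$ separately, where I think of $\dLCS$ in terms of LCS length: since $\dLCS(u,v) = |u| + |v| - 2|\LCS(u,v)|$ and $\sum_j |x(z_j)| = |x|$ and $\sum_j |z_j| = |z|$ for any ordered partition, the claim is equivalent to the clean statement $|\LCS(x,z)| = \max_{x(z_1),\dots,x(z_k)} \sum_{j=1}^k |\LCS(x(z_j), z_j)|$, where the max ranges over ordered partitions of $x$ into $k$ (possibly empty) contiguous substrings. So first I would reduce to this maximization form.

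For the ``$\ge$'' direction (an ordered partition gives a common subsequence of $x$ and $z$), I would take any ordered partition $x = x(z_1) \cdots x(z_k)$ and, for each $j$, a common subsequence $w_j$ of $x(z_j)$ and $z_j$ of length $|\LCS(x(z_j),z_j)|$. Then $w_1 \cdots w_k$ is a subsequence of $x(z_1)\cdots x(z_k) = x$ (concatenation of subsequences of consecutive blocks) and simultaneously a subsequence of $z_1 \cdots z_k = z$, hence $|\LCS(x,z)| \ge \sum_j |w_j| = \sum_j |\LCS(x(z_j),z_j)|$; taking the max over partitions gives one inequality. For the ``$\le$'' direction, I would take a longest common subsequence $w$ of $x$ and $z$ witnessed by an embedding into $z$; since $z = z_1 \cdots z_k$, this embedding splits $w$ uniquely as $w = w_1 \cdots w_k$ with $w_j$ embedded into the block $z_j$. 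The matching embedding of $w$ into $x$ then splits $x$ into contiguous pieces $x(z_1), \dots, x(z_k)$ (cutting $x$ at the position just after where the last symbol of $w_j$ lands, and handling the symbols of $w$ landing between blocks by assigning them arbitrarily — this is where I must be slightly careful so the pieces remain contiguous and exhaust $x$). Each $w_j$ is then a common subsequence of $x(z_j)$ and $z_j$, so $|\LCS(x,z)| = \sum_j |w_j| \le \sum_j |\LCS(x(z_j),z_j)| \le \max \sum_j |\LCS(x(z_j),z_j)|$.

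The main obstacle — really the only subtle point — is the bookkeeping in the ``$\le$'' direction: given the two synchronized embeddings of $w$ (one into $x$, one into $z$), I need to cut $x$ into exactly $k$ contiguous blocks so that the portion of $w$ matched inside block $j$ of $z$ is matched inside block $j$ of $x$. The clean way is: let $p_j$ be the position in $x$ where the last symbol of $w_1 \cdots w_j$ is matched (with $p_0 = 0$ and $p_k = |x|$ after possibly extending the last block to the end), and set $x(z_j) := x[p_{j-1}+1 .. p_j]$; one checks $w_j$ embeds into $x[p_{j-1}+1..p_j]$ because all its matched positions lie in $(p_{j-1}, p_j]$, and empty $w_j$ is fine since empty substrings are allowed in ordered partitions. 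I would just note these are routine once the synchronization is set up, and that symmetry in $x$ versus $z$ is not needed — the partition is only of $x$, while $z$'s decomposition into $z_1,\dots,z_k$ is fixed by hypothesis.
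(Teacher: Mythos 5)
Your proposal is correct and follows essentially the same two-sided argument as the paper: the easy direction concatenates per-block common subsequences, and the other direction splits a fixed LCS of $(x,z)$ according to the fixed decomposition of $z$ and cuts $x$ at the corresponding matched positions. The paper phrases the splitting step via index sets $J_j$ (indices in $x$ matched to symbols of $z_j$) rather than explicit cut points $p_j$, but the content is identical.
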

\begin{proof}
  For any ordered partition, the substrings $x(z_j)$ are disjoint and ordered along $x$, so we can take the longest common subsequences of $(x(z_j),z_j)$, $j \in [k]$, and concatenate them to form a common subsequence of $(x,z)$. This shows $|\LCS(x,z)| \ge \sum_{j=1}^k |\LCS(x(z_j),z_j)|$. Since furthermore $|x| = \sum_{j=1}^k |x(z_j)|$ and $|z| = \sum_{j=1}^k |z_j|$ we obtain $\dLCS(x,z) \le \sum_{j=1}^k \dLCS(x(z_j),z_j)$.
  
  Now consider a longest common subsequence $s$ of $(x,y)$, which can be seen as a matching of symbols in $x$ and $y$. Let $J_j$ be the indices in $x$ that are matched to symbols in $z_j$ by $s$. Note that $\sum_{j=1}^k |J_j| = |\LCS(x,y)|$, as any matched symbol in $x$ is matched to some $z_j$. Also note that the matching is ordered, meaning that for any $i \in J_j$ and $i' \in J_{j'}$ with $j < j'$ we have $i < i'$. This allows to find an ordered partition $x(z_1),\ldots,x(z_k)$ of $x$ such that $x(z_j)$ contains the indices $J_j$ for any $j$. Finally, for this partition we have $\LCS(x(z_j),z_j) \ge |J_j|$ so that $\dLCS(x(z_j),z_j) \le |x(z_j)| + |z_j| - 2|J_j|$. Summing up over $j$, we obtain 
  $\sum_{j=1}^k \dLCS(x(z_j),z_j) \le |x| + |z| - 2|\LCS(x,z)| = \dLCS(x,z)$.
  Together both halves of the proof imply the desired statement.
\end{proof}

  \begin{claim} \label{cla:lcsgreedy}
    Let $z,w$ be binary strings and $\ell,k \in \mathbb{N}_0$. Then we have (1) $\dLCS(1^k z, 1^k w) = \dLCS(z,w)$ and (2) $\dLCS(0^\ell z, 1^k w) \ge \min\{k, \dLCS(z,1^k w)\}$. Symmetrically, we have (2') $\dLCS(0^k z, 1^\ell w) \ge \min\{k, \dLCS(0^k z, w)\}$, and we obtain more symmetric statements by reversing all involved strings.
  \end{claim}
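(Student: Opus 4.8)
The plan is to prove only the two ``left-prefix'' statements, namely (1) and the inequality $\dLCS(0^\ell z,1^k w)\ge\min\{k,\dLCS(z,1^k w)\}$, and to obtain everything else for free from symmetries. Since $\dLCS(x,y)=|x|+|y|-2|\LCS(x,y)|$ and $|\LCS(\cdot,\cdot)|$ is unchanged under swapping its two arguments, under reversing both arguments, and under swapping the two alphabet symbols in both arguments, the measure $\dLCS$ inherits all three invariances. Applying the swap-arguments invariance followed by the swap-symbols invariance to (2) yields, after renaming the free string variables, exactly (2$'$); applying the reverse-both invariance to (1), (2) and (2$'$) moves the guarding blocks to the right-hand ends of the strings and produces the ``more symmetric statements'' mentioned in the claim. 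Hence it suffices to establish (1) and (2).

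For (1) I would induct on $k$, so that it is enough to prove $|\LCS(1u,1v)|=1+|\LCS(u,v)|$ for arbitrary binary strings $u,v$: prepending a $1$ to both strings raises both lengths by $1$ and the LCS length by exactly $1$, leaving $\dLCS$ unchanged, and iterating gives $\dLCS(1^k z,1^k w)=\dLCS(z,w)$. The bound ``$\ge$'' holds because prepending $1$ to an LCS of $(u,v)$ is a common subsequence of $1u$ and $1v$. For ``$\le$'' I would take an arbitrary common subsequence $s$ of $1u$ and $1v$ and split on the first symbol of $s$: if $s$ is empty there is nothing to show; if $s$ begins with $0$, then both of its embeddings must skip the leading $1$, so $s$ is already a common subsequence of $u$ and $v$; and if $s$ begins with $1$, then deleting that first symbol of $s$ leaves a word that embeds into a suffix of $u$ and a suffix of $v$ (the parts lying strictly beyond the positions matched to the leading $1$), hence into $u$ and into $v$, so that $|s|\le 1+|\LCS(u,v)|$.

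For (2) I would fix an optimal LCS matching $M$ of $0^\ell z$ with $1^k w$ and set $L:=|\LCS(0^\ell z,1^k w)|$, so that $\dLCS(0^\ell z,1^k w)=(\ell+|z|)+(k+|w|)-2L$. The key structural observation is that $M$ cannot simultaneously use a symbol of the leading block $0^\ell$ and a symbol of the leading block $1^k$: a matched $0$ inside the first string's prefix must be paired with a $0$ of $w$, i.e.\ with a position $>k$ of the second string, while a matched $1$ inside the second string's prefix must be paired with a $1$ of $z$, i.e.\ with a position $>\ell$ of the first string, and the two pairings then attach the ``earlier'' second-string endpoint to the ``later'' first-string endpoint, contradicting that $M$ is order-preserving. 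So either (a) $M$ uses no symbol of $0^\ell$, whence $\LCS(0^\ell z,1^k w)$ is a common subsequence of $z$ and $1^k w$, giving $L\le|\LCS(z,1^k w)|$ and therefore $\dLCS(0^\ell z,1^k w)\ge\ell+\dLCS(z,1^k w)\ge\dLCS(z,1^k w)$; or (b) $M$ uses no symbol of $1^k$, whence $\LCS(0^\ell z,1^k w)$ is a common subsequence of $0^\ell z$ and $w$, giving $L\le\min\{\ell+|z|,\,|w|\}$, and a one-line case split on whether $\ell+|z|\le|w|$ shows $\dLCS(0^\ell z,1^k w)\ge k$ in both cases. Combining (a) and (b) yields $\dLCS(0^\ell z,1^k w)\ge\min\{k,\dLCS(z,1^k w)\}$.

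The only genuinely delicate point is the order-preservation observation in (2) and the attached length bookkeeping in case (b); the rest is routine manipulation of lengths and subsequences. What makes the observation go through is that both guarding blocks are \emph{prefixes}: if some position $i_0\le\ell$ of the first string were matched (necessarily to a position $j_0>k$ of the second) and some position $j_1\le k$ of the second were matched (necessarily to a position $i_1>\ell$ of the first), then $j_1<j_0$ while $i_1>i_0$, which an order-preserving matching forbids. I do not expect any further difficulty.
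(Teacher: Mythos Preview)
Your proof is correct and follows essentially the same approach as the paper. The only cosmetic difference is in (2): the paper splits on the first symbol of an optimal LCS (if it is $0$ the leading $1^k$ is unused, if it is $1$ the leading $0^\ell$ is unused), whereas you reach the same dichotomy via an explicit crossing argument on the matching; your case (b) bookkeeping via $L\le\min\{\ell+|z|,|w|\}$ is slightly more laborious than the paper's direct ``$k$ symbols unmatched'' count, but equivalent.
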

  \begin{proof}
    (1) It suffices to show the claim for $k=1$, then the general statement follows by induction. Consider a \LCS\ $s$ of $(1z, 1w)$. At least one '1' is matched in $s$, as otherwise we can extend $s$ by matching both '1's. If exactly one '1' is matched in $s$, then the other '1' is free, so we may instead match the two '1's. Thus, without loss of generality a \LCS\ of $(1z,1w)$ matches the two '1's. This yields $|\LCS(1z,1w)| = 1 + |\LCS(z,w)|$. Hence, $\dLCS(1 z, 1 w) = |1 z| + |1 w| - 2|\LCS(1 z, 1 w)| = |z| + |w| - 2 |\LCS(z,w)| = \dLCS(z,w)$.
    
    (2) Fix a \LCS\ $s$ of $(0^\ell z, 1^k w)$. If $s$ starts with a 0, then it does not contain the leading $1^k$ of the second argument, leaving at least $k$ symbols unmatched, so that $\dLCS(0^\ell z, 1^k w) \ge k$. Otherwise, if $s$ starts with a 1, then it does not contain the leading $0^\ell$ of the first argument, so that $|\LCS(0^\ell z, 1^k w)| = |\LCS(z, 1^k w)|$. Then we have $\dLCS(0^\ell z, 1^k w) = |0^\ell z| + |1^k w| - 2|\LCS(0^\ell z, 1^k w)| \ge |z| + |1^k w| - 2|\LCS(z, 1^k w)| = \dLCS(z,1^k w)$.
  \end{proof}
  
  \begin{claim} \label{cla:prefix}
    Let $\ell \ge 0$. For any prefix $x'$ of $x$ we have $\dLCS(x',0^\ell) \ge \ell$. Moreover, if $x'$ is of the form $\guard(x_1) 0^{\gamma_3} \ldots \guard(x_i) 0^{\gamma_3}$ for some $0 \le i < n$ and $\ell \ge i \cdot (2\gamma_2+s_\x)$, then $\dLCS(x',0^\ell) = \ell$. Symmetric statements hold for any suffix of $x$.
  \end{claim}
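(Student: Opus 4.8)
The plan is to attach to each prefix $x'$ of $x$ its \emph{surplus} $\sigma(x') := o(x') - z(x')$, where $o(x')$ and $z(x')$ count the ones and zeros in $x'$, and to first prove $\sigma(x')\ge 0$ for \emph{every} prefix. I would read $x$ from left to right, writing it as the concatenation of the units $U_j := \guard(x_j)\,0^{\gamma_3}$ for $1\le j<n$ followed by $U_n := \guard(x_n)$. A direct calculation with the definitions of $\gamma_1,\gamma_2,\gamma_3$ shows that each $U_j$ with $j<n$ is \emph{balanced}, i.e.\ $o(U_j)=z(U_j)=2\gamma_2+s_\x$, so the running surplus returns to $0$ at every unit boundary, while $\sigma(U_n)=\gamma_3\ge 0$. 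It then remains to check that the surplus never dips below $0$ inside a unit: within a guard $1^{\gamma_2}0^{\gamma_1}x_j0^{\gamma_1}1^{\gamma_2}$ the surplus starts at $0$, rises to $\gamma_2$ after the leading $1^{\gamma_2}$, and afterwards can only drop by the number of zeros in $0^{\gamma_1}x_j0^{\gamma_1}$, which is at most $2\gamma_1+\ell_\x$; the inequality $\gamma_2 = 6(\ell_\x+\ell_\y)\ge 2(\ell_\x+\ell_\y)+\ell_\x = 2\gamma_1+\ell_\x$ keeps it $\ge 0$ until it recovers to $\gamma_3$, after which the trailing $0^{\gamma_3}$ brings it from $\gamma_3$ down to $0$, again never negative.

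Next I would invoke the elementary identity $|\LCS(x',0^\ell)| = \min\{z(x'),\ell\}$, valid because any common subsequence of $x'$ and $0^\ell$ is a block of zeros. For the first inequality of the claim I would split on the sign of $z(x')-\ell$. If $z(x')\le\ell$ then $\dLCS(x',0^\ell) = |x'|+\ell-2z(x') = \sigma(x')+\ell \ge \ell$. If $z(x')>\ell$ then $\dLCS(x',0^\ell) = |x'|-\ell = o(x')+z(x')-\ell \ge 2z(x')-\ell > \ell$, using $\sigma(x')\ge 0$. In both cases $\dLCS(x',0^\ell)\ge\ell$.

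For the ``moreover'' part I would take $x' = \guard(x_1)\,0^{\gamma_3}\cdots\guard(x_i)\,0^{\gamma_3}$, which is exactly the first $i$ units $U_1\cdots U_i$. By the balance computation above, $o(x')=z(x') = i(2\gamma_2+s_\x)$, so the hypothesis $\ell\ge i(2\gamma_2+s_\x)$ gives $z(x')\le\ell$; plugging into the first case yields $\dLCS(x',0^\ell) = \sigma(x')+\ell = \ell$. Finally, the symmetric statements for suffixes follow by reversal: $\dLCS$ is unchanged under simultaneously reversing both arguments, $\rev(\guard(z))=\guard(\rev(z))$, and $\rev(x)$ is again of the form handled above with the blocks $\rev(x_n),\dots,\rev(x_1)$ (of the same types as $x_1,\dots,x_n$), so every suffix of $x$ is the reverse of a prefix of $\rev(x)$ and the bounds carry over verbatim.

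The only real obstacle is the very first step: bounding the surplus of an \emph{arbitrary} prefix, one that may cut through the middle of a guard or of a run $0^{\gamma_1}$ or $0^{\gamma_3}$. Everything there reduces to the observation that within each unit the leading block $1^{\gamma_2}$ already pays for every zero occurring before the next block of ones, which is exactly the inequality $\gamma_2\ge 2\gamma_1+\ell_\x$; once this is in hand, the remaining verifications are routine arithmetic with the constants $\gamma_1,\dots,\gamma_4$.
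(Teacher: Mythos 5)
Your argument is correct and is essentially the paper's own proof: both hinge on establishing that every prefix of $x$ has at least as many ones as zeros via the block decomposition (the units $\guard(x_j)\,0^{\gamma_3}$ are balanced by the choice of $\gamma_3$, and the running surplus inside a guard stays nonnegative because $\gamma_2 \ge 2\gamma_1 + \ell_\x$), and both then convert this to $\dLCS(x',0^\ell) \ge \ell$. The only cosmetic difference is that you invoke the exact identity $|\LCS(x',0^\ell)| = \min\{z(x'),\ell\}$ and split into two cases, whereas the paper uses the one-line estimate $|\LCS(x',0^\ell)| \le |x'|/2$; the substance and the key inequality are identical, and your version makes the equality case of the ``moreover'' part slightly more transparent.
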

  \begin{proof}
    We first show that for any $i \in [n]$ the string $\guard(x_i) 0^{\gamma_3}$ contains as many ones as zeroes, and any prefix of $\guard(x_i) 0^{\gamma_3}$ contains at least as many ones as zeroes.
    To this end, note that each $x_i$ has length $\ell_\x$ and contains $s_\x$ ones, so that the number of ones of $\guard(x_i) 0^{\gamma_3}$ is $2\gamma_2 + s_\x$, while the number of zeroes is $\ell_\x - s_\x + 2\gamma_1 + \gamma_3$, and we chose $\gamma_3$ such that both values are equal. For a prefix, note that $\guard(x_i)$ starts with $\gamma_2$ ones. Since each $\guard(x_i)$ contains $2\gamma_1 + \ell_\x - s_\x \le \gamma_2$ zeroes, any prefix of $\guard(x_i)$ has as most as many zeroes as ones. Thus, we would have to advance to $0^{\gamma_3}$ to see more zeroes than ones, however, even $\guard(x_i) 0^{\gamma_3}$ does not contain more zeroes than ones.
    
    Hence, any prefix $x'$ of $x$ contains at least as many ones as zeroes, implying $|\LCS(x',0^\ell)| \le |x'|/2$. This yields $\dLCS(x',0^\ell) = |x'| + |0^\ell| - 2|\LCS(x',0^\ell)| \ge \ell$. If $x'$ is of the form $\guard(x_1) 0^{\gamma_3} \ldots \guard(x_i) 0^{\gamma_3}$ and sufficiently many zeroes are available in $0^\ell$ then we have equality.
  \end{proof}

  Let us give names to the substrings consisting only of zeroes in $x$ and $y$. In $x$, we denote the $0^{\gamma_3}$-block after $\guard(x_i)$ by $Z^\x_i$, $i \in [n-1]$. In $y$, we denote the $0^{\gamma_3}$-block after $\guard(y_j)$ by $Z^\y_j$, $j \in [m-1]$. Moreover, we denote the prefix $0^{n \gamma_4}$ by $L^\y$ and the suffix $0^{n\gamma_4}$ by $R^\y$. 

\begin{figure}
\includegraphics[width=\textwidth]{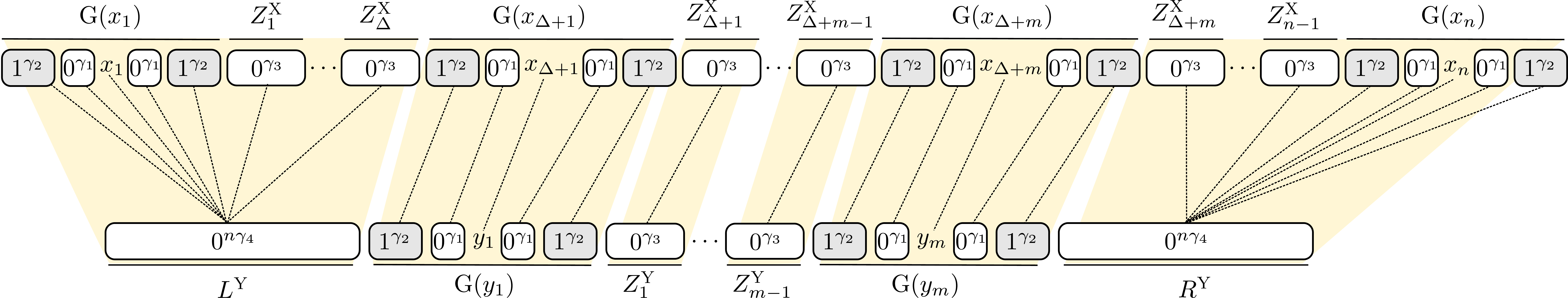}
\caption{Optimal traversal corresponding to structured alignment $A=\{(\Delta+j,j) \mid j\in [m]\} \in \strc_{n,m}$.}
\label{fig:lcs}
\end{figure}
  
  We now show the upper bound of \eqref{eq:toshow}, i.e., $\dLCS(x,y) \le 2n \gamma_4 + \min_{A \in \strc_{n,m}} \Val{A}$.
  Consider a structured alignment $A = \{(\Delta+1,1),\ldots,(\Delta+m,m)\} \in \strc_{n,m}$. 
  We construct an ordered partition of $x$ as in \claref{lcsobs} by setting (see \figref{lcs})
  \begin{align*}
    x(\guard(y_j)) &:= \guard(x_{\Delta+j}) \qquad \text{for }j \in [m],  \\
    x(Z^\y_j) &:= Z^\x_{\Delta+j} \qquad \text{for }j \in [m-1],  \\
    x(L^\y) &:= \guard(x_1) Z^\x_1 \ldots \guard(x_{\Delta}) Z^\x_{\Delta},  \\
    x(R^\y) &:= Z^\x_{\Delta+m} \guard(x_{\Delta+m+1}) \ldots Z^\x_{n-1} \guard(x_n). 
  \end{align*} 
  Note that indeed these strings partition $x$ and $y$, respectively. Thus, \claref{lcsobs} yields 
  $$ \dLCS(x,y) \le \dLCS(x(L^\y),L^\y) + \dLCS(x(R^\y),R^\y) + \sum_{j=1}^m \dLCS(\guard(x_{\Delta+j}),\guard(y_j)) + \sum_{j=1}^{m-1} \dLCS(Z^\x_{\Delta+j},Z^\y_j).  $$
  Since $L^\y = 0^{n\gamma_4}$ and $x(L^\y)$ is a prefix of $x$ of the correct form, by \claref{prefix} we have $\dLCS(x(L^\y),L^\y) = n\gamma_4$ (note that $\gamma_4$ is chosen sufficiently large to make \claref{prefix} applicable). Similarly we obtain $\dLCS(x(R^\y),R^\y) = n\gamma_4$. Since $Z_i^\x = Z_j^\y = 0^{\gamma_3}$ we have $\dLCS(Z^\x_{\Delta+j},Z^\y_j) = 0$. Finally, by matching the guarding ones and zeroes of $\guard(x_{\Delta+j}) = 1^{\gamma_2} 0^{\gamma_1} x_{\Delta+j} 0^{\gamma_1} 1^{\gamma_2}$ and $\guard(y_j) = 1^{\gamma_2} 0^{\gamma_1} y_j 0^{\gamma_1} 1^{\gamma_2}$ we obtain $\dLCS(\guard(x_{\Delta+j}),\guard(y_j)) \le \dLCS(x_{\Delta+j},y_j)$. Hence, we have
  $$ \dLCS(x,y) \le 2n\gamma_4 + \sum_{(i,j) \in A} \dLCS(x_i,y_j). $$
  As $A \in \strc_{n,m}$ was arbitrary, we proved $\dLCS(x,y) \le 2n \gamma_4 + \min_{A \in \strc_{n,m}} \Val{A}$, as desired.
  
  It remains to prove the lower bound of \eqref{eq:toshow}, i.e., $\dLCS(x,y) \ge 2n\gamma_4 + \min_{A \in \algn_{n,m}} \Val{A}$.
  As in \claref{lcsobs}, let $x(L^\y)$, $x(\guard(y_j))$ for $j \in [m]$, $x(Z^\y_j)$ for $j \in [m-1]$, $x(R^\y)$ be an ordered partition of $x$ such that 
  $$ \dLCS(x,y) = \dLCS(x(L^\y),L^\y) + \dLCS(x(R^\y),R^\y) + \sum_{j=1}^m \dLCS(x(\guard(y_j)),\guard(y_j)) + \sum_{j=1}^{m-1} \dLCS(x(Z^\y_j),Z^\y_j). $$
  Clearly, we can bound $\dLCS(x(Z^\y_j),Z^\y_j) \ge 0$. Since $L^\y = 0^{n\gamma_4}$ and $x(L^\y)$ is a prefix of $x$, by \claref{prefix} we have $\dLCS(x(L^\y),L^\y) \ge n\gamma_4$, and similarly we get $\dLCS(x(R^\y),R^\y) \ge n\gamma_4$.
  It remains to construct an alignment $A \in \algn_{n,m}$ satisfying 
  \begin{align} \label{eq:Alcs}
    \Val{A} \le \sum_{j=1}^m \dLCS(x(\guard(y_j)),\guard(y_j)), 
  \end{align}
  then together we have shown the desired inequality 
  $\dLCS(x,y) \ge 2n\gamma_4 + \min_{A \in \algn_{n,m}} \Val{A}$.
  
  Let us construct such an alignment $A$. For any $j \in [m]$, if $x(\guard(y_j))$ contains more than half of some $x_{i'}$ (which is part of $\guard(x_{i'})$), then let $i$ be the leftmost such index and align $i$ and $j$. Note that the set $A$ of all these aligned pairs $(i,j)$ is a valid alignment in $\algn_{n,m}$, since no $x_i$ or $y_j$ can be aligned more than once.
  
  Since by definition we have $\Val{A} = \sum_{(i,j) \in A} \dLCS(x_i,y_j) + (m-|A|) \max_{i,j} \dLCS(x_i,y_j)$ and since $\max_{i,j} \dLCS(x_i,y_j) \le \max_{i,j}(|x_i|+|y_j|) = \ell_\x+\ell_\y$, in order to show \eqref{eq:Alcs} it suffices to prove the following two claims.
  
  \begin{claim}  
    For any aligned pair $(i,j) \in A$ we have $\dLCS(x(\guard(y_j)),\guard(y_j)) \ge \dLCS(x_i,y_j)$.
  \end{claim}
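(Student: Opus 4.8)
The plan is to control $w := x(\guard(y_j))$, a contiguous substring of $x$ which, by the definition of the alignment $A$, contains a substring $x_i[a..b]$ of $x_i$ with $b-a+1 > \ell_\x/2$; since $i$ is the \emph{leftmost} index with this property, $w$ contains no $x_{i'}$ with $i'<i$ in full. I would first locate $w$. If $|w|$ is large, say $|w|\ge 2\gamma_2+2\gamma_1+\ell_\x+2\ell_\y$, then the trivial length bound already gives $\dLCS(w,\guard(y_j))\ge |w|-|\guard(y_j)|\ge\ell_\x+\ell_\y\ge\dLCS(x_i,y_j)$. Otherwise, since each separating block $Z^\x_{i'}=0^{\gamma_3}$ has $\gamma_3>|w|$ and $w$ can reach neither $x_{i-1}$ (by minimality of $i$) nor $x_{i+1}$ (by the length bound) in full, $w$ must be a substring of $Z^\x_{i-1}\,\guard(x_i)\,Z^\x_i = 0^{\gamma_3}\guard(x_i)0^{\gamma_3}$; that is, $w = 0^{a_1}1^{p_1}0^{q_1}x_i[a..b]0^{q_2}1^{p_2}0^{a_2}$ with $a_1,a_2\le\gamma_3$, $p_1,p_2\le\gamma_2$, $q_1,q_2\le\gamma_1$ and the obvious compatibility constraints (e.g.\ $p_1<\gamma_2\Rightarrow a_1=0$).

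The heart of the argument is the case where $w$ contains all of $\guard(x_i)$, i.e.\ $w=0^{a_1}\guard(x_i)0^{a_2}$. First, by repeatedly peeling off identical boundary blocks with \claref{lcsgreedy}(1) and its symmetric variants (swapping $0$ and $1$, and reversing), one gets $\dLCS(\guard(x_i),\guard(y_j))=\dLCS(x_i,y_j)$, and in particular $|\LCS(\guard(x_i),\guard(y_j))|\ge 2\gamma_2$. Second, I would show by an exchange argument that the padding zeros $0^{a_1},0^{a_2}$ are useless: if an optimal common subsequence of $w$ and $\guard(y_j)$ matched $c\ge1$ of the leading zeros $0^{a_1}$, those would lie inside the interior $0^{\gamma_1}y_j0^{\gamma_1}$ of $\guard(y_j)$, forcing the remaining matched symbols into a proper suffix of $0^{\gamma_1}y_j0^{\gamma_1}1^{\gamma_2}$; counting ones and zeros there bounds the whole subsequence by $\gamma_2+2\gamma_1+\ell_\y<2\gamma_2$ (using $2\gamma_1+\ell_\y=2\ell_\x+3\ell_\y<6(\ell_\x+\ell_\y)=\gamma_2$), contradicting optimality. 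The reversed statement handles $0^{a_2}$, so $|\LCS(w,\guard(y_j))|=|\LCS(\guard(x_i),\guard(y_j))|$ and hence $\dLCS(w,\guard(y_j))=a_1+a_2+\dLCS(x_i,y_j)\ge\dLCS(x_i,y_j)$.

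In the remaining cases $w$ fails to contain one of the two $1^{\gamma_2}$-blocks of $\guard(x_i)$ in full — it stops inside $x_i$, or inside (or before) such a block. I would split $\guard(y_j)$ at the optimal point into a prefix and a suffix and bound $|\LCS(w,\guard(y_j))|$ using that a prefix (resp.\ suffix) of $0^{\gamma_3}1^{\gamma_2}0^{\gamma_1}x_i$ can be a common subsequence of at most about $\gamma_2+2\gamma_1+\ell_\x$ symbols of a suffix (resp.\ prefix) of $\guard(y_j)$, and fewer when the corresponding $1^{\gamma_2}$-block of $\guard(x_i)$ is absent from $w$; since $|w|$ is controlled and $\gamma_2=6(\ell_\x+\ell_\y)$ dwarfs $\gamma_1,\ell_\x,\ell_\y$, each such estimate collapses to $\dLCS(w,\guard(y_j))\ge\ell_\x+\ell_\y\ge\dLCS(x_i,y_j)$. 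I would organize this by how many full $1^{\gamma_2}$-blocks $w$ captures (zero or one). The one delicate subcase is when $w$ captures one full $1^{\gamma_2}$-block and almost all of the other (equivalently, all of $\guard(x_i)$'s interior and only a long prefix of a $1^{\gamma_2}$-block): there the crude count is too weak, and I would reduce it to the previous step — either by the same exchange argument applied to the leading zeros, or by observing that the missing piece is only $O(\ell_\x+\ell_\y)$ ones long and peeling it against the matching block with \claref{lcsgreedy}.

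The main obstacle is precisely this interplay at the boundary of $w$: the naive estimate $|\LCS(w,\guard(y_j))|\le|w|$ is insufficient whenever $\ell_\y>\ell_\x$, so one genuinely needs the exchange argument to rule out that the stray $Z^\x$-zeros absorbed into $w$ (or the unequal guard lengths) can be exploited by some alignment. This is exactly where the size separations $\gamma_1\ll\gamma_2$ — and, for the localization step, $\gamma_2\ll\gamma_3,\gamma_4$ — are used; everything else is bookkeeping about the position of $w$ relative to $\guard(x_i)$.
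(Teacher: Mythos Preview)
Your plan arrives at the right inequality, but by a route that is both longer and shakier than the paper's. Two concrete points.

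First, a numerical slip in the localization step: with your threshold $|w|<2\gamma_2+2\gamma_1+\ell_\x+2\ell_\y = 15\ell_\x+16\ell_\y$ it is \emph{not} true that $\gamma_3>|w|$, since $\gamma_3=9\ell_\x+10\ell_\y+2s_\x\le 11\ell_\x+10\ell_\y$. The conclusion that $w$ lies inside $0^{\gamma_3}\guard(x_i)0^{\gamma_3}$ is still correct, but for a different reason: if $w$ touched some $\guard(x_{i'})$, $i'\ne i$, it would have to contain an entire $0^{\gamma_3}$ block together with half the guarding of $\guard(x_i)$, forcing $|w|\ge\gamma_3+\gamma_2+\gamma_1$ and then $|w|-|\guard(y_j)|\ge\gamma_3-\gamma_2-\gamma_1-\ell_\y\ge\ell_\x+\ell_\y$. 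That is the localization the paper uses.

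Second, and more importantly, once $w=0^{h_L}z\,0^{h_R}$ with $z$ a substring of $\guard(x_i)$, you split into an exchange argument (for the case $z=\guard(x_i)$) plus a vague case analysis on how many $1^{\gamma_2}$-blocks $z$ captures. The paper avoids all of this by applying \claref{lcsgreedy} uniformly. Since $\guard(y_j)$ begins with $1^{\gamma_2}$, part~(2) of that claim gives $\dLCS(0^{h_L}z\,0^{h_R},\guard(y_j))\ge\min\{\gamma_2,\dLCS(z\,0^{h_R},\guard(y_j))\}$ in one line, and symmetrically for $0^{h_R}$; the padding zeros are gone with no exchange argument and no case distinction on whether $z$ is all of $\guard(x_i)$. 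After a length check one has $z=1^{r_L}0^{\gamma_1}x_i0^{\gamma_1}1^{r_R}$, and then \claref{lcsgreedy}(1) peels the matching $1^{r_L}$ (resp.\ $1^{r_R}$), followed by \claref{lcsgreedy}(2') against the leftover $1^{\gamma_2-r_L}$ using the opposing $0^{\gamma_1}$ block, reducing cleanly to $\dLCS(x_i,y_j)$. Your ``delicate subcase'' is exactly where you rediscover this peeling; the point is that it handles \emph{every} case, so the preceding exchange argument and the case split on the number of full $1^{\gamma_2}$-blocks are unnecessary.
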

  \begin{proof}
    Recall that $x(\guard(y_j))$ contains more than half of $x_i$. 
    First consider the case that $x(\guard(y_j))$ touches not only $\guard(x_i)$ but also $\guard(x_{i'})$ for some $i' \ne i$. As between $x_i$ and $\guard(x_{i'})$ there is at least one block of zeroes $0^{\gamma_3}$ and half of the guarding of $\guard(x_i)$ (i.e., $1^{\gamma_2} 0^{\gamma_1}$ or $0^{\gamma_1} 1^{\gamma_2}$), we obtain $|x(\guard(y_j))| \ge |0^{\gamma_3}| + |1^{\gamma_2} 0^{\gamma_1}| = \gamma_3 + \gamma_2 + \gamma_1$. 
    Thus, any matching of $x(\guard(y_j))$ and $\guard(y_j)$ leaves at least $|x(\guard(y_j))| - |\guard(y_j)| \ge (\gamma_3 + \gamma_2 + \gamma_1) - (2\gamma_2+2\gamma_1 + \ell_\y) = \gamma_3 - \gamma_2 - \gamma_1 - \ell_\y \ge \ell_\x+\ell_\y$ unmatched symbols, implying $\dLCS(x(\guard(y_j)),\guard(y_j)) \ge \ell_\x+\ell_\y \ge \dLCS(x_i,y_j)$.
    
    Now consider the remaining case, where $x(\guard(y_j))$ touches no other $\guard(x_{i'})$. In this case, $x(\guard(y_j))$ is a substring of $0^{\gamma_3} \guard(x_i) 0^{\gamma_3}$, i.e., we can write $x(\guard(y_j))$ as $0^{h_L} z 0^{h_R}$, where $z$ is a substring of $\guard(x_i)$. Since $\guard(y_j)$ starts with $\gamma_2$ ones, by \claref{lcsgreedy}.(2) we have $\dLCS(x(\guard(y_j)), \guard(y_j)) \ge \min\{\gamma_2, \dLCS(z 0^{h_R}, \guard(y_j))\}$. Since $\gamma_2 \ge \ell_\x + \ell_\y \ge \dLCS(x_i,y_j)$, it suffices to bound $\dLCS(z 0^{h_R}, \guard(y_j))$ from below. By a symmetric argument, we eliminate the block $0^{h_R}$ and only have to bound $\dLCS(z, \guard(y_j))$ from below. 
    We can assume that $|z| > |\guard(y_j)| - \gamma_2$, since otherwise $\dLCS(z,\guard(y_j)) \ge \gamma_2 \ge \ell_\x + \ell_\y \ge \dLCS(x_i,y_j)$. 
    Thus, we have $\guard(y_j) = 1^{\gamma_2} 0^{\gamma_1} y_j 0^{\gamma_1} 1^{\gamma_2}$ and can write $z$ as $1^{r_L} 0^{\gamma_1} x_i 0^{\gamma_1} 1^{r_R}$ with $r_L, r_R > 0$.  
    By \claref{lcsgreedy}.(1) we have $\dLCS(z,\guard(y_j)) = \dLCS(0^{\gamma_1} x_i 0^{\gamma_1} 1^{r_R}, 1^{\gamma_2 - r_L} 0^{\gamma_1} y_j 0^{\gamma_1} 1^{\gamma_2})$. By \claref{lcsgreedy}.(2'), this yields $\dLCS(z,\guard(y_j)) \ge \min\{\gamma_1, \dLCS(0^{\gamma_1} x_i 0^{\gamma_1} 1^{r_R}, 0^{\gamma_1} y_j 0^{\gamma_1} 1^{\gamma_2})\}$, and since $\gamma_1 \ge \ell_\x+\ell_\y \ge \dLCS(x_i,y_j)$ it suffices to bound the latter term. By a symmetric argument we eliminate the ones on the right side, and it suffices to bound $\dLCS(0^{\gamma_1} x_i 0^{\gamma_1}, 0^{\gamma_1} y_j 0^{\gamma_1})$. Using \claref{lcsgreedy}.(1) twice, this is equal to $\dLCS(x_i,y_j)$. Hence, we have shown the desired inequality $\dLCS(x(\guard(y_j)),\guard(y_j)) \ge \dLCS(x_i,y_j)$.
  \end{proof}   
  \begin{claim}
    If $j$ is unaligned in $A$, then $\dLCS(x(\guard(y_j)),\guard(y_j)) \ge \ell_\x+\ell_\y$.
  \end{claim}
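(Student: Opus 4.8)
The plan is to lower-bound $\dLCS(w,\guard(y_j))$ for $w:=x(\guard(y_j))$ by combining the trivial length bound with a bound on $|\LCS|$ obtained from counting zeroes. Write $\#1(z)$ for the number of ones of a binary string $z$. Since $\guard(y_j)=1^{\gamma_2}0^{\gamma_1}y_j0^{\gamma_1}1^{\gamma_2}$ has exactly $2\gamma_1+\ell_\y-s_\y$ zeroes, any common subsequence of $w$ and $\guard(y_j)$ matches at most $2\gamma_1+\ell_\y-s_\y$ zeroes and at most $\#1(w)$ ones, so (using $|\guard(y_j)|=2\gamma_2+2\gamma_1+\ell_\y$)
\[
  \dLCS(w,\guard(y_j)) \;\ge\; |w| - 2\,\#1(w) + 2\gamma_2 - 2\gamma_1 - \ell_\y + 2s_\y. \qquad (\star)
\]
If $w$ consists only of zeroes, then $\#1(w)=0$ and $(\star)$ already gives $\dLCS(w,\guard(y_j)) \ge 2\gamma_2-2\gamma_1-\ell_\y = 10\ell_\x+9\ell_\y \ge \ell_\x+\ell_\y$. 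So from now on $w$ contains a one.

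Next I would pin down the coarse structure of $w$. Stripping its leading and trailing runs of zeroes, write $w = 0^{h_L}\,w'\,0^{h_R}$ with $w'$ beginning and ending with a one; since the ones of $x$ occur only inside the blocks $\guard(x_{i''})$, the string $w'$ starts inside some $\guard(x_i)$ and ends inside some $\guard(x_{i'})$ with $i \le i'$. The key observation — using that $j$ is unaligned and that we may assume $\ell_\x \ge 1$ (if $\ell_\x=0$ all $x_i$ are empty, $\delta(x_i,y_j)=\ell_\y$ for all $i,j$, and the gadget inequality is checked directly) — is: if $i' \ge i+2$ then $w'\supseteq\guard(x_{i+1})$, hence $w$ contains all of $x_{i+1}$, which would have forced $j$ to be aligned; therefore $i'\in\{i,i+1\}$. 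Moreover, a substring of a single block $\guard(x_i)=1^{\gamma_2}0^{\gamma_1}x_i0^{\gamma_1}1^{\gamma_2}$ cannot meet both of its length-$\gamma_2$ runs of ones without containing all of $x_i$, so any such substring has at most $\gamma_2+\#1(x_i)\le\gamma_2+\ell_\x$ ones. Hence if $i'=i$ then $\#1(w)=\#1(w')\le\gamma_2+\ell_\x$; and if $i'=i+1$ then $w'=u\,0^{\gamma_3}\,v$, where $0^{\gamma_3}=Z^\x_i$ lies entirely inside $w'$ and $u,v$ are respectively a suffix of $\guard(x_i)$ and a prefix of $\guard(x_{i+1})$, so $\#1(w)\le 2(\gamma_2+\ell_\x)$ and, in addition, $|w|\ge|w'|\ge\gamma_3+\#1(w')$.

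To conclude I would substitute these estimates into $(\star)$, using $|w|\ge\#1(w)$ in the one-block case and $|w|\ge\gamma_3+\#1(w)$ in the two-block case. The one-block case then gives
$\dLCS(w,\guard(y_j)) \ge -\#1(w)+2\gamma_2-2\gamma_1-\ell_\y+2s_\y \ge \gamma_2-\ell_\x-2\gamma_1-\ell_\y+2s_\y = 3\ell_\x+3\ell_\y+2s_\y \ge \ell_\x+\ell_\y$,
and the two-block case gives
$\dLCS(w,\guard(y_j)) \ge \gamma_3-\#1(w)+2\gamma_2-2\gamma_1-\ell_\y+2s_\y \ge \gamma_3-2\ell_\x-2\gamma_1-\ell_\y+2s_\y = 5\ell_\x+7\ell_\y+2s_\x+2s_\y \ge \ell_\x+\ell_\y$,
where in the last step I plug in $\gamma_1=\ell_\x+\ell_\y$, $\gamma_2=6(\ell_\x+\ell_\y)$, and $\gamma_3=10(\ell_\x+\ell_\y)+2s_\x-\ell_\x$.

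The hard part is the one-counting in the middle step: everything rests on the fact that a substring of $x$ meeting a guard block $\guard(x_i)$ can pick up only one of its two heavy $1^{\gamma_2}$-runs unless it engulfs $x_i$ completely — which is exactly what ``$j$ unaligned'' rules out — and, in the two-block case, on pairing this with the length $\gamma_3$ of the zero-block $Z^\x_i$ that is then trapped inside $w$. Once these one-counts are in hand, the deliberately large choice of the constants $\gamma_1,\gamma_2,\gamma_3$ (all a fixed multiple of $\ell_\x+\ell_\y$) makes the final arithmetic go through comfortably.
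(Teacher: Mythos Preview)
Your proof is correct and follows essentially the same approach as the paper. Both arguments use the same structural case split---whether the ones of $w:=x(\guard(y_j))$ lie within a single guard block $\guard(x_i)$ or span two consecutive blocks $\guard(x_i),\guard(x_{i+1})$---and both rely on the same one/zero counting: in the one-block case $\#1(w)\le\gamma_2+\ell_\x$, and in the two-block case the trapped $0^{\gamma_3}$ forces many zeroes into $w$. The only cosmetic difference is packaging: the paper argues each case with a single direct count (unmatched ones of $\guard(y_j)$ in the one-block case, unmatched zeroes of $w$ in the two-block case), whereas you funnel both cases through the combined inequality~$(\star)$ and then bound $|w|$ and $\#1(w)$ separately, yielding slightly weaker but still comfortably sufficient constants. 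Your explicit handling of the degenerate case $\ell_\x=0$ is a nice touch; the paper leaves this implicit.
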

  \begin{proof}
    Since $x(\guard(y_j))$ contains less than half of any $x_{i}$, examining the structure of $x$ we see that $x(\guard(y_j))$ is a substring\footnote{Actually $x(\guard(y_j))$ could also be a substring  of $1^{\gamma_2} 0^{\gamma_1} x_1$ or of $x_n 0^{\gamma_1} 1^{\gamma_2}$. We treat these border cases by setting $x_0 := x_1$ and $x_{n+1} := x_n$ and letting from now on $0 \le i \le n$.} of $P := x_i 0^{\gamma_1} 1^{\gamma_2} 0^{\gamma_3} 1^{\gamma_2} 0^{\gamma_1} x_{i+1}$ for some $1 \le i < n$, where at most half of $x_i$ and $x_{i+1}$ can be part of $x(\guard(y_j))$. If $x(\guard(y_j))$ contains ones to the left and to the right of $0^{\gamma_3}$ in $P$, then $x(\guard(y_j))$ contains at least $\gamma_3$ zeroes. 
    Since $\guard(y_j)$ contains $2\gamma_1 + \ell_\y - s_\y \le 2\gamma_1 + \ell_\y$ zeroes, at most $2\gamma_1+\ell_\y$ zeroes of $x(\guard(y_j))$ can be matched, leaving at least $\gamma_3 - 2\gamma_1 - \ell_\y$ unmatched zeroes. Thus,  $\dLCS(x(\guard(y_j)),\guard(y_j)) \ge \gamma_3-2\gamma_1-\ell_\y \ge \ell_\x+\ell_\y$. 
    Otherwise, if $x(\guard(y_j))$ contains only ones to the left of $0^{\gamma_3}$ in $P$ (or only to the right), then $x(\guard(y_j))$ contains at most $\gamma_2+\ell_\x$ ones. Thus, among the $2\gamma_2 + s_\y \ge 2\gamma_2$ ones of $\guard(y_j)$ at least $\gamma_2 -\ell_\x$ ones remain unmatched, implying $\dLCS(x(\guard(y_j)),\guard(y_j)) \ge \gamma_2 -\ell_\x \ge \ell_\x + \ell_\y$.
  \end{proof}   

This finishes the proof of \lemref{lcscorrect}.
\end{proof}

\section{Edit Distance} \label{sec:edit}

We first show that the trivial cases of \EDITallc\ can be solved in constant time.  For all other cases, on binary strings we present a reduction from \EDITallc\ to $\EDIT(\csubst')$ and vice versa, see \secref{EDITreductions}.
Then in \secref{EDIThardness} we prove a conditional lower bound of $\Oh(m^{2-\eps})$ for \EDITc\ by applying our \aligngad-framework.
Finally, in \secref{EDITalgo} we show that \EDITallc\ can be solved in time $\tilde \Oh(n + m^2)$, which matches our lower bound.

\subsection{Easy Reductions}
\label{sec:EDITreductions}

All of our reductions are of the following form.
Let $E_1 = \EDITallc$ and $E_2 = \EDIT(\cdelx', \cdely', \cmatch', \csubst')$ be two variants of the edit distance and denote the cost of any traversal $T$ with respect to $E_i$ by $\delta_{E_i}(T)$. We say that $E_1$ and $E_2$ are \emph{equivalent}, if there are constants $\alpha, \beta$ such that for any traversal $T$ we have $\delta_{E_1}(T) = \alpha \cdot \delta_{E_2}(T) + \beta$. Then the complexity of computing $E_1$ and $E_2$ is asymptotically equal. 

\begin{lem} \label{lem:editreduction}
  (1) \EDITallc\ can be solved in constant time if $\csubst = \cmatch$ or $\cdelx + \cdely \le \min\{\cmatch,\csubst\}$. 
  Otherwise, \EDITallc\ on binary strings is equivalent to $\EDIT(\csubst')$ on binary strings for some $0 < \csubst' \le 2$.
  
  (2) \EDITallc\ is equivalent to $\EDIT(\cdelx',\cdely',\cmatch',\csubst')$ for some positive integers $\cdelx',\cdely',\cmatch',\csubst'$.
\end{lem}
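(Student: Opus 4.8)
The plan is to prove the two parts of \lemref{editreduction} by a sequence of elementary "normalization" steps, each of which replaces the four costs by new costs while either (i) leaving the relative order of traversal costs unchanged, or (ii) exposing a case in which the edit distance depends only on $n$ and $m$. Throughout I will exploit the fact that, for a traversal $T$ using $p$ matchings, $q$ substitutions, $r$ deletions in $\x$, and $s$ deletions in $\y$, the quantities $p+q$, $r$, $s$ are \emph{not} free: we always have $p+q+r = n$ and $p+q+s = m$ (each step of a traversal advances the $\x$-pointer iff it is a match/substitution/deletion-in-$\x$, and similarly for $\y$), hence $r = n-(p+q)$ and $s = m-(p+q)$, and $r-s = n-m$ is a constant. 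So a traversal is determined, up to its cost, by the pair $(p,q)$ with $0\le q\le p+q\le m$, and its cost is
\[
\delta_{\EDITallc}(T) = \cmatch\, p + \csubst\, q + \cdelx\,(n-(p+q)) + \cdely\,(m-(p+q)).
\]

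\paragraph{Part (2): reduction to positive integers.}
First I would clear denominators: multiplying all four costs by a common positive integer multiplies every traversal cost by that integer, so we may assume $\cdelx,\cdely,\cmatch,\csubst\in\mathbb{Z}$. Next, observe from the displayed formula that adding a constant $c$ to \emph{every} cost changes $\delta(T)$ by exactly $c\cdot(\,p + q + (n-(p+q)) + (m-(p+q))\,) = c\cdot(n+m-(p+q))$, which is \emph{not} constant over traversals. That is the wrong kind of additive shift. Instead the useful identity is: replacing $(\cdelx,\cdely,\cmatch,\csubst)$ by $(\cdelx+c,\ \cdely+c,\ \cmatch+2c,\ \csubst+2c)$ changes $\delta(T)$ by $c\cdot(\,2(p+q) + (n-(p+q)) + (m-(p+q))\,) = c\,(n+m)$, a constant independent of $T$. (This is exactly the "an insertion-deletion pair costs the same as a substitution/match" bookkeeping.) Choosing $c$ a large enough positive integer makes all four new costs positive integers, so $\EDITallc$ is equivalent (with $\alpha=1$, $\beta = -c(n+m)$, which depends only on $n,m$ and is fine since the problem knows $n,m$) to an edit-distance problem with positive integer costs. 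This proves (2).

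\paragraph{Part (1): the dichotomy and reduction to $\EDIT(\csubst')$.}
Here I would proceed through the following cases on the binary alphabet.
\emph{(a) The case $\csubst=\cmatch$.} Then matching vs.\ substituting is irrelevant, so $\delta(T)$ depends only on $p+q$ and the deletion counts, i.e.\ $\delta(T) = \cmatch(p+q) + \cdelx(n-(p+q)) + \cdely(m-(p+q))$, a function of the single parameter $p+q\in\{0,\dots,m\}$. This is affine in $p+q$, hence minimized at one of the endpoints $p+q=0$ or $p+q=m$, giving $\EDITallc = \min\{\cdelx n + \cdely m,\ \cmatch m + \cdelx(n-m)\}$ (note $p+q=m$ is always achievable on binary strings since any two symbols can be matched-or-substituted), which depends only on $n,m$: constant time.
\emph{(b) The case $\cdelx+\cdely\le\min\{\cmatch,\csubst\}$.} Then replacing any match or substitution step by a deletion-in-$\x$ followed by a deletion-in-$\y$ does not increase the cost (and is always legal), so there is an optimal traversal with $p=q=0$, giving $\EDITallc = \cdelx n + \cdely m$: constant time.
\emph{(c) All remaining cost vectors.} Now $\csubst\neq\cmatch$ and $\cdelx+\cdely>\min\{\cmatch,\csubst\}$. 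Using the shift from Part (2), subtract $\cmatch/2$ from the match/subst costs and nothing from the deletions — concretely apply $c\mapsto(\cdelx,\cdely,\cmatch,\csubst)+(-\tfrac{\cmatch}{2})\cdot(\text{the pattern }(1,1,2,2))^{-1}$... more cleanly: the shift $(\cdelx,\cdely,\cmatch,\csubst)\mapsto(\cdelx-t,\cdely-t,\cmatch-2t,\csubst-2t)$ changes $\delta$ by the constant $-t(n+m)$, so taking $t=\cmatch/2$ we get an equivalent problem with $\cmatch'=0$. Then scale so the deletion costs become $1$ — but the two deletion costs need not be equal. To handle that, I would additionally use the reflection that swapping the roles of $\x$ and $\y$ (which on binary strings is a legitimate reduction up to reading the instance reversed/transposed, contributing only the constant difference $n\leftrightarrow m$) lets us assume $\cdelx\le\cdely$, and then observe that on binary strings $\cdely$ can be brought down to $\cdelx$: a deletion in $\y$ can always be simulated, at cost $\le$ something, using... actually the cleanest route is to note that after $\cmatch'=0$, the relevant comparison for whether a match beats an ins/del pair is $0$ vs $\cdelx+\cdely$, and for substitute vs.\ ins/del pair is $\csubst'$ vs $\cdelx+\cdely$; the remaining-case hypothesis guarantees $0<\cdelx+\cdely$ and that we are not in a trivial regime, and one checks that in the surviving subcase the optimal traversal never uses an unequal-deletion asymmetry in a way that matters, so we may rescale $(\cdelx,\cdely)$ to $(1,1)$. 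This yields equivalence to $\EDIT(1,1,0,\csubst')$ with $\csubst' = \csubst'/(\cdelx+\text{normalizer})$, and the remaining-case inequalities translate exactly to $0<\csubst'\le 2$: indeed $\csubst'>0$ because $\csubst\ne\cmatch$, and $\csubst'\le 2$ because $\csubst\le\cdelx+\cdely$ would otherwise make substitution dominated (pushing us toward a trivial case), while the other trivial condition $\cdelx+\cdely\le\cmatch$ is excluded.

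\paragraph{Main obstacle.}
I expect the delicate point to be precisely case (c): handling \emph{unequal} deletion costs $\cdelx\ne\cdely$ and pinning down exactly which combinations of $\{\csubst\lessgtr\cmatch\}$ and $\{\cdelx+\cdely\lessgtr\cmatch,\csubst\}$ survive as nontrivial, versus which collapse to constant-time because one operation dominates another. The clean statement "$0<\csubst'\le 2$" hides a careful case analysis showing that every cost vector not excluded by the two trivial conditions normalizes, after clearing denominators, shifting $\cmatch$ to $0$, possibly swapping $\x\leftrightarrow\y$, and rescaling, into the one-parameter family $\EDIT(1,1,0,\csubst')$ with $\csubst'$ in that range — and that the rescaling of the (possibly unequal) deletion costs to $(1,1)$ does not change which traversal is optimal. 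I would verify this by writing $\delta_{\EDITallc}(T)$ in the $(p,q)$-parametrization above, checking that after the normalizations the coefficient of $q$ is $\csubst'$ and the problem reduces to choosing $(p,q)$ minimizing $\csubst' q - (\text{deletion savings})\cdot(p+q)$ over the feasible polytope, which depends on the costs only through $\csubst'$ once $\cmatch=0$ and deletions are normalized.
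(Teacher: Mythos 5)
Your Part~(2) and the first two cases of Part~(1) are correct and match the paper's argument: the $(p,q)$-parametrization, the observation that the cost is affine in $p+q$ when $\csubst=\cmatch$, the dominance argument when $\cdelx+\cdely\le\min\{\cmatch,\csubst\}$, and the shift by $(c,c,2c,2c)$ combined with clearing denominators are all exactly what the paper does.

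Case~(c) of Part~(1), however, has genuine gaps, and the issues are not just presentational. First, you never reduce to $\csubst>\cmatch$. If $\csubst<\cmatch$, then after shifting $\cmatch$ to $0$ you obtain a \emph{negative} substitution cost $\csubst-\cmatch$, so the target range $(0,2]$ cannot be reached. The paper's fix is specific to binary strings: flip all bits of $y$ (not of $x$). This swaps $\cmatch$ and $\csubst$ while leaving the deletion costs alone, so one may assume WLOG $\csubst>\cmatch$ (and then $\cdelx+\cdely>\min\{\cmatch,\csubst\}=\cmatch$ gives the needed positivity of the normalizing factor). Your suggestion of ``swapping $\x$ and $\y$'' instead swaps $\cdelx$ with $\cdely$ and $n$ with $m$, which does not help here.

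Second, you get stuck trying to equalize $\cdelx$ and $\cdely$, but this is a red herring. In your own parametrization the cost is
\[
\delta(T)=\cmatch\,p+\csubst\,q+(\cdelx+\cdely)\bigl(m-(p+q)\bigr)+\cdelx(n-m),
\]
so the two deletion costs enter only through their sum $\cdelx+\cdely$ plus the constant $\cdelx(n-m)$. There is nothing to equalize: one scales by $\alpha:=2/(\cdelx+\cdely-\cmatch)$ (well-defined and positive once $\csubst>\cmatch$ is assumed), and then for \emph{every} traversal $T$ one checks the affine identity
\[
\alpha\,\delta(T)-\alpha m\,\cmatch+(n-m)(1-\alpha\cdelx)=q\,\csubst'+\bigl(m-(p+q)\bigr)\cdot 2+(n-m),
\]
with $\csubst':=\alpha(\csubst-\cmatch)$; the right-hand side is the cost of $T$ under $\EDIT(1,1,0,\csubst')$. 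This is the paper's equivalence. Finally, $\csubst'\le 2$ is \emph{not} automatic (it would require $\csubst\le\cdelx+\cdely$, which case~(c) does not guarantee). If $\csubst'>2$, one replaces $\csubst'$ by $2$ as a separate, final step: a substitution of cost $>2$ is dominated by a deletion plus an insertion of total cost $2$, so this replacement does not change the optimum. Your sentence attempting to derive $\csubst'\le 2$ from the case hypotheses does not hold as stated.
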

Note that by the first statement, hardness for general rational cost parameters follows by proving hardness of $\Edit(\csubst')$ for $0<\csubst' \le 2$. The second statement allows us to assume positive integer costs when giving an algorithm for \EDITallc\ in \secref{EDITalgo}.
\begin{proof}[Proof of Lemma~\ref{lem:editreduction}]
  Let $x,y$ be strings of length $n,m$. By symmetry, we may assume $n \ge m$. Observe that we can write the cost of any traversal $T$ with respect to \EDITallc\ as
  $$ \dEDIT(T) = A\cdot \cmatch + B\cdot \csubst + C\cdot (\cdelx + \cdely) + (n-m)\cdot \cdelx, $$
  for some $A,B,C \ge 0$ with $A+B+C = m$, since matchings and substitutions touch as many symbols in $x$ as in $y$, so that we need exactly $n-m$ more deletions in $x$ than deletions in $y$.
  
  (1) If $\cdelx + \cdely \le \min\{\cmatch,\csubst\}$, then we can replace any matching or substitution by a deletion in $x$ and a deletion in $y$ without increasing the cost. Thus, an optimal traversal has $C=m$ and minimal cost $n \cdot \cdelx + m \cdot \cdely$, which can be computed in constant time.
  Similarly, if $\cmatch = \csubst$, then the minimal cost is independent of the symbols in $x$ and $y$. We may arbitrarily set $A+B$ and $C$ subject to $A+B+C = m$ and $A+B, C \ge 0$, and the minimal cost is $m \cdot \min\{\cmatch, \cdelx+\cdely\} + (n-m) \cdelx$, which can be computed in constant time.
  
  Now assume that $\cmatch \ne \csubst$ and $\cdelx + \cdely > \min\{\cmatch,\csubst\}$. Restricting our attention to binary strings, by flipping all symbols in $y$ but not in $x$ we can swap the costs of matching and substitution. Thus, we may assume that $\csubst > \cmatch$ (and $\cdelx+\cdely > \cmatch$).
  We set
  \begin{align*}
    \csubst' := \alpha (\csubst - \cmatch) \quad \text{where} \quad \alpha := \tfrac{2}{\cdelx + \cdely - \cmatch}.
  \end{align*}
  One can easily verify that for any traversal $T$ with cost $\dEDIT(T) = A\cdot \cmatch + B\cdot \csubst + C\cdot (\cdelx + \cdely) + (n-m)\cdot \cdelx$ (with respect to \EDITallc) we have
  $$ \alpha \dEDIT(T) - \alpha m \cdot \cmatch + (n-m)(1-\alpha \cdelx) = B \cdot \csubst' + C \cdot 2 + (n-m).$$
  As the latter is the cost of $T$ with respect to $\EDIT(\csubst')$, this proves that $\EDIT(\csubst')$ is equivalent to \EDITallc. 
  Finally, note that $\csubst' > 0$. If $\csubst' > 2$, then we can replace it by~2 without changing the cost of the optimal traversal, since we can replace any substitution (of cost 2) by a deletion and an insertion (both of cost 1). This yields $0 < \csubst' \le 2$.
  
  (2) Since we always assume all operation costs to be rationals, without loss of generality $\cdelx,\cdely,\cmatch,\csubst$ have a common denominator~$D$. We obtain positive integral operation costs by setting $\cdelx' := D \cdelx + M$, $\cdely' := D \cdely + M$, $\cmatch' := D \cmatch + 2M$, $\csubst' := D \csubst + 2M$ for a sufficiently large integer $M$. Both variants are equivalent, since $\dEDIT(T)$ is changed to 
  $$ D \dEDIT(T) + m \cdot 2M + (n-m) \cdot M. \qedhere $$
\end{proof}

\subsection{Hardness Proof}
\label{sec:EDIThardness}

In this section we study the edit distance with matching cost 0, deletion and insertion cost 1, and substitution cost $0 < \csubst \le 2$. We abbreviate $\dEDIT = \dEDITc$.

\begin{lem}
$\EDIT(\csubst)$ admits coordinate values by setting
$$ \oleft := 11100,\; \zleft := 10011,\; \oright := 00111,\; \zright := 11001. $$
\end{lem}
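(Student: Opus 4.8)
The plan is to verify \defref{coordValues} directly for the proposed strings, reusing the computations already carried out for \LCS\ in the previous section. First I would note that all four strings $\oleft,\zleft$ (resp.\ $\oright,\zright$) have length~$5$ and contain exactly three~$1$s and two~$0$s, so $\type(\oleft)=\type(\zleft)$ and $\type(\oright)=\type(\zright)$, as required. It then remains to check the inequality
\[
\dEDITc(\oleft,\oright) > \dEDITc(\zleft,\oright) = \dEDITc(\zleft,\zright) = \dEDITc(\oleft,\zright).
\]

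The main step is to compute $\dEDITc(u,v)$ for each of the four pairs $(u,v)\in\{\oleft,\zleft\}\times\{\oright,\zright\}$. Since here $\cdelx=\cdely=1$, $\cmatch=0$ and $0<\csubst\le 2$, I would use the representation of the traversal cost from the proof of \lemref{editreduction}: any traversal of two length-$5$ strings costs $B\csubst + 2C + (n-m) = B\csubst + 2C$ with $A+B+C=5$, where $A$ counts matchings, $B$ substitutions, $C$ paired deletions. Because $\csubst\le 2$, substitutions are never worse than a pair of deletions, so an optimal traversal uses $C=0$ whenever possible, giving cost $B\csubst$ where $B = 5 - A$ and $A$ is the number of matched positions in an optimal \emph{alignment without gaps}; more generally the optimum is $\min_{B,C}(B\csubst+2C)$ over all traversals. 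Concretely, I would observe that $\dEDITc(u,v) = \min\{\,(5-|\LCS(u,v)|_{\mathrm{gapfree}})\csubst\,,\,\dots\}$ — but it is cleaner to just exhibit an explicit optimal traversal for each pair and a matching lower bound. For $\oleft=11100$, $\oright=00111$: the only common subsequence structure is weak; I would check that one cannot match more than, say, two symbols without using deletions, and conversely that matching suitably many symbols plus a few substitutions yields a small cost. For the other three pairs ($\zleft=10011$ with $\oright=00111$; $\zleft$ with $\zright=11001$; $\oleft$ with $\zright$) I would likewise exhibit traversals: e.g.\ $\zleft$ vs.\ $\zright$ has a common subsequence $1001$ of length~$4$, so a traversal matching those four positions and substituting/deleting the remaining one costs at most $\csubst$ (or $2$), and I'd argue no traversal does better, giving the common value. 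The upshot, matching the \LCS\ computations $\LCS(\oleft,\oright)=111$, $\LCS(\oleft,\zright)=1100$, $\LCS(\zleft,\oright)=0011$, $\LCS(\zleft,\zright)=1001$ from the previous lemma, is that $\dEDITc(\oleft,\oright)$ is strictly larger (it forces either more substitutions or deletions) than the common value $\dEDITc(\zleft,\oright)=\dEDITc(\zleft,\zright)=\dEDITc(\oleft,\zright)$.

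The only real obstacle is that, unlike for \LCS\ where the cost is a fixed function of the \LCS\ length, here the optimal traversal cost depends on $\csubst\in(0,2]$, so I must be careful that the strict inequality holds \emph{uniformly} in $\csubst$. The delicate case is small $\csubst$: if $\csubst$ is tiny, substitutions are almost free, and one might worry the four values all collapse. I would handle this by noting that the cost via substitutions-only equals $\csubst$ times the number of mismatched positions in the best gap-free alignment of the two (length-$5$) strings, and that the best gap-free alignment of $\oleft$ with $\oright$ has strictly more mismatches than that of each of the other three pairs (which all admit a gap-free alignment of the same, smaller mismatch count), so the ratio is preserved; for larger $\csubst$ the deletion option caps things but symmetrically for all pairs. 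A short case analysis over the (finitely many, since strings have length~$5$) alignments, or simply quoting the \LCS\ values and observing that edit distance with these costs is monotone in the \LCS-type quantity, closes the argument. Thus the four strings indeed form valid coordinate values for $\EDIT(\csubst)$ for every $0<\csubst\le 2$.
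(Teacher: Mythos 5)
Your high-level plan — check the types, then compute and compare the four edit distances — matches what the paper does, but as written your argument has a genuine gap and one concrete error, and it does not reach the same level of rigor.

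The concrete error: you claim that for $\zleft=10011$ versus $\zright=11001$, after matching the length-$4$ common subsequence $1001$, the remaining symbol can be handled for ``at most $\csubst$ (or $2$)''. The deletion option does cost $2$ (one unmatched $1$ in each string), but the best substitution-based traversal aligns the strings position-by-position (matching positions $1,3,5$ and substituting positions $2,4$), giving cost $2\csubst$, not $\csubst$. So the correct value is $\min\{2,2\csubst\}$, not $\min\{\csubst,2\}$. Also note that the diagonal, position-wise alignment used for substitutions is a \emph{different} traversal from the LCS-based one that justifies the deletion bound, so the two bounds do not arise from ``the remaining one'' of a single alignment.

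The larger gap: your proof is a plan rather than a verification. You say ``I would check that one cannot match more than, say, two symbols without using deletions'' and ``a short case analysis \dots closes the argument'', but you never carry out these checks, and the notion of ``LCS-type quantity'' that edit distance is supposedly ``monotone in'' is not defined or justified. In particular, you correctly identify the danger that the four values might collapse or reorder as $\csubst$ varies, but you do not actually resolve it. The paper resolves it by an exact computation: using the stripping lemma (Fact~\ref{fac:EDITgreedy}.(1), which lets one cancel equal prefixes and suffixes) it reduces each pair to a tiny instance such as $\dEDIT(001,100)$, obtaining $\dEDIT(\zleft,\zright)=\dEDIT(\oleft,\zright)=\dEDIT(\zleft,\oright)=\min\{2,2\csubst\}$ and $\dEDIT(\oleft,\oright)=\min\{4,4\csubst\}$. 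The strict inequality $\min\{4,4\csubst\}>\min\{2,2\csubst\}$ then holds for every $\csubst>0$ by inspection, which is exactly the uniform-in-$\csubst$ claim you were worried about. I would recommend adopting this stripping trick: it replaces your open-ended ``finitely many alignments'' case analysis with a two-line exact formula and makes the uniformity in $\csubst$ transparent.
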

\begin{proof}
  All four values have the same length and the same number of ones, so they have equal type. Using \facref{EDITgreedy}.(1) (to be proven below), we have $\dEDIT(\zleft,\zright) = \dEDIT(10011,11001) = \dEDIT(0011,1001) = \dEDIT(001,100)$. Depending on $\csubst$, the optimal traversal of $(001,100)$ is either to delete both ones or to substitute the first and last symbols. This yields $\dEDIT(001,100) = \min\{2,2\csubst\}$. Similarly, we obtain $\dEDIT(\oleft,\zright) = \dEDIT(\zleft,\oright) = \dEDIT(\zleft,\zright) = \min\{2,2\csubst\}$ and $\dEDIT(\oleft,\oright) = \dEDIT(11100,00111) = \min\{4,4\csubst\}$. Hence, $\dEDIT(\oleft,\oright) > \dEDIT(\oleft,\zright) = \dEDIT(\zleft,\oright) = \dEDIT(\zleft,\zright)$.
  
\end{proof}

\begin{defn} \label{def:EDIT}
Consider instances $x_1,\ldots,x_n \in \inputs_{t_\x}$ and $y_1,\ldots,y_m \in \inputs_{t_\y}$ with $n \ge m$ and types $t_\x = (\ell_\x,s_\x), t_\y = (\ell_\y,s_\y)$. 
We define the parameters $\rho := 2 \lceil 1/\csubst \rceil$, $\gamma_1 := 10 \rho(\ell_\x+\ell_\y)$, $\gamma_2 := 6 \rho \gamma_1 + 5 s_\x - \ell_\x$, and $\gamma_3 := 2 \gamma_2$ (since $\csubst$ is constant, these parameters are $\Theta(\ell_\x + \ell_\y)$).

To \emph{guard} a string by blocks of zeroes and ones, we set $\guard(z) := (1^{\gamma_1} 0^{\gamma_1})^\rho z (0^{\gamma_1} 1^{\gamma_1})^\rho$. 
Now the \aligngad\ gagdet is
\begin{align*}
x &:=\qquad \, \guard(x_1) \; 0^{\gamma_2} \; \guard(x_2) \; 0^{\gamma_2} \,\ldots\, \guard(x_{n-1}) \; 0^{\gamma_2} \; \guard(x_n),  \\
y &:= \, 0^{n\gamma_3} \; \guard(y_1) \; 0^{\gamma_2} \; \guard(y_2) \; 0^{\gamma_2} \,\ldots\, \guard(y_{m-1}) \; 0^{\gamma_2} \; \guard(y_m) \; 0^{n\gamma_3}.
\end{align*}
\end{defn}

Let us provide some intuition on the complex guarding $\guard(z)$, which contains more parts compared to the construction for LCS.
Consider a block $B = (1^\gamma 0^\gamma)^\rho$. Clearly, $B$ can be completely matched to $B$, resulting in a cost of 0. Consider a slight perturbation $B'$ of $B$ by prepending $\Delta$ ones and deleting the last $\Delta$ zeroes. Then the edit distance of $B$ and $B'$ is at most $2 \Delta$, since we may delete the prepended ones in $B'$ and the additional zeroes at the end of $B$. Another upper bound for the edit distance of $B$ and $B'$ is $2 \rho \cdot \Delta \csubst$, since we may match the first $\gamma$ ones, then substitute the next $\Delta$ symbols, then match the next $\gamma - \Delta$ zeroes, and so on. By choosing $\rho := 2 \lceil 1/\csubst \rceil$, the traversal using substitutions is more expensive, and indeed we prove that then the edit distance is at least $2 \Delta$. This provides a building block where we got rid of substitutions and where slight perturbations are severely punished. Thus, our guarding $\guard(z) = (1^{\gamma_1} 0^{\gamma_1})^\rho z (0^{\gamma_1} 1^{\gamma_1})^\rho$ ensures that an optimal traversal of $\guard(x)$ and $\guard(y)$ aligns $x$ and $y$, and this also holds after small perturbations.

\begin{lem} \label{lem:EDITcorrect}
  For any $0 < \csubst \le 2$, \defref{EDIT} realizes an \aligngad\ gadget for $\EDIT(\csubst)$.
\end{lem}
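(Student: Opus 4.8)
The plan is to mirror the structure of the proof of \lemref{lcscorrect}, but now accounting for substitutions of cost $\csubst$. First I would verify the bookkeeping conditions: $x$ depends only on $m, t_\y$ and the $x_i$; $\type(x)$ depends only on $n,m,t_\x,t_\y$ (and symmetrically for $y$); and the construction runs in time $\Oh((n+m)(\ell_\x+\ell_\y))$. Then, setting $C := 2n\gamma_3$, the goal is the double inequality
\begin{align*}
\min_{A \in \algn_{n,m}} \Val{A} \;\le\; \dEDIT(x,y) - C \;\le\; \min_{A \in \strc_{n,m}} \Val{A}.
\end{align*}
As preparation I would establish the edit-distance analogues of \clarefs{lcsobs}{prefix} and of the greedy claim \claref{lcsgreedy}: a decomposition lemma stating $\dEDIT(x,z_1\cdots z_k) = \min \sum_j \dEDIT(x(z_j),z_j)$ over ordered partitions of $x$ (this follows since a traversal of $(x,z)$ splits at the block boundaries of $z$, each matched/substituted/deleted symbol of $z$ belonging to a unique $z_j$); a "prefix" claim that any prefix $x'$ of $x$ has $\dEDIT(x',0^\ell)\ge\ell$ with equality when $x'$ ends at a $0^{\gamma_2}$-block and $\ell$ is large enough (again using that every prefix of $x$ has at least as many ones as zeroes, so deleting the excess and substituting nothing is forced to leave $\ge \ell$ symbols untouched); and greedy claims of the form $\dEDIT(1^k z, 1^k w) = \dEDIT(z,w)$ and $\dEDIT(0^\ell z, 1^k w) \ge \min\{k,\dEDIT(z,1^k w)\}$, whose proofs are the edit-distance versions of \facref{EDITgreedy} (to be proven elsewhere in the paper) — here I must be careful that substituting a leading $1$ for a leading $0$ costs $\csubst>0$, so the exchange arguments still go through, possibly with the factor $\csubst$ appearing but never helping the adversary because $\gamma_1,\gamma_2$ are chosen $\Theta((\ell_\x+\ell_\y))$, hence $\gg \ell_\x+\ell_\y \ge \dEDIT(x_i,y_j)$.

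For the upper bound I would take any structured alignment $A=\{(\Delta+1,1),\dots,(\Delta+m,m)\}$, use exactly the ordered partition of $x$ from \defref{lcs}'s proof ($x(\guard(y_j)):=\guard(x_{\Delta+j})$, $x(Z^\y_j):=Z^\x_{\Delta+j}$, $x(L^\y):=\guard(x_1)Z^\x_1\cdots\guard(x_\Delta)Z^\x_\Delta$, and symmetrically $x(R^\y)$), apply the decomposition lemma, bound $\dEDIT(x(L^\y),L^\y)=n\gamma_3$ and $\dEDIT(x(R^\y),R^\y)=n\gamma_3$ via the prefix/suffix claim (checking $\gamma_3$ is large enough), note $\dEDIT(Z^\x_{\Delta+j},Z^\y_j)=0$, and finally bound $\dEDIT(\guard(x_{\Delta+j}),\guard(y_j))\le\dEDIT(x_{\Delta+j},y_j)$ by matching the guarding blocks $(1^{\gamma_1}0^{\gamma_1})^\rho$ of both sides exactly. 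Summing yields $\dEDIT(x,y)\le 2n\gamma_3+\sum_{(i,j)\in A}\dEDIT(x_i,y_j)$, and minimizing over $A\in\strc_{n,m}$ gives the upper half.

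For the lower bound I would take an optimal ordered partition of $x$ (existing by the decomposition lemma), bound the $L^\y,R^\y$ pieces below by $n\gamma_3$ each and the $Z^\y_j$ pieces by $0$, and then construct an alignment $A\in\algn_{n,m}$: for each $j$, if $x(\guard(y_j))$ contains more than half of some $x_{i'}$, align $j$ to the leftmost such $i$. I then need two claims: (a) for an aligned pair $(i,j)$, $\dEDIT(x(\guard(y_j)),\guard(y_j))\ge\dEDIT(x_i,y_j)$; and (b) for unaligned $j$, $\dEDIT(x(\guard(y_j)),\guard(y_j))\ge\ell_\x+\ell_\y$ (the maximum possible $\dEDIT(x_i,y_j)$, hence matching the punishment term). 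These are where the extra guarding structure $(1^{\gamma_1}0^{\gamma_1})^\rho$ (rather than a single $1^{\gamma_2}0^{\gamma_1}$) earns its keep: \textbf{this is the main obstacle.} Over a binary alphabet with substitutions allowed, an adversarial traversal could try to "shift" the alternating guard of $\guard(x_i)$ against that of $\guard(y_j)$ by substituting a short prefix rather than deleting it; the parameter $\rho=2\lceil1/\csubst\rceil$ is precisely tuned (per the intuition paragraph preceding the lemma) so that shifting a perturbation of size $\Delta$ through all $\rho$ blocks costs $\ge 2\rho\Delta\csubst\ge 4\Delta > 2\Delta$, i.e.\ more than simply deleting, forcing the optimal traversal to align $x_i$ with $y_j$ essentially rigidly after small perturbations. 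I would formalize this as a "perturbed block" sublemma: $\dEDIT\big((1^{\gamma_1}0^{\gamma_1})^\rho,\ 1^{\Delta}(1^{\gamma_1}0^{\gamma_1})^{\rho}\text{ with last }\Delta\text{ zeroes removed}\big)\ge 2\Delta$ and its variants, and then, as in \lemref{lcscorrect}, peel off the guard blocks from both sides using the greedy claims (each peeling either already leaves $\ge\gamma_1\ge\ell_\x+\ell_\y$ unmatched symbols — done — or reduces to the same problem with one fewer guard layer), until we are left with $\dEDIT(0^{\gamma_1}x_i0^{\gamma_1},0^{\gamma_1}y_j0^{\gamma_1})=\dEDIT(x_i,y_j)$. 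For (b), as in the LCS proof, split on whether $x(\guard(y_j))$ reaches ones on both sides of a $0^{\gamma_2}$ block inside $x$ (then it contains $\ge\gamma_3$ zeroes, far more than the $\le 2\rho\gamma_1+\ell_\y$ zeroes of $\guard(y_j)$, leaving $\gg\ell_\x+\ell_\y$ unmatched) or only on one side (then it has $\le\rho\gamma_1\cdot 2/\,$-ish ones, so $\gg\ell_\x+\ell_\y$ of the $2\rho\gamma_1$ guard-ones of $\guard(y_j)$ stay unmatched), in both cases using that a substitution costs at least $\csubst>0$ so that unmatched-symbol counting still lower-bounds $\dEDIT$. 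Assembling the pieces gives $\dEDIT(x,y)\ge 2n\gamma_3+\min_{A\in\algn_{n,m}}\Val{A}$, completing the proof.
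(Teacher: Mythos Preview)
Your plan has a genuine gap at the ``prefix claim'': the assertion that every prefix $x'$ of $x$ satisfies $\dEDIT(x',0^\ell)\ge \ell$ is false for $\csubst<2$. Concretely, if $x'$ has $o$ ones and $|x'|\le \ell$, then matching all zeroes and substituting all ones gives $\dEDIT(x',0^\ell)=o\cdot\csubst+(\ell-|x'|)$, which is $<\ell$ whenever $o\csubst<|x'|$. In \defref{EDIT} the parameter $\gamma_2$ is chosen so that the special prefixes $\guard(x_1)0^{\gamma_2}\ldots\guard(x_i)0^{\gamma_2}$ have relative one-count exactly $1/5$ (not $1/2$ as in the LCS construction), so for these prefixes $\dEDIT(x',0^{n\gamma_3})=n\gamma_3-(1-\csubst/5)|x'|$, strictly below $n\gamma_3$. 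Your parenthetical justification ``every prefix of $x$ has at least as many ones as zeroes'' is therefore also false for this construction.

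This breaks both the choice $C=2n\gamma_3$ and, more importantly, the lower bound: an adversarial partition can shove extra mass into $x(L^\y)$ or $x(R^\y)$, each extra symbol saving $\beta:=1-\csubst/5>0$ there, and your accounting (bounding the $Z^\y_j$-pieces by $0$ and the guard pieces by $\dEDIT(x_i,y_j)$ or $\ell_\x+\ell_\y$) does not recover that loss. The paper's fix is to make the length-dependence explicit: it proves $\dEDIT(x',L^\y)\ge n\gamma_3-\beta|x'|$ for every prefix (\claref{editprefix}), and analogously shows that each remaining piece contributes at least $\beta\cdot(\text{target length}-\text{actual length})$ plus the alignment term (\claref{dguards} for the guard pieces, and $\dEDIT(x(Z^\y_j),Z^\y_j)\ge\beta(\gamma_2-|x(Z^\y_j)|)$ for the zero blocks). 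Summing, the length terms telescope to the fixed constant $-\beta(n-m)(\gamma_4+\gamma_2)$, which is why the correct value is $C=2n\gamma_3-\beta(n-m)(\gamma_4+\gamma_2)$. Your sketch of claims (a) and (b) also omits this additive $\beta(\gamma_4-|x(\guard(y_j))|)$ term, without which the pieces do not fit together; proving the strengthened version is where the paper's $\Delta_L,\Delta_R$ offset analysis (and the choice $\rho=2\lceil 1/\csubst\rceil$) is actually used.
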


Thus, \thmref{main} is applicable, implying a lower bound of $\Oh(m^{2-\eps})$ for $\EDIT(\csubst)$. Combining this with \lemref{editreduction} proves \thmref{EDIT}. 
We remark that our construction is no \emph{unbalanced} \aligngad\ gadget, as the length of $y$ grows linearly in $n$, not necessarily in $m$. Thus, we do not obtain a conditional lower bound of $\Oh((n m)^{1-\eps})$ (i.e., not for $m \approx n^\alpha$ for all $0 < \alpha < 1$), which in fact is ruled out by the algorithmic result of \thmref{EDITalgo}, see \secref{EDITalgo}.

In the proof of \lemref{EDITcorrect} we make use of the following basic observations.

\begin{fact} \label{fac:EDITgreedy}
    Let $x,y,z$ be binary strings and $\ell,k \in \mathbb{N}_0$. Then we have (1) $\dEDIT(1^k x, 1^k y) = \dEDIT(x,y)$, (2) $\dEDIT(x,y) \ge \big| |x| - |y| \big|$ and (3) $|\dEDIT(x z,y) - \dEDIT(x,y)| \le |z|$. We obtain symmetric statements by replacing all 1's by 0's and by reversing all involved strings.
  \end{fact}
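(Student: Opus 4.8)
\emph{Plan.} I would prove the three inequalities one at a time; in parts (1) and (3) one direction is a trivial construction and the other needs a short structural argument about optimal traversals, while (2) is pure counting. Only $\cmatch=0$, $\cdelx=\cdely=1$ and $\csubst\ge 0$ enter (the hypothesis $\csubst\le 2$ is irrelevant here), and $\dEDIT$ is unchanged if we simultaneously reverse both arguments or simultaneously flip $0\leftrightarrow1$ in both arguments; this last remark yields all the stated symmetric versions once the displayed statements are proven, so I only discuss those.

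\emph{Part (2) and the easy halves of (1) and (3).} For (2): in any traversal $T$ of $x,y$ let $D_x,D_y,S$ be the numbers of deletions in $x$, deletions in $y$, and match/substitution steps; since a match/substitution advances both pointers by one and a deletion advances exactly one, $S+D_x=|x|-1$ and $S+D_y=|y|-1$, hence $D_x-D_y=|x|-|y|$. As matchings are free and substitutions cost $\csubst\ge 0$, $\dEDIT(T)\ge D_x+D_y\ge|D_x-D_y|=\big||x|-|y|\big|$, and minimizing over $T$ gives (2). For the easy half of (1), prepending $k$ matchings of the leading ones to an optimal traversal of $(x,y)$ yields a traversal of $(1^k x,1^k y)$ of equal cost, so $\dEDIT(1^k x,1^k y)\le\dEDIT(x,y)$. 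For the easy half of (3), appending $|z|$ deletions in $x$ to an optimal traversal of $(x,y)$ yields a traversal of $(xz,y)$ of cost $\dEDIT(x,y)+|z|$, so $\dEDIT(xz,y)\le\dEDIT(x,y)+|z|$.

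\emph{The remaining half of (1): $\dEDIT(x,y)\le\dEDIT(1^k x,1^k y)$.} By induction it suffices to treat $k=1$, i.e.\ to show that $u[1]=v[1]$ forces $\dEDIT(u,v)\le\dEDIT(u',v')$ where $u',v'$ drop the first symbol. I would use an exchange argument: any traversal of $(u,v)$ whose first move is a deletion begins with a run of deletions followed by at most one match/substitution before both pointers exceed $1$, and this initial segment can be replaced by ``match $u[1]$ with $v[1]$ (cost $0$), then delete the remaining consumed symbols'', which does not increase the cost since matchings are free and $\csubst\ge0$; deleting the leading matching from the rerouted traversal gives a traversal of $(u',v')$ of no larger cost. (Alternatively — and more cleanly, as it avoids fiddling with degenerate short inputs — one invokes the standard dynamic-programming characterization of $\dEDIT$ and shows by induction on $p+q$ that the entry $D[k+p][k+q]$ of the table for $(1^k x,1^k y)$ equals the entry $E[p][q]$ of the table for $(x,y)$, using that inside the leading ones $D[i][i]=0$ by (2).)

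\emph{The remaining half of (3): $\dEDIT(x,y)\le\dEDIT(xz,y)+|z|$, and the main obstacle.} Assume $x$ is nonempty (otherwise the claim is just (2)). Let $T=((a_1,b_1),\dots,(a_t,b_t))$ be an optimal traversal of $(xz,y)$ and let $s$ be the least index with $a_s=|x|$; the prefix $((a_1,b_1),\dots,(a_s,b_s))$ uses only symbols of $x$ and of $y$, so it is a valid partial traversal of some cost $c_1\le\dEDIT(xz,y)$, and appending $|y|-b_s$ deletions in $y$ turns it into a traversal of $(x,y)$ of cost $c_1+(|y|-b_s)$. Writing $c_2:=\dEDIT(xz,y)-c_1$ for the cost of the suffix of $T$ after index $s$: in that suffix the $x$-pointer advances $|z|$ times and the $y$-pointer advances $|y|-b_s$ times, and at most $|z|$ of these advances are match/substitution steps, so the suffix contains at least $(|y|-b_s)-|z|$ deletions in $y$, whence $c_2\ge(|y|-b_s)-|z|$. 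Combining, $\dEDIT(x,y)\le c_1+(|y|-b_s)\le c_1+c_2+|z|=\dEDIT(xz,y)+|z|$. The only real work in the whole proof lies in these two ``remaining halves'': the rerouting/exchange argument for (1) (with its separate handling of length-one inputs, or the cleaner DP induction), and the advance-counting in the truncated suffix for (3); everything else is immediate.
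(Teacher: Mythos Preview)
Your proposal is correct. Parts (2) and the trivial directions of (1) and (3) match the paper exactly, and your exchange argument for the nontrivial half of (1) is essentially the paper's argument in different clothing: the paper phrases it as a two-case analysis on the fate of the two leading $1$'s (both deleted, or one matched/substituted and the other deleted) and concludes that an optimal traversal may be assumed to match them, which is the same conclusion your rerouting reaches.

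The one place you diverge from the paper is the nontrivial half of (3). The paper takes an optimal traversal $T$ of $(xz,y)$ and directly edits it into a traversal of $(x,y)$: every match or substitution pairing a symbol of $z$ with some $y[j]$ is replaced by a deletion of $y[j]$, and every deletion of a symbol of $z$ is simply dropped; since at most $|z|$ new deletions are introduced, the cost rises by at most $|z|$. Your argument instead truncates $T$ at the first time the $x$-pointer hits $|x|$, pads with deletions in $y$, and bounds the padding via an advance count in the discarded suffix. Both are short and valid; the paper's modification argument is slightly slicker in that it avoids the separate treatment of empty $x$, while your truncation argument has the minor advantage of not needing to verify that the edited sequence of operations is again a well-formed traversal.
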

  \begin{proof}
    We show (1) for $k=1$, then the general statement follows by induction. 
    Consider an optimal traversal $T$ of $1x, 1y$. If both '1's are deleted in $T$, then we can instead match them and improve $T$, contradicting optimality. If exactly one '1' is matched or substituted, then the other '1' is deleted, so we may instead match the two '1's without increasing cost. Thus, without loss of generality an optimal traversal of $(1x,1y)$ matches the two '1's. 
    
    For (2), note that matchings and substitutions touch as many symbols in $x$ as in $y$. Hence, there have to be at least $|x| - |y|$ deletions in $x$ and at least $|y|-|x|$ deletions in $y$.
    
    For (3), taking an optimal traversal of $(x,y)$ and appending $|z|$ deletions of the symbols in $z$ shows that $\dEDIT(xz,y) \le \dEDIT(x,y) + |z|$. For the other direction, consider an optimal traversal~$T$ of $(xz,y)$. Replace any matching or substitution of a symbol in $z$ with a symbol $y[j]$ in $y$ by a deletion of $y[j]$. Also remove every deletion of a symbol in $z$. This results in a traversal $T'$ of $(x,y)$ with cost at most $\dEDIT(xz,y) + |z|$, as we introduced at most $|z|$ deletions in $y$. This proves the desired inequality $\dEDIT(x,y) \le \dEDIT(xz,y)+|z|$.
  \end{proof}

\begin{fact} \label{fac:EDITblock}
    Let $\ell, m, r \ge 0$. Then for any $x \in \{0^\ell 1^m 0^r, 1^{m-\ell-r}, 1^{m-\ell} 0^r, 0^\ell 1^{m-r}\}$ we have 
    $\dEDIT(x, 1^m) \ge |\ell-r| + \csubst \cdot \min\{\ell,r\}.$
\end{fact}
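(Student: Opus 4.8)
The plan is to prove the four cases by reducing each one to a canonical form and then arguing directly about the structure of an optimal traversal. First I would handle the case $x = 0^\ell 1^m 0^r$, which is the ``hardest'' of the four in the sense that the other three are degenerate subcases (obtained by deleting one of the $0$-blocks and shrinking the $1$-block). The key observation is that, by \facref{EDITgreedy}.(1) (and its $0$-symmetric version, applied from the right), we cannot peel off common prefixes/suffixes here because the strings do not share a leading or trailing symbol; so instead I would reason about how an optimal traversal $T$ of $(0^\ell 1^m 0^r, 1^m)$ must treat the two outer zero-blocks. Every symbol of the left block $0^\ell$ is either deleted (cost $1$ each) or substituted against some $1$ of the target (cost $\csubst$ each), and likewise for the right block $0^r$; moreover the $m$ ones of the target $1^m$ are touched by matchings, substitutions, or deletions. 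Let $p$ be the number of zeros of $0^\ell$ that get substituted and $q$ the number of zeros of $0^r$ that get substituted. A zero of $0^\ell$ that is substituted consumes one $1$ of the target ``from the left'', a zero of $0^r$ substituted consumes one ``from the right'', and since the alignment is order-preserving these sets of consumed target-ones are disjoint; hence $p + q \le m$, and actually the middle block $1^m$ of $x$ contributes $\ge m - p - q$ matchings. Counting costs: the left block contributes $(\ell - p) + p\csubst$, the right block contributes $(r - q) + q\csubst$, and there is a nonnegative contribution from the remaining target ones, so $\dEDIT(x, 1^m) \ge (\ell - p) + (r - q) + (p + q)\csubst = (\ell + r) - (p+q)(1 - \csubst)$ when $\csubst \le 1$; to get the clean bound I would instead bound $(\ell - p) + p\csubst \ge |\ell - \min\{p,\ell\}| + \dots$ — more carefully, observe $(\ell - p) + p\csubst \ge 0$ always and $(\ell - p) + p\csubst \ge \ell\csubst$ and $\ge \ell - p$, and combine with the analogous bounds for $r,q$ plus $p + q \le m$ to extract $|\ell - r| + \csubst\min\{\ell, r\}$.

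Concretely, the cleanest route is: without loss of generality $\ell \le r$. Then $\min\{\ell, r\} = \ell$ and $|\ell - r| = r - \ell$, so the claimed bound is $r - \ell + \ell\csubst$. The cost contributed by the left zero-block is $(\ell - p) + p\csubst \ge \ell\csubst$ (since each of its $\ell$ symbols costs at least $\csubst$, as $\csubst \le 2$ but in fact $\csubst \le 1$ would be needed here — I would note that if $\csubst > 1$ the claim is weaker than the trivial bound $\dEDIT \ge \big||x|-|y|\big| = \ell + r$ from \facref{EDITgreedy}.(2), since $|\ell - r| + \csubst\min\{\ell,r\} \le (\ell+r)$ fails only when $\csubst > \dots$; I would just split on $\csubst \le 1$ vs.\ $\csubst > 1$ and in the latter case invoke \facref{EDITgreedy}.(2) directly). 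For $\csubst \le 1$: the left block costs $\ge \ell\csubst$ and the right block costs $\ge (r - q) + q\csubst$; using $q \le m - p \le m$ and length considerations $r - q \ge r - \dots$ — actually the cleanest is to note the total number of deletions is at least $|x| - |y| = \ell + r$ minus twice the number of substitutions, and the number of substitutions is at most $\min\{\ell + r, m\}$ but order-preservation forces substituted symbols of $0^\ell$ and of $0^r$ to use disjoint target ones, giving substitution count $\le$ ... — I would finalize this with the bound $\dEDIT \ge (\text{\#deletions}) + \csubst(\text{\#substitutions})$ and the two structural inequalities (\#deletions $\ge (\ell + r) - 2s$ where $s = $ \#substitutions among the outer zeros, hence $\dEDIT \ge \ell + r - 2s + s\csubst = \ell + r - s(2 - \csubst)$, combined with $s \le \ell + q$ ... ). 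This is the step I expect to be the main obstacle: getting the accounting tight enough to land exactly on $|\ell - r| + \csubst\min\{\ell,r\}$ rather than a weaker constant, because the order-preserving structure of traversals must be used to prevent a single target one from being ``double-charged'' by both outer blocks.

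For the remaining three cases $x \in \{1^{m-\ell-r}, 1^{m-\ell}0^r, 0^\ell 1^{m-r}\}$, I would derive them as corollaries. For $x = 1^{m-\ell-r}$ we have $\dEDIT(1^{m-\ell-r}, 1^m) = \dEDIT(\emptyset, 1^{\ell+r}) = \ell + r \ge |\ell - r| + \csubst\min\{\ell, r\}$ by \facref{EDITgreedy}.(1) applied $m-\ell-r$ times and then (2), since $\csubst \le 2$ forces $\csubst\min\{\ell,r\} \le 2\min\{\ell,r\} \le \ell + r - |\ell-r|$... wait, $|\ell-r| + 2\min\{\ell,r\} = \ell + r$ exactly, so this is tight. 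For $x = 1^{m-\ell}0^r$: strip the common prefix $1^{m-\ell-r}$ via \facref{EDITgreedy}.(1) to reduce to $\dEDIT(1^r 0^r, 1^r) = \dEDIT(0^r, \emptyset) + \dots$ — more precisely $\dEDIT(1^{m-\ell}0^r, 1^m) = \dEDIT(0^r, 1^\ell)$ after stripping $1^{m-\ell}$ (valid since $m - \ell \ge 0$ and the first $m-\ell$ symbols of $x$ are ones; here I need $m - \ell \le m$ and that $x$ starts with $\ge m - \ell$ ones, which holds when $r \le \ell$, else I strip only $\min$), and $\dEDIT(0^r, 1^\ell)$ is exactly the base problem with both outer blocks absent except... this is just $\dEDIT(0^r 1^0 0^0, 1^\ell)$-shaped, so it reduces to the two-block analysis which gives $|r - \ell| + \csubst\min\{r,\ell\}$, but with the roles such that we want $\ge |\ell - r| + \csubst\min\{\ell,r\}$ — same quantity, done. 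The case $x = 0^\ell 1^{m-r}$ is symmetric (strip the common suffix $1^{m-\ell-r}$... or $1^{m-r-\ell}$... from the right). I would present these three as a short paragraph each, each being two or three lines invoking \facref{EDITgreedy} to reduce to a single-zero-block-versus-ones computation, which is itself a trivial special case of the main argument. I expect the write-up of these corollaries to be routine once the $0^\ell 1^m 0^r$ case is nailed down.
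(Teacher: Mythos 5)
Your proposal has the right ingredients but misallocates effort in a way that leaves a real gap. You treat $x = 0^\ell 1^m 0^r$ as ``the hardest'' case and propose an elaborate counting argument (tracking which zeros of each outer block are substituted, using order-preservation to prevent double-charging target ones). That argument is never brought to completion in your write-up, and in fact it is unnecessary: by \facref{EDITgreedy}.(2), $\dEDIT(0^\ell 1^m 0^r, 1^m) \ge |x| - |1^m| = \ell + r$, and since the paper always assumes $0 < \csubst \le 2$, we have $|\ell - r| + \csubst\min\{\ell,r\} \le |\ell - r| + 2\min\{\ell,r\} = \ell + r$, so the length-difference bound already finishes this case (and the case $x = 1^{m-\ell-r}$, by symmetry of $\pm$). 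You do notice this at one point but incorrectly restrict it to ``$\csubst > 1$'' and propose a case split on $\csubst$ that is not needed. This is exactly what the paper does for these two cases, with no case split.

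The substantive work is in the remaining two cases $x = 0^\ell 1^{m-r}$ and $x = 1^{m-\ell}0^r$, which you dismiss as ``routine corollaries'' of two or three lines. The paper handles $0^\ell 1^{m-r}$ by analyzing an optimal traversal directly: if $s$ of the $\ell$ leading zeros are substituted, the cost decomposes as $\csubst s + (\ell - s) + \dEDIT(1^{m-r},1^{m-s}) = \csubst s + (\ell - s) + |r-s|$, and one minimizes over $s$ to get $s = \min\{\ell,r\}$ yielding the claimed bound. Your alternative route --- strip the common suffix $1^{m-r}$ via the reversed form of \facref{EDITgreedy}.(1) to reduce to $\dEDIT(0^\ell, 1^r)$ and compute that directly --- is valid and arguably cleaner, and the symmetric stripping handles $1^{m-\ell}0^r$. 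But two caveats: (i) you present this as a consequence of your unfinished main argument (``a trivial special case of the main argument''), which is backwards --- it should be a standalone computation; and (ii) the confusion in your parenthetical about ``which holds when $r \le \ell$, else I strip only $\min$'' is spurious, since the common suffix length is exactly $m-r$ regardless of the relation between $\ell$ and $r$. So: the proposal would work once reorganized, but as written the ``main'' argument is incomplete and unneeded, while the cases that actually carry the proof are under-specified.
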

\begin{proof}
  \facref{EDITgreedy}.(2) yields $\dEDIT(0^\ell 1^m 0^r, 1^m), \dEDIT(1^{m-\ell-r},1^m) \ge \ell + r \ge |\ell-r| + \csubst \cdot \min\{\ell,r\}$, since $\csubst \le 2$. For $x = 0^\ell 1^{m-r}$, consider any optimal traversal $T$. If $T$ substitutes $s$ zeroes and deletes the remaining $\ell-s$ zeroes, then 
  $\dEDIT(0^\ell 1^{m-r}, 1^m) = \csubst \cdot s + (\ell-s) + \dEDIT(1^{m-r}, 1^{m-s})$. By \facref{EDITgreedy}.(1), $\dEDIT(1^{m-r}, 1^{m-s}) = \dEDIT(\epsilon, 1^{|r-s|}) = |r-s|$, where $\epsilon$ is the empty string. Hence, 
  $\dEDIT(0^\ell 1^{m-r}, 1^m) = \min_{0 \le s \le \ell}\{ \csubst \cdot s + \ell-s + |r-s| \}$. A short case analysis shows that this term is minimized for $s = \min\{\ell,r\}$, where it evaluates to $\csubst \cdot \min\{\ell,r\} + \ell + r - 2 \min\{\ell,r\} = \csubst \cdot \min\{\ell,r\} + |\ell-r|$.
  The case $x = 1^{m-\ell} 0^r$ is symmetric.
\end{proof}

For a string $y$ and indices $a\le b$ we denote the substring from $y[a]$ to $y[b]$ by $y[a..b]$.

\begin{fact} \label{fac:EDITobs}
  Let $x$ and $y_1,\ldots,y_k$ be binary strings. Set $y = y_1 \ldots y_n$. Then we have
  $$ \dEDIT(x,y) = \min_{x(y_1),\ldots,x(y_k)} \sum_{j=1}^k \dEDIT(x(y_j),y_j),  $$
  where $x(y_1),\ldots,x(y_k)$ ranges over all \emph{ordered partitions} of $x$ into $k$ substrings, i.e., $x(y_1) = x[i_0+1..i_1], x(y_2) = x[i_1+1..i_2], \ldots, x(y_k) = x[i_{k-1}+1..i_k]$ for any $0 = i_0 \le i_1 \le \ldots \le i_{k} = |x|$.
\end{fact}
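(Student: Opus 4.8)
\textbf{Proof plan for \facref{EDITobs}.}
The statement is the edit-distance analogue of \claref{lcsobs}, and the plan is to prove the two inequalities separately, exactly as in the LCS case.

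\textbf{Upper bound.} First I would show that for \emph{every} ordered partition $x(y_1),\ldots,x(y_k)$ of $x$ we have $\dEDIT(x,y) \le \sum_{j=1}^k \dEDIT(x(y_j),y_j)$. To see this, take an optimal traversal $T_j$ of each pair $(x(y_j),y_j)$. Since the substrings $x(y_1),\ldots,x(y_k)$ are consecutive and disjoint along $x$, and $y_1,\ldots,y_k$ are consecutive and disjoint along $y$, the concatenation $T_1 \concat \cdots \concat T_k$ (where we glue the last index pair of $T_j$ to the first of $T_{j+1}$) is a valid traversal of $(x,y)$ whose cost is $\sum_{j=1}^k \dEDIT(x(y_j),y_j)$. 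Taking the minimum over all partitions yields $\dEDIT(x,y) \le \min_{x(y_1),\ldots,x(y_k)} \sum_j \dEDIT(x(y_j),y_j)$.

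\textbf{Lower bound.} Conversely, I would start from an optimal traversal $T = ((a_1,b_1),\ldots,(a_t,b_t))$ of $(x,y)$ and exhibit one partition achieving at most $\dEDIT(x,y)$. For each boundary between $y_j$ and $y_{j+1}$ in $y$ — say $y_j$ ends at position $p_j$ of $y$ — pick the first time step $\tau_j$ at which the $y$-pointer reaches $p_j$, and let $i_j := a_{\tau_j}$ (with $i_0 := 0$, $i_k := |x|$). Monotonicity of a traversal gives $0 = i_0 \le i_1 \le \cdots \le i_k = |x|$, so $x(y_j) := x[i_{j-1}+1..i_j]$ is a legitimate ordered partition. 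Restricting the operation sequence of $T$ to the time interval that handles $y_j$ gives a traversal of $(x(y_j),y_j)$ — one has to check that the split points were chosen so that no operation straddles a boundary (a deletion in $x$ occurring exactly at a boundary can be assigned to either side, and choosing the \emph{first} time the $y$-pointer reaches $p_j$ makes this assignment consistent). Summing the costs of these sub-traversals recovers exactly $\dEDIT(T) = \dEDIT(x,y)$, so $\min_{x(y_1),\ldots,x(y_k)} \sum_j \dEDIT(x(y_j),y_j) \le \dEDIT(x,y)$. Combining the two bounds gives the claimed equality.

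\textbf{Main obstacle.} The only subtlety is the bookkeeping in the lower bound: making sure the chosen split indices $i_j$ yield sub-traversals that are individually valid (i.e., start at $(1,1)$ relative to the substring and end at the substring's last symbols) and that their costs add up without double-counting or omitting an operation. This is handled by the ``first time the $y$-pointer reaches $p_j$'' tie-breaking rule, which forces every operation of $T$ to belong to a unique block; everything else is routine.
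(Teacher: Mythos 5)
Your proof is correct, and while the upper-bound half is identical to the paper's, your lower-bound half takes a genuinely different route. The paper defines, for each $j$, the set $J_j$ of $x$-indices that participate in a match or substitution with a symbol of $y_j$ under the optimal traversal $T$; it then observes that these sets are ordered, picks \emph{any} ordered partition with $x(y_j) \supseteq J_j$, and computes the cost of $T$ via the closed-form expression $\dEDIT(T) = \sum_j \bigl( |y_j| + |x(y_j)| - 2|J_j| + s_j \bigr)$ (which relies on the specific costs $\cdelx=\cdely=1$, $\cmatch=0$ in force in that section). You instead slice $T$ directly in time: the split points $i_j := a_{\tau_j}$, where $\tau_j$ is the first step at which the $y$-pointer hits the boundary $p_j$, determine a unique ordered partition, and the operations of $T$ restrict to valid sub-traversals of the pairs $(x(y_j),y_j)$ whose costs add up to $\dEDIT(T)$. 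Your approach has two small advantages: it fixes a concrete, canonical partition (the paper only asserts existence of a suitable one), and it never invokes the explicit cost formula, so it works verbatim for arbitrary operation costs rather than just the normalized $\EDIT(\csubst)$ setting. The paper's version is slightly more economical to write once the cost identity is in hand. Both proofs are sound; the one caveat you should keep in mind is the off-by-one convention for traversals (the paper's traversals run $(1,1)$ to $(n,m)$ with operations indexed $1 \le i < t$), which you should respect carefully when translating the sliced sub-traversals into traversals of the substrings --- but your ``first time the $y$-pointer reaches $p_j$'' tie-break does handle this consistently.
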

\begin{proof}
  For any ordered partition, the substrings $x(y_j)$ are disjoint and ordered along $x$, so we can concatenate (optimal) traversals of $(x(y_j),y_j)$, $j \in [k]$, to form a traversal of $(x,y)$. This shows $\dEDIT(x,y) \le \sum_{j=1}^k \dEDIT(x(y_j),y_j)$.
  
  Now let $T$ be an optimal traversal of $(x,y)$. Let $J_j$ be the indices in $x$ that appear in a matching or substitution operation with symbols in $y_j$. Note that these sets are ordered, in the sense that for any $i \in J_j$ and $i' \in J_{j'}$ with $j < j'$ we have $i < i'$. This allows to find an ordered partition $x(y_1),\ldots,x(y_k)$ of $x$ such that $x(y_j)$ contains the indices $J_j$ for any $j$.
  Let us denote the total cost of the substitutions involving $y_j$ by $s_j$. 
  Since traversal $T$ deletes $|y_j| - |J_j|$ symbols in $y_j$ and $|x(y_j)| - |J_j|$ symbols in $x(y_j)$, we have $\delta(T) = \sum_{j=1}^k |y_j| + |x(y_j)| - 2|J_j| + s_j$. Clearly, we can construct a traversal of $(x(y_j),y_j)$ that follows the matchings and substitutions in $J_j$ and deletes all other symbols, showing $\dEDIT(x(y_j),y_j) \le |y_j| + |x(y_j)| - 2|J_j| + s_j$. By optimality of $T$, we obtain $\dEDIT(x,y) \ge \sum_{i=1}^k \dEDIT(x(y_j),y_j)$.
\end{proof}

\begin{proof}[Proof of \lemref{EDITcorrect}]
From now on let $x,y$ be as in \defref{EDIT}. Observe that indeed $x$ only depends on $m,t_\y$, and $x_1,\ldots,x_n$, and $\type(x)$ only depends on $n,m,t_\x$, and $t_\y$, and similarly for $y$. Moreover, $x$ and $y$ can clearly be constructed in time $\Oh((n+m)(\ell_\x+\ell_\y))$, where $\ell_\x = |x_1| = \ldots = |x_n|$ and $\ell_\y = |y_1| = \ldots = |y_m|$.
  
  It remains to prove that for some $C$, we have
  \begin{align} \label{eq:edittoshow}
\min_{A \in \algn_{n,m}} \Val{A} \le \delta(x,y) - C \le \min_{A \in \strc_{n,m}} \Val{A}.
  \end{align}
  We will set 
  $$C := 2 n \gamma_3 - \beta (n-m)(\gamma_4 + \gamma_2), $$ 
  where 
  $$\beta := 1-\csubst/5 \quad \text{and} \quad \gamma_4 := 4 \rho \gamma_1 + \ell_\x.$$ Note that $\gamma_4$ is the length of $\guard(x_i)$.

  Let us give names to the substrings consisting only of zeroes in $x$ and $y$. In $x$, we denote the $0^{\gamma_2}$-block after $\guard(x_i)$ by $Z^\x_i$, $i \in [n-1]$. In $y$, we denote the $0^{\gamma_2}$-block after $\guard(y_j)$ by $Z^\y_j$, $j \in [m-1]$. Moreover, we denote the prefix $0^{n \gamma_3}$ by $L^\y$ and the suffix $0^{n\gamma_3}$ by $R^\y$. 
  
  We first prove the crucial property that for any prefix $x'$ of $x$ the distance $\dEDIT(x',L^\y)$ is essentially $|L^\y| - \beta |x'| = n \gamma_3 - \beta |x'|$. This is due to a careful choice of the parameters $\gamma_1, \gamma_2, \rho$.

  \begin{claim} \label{cla:editprefix}
    For any prefix $x'$ of $x$ we have $\dEDIT(x',L^\y) \ge n \gamma_3 - \beta |x'|$, with equality if $x'$ is of the form $\guard(x_1) 0^{\gamma_2} \ldots \guard(x_i) 0^{\gamma_2}$ for any $0 \le i < n$. Symmetric statements hold for $\dEDIT(x'',R^\y)$ where $x''$ is any suffix of $x$.
  \end{claim}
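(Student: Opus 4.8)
The plan is to analyze the structure of an arbitrary prefix $x'$ of $x$ and bound $\dEDIT(x', L^\y)$ where $L^\y = 0^{n\gamma_3}$ is a long block of zeroes. Since $L^\y$ consists only of zeroes, an optimal traversal of $(x', L^\y)$ only matches or substitutes zeroes of $x'$ with zeroes of $L^\y$; every $1$ in $x'$ must either be deleted or substituted (at cost $1$ or $\csubst$, respectively, but since we match against a pure-zero string, each $1$ of $x'$ contributes, and each $0$ of $x'$ is either matched for free, deleted for cost $1$, or substituted). The key quantitative fact to extract is that the cost per symbol of $x'$ is at least $\beta = 1 - \csubst/5$ on average, after accounting for the $n\gamma_3$ zeroes that must be produced from $L^\y$. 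The cleanest way to organize this is via \facref{EDITobs}: partition $L^\y = 0^{n\gamma_3}$ according to which symbols of $L^\y$ are touched by each structural piece of $x'$ (namely each $(1^{\gamma_1}0^{\gamma_1})^\rho$, each embedded $x_i$, each $(0^{\gamma_1}1^{\gamma_1})^\rho$, and each separating $0^{\gamma_2}$), reducing the estimate to a sum of local contributions $\dEDIT(\text{piece}, 0^{h})$.

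The main step is then a local lemma: for a block of the form $(1^{\gamma_1}0^{\gamma_1})^\rho$ (or its reverse, or the partially-consumed versions occurring at the boundary of the prefix), and any number $h \ge 0$ of zeroes to match against, $\dEDIT\bigl((1^{\gamma_1}0^{\gamma_1})^\rho,\, 0^h\bigr) \ge$ (number of ones in the block) $+ |h - (\text{number of zeroes in the block})|$ — essentially, the ones all cost full price and the zeroes-deficit/surplus costs full price, because $\rho = 2\lceil 1/\csubst\rceil$ is chosen so that substituting runs of ones is strictly worse than deleting them (this is exactly the intuition spelled out in the paragraph preceding \lemref{EDITcorrect}, formalized for DTW-free pure-zero targets). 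More precisely, I would show that for any substring arising as a consecutive chunk of one $\guard(x_i)$ matched against $0^h$, the cost is at least (ones contained) plus the absolute difference of (zeroes contained) and $h$; summing these with the easy bounds $\dEDIT(x_i, 0^h) \ge (\text{ones in }x_i) + |h - (\text{zeroes in }x_i)|$ for the embedded instances and the separators. Then, using that each $\guard(x_j) 0^{\gamma_2}$ has (ones) $= 2\rho\gamma_1 + s_\x$ and (zeroes) $= 2\rho\gamma_1 + \ell_\x - s_\x + \gamma_2$, and crucially that $\gamma_2 = 6\rho\gamma_1 + 5 s_\x - \ell_\x$ was chosen so that $4 \cdot (\text{zeroes in one }\guard(x_j)0^{\gamma_2}\text{-unit, roughly}) $ relates to the total in the right proportion, one gets that the total cost is at least $\beta |x'|$ over the consumed part plus $n\gamma_3 - |x'|$ to account for leftover zeroes of $L^\y$ not matched to anything — combining to $n\gamma_3 - \beta|x'|$ after collecting terms. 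The arithmetic reduces to checking that, per $\guard$-unit, $(\text{ones}) + (\text{zeroes}) - 2(\text{zeroes})$ contributions telescope correctly into $-\beta \cdot (\text{unit length})$; this is where the precise values of $\gamma_1, \gamma_2, \rho, \beta$ matter and must be verified by direct substitution.

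For the equality claim, when $x' = \guard(x_1) 0^{\gamma_2} \ldots \guard(x_i) 0^{\gamma_2}$ I would exhibit an explicit traversal achieving the bound: within each $\guard(x_k)0^{\gamma_2}$, delete all the ones (cost: number of ones), match as many zeroes as possible against $L^\y$, and at the very end absorb the remaining $n\gamma_3 - (\text{total zeroes of }x')$ zeroes of $L^\y$ by deletions in $L^\y$; one checks this total equals $n\gamma_3 - \beta|x'|$ by the same arithmetic run in reverse. The symmetric statement for a suffix $x''$ against $R^\y$ follows by reversing all strings and invoking the ``reversed strings'' form of \facref{EDITgreedy} and \facref{EDITobs}.

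The main obstacle I anticipate is the local lower bound $\dEDIT\bigl((1^{\gamma_1}0^{\gamma_1})^\rho,\,0^h\bigr) \ge 2\rho\gamma_1 + |h - \rho\gamma_1|$ (i.e., that one cannot do better than deleting all ones): naively one worries that by substituting a short run of ones near the matched zeroes one could save, but the point is that substitutions in a run of $\gamma_1$ ones either convert them to zeroes (which then must themselves be matched or deleted, a wash at best) or are mixed with deletions, and since $\csubst \cdot \rho \ge 2$ the substitution route over all $\rho$ one-blocks costs at least as much as deletion; making this rigorous requires a careful exchange argument or an induction on $\rho$ tracking the alignment of the boundary between matched and deleted zeroes, analogous to \facref{EDITblock} but iterated. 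Getting the boundary/prefix-of-a-$\guard$ cases right (where $x'$ ends in the middle of some $\guard(x_i)$ or inside a $0^{\gamma_2}$ block) is the fiddly part, and is precisely why $\gamma_2$ was defined with the extra $5 s_\x - \ell_\x$ correction term — I would handle those by the monotonicity bound \facref{EDITgreedy}.(3) to pass from a partial $\guard$-unit to a full one at an additive cost of $\Oh(\ell_\x + \ell_\y) \le \gamma_1/10$, which is absorbed into slack built into the constants.
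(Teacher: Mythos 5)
Your proposal takes a fundamentally different — and, as written, flawed — route from the paper's, which is much simpler. Because $L^\y = 0^{n\gamma_3}$ is a \emph{pure-zero} string that is at least as long as $x'$ (indeed $|x| \le n\gamma_3$), the edit distance $\dEDIT(x', 0^{n\gamma_3})$ has an exact closed form with no need to partition anything: every zero of $x'$ can be matched for free, and for each one of $x'$ the choice is between deleting it (cost $1$, leaving an extra zero of $L^\y$ to delete for another $1$) and substituting it (cost $\csubst$, consuming one zero of $L^\y$). Since $\csubst \le 2$, substitution is always at least as good, and with $\ell := $ (number of ones in $x'$) one gets the equality $\dEDIT(x',L^\y) = \ell\,\csubst + (|L^\y|-|x'|)$. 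The whole claim then reduces to the purely combinatorial fact that every prefix of $x$ has ones-fraction $\ell/|x'| \ge 1/5$, with equality for prefixes ending at a $\guard(x_i)\,0^{\gamma_2}$ boundary; this is exactly why $\gamma_2$ was chosen to make zeroes $= 4\cdot$(ones) per $\guard(x_i)\,0^{\gamma_2}$ unit. Plugging $\ell \ge |x'|/5$ in gives $\dEDIT(x',L^\y) \ge n\gamma_3 - \beta|x'|$ immediately.

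The concrete gap in your plan is the proposed local lemma
\[
\dEDIT\bigl((1^{\gamma_1}0^{\gamma_1})^\rho,\,0^h\bigr) \;\ge\; (\text{number of ones}) + \bigl|h - (\text{number of zeroes})\bigr|,
\]
which is \emph{false} whenever $\csubst < 2$. Against a pure-zero target of length $h\ge$ (block length), substituting all ones achieves cost $(\text{ones})\cdot\csubst + h - (\text{ones}) - (\text{zeroes})$, which is strictly less than $(\text{ones}) + (h-(\text{zeroes}))$ unless $\csubst = 2$. In other words, the ones do \emph{not} ``cost full price''; they cost $\csubst$ each once you account for the spare zeroes of $L^\y$. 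Your exchange argument about substitutions ``converting ones to zeroes which then must be matched or deleted'' does not describe how edit distance works — a substitution pays $\csubst$ and advances both pointers, it does not create a new symbol to be dealt with. Relatedly, your appeal to $\rho = 2\lceil 1/\csubst\rceil$ is misplaced here: that choice of $\rho$ is not what makes substitution against $L^\y$ expensive (it is not expensive!); $\rho$ is used later, in \claref{dguards}, to penalize shifted alignments of $\guard(x_i)$ against $\guard(y_j)$, which is a different mechanism entirely. Once you replace the block-by-block decomposition and wrong local lemma with the direct closed-form for $\dEDIT(x', 0^N)$ and the ones-fraction bound, the rest of your outline (boundary handling via monotonicity, the explicit traversal for equality, the symmetric suffix case by reversal) is sound.
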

  \begin{proof}
    The parameter $\gamma_3$ is chosen such that $|x'| \le |x| \le |L^\y|$: Indeed, $|x| \le n(4 \rho \gamma_1 + \ell_\x + \gamma_2) \le n \cdot 2 \gamma_2 \le n \gamma_3 = |L^\y|$. 
    Observe that all zeroes of $x'$ can be matched to zeroes of $L^\y$, while all ones of $x'$ have to be substituted. The remaining zeroes of $L^\y$ have to be deleted. Denoting the number of ones in $x'$ by $\ell$, we obtain $\dEDIT(x',L^\y) = \ell \cdot \csubst + (|L^\y| - |x'|)$. We will show $\ell \ge |x'|/5$, with equality if $x'$ has the special form as in the statement. In other words, the \emph{relative number of ones} $\ell/|x'|$ is at least $1/5$, with equality if $x'$ has the special form. This implies $\dEDIT(x',L^\y) \ge n \gamma_3 - \beta |x'|$, with equality if $x'$ has the special form.
    
    Note that each $x_i$ has length $\ell_\x$ and contains $s_\x$ ones, so that  $\guard(x_i) Z^\x_i = (1^{\gamma_1} 0^{\gamma_1})^\rho x_i (0^{\gamma_1} 1^{\gamma_1})^\rho 0^{\gamma_2}$ contains $2 \rho \gamma_1 + (\ell_\x - s_\x) + \gamma_2$ zeroes and $2 \rho \gamma_1 + s_\x$ ones. The parameter $\gamma_2$ is chosen so that the number of zeroes is four times the number of ones, 
    implying that the relative number of ones is $1/5$.
    Note that any prefix of $(1^{\gamma_1} 0^{\gamma_1})^\rho$ has relative number of ones at least $1/2$. Since $x_i 0^{\gamma_1}$ has less than $2 \gamma_1$ zeroes 
    and $|(1^{\gamma_1} 0^{\gamma_1})^\rho| \ge 2 \gamma_1$, any prefix of $(1^{\gamma_1} 0^{\gamma_1})^\rho x_i 0^{\gamma_1}$ has relative number of ones at least $1/4$. Since any prefix of $1^{\gamma_1} (0^{\gamma_1} 1^{\gamma_1})^{\rho-1}$ has relative number of ones at least $1/2$, any prefix of $(1^{\gamma_1} 0^{\gamma_1})^\rho x_i (0^{\gamma_1} 1^{\gamma_1})^\rho$ has relative number of ones at least $1/4$. The relative number of ones decreases by adding any prefix of $0^{\gamma_2}$, however, for the final string $(1^{\gamma_1} 0^{\gamma_1})^\rho x_i (0^{\gamma_1} 1^{\gamma_1})^\rho 0^{\gamma_2}$, we already argued that the relative number of ones is $1/5$. This shows that the relative number of ones of any prefix of $x$ is at least $1/5$.
  \end{proof}

\begin{figure}

\includegraphics[width=\textwidth]{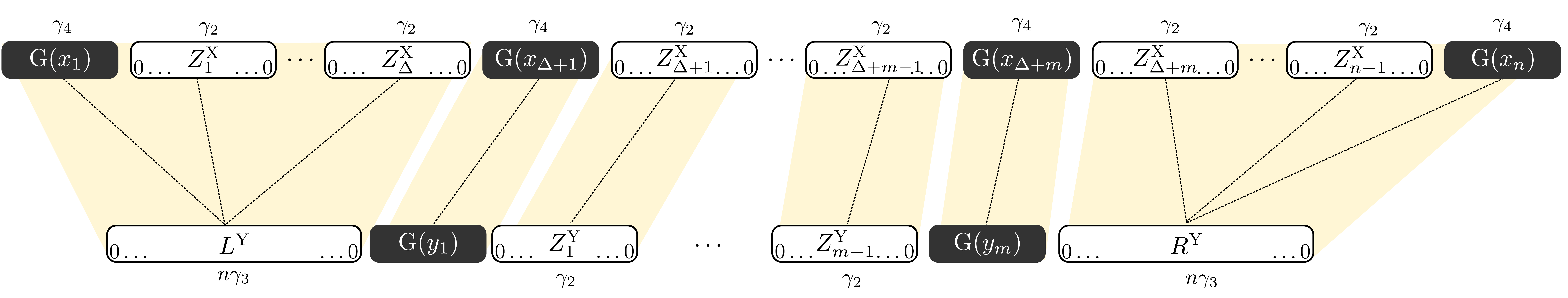}
\caption{Optimal traversal corresponding to structured alignment $A=\{(\Delta+j,j) \mid j\in [m]\} \in \strc_{n,m}$.}
\label{fig:edit}
\end{figure}
  
  We now show the upper bound of \eqref{eq:edittoshow}, i.e., $\dEDIT(x,y) \le C + \min_{A \in \strc_{n,m}} \sum_{(i,j) \in A} \dEDIT(x_i,y_j)$.
  Consider a structured alignment $A = \{(\Delta+1,1),\ldots,(\Delta+m,m)\} \in \strc_{n,m}$.
  We construct an ordered partition of $x$ as in \facref{EDITobs} by setting (see \figref{edit})
  \begin{align*}
    x(\guard(y_j)) &:= \guard(x_{\Delta+j}) \qquad \text{for }j \in [m],  \\
    x(Z^\y_j) &:= Z^\x_{\Delta+j} \qquad \text{for }j \in [m-1],  \\
    x(L^\y) &:= \guard(x_1) Z^\x_1 \ldots \guard(x_{\Delta}) Z^\x_{\Delta},  \\
    x(R^\y) &:= Z^\x_{\Delta+m} \guard(x_{\Delta+m+1}) \ldots Z^\x_{n-1} \guard(x_n). 
  \end{align*} 
  Note that indeed these strings partition $x$ and $y$, respectively. Thus, \facref{EDITobs} yields 
  $$ \dEDIT(x,y) \le \dEDIT(x(L^\y),L^\y) + \dEDIT(x(R^\y),R^\y) + \sum_{j=1}^m \dEDIT(\guard(x_{\Delta+j}),\guard(y_j)) + \sum_{j=1}^{m-1} \dEDIT(Z^\x_{\Delta+j},Z^\y_j).  $$
  Since $x(L^\y)$ is a prefix of $x$ of the correct form, by \claref{editprefix} we have $\dEDIT(x(L^\y),L^\y) = n\gamma_3 - \beta|x(L^\y)|$. 
  Symmetrically, we obtain $\dEDIT(x(R^\y),R^\y) = n\gamma_3 - \beta |x(R^\y)|$. Note that $|\guard(x_i) Z^\x_i| = \gamma_4 + \gamma_2$, so that $|x(L^\y)| + |x(R^\y)| = (n-m) (\gamma_4 + \gamma_2)$. Moreover, as $Z^\x_i = Z^\y_j = 0^{\gamma_2}$ we have $\dEDIT(Z^\x_{\Delta+j},Z^\y_j) = 0$. Finally, by matching all guarding zeroes and ones of $\guard(x_{\Delta+j}) = (1^{\gamma_1} 0^{\gamma_1})^\rho x_{\Delta+j} (0^{\gamma_1} 1^{\gamma_1})^\rho$ and $\guard(y_j) = (1^{\gamma_1} 0^{\gamma_1})^\rho y_j (0^{\gamma_1} 1^{\gamma_1})^\rho$ we conclude $\dEDIT(\guard(x_{\Delta+j}),\guard(y_j)) \le \dEDIT(x_{\Delta+j},y_j)$. This yields
  $$ \dEDIT(x,y) \le 2 n \gamma_3 - \beta (n-m)(\gamma_4 + \gamma_2) + \sum_{j=1}^m \dEDIT(x_{\Delta+j},y_j) = C + \sum_{(i,j) \in A} \dEDIT(x_i,y_j).  $$
  As $A \in \strc_{n,m}$ was arbitrary, the desired inequality follows.
  
  It remains to prove the lower bound of \eqref{eq:edittoshow}, i.e., $\dEDIT(x,y) \ge C + \min_{A \in \algn_{n,m}} \Val{A}$.
  As in \facref{EDITobs}, let $x(L^\y)$, $x(\guard(y_j))$ for $j \in [m]$, $x(Z^\y_j)$ for $j \in [m-1]$, $x(R^\y)$ be an ordered partition of $x$ such that 
  $$ \dEDIT(x,y) = \dEDIT(x(L^\y),L^\y) + \dEDIT(x(R^\y),R^\y) + \sum_{j=1}^m \dEDIT(x(\guard(y_j)),\guard(y_j)) + \sum_{j=1}^{m-1} \dEDIT(x(Z^\y_j),Z^\y_j). $$
  
  We define an alignment $A$ as follows. If there is some $i$ such that $x_i$ is contained in $x(\guard(y_j))$, then align $j$ with any such $i$. Otherwise leave $j$ unaligned.

  \begin{claim}\label{cla:dguards}
    We have 
    $$ \dEDIT(x(\guard(y_j)), \guard(y_j)) \ge \beta ( \gamma_4 - |x(\guard(y_j))| ) + \begin{cases} \dEDIT(x_i,y_j) & \text{if $j$ is aligned to $i$},  \\ \max_{i,j'} \dEDIT(x_i,y_{j'}) & \text{if $j$ is unaligned.} \end{cases} $$
  \end{claim}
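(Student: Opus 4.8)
The plan is to carefully track the overhead incurred by the piece $x(\guard(y_j))$ of the ordered partition, using \claref{editprefix}-style accounting for the portions of $x$ that fall outside the aligned block. First I would fix $j$ and split the analysis according to whether $x(\guard(y_j))$ contains a full copy of some $x_i$. Write $\guard(y_j) = (1^{\gamma_1}0^{\gamma_1})^\rho y_j (0^{\gamma_1}1^{\gamma_1})^\rho$, a string of length $\gamma_4' := 4\rho\gamma_1 + \ell_\y$ with relative number of ones exactly $2\rho\gamma_1 + s_\y$ over $4\rho\gamma_1 + \ell_\y$, i.e.\ roughly $1/2$ up to the negligible $x_i$-contribution (here $\gamma_1 = \Theta(\ell_\x+\ell_\y)$ dwarfs $\ell_\x,\ell_\y$). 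The key structural fact I would establish is: because the guarding blocks $(1^{\gamma_1}0^{\gamma_1})^\rho$ are much longer than any $x_i$ or $y_j$ and than any single $0^{\gamma_1}$, an optimal traversal of $x(\guard(y_j))$ against $\guard(y_j)$ is forced to align the big alternating blocks of $\guard(y_j)$ against the corresponding big alternating blocks sitting inside $x(\guard(y_j))$, up to a shift $\Delta$ that is at most $O(\ell_\x+\ell_\y)$; any shift of size $\Delta$ within an alternating block costs at least $2\Delta$ (by the building-block discussion after \defref{EDIT}, using $\rho = 2\lceil 1/\csubst\rceil$ to rule out substitution-based traversals), while crossing into a $0^{\gamma_2}$ block costs $\Omega(\gamma_2) \gg \ell_\x+\ell_\y$ and is never optimal.

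Next I would handle the \textbf{aligned case}. Here $x(\guard(y_j))$ contains all of $x_i$ (as part of $\guard(x_i)$) plus possibly extra zeroes on either side; write $x(\guard(y_j)) = 0^{h_L}\, w\, 0^{h_R}$ where $w$ is a substring of the larger string $\guard(x_{i-1})Z^\x_{i-1}\guard(x_i)Z^\x_i\guard(x_{i+1})$ containing $\guard(x_i)$. Using \facref{EDITgreedy}.(1) and its symmetric/reversed forms repeatedly to peel off matched guarding ones, then \facref{EDITblock} (or rather \facref{EDITgreedy}.(2)) to account for the leftover $0^{h_L}, 0^{h_R}$ and for any mismatched portions of the alternating blocks, I would show $\dEDIT(x(\guard(y_j)),\guard(y_j)) \ge \dEDIT(x_i,y_j) + (\text{cost of the length deficit/surplus})$. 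The crucial numerical point is that the cost charged for the length difference is at least $\beta$ per unit: any symbol of $\guard(y_j)$ not aligned to $x(\guard(y_j))$, or vice versa, contributes $\ge 1 \ge \beta$ to the edit distance (and when substitutions are used, the relative-ones argument of \claref{editprefix}, giving relative-ones $\ge 1/5$, ensures the effective rate is still $\ge \beta = 1 - \csubst/5$). This yields the bound $\beta(\gamma_4 - |x(\guard(y_j))|) + \dEDIT(x_i,y_j)$, noting $\gamma_4' $ and $\gamma_4$ differ by a lower-order amount absorbed into the slack, or—more cleanly—by observing the claim is really comparing to the balanced case $|x(\guard(y_j))| = \gamma_4$ where the guarding cost vanishes.

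For the \textbf{unaligned case}, no $x_i$ fits inside $x(\guard(y_j))$, so $x(\guard(y_j))$ is a substring of some $\guard(x_i) Z^\x_i \guard(x_{i+1})$ containing at most half of each of $x_i, x_{i+1}$; in particular it is "centered" on a run $0^{\gamma_1}1^{\gamma_1}\cdots 1^{\gamma_1}0^{\gamma_2}1^{\gamma_1}\cdots$ straddling a $Z^\x_i$ block. I would argue that matching $\guard(y_j)$ against such a string is very costly: either $x(\guard(y_j))$ contains ones from both sides of the big zero block $0^{\gamma_2}$, forcing $\Omega(\gamma_2)$ unmatched zeroes, or it lies essentially on one side and then its alternating-block structure is shifted relative to that of $\guard(y_j)$ by $\Omega(\gamma_1)$, again forcing $\Omega(\gamma_1)$ cost. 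Either way $\dEDIT(x(\guard(y_j)),\guard(y_j)) \ge \Omega(\gamma_1) \ge \ell_\x + \ell_\y \ge \max_{i,j'}\dEDIT(x_i,y_{j'})$, and since $\beta(\gamma_4 - |x(\guard(y_j))|) \le \beta\gamma_4 = O(\rho\gamma_1) = O(\gamma_1)$ which we can dominate by the slack in $\gamma_2 \gg \gamma_1$, the full bound follows. \textbf{The main obstacle} I anticipate is the aligned case: one must show that the optimal traversal genuinely "does the expected thing" on each of the $2\rho$ alternating guarding subblocks rather than exploiting some clever global rearrangement across the whole of $x(\guard(y_j))$, and that the accounting of length deficit against $\beta$ is tight enough that the per-unit rate never drops below $1 - \csubst/5$; this is exactly where the choices $\rho = 2\lceil 1/\csubst\rceil$ and the relative-ones $=1/5$ balance in \claref{editprefix} are doing the work, and the bookkeeping of repeatedly invoking \facref{EDITgreedy} to strip guarding symbols from both ends is delicate.
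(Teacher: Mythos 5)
Your plan correctly identifies the right ingredients (split on whether $x(\guard(y_j))$ contains a full $x_i$, handle the $0^{\gamma_2}$-straddling case cheaply, and track the per-unit cost of the length deficit), and it also correctly flags exactly where the difficulty lies. But the mechanism you propose for the aligned case does not close that difficulty, and the paper uses a genuinely different argument there.

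Concretely, you propose to ``peel off matched guarding ones'' via \facref{EDITgreedy}.(1) and then bound the leftover with \facref{EDITgreedy}.(2). Fact~\ref{fac:EDITgreedy}.(1) only applies when the two strings share an identical maximal prefix $1^k$; it lets you cancel an \emph{exactly shared} prefix, nothing more. In the aligned case the piece $x(\guard(y_j))$ is in general \emph{shifted} relative to $\guard(y_j)$: it may look like $0^{h_L}\,1^{r_L}\,0^{\gamma_1}(1^{\gamma_1}0^{\gamma_1})^{\rho-1}x_i\cdots$ with $r_L<\gamma_1$, so the two strings do \emph{not} share the same leading run, and repeated peeling stalls immediately. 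For LCS the analogous step is rescued by \claref{lcsgreedy}.(2), but for edit distance with small $\csubst$ the adversary can substitute its way across a shift, and you have to prove this is unprofitable. That is precisely what the paper's block-offset machinery does: it decomposes $\guard(y_j)$ into $4\rho+1$ blocks $z_{-2\rho},\dots,z_{2\rho}$ and $\guard(x_i)$ into $w_{-2\rho},\dots,w_{2\rho}$, applies \facref{EDITobs} to match them up, introduces the shift offsets $\Delta_L(k),\Delta_R(k)$, applies \facref{EDITblock} to each $1^{\gamma_1}$ block to get the per-block bound $|\Delta_L(k)-\Delta_R(k)|+\csubst\min\{\Delta_L(k),\Delta_R(k)\}$, and then sums these via a total-variation argument on $\Delta_L(\cdot)$. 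It is exactly here that the choice $\rho = 2\lceil 1/\csubst\rceil$ is used: there are enough guard blocks that the cumulative $\csubst\cdot\min$ terms dominate any cheap substitution-only shift, giving $\sum_k\dEDIT(x(z_k),z_k)\ge\dEDIT(x_i,y_j)+\Delta_L(-2\rho)+\Delta_R(2\rho)\ge\dEDIT(x_i,y_j)+\beta(\gamma_4-|x(\guard(y_j))|)$. Your sketch never supplies a replacement for this step; you explicitly list it as ``the main obstacle I anticipate,'' but a proof has to overcome the obstacle, not just note it.

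A secondary inaccuracy: you also invoke the relative-ones-$\ge 1/5$ computation from \claref{editprefix} to justify a per-unit rate of $\beta = 1 - \csubst/5$ inside $\guard(y_j)$. That computation is tailored to matching against an all-zero string $L^\y$, where ones must be substituted and zeroes matched; it does not transfer to matching against $\guard(y_j)$, which is itself roughly half ones. In the paper the factor $\beta$ in this claim comes out of a \emph{convex combination} of two lower bounds (the length bound from \facref{EDITgreedy}.(2) and a flat $\tfrac14\gamma_1$ penalty when a block is badly placed), not from the relative-ones argument. So you would need a different justification for the $\beta$ rate even after fixing the peeling issue.
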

  \begin{proof}
    If $|x(\guard(y_j))| \ge \gamma_2 $, then $|x(\guard(y_j))| \ge \gamma_2 \ge \gamma_4 + 2(\ell_\x + \ell_\y) \ge \gamma_4 + 2 \max_{i,j'} \dEDIT(x_i,y_{j'})$ 
    and by $\beta > 1/2$ the right hand side of the claim is at most 0, so the claim holds trivially. 
    Otherwise $x(\guard(y_j))$ is shorter than any $Z^\x_i = 0^{\gamma_2}$, implying that $x(\guard(y_j))$ is a substring of $0^{\gamma_2} \guard(x_i) 0^{\gamma_2}$ for some $i \in [n]$.
    
    We write $\guard(y_j)$ as $z_{-2\rho} \, z_{-2\rho+1} \, \ldots\, z_{2\rho-1} \, z_{2\rho}$, where $z_{-2k} = z_{2k} = 1^{\gamma_1}$, $z_{-2k+1} = z_{2k-1} = 0^{\gamma_1}$, and $z_0 = y_j$ (for all $1 \le k \le \rho$). As in \facref{EDITobs}, we split up $x(\guard(y_j))$ into $x(z_k)$, $-2\rho \le k \le 2\rho$, such that $\dEDIT(x(\guard(y_j)), \guard(y_j)) = \sum_{k=-2\rho}^{2\rho} \dEDIT(x(z_k),z_k)$. Similarly, we write $\guard(x_i)$ as $w_{-2\rho} \, w_{-2\rho+1} \, \ldots\, w_{2\rho-1} \, w_{2\rho}$. We denote the distance of the start of $x(z_k)$ to the start of $w_k$ by $\Delta_L(k)$, i.e., if $x(z_k) = x[a..b]$ and $w_k = x[c..d]$ we set $\Delta_L(k) := |a-c|$.  Similarly, we set $\Delta_R(k) := |b-d|$. For an illustration, see \figref{guards}. Note that $\Delta_R(k) = \Delta_L(k+1)$ holds for any $k$. 

\begin{figure}
\centering
\includegraphics[width=\textwidth]{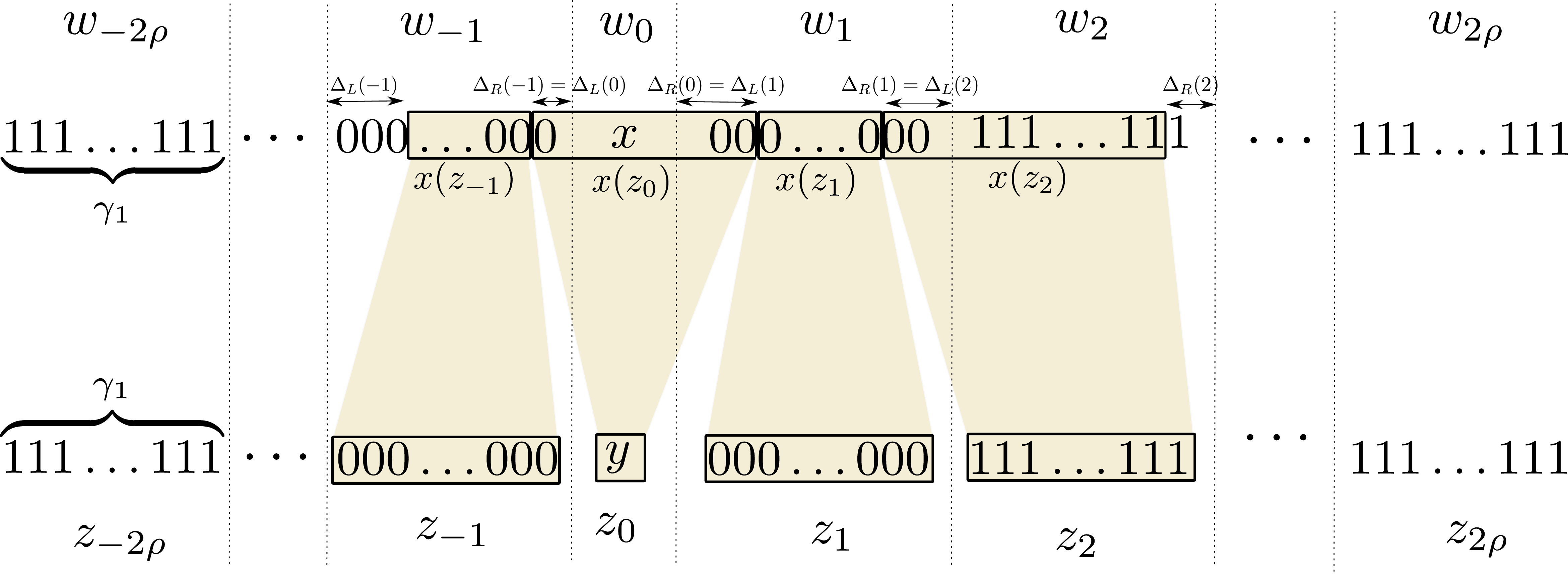}
\caption{Illustration for the proof of \claref{dguards}.}
\label{fig:guards}
\end{figure}
    
    First assume $(*)$: for some $k \ne 0$ the string $x(z_k)$ is longer than $\frac 54 \gamma_1$ or $x(z_k)$ has less than $\frac 34 \gamma_1$ common symbols with $z_k$. Then clearly $\dEDIT(x(\guard(y_j)), \guard(y_j)) \ge \dEDIT(x(z_k),z_k) \ge \frac 14 \gamma_1$. By \facref{EDITgreedy}.(2), we also have $\dEDIT(x(\guard(y_j)), \guard(y_j)) \ge |\guard(y_i)| - |x(\guard(y_j))| = \gamma_4 - |x(\guard(y_j))|$. As a linear combination of these two lower bounds, we obtain $\dEDIT(x(\guard(y_j)), \guard(y_j)) \ge \beta ( \gamma_4 - |x(\guard(y_j))| ) + (1-\beta) \frac 14 \gamma_1$. Since $(1-\beta) \frac 14 \gamma_1 = \frac \csubst{20} \gamma_1 \ge \ell_\x + \ell_\y \ge \max_{i,j'} \dEDIT(x_i,y_{j'})$, 
    we have proven the statement in this case.
    
    If (*) does not hold, then we have $\Delta_L(k), \Delta_R(k) \le \frac 12 \gamma_1$ for any $|k| > 1$: It suffices to show the claim for any even $k \ne 0$, since $\Delta_R(k) = \Delta_L(k+1)$. For any even $k \ne 0$, the string $x(z_k)$ has to contain the majority of a block $w_{\ell}$ with even $\ell \ne 0$. Since the numbers of blocks are identical in $\guard(y_j)$ and $\guard(x_i)$, $x(z_k)$ has to contain the majority of $w_k$ for any even $k \ne 0$. Specifically, $x(z_k)$ contains at least $\frac 34 \gamma_1$ symbols of $w_k$ and has length at most $\frac 54 \gamma_1$, implying the desired inequalities for $\Delta_L(k), \Delta_R(k)$. Note that in this case $i$ and $j$ are aligned.
    
    Note that for even $k \ne 0$ we obtain $x(z_k)$ from $w_k = z_k = 1^{\gamma_1}$ by either deleting a prefix of $\Delta_L(k)$ ones or prepending $\Delta_L(k)$ zeroes, and by either deleting a suffix of $\Delta_R(k)$ ones or by appending $\Delta_R(k)$ zeroes. Hence, \facref{EDITblock} shows that 
    \begin{align}
       \dEDIT(x(z_k),z_k) \ge |\Delta_L(k) - \Delta_R(k)| + \csubst \cdot \min\{\Delta_L(k), \Delta_R(k)\}. \label{eq:xzk}
    \end{align}
    The same argument works for any $k$ with $|k|>1$. For $k \in \{-1,1\}$ the argument does not work, since $z_{-1} = z_1 = 0^{\gamma_1}$ is not sorrounded by blocks of $1^{\gamma_1}$. However, for $k \in \{-1,1\}$ we have the weaker $\dEDIT(x(z_k),z_k) \ge |\Delta_L(k) - \Delta_R(k)|$ by \facref{EDITgreedy}.(2).
    Moreover, by \facref{EDITgreedy}.(3) we have 
    \begin{align}
      \dEDIT(x(z_0),z_0) \ge \dEDIT(x_i,y_j) - \Delta_L(0) - \Delta_R(0). \label{eq:xzO}
    \end{align} 
    
    Combining these inequalities yields $\dEDIT(x(\guard(y_j)), \guard(y_j)) \ge \dEDIT(x_i,y_j) + \Delta_L(-2\rho) + \Delta_R(2 \rho)$ as we show in the following. This implies the desired statement, since $\Delta_L(-2\rho) + \Delta_R(2 \rho) \ge \big||\guard(y_j)| - |x(\guard(y_j))|\big| \ge \gamma_4 - |x(\guard(y_j))| \ge \beta (\gamma_4 - |x(\guard(y_j))|)$. 
    To show the claim, we set $s_L := \min\{\Delta_L(k) \mid -2\rho \le k \le 0\}$ and $s_R := \min\{\Delta_R(k) \mid 0 \le k \le 2 \rho\}$. Note that $\Delta_L(k)$ has a total variation of at least $\Delta_L(-2\rho) - s_L + \Delta_L(0) - s_L$ over $k=-2\rho,\ldots,0$, since it starts in $\Delta_L(-2\rho)$, changes to $s_L$, and then changes to $\Delta_L(0)$. Thus, summing $|\Delta_L(k) - \Delta_R(k)| = |\Delta_L(k) - \Delta_L(k+1)|$ over all $-2\rho \le k \le -1$ yields at least $\Delta_L(-2\rho) - s_L + \Delta_L(0) - s_L$. Moreover, for every $-2\rho \le k < -1$ \ineq{xzk} applies and the summand $\csubst \cdot \min\{\Delta_L(k), \Delta_R(k)\}$ is at least $\csubst \cdot s_L$. As the number of such $k$'s is $2 \rho -1 \ge 2/\csubst$, 
    the total contribution of the summand $\csubst \cdot \min\{\Delta_L(k), \Delta_R(k)\}$ over all $k < 0$ is at least $2 s_L$. Thus, we have 
    \begin{align*}
      \sum_{k=-2\rho}^{-1} \dEDIT(x(z_k),z_k) &\ge \sum_{k=-2\rho}^{-1}  | \Delta_L(k) - \Delta_R(k) | + \sum_{k=-2\rho}^{-2} \csubst \cdot \min\{\Delta_L(k), \Delta_R(k)\}  \\
      &\ge \big(\Delta_L(-2\rho) - s_L + \Delta_L(0) - s_L\big) + \big(2 s_L \big) \ge \Delta_L(-2\rho) + \Delta_L(0).
    \end{align*}
    Using a symmetric statement for the sum over all $k>0$ as well as \eq{xzO}, we obtain the desired inequality $\dEDIT(x(\guard(y_j)),\guard(y_j)) = \sum_{k=-2\rho}^{2\rho} \dEDIT(x(z_k),z_k) \ge \dEDIT(x_i, y_j) + \Delta_L(-2\rho) + \Delta_R(2\rho)$.
  \end{proof}

  Since $L^\y = 0^{n\gamma_3}$ and $x(L^\y)$ is a prefix of $x$, by \claref{editprefix} we have $\dEDIT(x(L^\y),L^\y) \ge n\gamma_3 - \beta |x(L^\y)|$, and symmetrically we get $\dEDIT(x(R^\y),R^\y) \ge n\gamma_3 - \beta |x(L^\y)|$.
  By \facref{EDITgreedy}.(2), we have $\dEDIT(x(Z^\y_j),Z^\y_j) \ge \big| |Z^\y_j| - |x(Z^\y_j)| \big| \ge \beta ( \gamma_2 - |x(Z^\y_j)| )$.
  Putting all of this together, we obtain
  \begin{align*}
    \dEDIT(x,y) \ge 2n \gamma_3 + \beta \Big[ \sum_{j=1}^m ( \gamma_4 - |x(\guard(y_j))| ) + \sum_{j=1}^{m-1} ( \gamma_2 - |x(Z^\y_j)| ) - |x(L^\y)| - |x(R^\y)| \Big] +  \delta(A),
  \end{align*}
  where we used $\delta(A) = \sum_{(i,j) \in A} \dEDIT(x_i,y_j) + (m-|A|) \max_{i,j} \dEDIT(x_i,y_j)$.
  Note that by definition of $x$ and since the strings $x(\guard(y_j)), x(Z_j^\y), x(L^\y), x(R^\y)$ partition $x$ we have
  $$ n \gamma_4 + (n-1) \gamma_2 = |x| = \sum_{j=1}^m |x(\guard(y_j))| + \sum_{j=1}^{m-1} |x(Z^\y_j)| + |x(L^\y)| + |x(R^\y)|. $$
  Together, this yields the desired bound from below
  $$ \dEDIT(x,y) \ge 2 n \gamma_3 - \beta (n-m) (\gamma_4 + \gamma_2) + \delta(A). \qedhere $$
\end{proof}

\subsection{Algorithm}
\label{sec:EDITalgo}

For completeness, we prove a generalization of the algorithm of Hirschberg~\cite{hirschberg1977} from LCS to edit distance. Recall that the trivial dynamic programming algorithm computes a table storing all distances $\dEDIT(x[1..i],y[1..j])$. In contrast, we build a dynamic programming table storing for any index $j$ and any cost $k$ the minimal index $i$ with $\dEDIT(x[1..i],y[1..j]) - \cdelx(i-j) = k$. For some intuition, note that for $i \ge j$ at least $i-j$ symbols in $x[1..i]$ have to be deleted so that the cost $\dEDIT(x[1..i],y[1..j])$ is at least $\cdelx (i-j)$. Thus, it makes sense to ``normalize'' the cost by subtracting $\cdelx (i-j)$. As we will see, the normalized cost is bounded by $\Oh(m)$ (the length of the smaller of the two strings), which reduces the table size to $\Oh(m^2)$.

\begin{thm} \label{thm:editalgo}
  Let $\cdelx,\cdely,\cmatch,\csubst$ be positive integers.
  \EDITallc\ can be solved in time $\Oh((n + m^2) \log |\Sigma|)$ on strings of length $n,m$ with $n \ge m$ over alphabet $\Sigma$.
\end{thm}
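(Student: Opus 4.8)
The plan is to accelerate the textbook dynamic program by a change of variables that makes every column of the table have length $\Oh(m)$ rather than $\Oh(n)$. Write $D[i,j] := \dEDIT(x[1..i],y[1..j])$ for the usual table, set $c(a,b) := \cmatch$ if $a=b$ and $c(a,b) := \csubst$ otherwise, and define the \emph{normalized} cost $\hat D[i,j] := D[i,j] - \cdelx(i-j)$. First I would establish three structural facts: (i) $\hat D[i+1,j]\le \hat D[i,j]$ for all $i,j$, which is immediate from $D[i+1,j]\le D[i,j]+\cdelx$; (ii) $\hat D[i,j]\ge 0$ — for $i\ge j$ because any traversal of $x[1..i],y[1..j]$ must perform at least $i-j$ deletions in $x$, as matchings and substitutions touch equally many symbols on both sides (cf.\ \facref{EDITgreedy}.(2)), and for $i<j$ trivially since $-\cdelx(i-j)\ge 0$; and (iii) $\hat D[i,j]\le \hat D[0,j]=j(\cdelx+\cdely)\le m(\cdelx+\cdely)=:K$, using (i) and $D[0,j]=j\cdely$. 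Since the costs are constants, $K=\Oh(m)$. Facts (i)--(iii) let us represent the whole column $\hat D[0..n,j]$ by the array $I_j(0),\dots,I_j(K)$ with $I_j(k):=\min\{\,i\in\{0,\dots,n\}\mid \hat D[i,j]\le k\,\}$ (set to $+\infty$ if no such $i$): this array is nonincreasing in $k$, has $\Oh(m)$ entries, and recovers $\hat D[i,j]=\min\{k\mid I_j(k)\le i\}$. The desired output is $\hat D[n,m]+\cdelx(n-m)$, and $\hat D[n,m]=\min\{k\mid I_m(k)\le n\}$ is read off from $I_m$ in $\Oh(K)$ time by (iii).

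Next I would derive the column-to-column transition. Subtracting $\cdelx(i-j)$ from the standard recurrence turns it, for $i,j\ge 1$, into
\[ \hat D[i,j]=\min\bigl\{ \hat D[i-1,j],\ \hat D[i,j-1]+(\cdelx+\cdely),\ \hat D[i-1,j-1]+c(x[i],y[j]) \bigr\}, \]
with the boundary cases $\hat D[0,j]=\hat D[0,j-1]+\cdelx+\cdely$ and $\hat D[i,0]=0$. Letting $g[i',j]$ be the minimum of just the last two terms (and $g[0,j]:=\hat D[0,j-1]+\cdelx+\cdely$), the first term shows $\hat D[i,j]=\min_{0\le i'\le i}g[i',j]$, hence $I_j(k)=\min\{\,i'\mid g[i',j]\le k\,\}$. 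Unfolding the three ways $g[i',j]\le k$ can hold (a deletion in $y$; a match, which forces $x[i']=y[j]$; a substitution, which forces $x[i']\neq y[j]$) and using $\hat D[i',j-1]\le \ell\iff i'\ge I_{j-1}(\ell)$ yields the recurrence
\[ I_j(k)=\min\bigl\{ I_{j-1}(k-\cdelx-\cdely),\ N^{=}\!\bigl(I_{j-1}(k-\cmatch)+1,\,y[j]\bigr),\ N^{\neq}\!\bigl(I_{j-1}(k-\csubst)+1,\,y[j]\bigr) \bigr\}, \]
with the conventions $I_{j-1}(\ell):=+\infty$ for $\ell<0$, and where $N^{=}(p,c)$ (resp.\ $N^{\neq}(p,c)$) is the smallest index $i\ge p$ with $x[i]=c$ (resp.\ $x[i]\neq c$), or $+\infty$ if none. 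The base case is $I_0(k)=0$ for all $k\ge 0$ (valid once $\cmatch,\csubst$ are positive integers, which we may assume). Filling $I_1,\dots,I_m$ in order thus costs $\Oh(m)$ cells per column, each needing $\Oh(1)$ array lookups and $\Oh(1)$ next-occurrence queries into $x$.

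It remains to support the queries. For $N^{\neq}$ it suffices to precompute in $\Oh(n)$ time, for every position, the last index of the maximal constant run of $x$ containing it; then $N^{\neq}(p,c)$ equals $p$ if $x[p]\neq c$, and otherwise the index just past that run, answered in $\Oh(1)$. For $N^{=}$ I would use a standard persistent successor structure: scanning $x$ right to left, maintain a balanced search tree keyed by the alphabet mapping each character $c$ to the least index $\ge$ the current position carrying $c$, keeping all $n{+}1$ versions; this costs $\Oh(n\log|\Sigma|)$ time and space, and $N^{=}(p,c)$ is a lookup of $c$ in version $p$ in time $\Oh(\log|\Sigma|)$. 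In total this is $\Oh(n\log|\Sigma|)$ preprocessing plus $\Oh(m)$ columns $\times\ \Oh(m)$ cells $\times\ \Oh(\log|\Sigma|)$ per cell, i.e.\ $\Oh((n+m^2)\log|\Sigma|)$; and \lemref{editreduction}.(2) reduces \EDITallc\ with arbitrary rational costs to this integer case without changing the strings, giving \thmref{EDITalgo}. The only genuinely delicate part is verifying (i)--(iii) and the unfolding of $I_j(k)$ — in particular that the match and substitution cases split \emph{exactly} according to whether the scanned symbol of $x$ equals $y[j]$, so that the two queries $N^{=}$ and $N^{\neq}$ suffice; the data-structure layer is routine.
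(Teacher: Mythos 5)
Your proof is correct and follows the same approach as the paper: reindex the DP table by $y$-position and normalized cost $k$, store the least $i$ achieving that cost, and transition via next-occurrence queries into $x$ supported by an $\Oh(1)$-time run table for $N^{\neq}$ and a persistent successor structure for $N^{=}$, yielding $\Oh((n+m^2)\log|\Sigma|)$ overall. The only small variation is that you define $I_j(k)$ via $\hat D[i,j]\le k$ (making each column nonincreasing in $k$ and giving $I_0(k)=0$ for all $k$), whereas the paper tracks exact equality $\hat D[i,j]=k$; your monotone formulation makes the unfolding of the recurrence and the correctness argument a bit cleaner, but the algorithm and its analysis are the same.
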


Note that it is easy to ensure $\Sigma \subseteq [n+m]$ after $\Oh(n \log(\min\{|\Sigma|,n\}))$ preprocessing.\footnote{To compress the alphabet we build a balanced binary search tree $T$ whose nodes correspond to $\Sigma$ (by simply adding all symbols of $x$ and $y$ to $T$). Then we replace each symbol by its index in some fixed ordering of the nodes of $T$.} Thus, the running time is at most $\Oh((n+m^2) \log n) = \tilde \Oh(n+m^2)$, and \thmref{EDITalgo} follows from the above theorem and the second part of \lemref{editreduction}.
Our algorithm is designed for the pointer machine model; on the Word RAM the log-factor can be improved.

Consider strings $x,y$ over alphabet $\Sigma$ of length $n,m$, respectively, $n \ge m$. 
For convenience, we set $\min \emptyset := \infty$.
For any index $i \in \{0,\ldots,n\}$ and symbol $\sigma \in \Sigma$ we set
\begin{align*}
  \Next^x_{=\sigma}(i) &:= \min\{i' \mid i < i' \le n \text{ and } x[i'] = \sigma\},  \\
  \Next^x_{\ne\sigma}(i) &:= \min\{i' \mid i < i' \le n \text{ and } x[i'] \ne \sigma\}.
\end{align*}
We argue that a data structure can be built in $\Oh(n \log |\Sigma|)$ preprocessing time supporting $\Next^x_{=\sigma}(i)$ and $\Next^x_{\ne\sigma}(i)$ queries in time $\Oh(\log |\Sigma|)$. A simple solution with worse running time is to precompute all answers to all possible queries $\Next^x_{=\sigma}(i)$ and $\Next^x_{\ne\sigma}(i)$, with $i \in \{0,\ldots,n\}$, $\sigma \in \Sigma$, in time $\Oh(|\Sigma| n)$ by one scan from $x[n]$ to $x[1]$.
To improve the preprocessing time for $\Next^x_{\ne\sigma}(i)$, note that $\Next^x_{\ne\sigma}(i) = i+1$ for all $\sigma \ne x[i+1]$. Thus, we only have to precompute $\Next^x_{\ne x[i+1]}(i)$ (which can be done in time $\Oh(n)$ by one scan from $x[n]$ to $x[1]$), then $\Next^x_{\ne\sigma}(i)$ can be queried in time $\Oh(1)$.
For $\Next^x_{=\sigma}(i)$, for any $i \in \{0,\ldots,n\}$ we build a dictionary $D_i$ storing $\Next^x_{=\sigma}(i)$ for each $\sigma \in \Sigma$. Note that $D_{i-1}$ and $D_i$ differ only for the symbol $x[i+1]$. Thus, we can use persistent search trees~\cite{driscoll1986} as dictionary data structures, resulting in a preprocessing time of $\Oh(n \log |\Sigma|)$ for building $D_0,\ldots,D_n$ and a lookup time of $\Oh(\log |\Sigma|)$ for querying $\Next^x_{=\sigma}(i)$.
Using such a Next data structure, we can formulate our dynamic programming algorithm, see \algref{edit}. 

\begin{algorithm}
\caption{Algorithm for solving \EDITallc\ in time $\Oh((n + m^2) \log |\Sigma|)$.}
\label{alg:edit}
\begin{algorithmic}
\newcommand{\algrule}[1][.2pt]{\par\vskip.2\baselineskip\hrule height #1\par\vskip.3\baselineskip}
\renewcommand{\algorithmicrequire}{\textbf{Assumption:}}
  \Require $\cdelx,\cdely,\cmatch,\csubst$ are positive integers
\renewcommand{\algorithmicrequire}{\textbf{Input:}}
  \Require strings $x,y$ of length $n,m$, $n \ge m$
\renewcommand{\algorithmicensure}{\textbf{Output:}}
  \Ensure $\dEDIT(x,y)$
  \algrule
  \State $M \gets (\cdelx + \cdely) m$
  \State Implicitly set $I[j,k] \gets \infty$ for all $j$ and all $k < 0$ or $k > M$
  \State $I[0,0] \gets 0$
  \State $I[0,k] \gets \infty$ for $0 < k \le M$
  \For{$j=1,\ldots,m$}
    \For{$k=0,\ldots,M$}
      \begingroup
      \setlength{\abovedisplayskip}{0pt}
      \setlength{\belowdisplayskip}{0pt}
      \begin{flalign*} 
        \qquad \quad I[j,k] \gets \min\{ &I[j-1,k-\cdelx-\cdely], & \\
         &\Next_{=y[j]}^x(I[j-1,k-\cmatch]), & \\
         &\Next^x_{\ne y[j]}(I[j-1,k-\csubst])\} &
      \end{flalign*}
      \endgroup
    \EndFor
  \EndFor 
  \State \Return $\cdelx(n-m) + \min\{0 \le k \le M \mid I[m,k] < \infty\}$.
\end{algorithmic}
\end{algorithm}

Since $\Next^x_{=\sigma}$ and $\Next^x_{\ne\sigma}$ can be queried in time $\Oh(\log |\Sigma|)$ and $M = (\cdelx + \cdely) m = \Oh(m)$, \algref{edit} runs in time $\Oh(m^2 \log |\Sigma|)$. Together with the preprocessing time for the $\Next$ data structure, we obtain a total running time of $\Oh((n + m^2) \log |\Sigma|)$. It remains to argue correctness.

\paragraph{Correctness}
We prove that the dynamic programming table $I[j,k]$ has the following meaning.

\begin{lem}
  \algref{edit} computes for any $j \in [m]$, $k \in \mathbb{Z}$
$$ I[j,k] = \min\{ 0 \le i \le n \mid \dEDIT(x[1..i],y[1..j]) - \cdelx (i-j) = k\}. $$
\end{lem}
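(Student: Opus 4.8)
I would prove this by induction on $j$, after recording the structure that makes the table well-posed. Write $D(i,j):=\dEDIT(x[1..i],y[1..j])$; it satisfies the usual recursion $D(i,j)=\min\{D(i-1,j)+\cdelx,\ D(i,j-1)+\cdely,\ D(i-1,j-1)+c(i,j)\}$ for $i,j\ge 1$, where $c(i,j)=\cmatch$ if $x[i]=y[j]$ and $c(i,j)=\csubst$ otherwise, together with $D(0,0)=0$, $D(i,0)=i\cdelx$, $D(0,j)=j\cdely$. Set the normalized value $\widetilde D(i,j):=D(i,j)-\cdelx(i-j)$. Three observations feed the proof: (i) since a deletion of $x[i]$ is always available at cost $\cdelx$ we have $D(i,j)\le D(i-1,j)+\cdelx$, hence $i\mapsto\widetilde D(i,j)$ is non-increasing; (ii) $\cdelx(i-j)\le D(i,j)\le i\cdelx+j\cdely$ gives $0\le\widetilde D(i,j)\le j(\cdelx+\cdely)\le (\cdelx+\cdely)m=M$, and $\widetilde D$ is an integer since the costs are; (iii) substituting $D=\widetilde D+\cdelx(\,\cdot-\cdot\,)$ into the DP recursion yields
\[
  \widetilde D(i,j)=\min\{\ \widetilde D(i-1,j),\ \ \widetilde D(i,j-1)+\cdelx+\cdely,\ \ \widetilde D(i-1,j-1)+c(i,j)\ \}.
\]
By (i) the set $\{i:\widetilde D(i,j)=k\}$ is a (possibly empty) interval; write $g_j(k)$ for its minimum, with $\min\emptyset=\infty$. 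The lemma asserts $I[j,k]=g_j(k)$, and by (ii) both sides equal $\infty$ for $k\notin\{0,\dots,M\}$, matching the algorithm's implicit entries.

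For the base case $j=0$ one has $\widetilde D(i,0)=i\cdelx-i\cdelx=0$ for all $i$, so $g_0(0)=0$ and $g_0(k)=\infty$ otherwise, exactly as the algorithm initializes $I[0,\cdot]$. For the inductive step I assume $I[j-1,\cdot]=g_{j-1}(\cdot)$ and prove $I[j,k]=g_j(k)$ by two inclusions. For ``$I[j,k]\le g_j(k)$'': let $i^\star:=g_j(k)<\infty$ and inspect which of the three terms of the transformed recursion attains $\widetilde D(i^\star,j)=k$ (the first, $\widetilde D(i^\star-1,j)$, cannot attain it by minimality of $i^\star$). If the deletion-in-$y$ term attains it then $\widetilde D(i^\star,j-1)=k-\cdelx-\cdely$, so $i^\star\ge g_{j-1}(k-\cdelx-\cdely)=I[j-1,k-\cdelx-\cdely]$. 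If the match term attains it then $x[i^\star]=y[j]$ and $\widetilde D(i^\star-1,j-1)=k-\cmatch$, hence $i^\star-1\ge I[j-1,k-\cmatch]$ and, since $x[i^\star]=y[j]$, also $\Next^x_{=y[j]}(I[j-1,k-\cmatch])\le i^\star$; the substitution term is symmetric. In each case the algorithm's update for $I[j,k]$ is at most $i^\star$. For ``$I[j,k]\ge g_j(k)$'': I show that whenever one of the three terms evaluates to a finite index $i$, in fact $\widetilde D(i,j)=k$, so $i\ge g_j(k)$. Here the minimality stored in $I[j-1,\cdot]$ is essential: between $I[j-1,k-\cmatch]=g_{j-1}(k-\cmatch)$ and $p:=\Next^x_{=y[j]}(I[j-1,k-\cmatch])$ the monotone step function $\widetilde D(\cdot,j-1)$ is constant equal to $k-\cmatch$, so matching $x[p]=y[j]$ from column $j-1$ gives $\widetilde D(p,j)\le k$, while the monotonicity of $\widetilde D(\cdot,j)$ together with the fact that $I[j-1,k-\cmatch]$ is the \emph{first} index of its value prevents $\widetilde D(p,j)$ from being strictly smaller; the deletion-in-$y$ and substitution terms are handled the same way.

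I expect the second inclusion to be the main obstacle: one must argue that each $\Next^x_{=y[j]}$ or $\Next^x_{\ne y[j]}$ call lands on exactly the first index of column $j$ with normalized value $k$, neither overshooting into a later index of the same plateau nor, more delicately, landing on an index where $\widetilde D(\cdot,j)$ has already dropped below $k$. This bookkeeping combines the plateau structure of $\widetilde D(\cdot,j-1)$, the minimality encoded in every $I[j-1,\cdot]$ entry, and the transformed recursion itself. Once it is done, the final line of \algref{edit} is immediate: $I[m,k]<\infty$ precisely for $k$ in the range of the non-increasing function $\widetilde D(\cdot,m)$, whose least value is $\widetilde D(n,m)=\dEDIT(x,y)-\cdelx(n-m)$, so $\cdelx(n-m)+\min\{k:I[m,k]<\infty\}=\dEDIT(x,y)$.
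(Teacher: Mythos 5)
Your approach is a genuine departure from the paper's proof. The paper fixes $i := R[j,k]$, considers an optimal traversal of $(x[1..i],y[1..j])$, and case-analyzes its last operation; you instead normalize the DP to $\widetilde D(i,j) := \dEDIT(x[1..i],y[1..j])-\cdelx(i-j)$, prove it is non-increasing in $i$, and transfer the Levenshtein recursion to $\widetilde D$. This is a clean structural lens: each row $\widetilde D(\cdot,j)$ is a monotone integer step function bounded by $M$, and $I[j,\cdot]$ should record where level $k$ is first attained. Your first inclusion $I[j,k]\le g_j(k)$ comes out cleanly and is essentially the content of the paper's case analysis.

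The second inclusion, which you rightly flag as the obstacle, is where the argument breaks. The claim that $\widetilde D(\cdot,j-1)$ is constant equal to $k-\cmatch$ on $[q,\,p-1]$ (with $q=I[j-1,k-\cmatch]$ and $p=\Next^x_{=y[j]}(q)$) is false: minimality of $q$ and monotonicity only yield $\widetilde D(i,j-1)\le k-\cmatch$ for $i\ge q$, and the value may drop strictly. Worse, the exact identity $\widetilde D(p,j)=k$ that your plan requires is false in general. Take $x=\mathrm{ab}$, $y=\mathrm{a}$, $(\cdelx,\cdely,\cmatch,\csubst)=(10,10,1,5)$, $j=1$, $k=5$: the substitution term gives $p=\Next^x_{\ne a}(I[0,0])=2$, yet $\widetilde D(2,1)=\dEDIT(\mathrm{ab},\mathrm{a})-10=1\ne 5$, because the match through $x[1]=y[1]$ undercuts the substitution path that $p$ encodes. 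The same example has $I[1,5]=2$ while $g_1(5)=\infty$, so the literal equality $I[j,k]=g_j(k)$ can fail when $g_j(k)=\infty$ --- a case the paper's own proof also sidesteps, since it opens by examining a traversal of $(x[1..R[j,k]],y[1..j])$ and thus tacitly assumes $R[j,k]<\infty$. What your monotonicity observation does give, without any constancy, is $\widetilde D(p,j)\le \widetilde D(p-1,j-1)+\cmatch\le\widetilde D(q,j-1)+\cmatch=k$, whence $p\ge g_j(k)$ \emph{whenever $g_j(k)<\infty$}; that restricted invariant is all the subsequent correctness lemma needs, since the return value depends only on $\min\{k : I[m,k]<\infty\}$, which equals $\widetilde D(n,m)$ and is a level where $g_m(\cdot)$ is finite. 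I would recommend reformulating your lemma to assert $I[j,k]=g_j(k)$ only when $g_j(k)<\infty$ (together with $I[j,k]\ge g_j(k)$ in the sense that $\widetilde D(I[j,k],j)\le k$ always); your normalized-DP route then closes without the false constancy claim.
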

\begin{proof}
  Let $R[j,k] := \min\{ 0 \le i \le n \mid \dEDIT(x[1..i],y[1..j]) - \cdelx (i-j) = k\}$ be the right hand side of the statement.
  
  The statement is true for $j=0$, since for the empty string $\epsilon$ we have $\dEDIT(x[1..i],\epsilon) = \cdelx \cdot i$, so that $R[0,k] = 0$ for $k=0$, and $\infty$ otherwise, which is exactly how we initialize $I[0,k]$.
  
  We show that $R[j,k] = \infty$ for $k<0$ or $k > M$, which is also implicitly assumed for $I[j,k]$ in \algref{edit}. Note that for $i \ge j$ we have to delete at least $i-j$ symbols in $x[1..i]$ when traversing it with $y[1..j]$, which implies $\dEDIT(x[1..i],y[1..j]) \ge \cdelx (i-j)$. Since additionally for $i < j$ the term $- \cdelx (i-j)$ is positive, we have $\dEDIT(x[1..i],y[1..j]) - \cdelx (i-j) \ge 0$ for all $i,j$. Thus, for no $k<0$ we can have $\dEDIT(x[1..i],y[1..j]) - \cdelx (i-j) = k$, implying $R[j,k] = \infty$ in this case.
  Moreover, $\dEDIT(x[1..i],y[1..j]) \le \cdelx \cdot i + \cdely \cdot j$, which implies $\dEDIT(x[1..i],y[1..j]) - \cdelx (i-j) \le (\cdely + \cdelx) j \le M$. Thus, we also have $R[j,k] = \infty$ for $k > M$. 
  
  It remains to show the statement for $j > 1$ and $0 \le k \le M$. Inductively, we can assume that the statement holds for $j-1$. We show that $R[j,k]$ satisfies the same recursive equation as $I[j,k]$ in \algref{edit}.
  Let $i := R[j,k]$ and consider an optimal traversal $T$ of $(x[1..i],y[1..j])$. We obtain a traversal $T'$ by removing the last operation in $T$.
  
  If the last operation in $T$ is a deletion in $x$, then $T'$ is an optimal traversal of $(x[1..i-1],y[1..j])$ with cost $\dEDIT(x[1..i-1],y[1..j]) = \dEDIT(x[1..i],y[1..j])-\cdelx$. Thus, we can decrease $i$ to $i-1$ while keeping $k = \dEDIT(x[1..i],y[1..j]) - \cdelx(i-j) = \dEDIT(x[1..i-1],y[1..j]) - \cdelx(i-1-j)$. This contradicts minimality of $i = R[j,k]$, so the last operation in $T$ cannot be a deletion in $x$.
  
  If the last operation in $T$ is a deletion in $y$, then $T'$ is an optimal traversal of $(x[1..i],y[1..j-1])$ with cost $\dEDIT(x[1..i],y[1..j-1]) = \dEDIT(x[1..i],y[1..j])-\cdely$. Thus, we have
  \begin{align*}
    R[j,k] &= \min\{ 0 \le i \le n \mid \dEDIT(x[1..i],y[1..j]) - \cdelx (i-j) = k\}  \\
    &= \min\{0 \le i \le n \mid \dEDIT(x[1..i],y[1..j-1]) - \cdelx(i-(j-1)) = k - \cdely - \cdelx\}  \\
    &= R[j-1,k-\cdelx - \cdely].
  \end{align*}
  
  If the last operation in $T$ is a matching of $x[i]$ and $y[j]$, then $T'$ is an optimal traversal of $(x[1..i-1],y[1..j-1])$ with cost $\dEDIT(x[1..i-1],y[1..j-1]) = \dEDIT(x[1..i],y[1..j]) - \cmatch$. This implies $\dEDIT(x[1..i-1],y[1..j-1]) - \cdelx((i-1)-(j-1)) = k - \cmatch$, so that $i-1$ is a candidate for $R[j-1,k-\cmatch]$. 
  Let $i' := R[j-1,k-\cmatch]$ and note that $i' \le i-1$. 
  As $x[i] = y[j]$, we obtain $i \ge \Next^x_{=y[j]}(i') =: i^*$.
  In the following we show $i = i^*$.
  By definition of $i'$ we have $\dEDIT(x[1..i'],y[1..j-1]) - \cdelx (i'-j+1) = k - \cmatch$. Hence, $\dEDIT(x[1..i^*],y[1..j]) \le \cmatch + \cdelx (i^* - i' - 1) + \dEDIT(x[1..i'],y[1..j-1]) = k + \cdelx(i^* - j)$. We even have equality, since otherwise $\dEDIT(x[1..i],y[1..j]) \le \dEDIT(x[1..i^*],y[1..j]) + \cdelx(i-i^*) < k + \cdelx(i-j)$, contradicting the definition of $i$. Thus, $i^*$ is a candidate for $R[j,k]$, implying that we also have $i \le i^*$.
  Hence, we have $R[j,k] = i^* = \Next^x_{=y[j]}(i') = \Next^x_{=y[j]}(R[j-1,k-\cmatch])$.
  
  We argue analogously if the last operation in $T$ is a substitution of $x[i]$ and $y[j]$.
  This yields
  $$ R[j,k] = \min\{ R[j-1,k-\cdelx-\cdely], \Next_{=y[j]}^x(R[j-1,k-\cmatch]), \Next^x_{\ne y[j]}(R[j-1,k-\csubst])\}. $$
  Hence, $R[j,k]$ satisfies the same recursion as $I[j,k]$, and we proved $R[j,k] = I[j,k]$ for all $j,k$. 
\end{proof}

\begin{lem}
  \algref{edit} correctly computes $\dEDIT(x,y)$.
\end{lem}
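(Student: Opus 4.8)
The plan is to read off the answer from the preceding lemma, which already fixes the semantics of the table: for every $0 \le k \le M$ with $M = (\cdelx+\cdely)m$ we have $I[m,k] = \min\{0 \le i \le n \mid \dEDIT(x[1..i],y[1..m]) - \cdelx(i-m) = k\}$, and the proof of that lemma also established the two normalization bounds $0 \le \dEDIT(x[1..i],y[1..m]) - \cdelx(i-m) \le M$ for all $0 \le i \le n$. Writing $k^\ast := \min\{0 \le k \le M \mid I[m,k] < \infty\}$, so that \algref{edit} returns $\cdelx(n-m) + k^\ast$, the goal is to show this equals $\dEDIT(x,y) = \dEDIT(x[1..n],y[1..m])$, which I would do by two opposite inequalities.

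First I would show $\cdelx(n-m) + k^\ast \le \dEDIT(x,y)$. Set $k_0 := \dEDIT(x,y) - \cdelx(n-m)$; by the normalization bounds applied with $i = n$ we get $0 \le k_0 \le M$, and the index $i = n$ directly witnesses that $I[m,k_0] < \infty$. Hence $k_0$ lies in the set defining $k^\ast$ (in particular that set is nonempty), so $k^\ast \le k_0$, and adding $\cdelx(n-m)$ gives the inequality.

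For the reverse inequality $\dEDIT(x,y) \le \cdelx(n-m) + k^\ast$, I would use the minimizer: since $I[m,k^\ast] < \infty$, there is an index $i^\ast \in \{0,\dots,n\}$ with $\dEDIT(x[1..i^\ast],y[1..m]) = k^\ast + \cdelx(i^\ast-m)$. Taking an optimal traversal of $(x[1..i^\ast],y[1..m])$ and appending the $n-i^\ast \ge 0$ deletions of $x[i^\ast+1],\dots,x[n]$ yields a traversal of $(x,y)$, so $\dEDIT(x,y) \le \dEDIT(x[1..i^\ast],y[1..m]) + \cdelx(n-i^\ast) = k^\ast + \cdelx(i^\ast-m) + \cdelx(n-i^\ast) = k^\ast + \cdelx(n-m)$. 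Combining the two inequalities gives $\dEDIT(x,y) = \cdelx(n-m) + k^\ast$, exactly the value returned by \algref{edit}.

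There is essentially no hard step here, since all the work lives in the preceding lemma on the meaning of $I[j,k]$ together with the two normalization bounds proved with it; the only subtlety worth flagging is the implicit well-definedness of the returned value, i.e.\ that the minimization over $\{0 \le k \le M \mid I[m,k] < \infty\}$ is taken over a nonempty set, and this is precisely what the witness $i = n$ in the first inequality establishes.
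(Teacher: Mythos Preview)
Your proof is correct and follows essentially the same two-inequality approach as the paper's own proof. Your first direction is in fact slightly cleaner: the paper picks an optimal traversal ending in the maximal number of trailing deletions in $x$ to exhibit a witness for $I[m,k_0] < \infty$, whereas you simply take $i=n$, which suffices since only finiteness (not the exact minimizer) is needed; the second direction is identical to the paper's.
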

\begin{proof}
  Among all optimal traversals of $(x,y)$, pick a traversal $T$ that ends with the maximal number~$d$ of deletions in $x$, and set $i := n-d$. 
  Observe that $i$ is minimal with $\dEDIT(x[1..i],y[1..m]) + \cdelx (n-i) = \dEDIT(x,y)$, which is equivalent to $\dEDIT(x[1..i],y[1..m]) - \cdelx(i-m) = \dEDIT(x,y) - \cdelx(n-m) =: k$. 
  Thus, $i = I[m,k] < \infty$, which implies that the return value of \algref{edit} is at most $\cdelx(n-m) + k = \dEDIT(x,y)$.
  
  Moreover, for any $k$ with $I[m,k] < \infty$ there is a $0 \le i \le n$ with $\dEDIT(x[1..i],y[1..m]) - \cdelx(i-m) = k$. By appending $n-i$ deletions in $x$ to any optimal traversal of $(x[1..i],y[1..m])$, we obtain $\dEDIT(x,y) \le \dEDIT(x[1..i],y[1..m]) + \cdelx(n-i) = k + \cdelx(n-m)$. Hence, the return value of \algref{edit} is also at least $\dEDIT(x,y)$.
\end{proof}

\section{Dynamic Time Warping} \label{sec:dtw}

We present coordinate values and an \emph{unbalanced} \aligngad\ gadget for \DTW\ on one-dimensional curves taking values in $\mathbb{N}_0$, i.e., we consider the set of inputs ${\cal I} := \bigcup_{k \ge 0} \mathbb{N}_0^k$. 

\begin{lem}
\DTW\ admits coordinate values by setting
\begin{align*}
\oleft := 1100,\;  \zleft := 0110,\;  \oright := 0011,\;  \zright := 1010.
\end{align*}
\end{lem}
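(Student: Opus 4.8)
The plan is a direct verification: the statement is just one numerical inequality among four \DTW-distances of length-$4$ one-dimensional curves, plus a triviality about types, so I would compute everything by hand.

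\emph{Types.} First I record that $\oleft = 1100$, $\zleft = 0110$, $\oright = 0011$, and $\zright = 1010$ each have length $4$ and exactly two entries equal to $1$, hence $\type(\oleft) = \type(\zleft) = \type(\oright) = \type(\zright) = (4,2)$. In particular $\type(\zleft) = \type(\oleft)$ and $\type(\zright) = \type(\oright)$, as required by \defref{coordValues}.

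\emph{The three small distances.} Next I would show $\dDTW(\zleft,\oright) = \dDTW(\zleft,\zright) = \dDTW(\oleft,\zright) = 1$. For the upper bound I exhibit for each pair a traversal of cost exactly $1$: for $(\zleft,\oright) = (0110,0011)$ take $(1,1),(1,2),(2,3),(3,4),(4,4)$, with per-step costs $0,0,0,0,1$; for $(\zleft,\zright) = (0110,1010)$ take $(1,1),(1,2),(2,3),(3,3),(4,4)$, with costs $1,0,0,0,0$; for $(\oleft,\zright) = (1100,1010)$ take $(1,1),(2,1),(3,2),(4,2),(4,3),(4,4)$, with costs $0,0,0,0,1,0$. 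For the matching lower bound, note that all curve entries lie in $\{0,1\}$, so every traversal cost is a nonnegative integer, and a cost-$0$ traversal would require $x[a_i]=y[b_i]$ at every step, in particular $x[1]=y[1]$ and $x[4]=y[4]$ at the forced endpoint pairs $(1,1)$ and $(4,4)$. For $(\zleft,\oright)$ this fails since $\zleft[4]=0\neq 1=\oright[4]$, and for $(\zleft,\zright)$ it fails since $\zleft[1]=0\neq 1=\zright[1]$, so both distances are $\ge 1$. For $(\oleft,\zright)=(1100,1010)$ the endpoints agree, so here I would run a small case check: listing the cells $(i,j)$ with $\oleft[i]=\zright[j]$, any equality-preserving monotone path from $(1,1)$ to $(4,4)$ is forced to start $(1,1)\to(2,1)\to(3,2)\to(4,2)$ and then gets stuck, since $(4,3)$ has $0\neq 1$; hence there is no cost-$0$ traversal and this distance is also $\ge 1$.

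\emph{The large distance and conclusion.} Finally, since $|\oleft| = |\oright| = 4$, every traversal of $(\oleft,\oright)$ has at least $4$ pairs, so its first pair $(1,1)$ and last pair $(4,4)$ occur at distinct times and already contribute $d(\oleft[1],\oright[1]) + d(\oleft[4],\oright[4]) = |1-0| + |0-1| = 2$ to the cost; thus $\dDTW(\oleft,\oright) \ge 2$. Combining, $\dDTW(\oleft,\oright) \ge 2 > 1 = \dDTW(\zleft,\oright) = \dDTW(\zleft,\zright) = \dDTW(\oleft,\zright)$, which is exactly the defining inequality for coordinate values. The only step that takes any thought is the lower bound $\dDTW(\oleft,\zright) \ge 1$, where the endpoint shortcut does not apply and one must verify by hand that no zero-cost traversal exists; everything else is immediate.
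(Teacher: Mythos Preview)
Your proof is correct and follows the same direct-verification approach as the paper, which simply states that ``short calculations show that $4=\dDTW(\oleft,\oright)>\dDTW(\zleft,\oright)=\dDTW(\zleft,\zright)=\dDTW(\oleft,\zright)=1$.'' The only difference is that you stop at the lower bound $\dDTW(\oleft,\oright)\ge 2$ rather than computing the exact value~$4$; since only the strict inequality is needed for \defref{coordValues}, this is a perfectly valid (and slightly more economical) shortcut.
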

\begin{proof}
All four values have the same length and sum of all entries, so they have equal type. Short calculations show that $4=\dDTW(\oleft,\oright)>\dDTW(\zleft,\oright)=\dDTW(\zleft,\zright)=\dDTW(\oleft,\zright)=1$.
\end{proof}

\begin{defn}
\label{def:C&Adtw}
Consider instances $x_1,\ldots,x_n \in \inputs_{t_\x}$ and $y_1,\ldots,y_m \in \inputs_{t_\y}$ with $n \ge m$ and types $t_\x = (\ell_\x,s_\x), t_\y = (\ell_\y,s_\y)$. We
define $M:=2z$, where $z$ is the largest value contained in any
of the one-dimensional curves $x_{1},\dots,x_{n},y_{1},\dots,y_{m}$, and we set $\kappa:=3(\ell_{\x}+\ell_{\y})$.
We construct 
\begin{eqnarray*}
\CandA_{\x}^{m,t_{\y}}(x_{1},\dots,x_{n}) & := & M^{\kappa} \; x_{1} \; M^{\kappa} \; x_{2} \; M^{\kappa} \,\dots\, M^{\kappa} \; x_{n} \; M^{\kappa},\\
\CandA_{\y}^{n,t_{\x}}(y_{1},\dots,y_{m}) & := & M^{\kappa} \; y_{1} \; M^{\kappa} \; y_{2} \; M^{\kappa} \,\dots\, M^{\kappa} \; y_{m} \; M^{\kappa},
\end{eqnarray*}
where $M^\kappa$ is to be understood as a sequence with $\kappa$ times the entry $M$.
\end{defn}
\begin{lem}\label{lem:C&Adtw}
Definition~\ref{def:C&Adtw} realizes an unbalanced \aligngad\
gadget for dynamic time warping.\end{lem}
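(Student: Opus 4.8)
The plan is to instantiate Definition~\ref{def:cagadget} with $C:=(n-m)\,c_{\x}$, where $c_{\x}:=\ell_{\x}M-s_{\x}=\sum_{k}(M-x_{i}[k])$ is the same value for every $i\in[n]$ since all $x_{i}$ have type $t_{\x}$. First I would dispose of the routine conditions: $\type(x)$ and $\type(y)$ depend only on $n,m,t_{\x},t_{\y}$ (reading $M=2z$ as a fixed constant of the whole reduction, as every input arising in the framework takes values in a bounded range), the construction runs in time $\Oh((n+m)\kappa)=\Oh((n+m)(\ell_{\x}+\ell_{\y}))$, and $|x|=(n+1)\kappa+n\ell_{\x}=\Oh(n(\ell_{\x}+\ell_{\y}))$ while $|y|=\Oh(m(\ell_{\x}+\ell_{\y}))$, so the gadget is \emph{unbalanced}. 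Two numeric facts justify the parameter choices and are used throughout: (i) since $M=2z$ and all entries of the $x_{i}$ and $y_{j}$ lie in $\{0,\dots,z\}$, matching an $M$-entry to any content entry costs at least $M-z=z$, whereas matching an $M$-entry to an $M$-entry costs $0$; and (ii) $\kappa z=3(\ell_{\x}+\ell_{\y})z>2\ell_{\x}z+2\ell_{\y}z\ge c_{\x}+c_{\y}$, i.e.\ matching an \emph{entire} block $M^{\kappa}$ to content entries is strictly more expensive than re-routing one content curve on each side to $M$-entries.

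For the upper bound $\dDTW(x,y)-C\le\min_{A\in\strc_{n,m}}\Val{A}$ I would fix a structured alignment $A=\{(\Delta+1,1),\dots,(\Delta+m,m)\}$ and, writing $x=X_{0}x_{1}X_{1}\cdots x_{n}X_{n}$ and $y=Y_{0}y_{1}Y_{1}\cdots y_{m}Y_{m}$ with $X_{i}=Y_{j}=M^{\kappa}$, exhibit the traversal that matches the prefix $X_{0}x_{1}X_{1}\cdots x_{\Delta}X_{\Delta}$ against $Y_{0}$ (sending every content entry to an $M$ of $Y_{0}$ and all $M$'s to $M$'s, for cost $\Delta c_{\x}$), then for $j\in[m]$ traverses $(x_{\Delta+j},y_{j})$ optimally and matches $X_{\Delta+j}$ to $Y_{j}$ at cost $0$, and finally matches the suffix $X_{\Delta+m}x_{\Delta+m+1}\cdots x_{n}X_{n}$ against $Y_{m}$ for cost $(n-\Delta-m)c_{\x}$. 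Concatenating yields $\dDTW(x,y)\le(n-m)c_{\x}+\sum_{(i,j)\in A}\dDTW(x_{i},y_{j})=C+\Val{A}$ (using $|A|=m$), and minimizing over $\Delta$ finishes this direction.

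The lower bound $\dDTW(x,y)-C\ge\min_{A\in\algn_{n,m}}\Val{A}$ is what I expect to be the main obstacle; it parallels the corresponding step for \LCS\ and edit distance (cf.\ \claref{lcsobs}, \facref{EDITobs}, the subsequent ``aligned/unaligned'' claims, and \facref{EDITgreedy}). Starting from an optimal traversal $T$, the technical heart is to show that, up to cost-non-increasing local modifications, $T$ is in \emph{canonical form}: every block $M^{\kappa}$ of $x$ is matched to a block $M^{\kappa}$ of $y$ at cost $0$, and there is a partial alignment $A=\{(i_{1},j_{1}),\dots,(i_{k},j_{k})\}\in\algn_{n,m}$ such that each aligned $x_{i}$ is matched essentially to $y_{j}$ (contributing at least $\dDTW(x_{i},y_{j})$ -- a short computation via a \DTW-analogue of \facref{EDITgreedy}.(3) shows that any residual spillover of $x_{i}$ onto adjacent $M$-blocks is exactly paid for), while each unaligned $x_{i}$, resp.\ unaligned $y_{j}$, has all of its $\ell_{\x}$, resp.\ $\ell_{\y}$, entries matched only to $M$-entries, contributing at least $c_{\x}$, resp.\ $c_{\y}$. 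The only way canonicity can fail is that a content curve is spread across two content curves on the other side, which by monotonicity of $T$ forces a full block $M^{\kappa}$ to be matched to content at cost at least $\kappa z$; fact (ii) then shows that a local re-routing does not increase the cost. Since the four kinds of pairs (content--content at aligned pairs, content--$M$ at unaligned $x_{i}$, $M$--content at unaligned $y_{j}$, and $M$--$M$) partition all pairs of $T$, the canonical form yields
$$ \dDTW(x,y)\ \ge\ \sum_{(i,j)\in A}\dDTW(x_{i},y_{j})+(n-k)\,c_{\x}+(m-k)\,c_{\y}. $$
Finally, using $n-k=(n-m)+(m-k)$, the bound $c_{\x}+c_{\y}\ge(\ell_{\x}+\ell_{\y})z$ (from $s_{\x}\le\ell_{\x}z$ and $s_{\y}\le\ell_{\y}z$), and $\dDTW(x_{i},y_{j})\le(\ell_{\x}+\ell_{\y})z$ for all $i,j$ (a traversal of curves of lengths $\ell_{\x},\ell_{\y}$ has at most $\ell_{\x}+\ell_{\y}-1$ pairs, each of cost at most $z$), I obtain
$$ \dDTW(x,y)-C\ \ge\ \sum_{(i,j)\in A}\dDTW(x_{i},y_{j})+(m-k)(c_{\x}+c_{\y})\ \ge\ \Val{A}\ \ge\ \min_{A'\in\algn_{n,m}}\Val{A'}, $$
which establishes \eqref{eq:Cone}. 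The delicate part, on which I would spend most of the effort, is making the canonical-form reduction rigorous, since that is precisely where the constant $\kappa=3(\ell_{\x}+\ell_{\y})$ and the relation $M=2z$ are needed.
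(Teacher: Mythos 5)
Your plan is essentially the paper's proof. You pick the same constant $C=(n-m)(\ell_\x M - s_\x)$, give the same traversal for the upper bound, and close with the same algebra, using $c_\x+c_\y\ge(\ell_\x+\ell_\y)z>\max_{i,j}\dDTW(x_i,y_j)$ to absorb the $(m-|A|)$ term; your facts (i)--(ii) are exactly what the paper packs into \claref{usefuldtw}. Where you diverge is in presentation of the structural step, and you are right that this is the only delicate part. The paper does it via a bipartite graph $G_M$ on the $M^\kappa$-blocks (edges between ``paired'' blocks), proving planarity, no isolated vertices, and no path of length~3 by exactly the kind of re-routing you sketch, concluding that $G_M$ is a disjoint union of stars; the alignment is then read off the star boundaries. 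Your ``canonical form'' condition --- no content curve of one side is paired with content of two distinct curves on the other --- is the dual of the paper's ``no isolated $M$-vertex'' (spanning forces an entire $M^\kappa$ of the opposite side to be paired with content at cost $\ge\kappa z$, which your fact (ii) rules out), and it suffices for your accounting, so you can in fact dispense with the paper's ``no path of length~3'' step. One caveat: your headline description of canonical form, ``every block $M^\kappa$ of $x$ is matched to a block $M^\kappa$ of $y$ at cost~$0$,'' is not literally achievable (when $n>m$, or around unaligned curves, some $M^\kappa$ of $x$ is necessarily traversed against content of $y$); the statement you actually need and use is only the no-spanning condition plus the fact that $M$-$M$ pairs cost~$0$, so keep the claim in that weaker form. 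You correctly flag the re-routing lemma as the place where the real work lies; that is precisely the content of the paper's three claims about $G_M$.
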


Thus, \thmref{main} is applicable, implying a lower bound of $\Oh((n m)^{1-\eps})$ for \DTW\ on one-dimensional curves over $\mathbb{N}_0$. 
To restrict the alphabet further, note that our basic values use the alphabet $\{0,1\} \subseteq \mathbb{N}_0$ and each invocation of the \aligngad\ gadget introduces a new symbol which is twice as large as the largest value seen so far. Since in the proof of \thmref{main} we use \aligngad\ gadgets three times, we introduce the symbols 2, 4, and 8. In total, we prove quadratic-time hardness of \DTW\ on one-dimensional curves taking values in $\{0,1,2,4,8\} \subseteq \mathbb{N}_0$.
This proves \thmrefs{dtw}{unbalanced}. 

\begin{proof}[Proof of Lemma~\ref{lem:C&Adtw}]

Observe that $x:=\CandA_{\x}^{m,t_{\y}}(x_{1},\dots,x_{n})$ and
$y:=\CandA_{\y}^{n,t_{\x}}(y_{1},\dots,y_{m})$ can be computed in time $\bigOh((n+m)(\ell_{\x}+\ell_{\y}))$
yielding strings of length $\bigOh(n(\ell_{\x}+\ell_{\y}))$ and $\bigOh(m(\ell_{\x}+\ell_{\y}))$,
respectively. Moreover, $\type(x)$ and $\type(y)$ only depend on $t_\x,t_\y,n,m$. 
It remains to show the inequalities \eqref{eq:Cone} of \defref{cagadget}, for which we set $C := (n-m)(\ell_\x M - s_\x)$.

We start with the following useful observations.
\begin{claim} \label{cla:usefuldtw}
Let $\ell \ge 1 $ and $a,a',b,b'\in \mathbb{N}_0$. For any $i \in [n], j \in [m]$, we have
\begin{enumerate}
\item[(1)] $\dDTW(x_i, M^{\ell})\ge\dDTW(x_i, M) = \ell_{\x}M-s_{\x} \ge \ell_\x M /2$ and $\dDTW(M^{\ell},y_{j})\ge\dDTW(M,y_{j})=\ell_{\y}M-s_{\y} \ge \ell_\y M /2$,
\item[(2)] $\dDTW(x_i,y_j) < (\ell_\x + \ell_\y)M/2$.
\item[(3)] $\dDTW(x',M^{\kappa}) \ge \kappa M/2$ and $\dDTW(M^{\kappa},y') \ge \kappa M/2$ for any substrings $x'$ of $x_i$ and $y'$ of $y_j$,
\item[(4)] $\dDTW(M^{a}x_{i}M^{a'},M^{b}y_{j}M^{b'}) \ge \dDTW(x_{i},y_{j})$.
\end{enumerate}
\end{claim}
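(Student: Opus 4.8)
The plan is to reduce all four inequalities to the single structural fact behind the choice $M=2z$: every entry of every curve $x_i,y_j$, and of every substring thereof, lies in $\{0,\dots,z\}$. Consequently, pairing any such entry with an $M$-symbol costs at least $M-z=M/2$, while pairing two such entries costs at most $z=M/2$. With this, items (1) and (3) become immediate ``each index of an $M$-block must be paid for'' estimates, (2) is an explicit cheap traversal, and (4) --- the guarding property that does the real work in the later proofs --- follows by projecting a traversal of the padded curves onto a traversal of $x_i$ and $y_j$.

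For (1): in any traversal of $x_i$ and $M^{\ell}$, each of the $\ell_\x$ indices $k$ of $x_i$ occurs in at least one pair, and that pair has cost $|x_i[k]-M|=M-x_i[k]\ge M/2$; summing over one such pair per index gives $\dDTW(x_i,M^\ell)\ge \sum_{k=1}^{\ell_\x}(M-x_i[k])=\ell_\x M-s_\x\ge \ell_\x M-\ell_\x z=\ell_\x M/2$. For $\ell=1$ the traversal is forced and attains this value, so $\dDTW(x_i,M^\ell)\ge\dDTW(x_i,M)=\ell_\x M-s_\x$; the statement for $y_j$ is symmetric. Item (3) is the mirror image: in any traversal of a substring $x'$ of $x_i$ with $M^\kappa$, each of the $\kappa$ indices of $M^\kappa$ is paired with an entry of $x'$ of value at most $z$, contributing at least $M/2$, hence the cost is at least $\kappa M/2$ (and likewise for $y'$). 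For (2), recall that two non-empty sequences admit a traversal with at most $\ell_\x+\ell_\y-1$ pairs (move diagonally until one coordinate is exhausted, then straight); each pair costs at most $z$, so $\dDTW(x_i,y_j)\le(\ell_\x+\ell_\y-1)z<(\ell_\x+\ell_\y)z=(\ell_\x+\ell_\y)M/2$, using $z\ge 1$ since the coordinate values contain a~$1$.

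The interesting case is (4), and this is the step I expect to require the most care. Given an optimal traversal $T=((p_1,q_1),\dots,(p_t,q_t))$ of $M^{a}x_iM^{a'}$ and $M^{b}y_jM^{b'}$, I would build a traversal $T'$ of $(x_i,y_j)$ with $\dDTW(T')\le\dDTW(T)$. Let $[s_0,s_1]$ be the set of time steps during which $p_s$ lies in the $x_i$-block $[a+1,a+\ell_\x]$; by monotonicity of $T$ this set is an interval, $p_{s_0}=a+1$, $p_{s_1}=a+\ell_\x$, and $p_s-a$ runs through all of $[1,\ell_\x]$. Clip the second coordinate into the $y_j$-block via $q_s':=\min\{b+\ell_\y,\ \max\{b+1,\ q_s\}\}$, take the pairs $(p_s-a,\ q_s'-b)$ for $s\in[s_0,s_1]$, and complete them to a traversal of $(x_i,y_j)$ by prepending a vertical run from $(1,1)$ up to $(1,\,q_{s_0}'-b)$ and appending a vertical run from $(\ell_\x,\,q_{s_1}'-b)$ up to $(\ell_\x,\ell_\y)$ (then deleting any stationary steps that arise). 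Each pair of $T'$ can be charged injectively to a pair of $T$ of no smaller cost: a pair $(p_s-a,q_s'-b)$ with $q_s$ already inside the $y_j$-block is charged to $(p_s,q_s)$, with identical cost; a pair where $q_s$ was clipped, or a prepended/appended pair, costs at most $z=M/2$, and is charged --- using monotonicity of $T$ --- to the (unique, earlier or later) $T$-pair that visits the corresponding $y_j$-index at a time outside $[s_0,s_1]$, where the first coordinate sits in an $M$-block and the cost is therefore at least $M/2$. Since the prepended charges land on time steps before $s_0$, the appended charges on time steps after $s_1$, and the middle charges on the steps in $[s_0,s_1]$, the charging is injective, giving $\dDTW(x_i,y_j)\le\dDTW(T')\le\dDTW(T)=\dDTW(M^{a}x_iM^{a'},M^{b}y_jM^{b'})$, which is (4). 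The only genuinely fiddly points are verifying that the clip-and-fill construction yields a legal traversal (after removing stationary steps) and that the three groups of charges are disjoint; everything else is bookkeeping with the inequality $x_i[k]\le z=M/2$.
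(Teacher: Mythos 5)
Parts (1)--(3) are correct and essentially the same as the paper's: (1) and (3) use the ``every index of the short side must be paid for at cost at least $M/2$'' argument, and (2) exhibits a cheap traversal (the paper uses a diagonal-then-straight traversal of length $\max\{\ell_\x,\ell_\y\}$, you use one of length $\ell_\x+\ell_\y-1$; both give the strict bound under the standing assumption that $z>0$ and both curves are nonempty).

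For (4), your overall strategy (project the optimal traversal $T$ of the padded strings down to a traversal $T'$ of $(x_i,y_j)$ by clamping the $y$-coordinate into the $y_j$-block over the time interval $[s_0,s_1]$ in which $p$ is in the $x_i$-block, then charge each $T'$-pair injectively to a $T$-pair of no smaller cost) is a legitimate and genuinely different route from the paper's. The paper instead performs an exchange argument: it locates a contiguous piece $T^H$ of $T$ in which a prefix of $x_i$ is paired against $M^b$, replaces $T^H$ by pairing that prefix with $y_j[1]$ (no more expensive), and then strips the padding symmetrically on the other side. Your projection-and-charging argument, if the charging is set up correctly, yields the same conclusion more directly.

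However, the charging you describe for \emph{clipped middle pairs} is wrong as written, and this is a real gap. You charge a clipped pair ``to the (unique, earlier or later) $T$-pair that visits the corresponding $y_j$-index at a time outside $[s_0,s_1]$.'' But a clipped pair has $q_s'-b\in\{1,\ell_\y\}$, so when, say, $q_s\le b$ for several consecutive $s\in[s_0,s_1]$, all of those clipped pairs share the same ``corresponding $y_j$-index'' $1$, and the charging is no longer injective. Worse, the intended target may not exist: if $q$ first reaches $b+1$ at some time $s^*\in[s_0,s_1]$ and immediately moves on, then $T$ visits $y_j[1]$ only inside $[s_0,s_1]$, and there is no $T$-pair visiting $y_j[1]$ outside this window. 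Your own concluding sentence (``the middle charges [land] on the steps in $[s_0,s_1]$'') contradicts the explicit rule and in fact points to the correct fix: a clipped middle pair at time $s\in[s_0,s_1]$ should be charged to the \emph{same-time} pair $(p_s,q_s)$ of $T$. There $p_s$ lies in the $x_i$-block and $q_s$ lies in an $M$-block, so that $T$-pair costs at least $M-z=M/2$, while the $T'$-pair costs at most $z=M/2$; this keeps the charging inside $[s_0,s_1]$, disjoint from the prepended and appended charges, and injective (one target per $s$). With that correction (and reserving the ``charge outside $[s_0,s_1]$'' rule for the prepended/appended pairs only, for which the target $T$-pair does exist and is unique per $y_j$-index), your argument for (4) goes through.
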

\begin{proof}
For~(1), observe that each symbol of $x_{i}$ can only be traversed together with the symbol $M$ and hence,
\[\dDTW(x_{i},M^{\ell}) \ge\dDTW(x_{i},M)=\sum_{k=1}^{\ell_{\x}}|M-x_{i}[k]|=\ell_{\x}M-\sum_{k=1}^{\ell_{\x}}x_{i}[k]=\ell_{\x}M-s_{\x}.\]
Since $x_i[k] \le z = M/2$, we have $s_\x \le \ell_\x M/2$. The statement for $y_j$ is symmetric.

For~(2) and~(3), note that all symbols in $x'$ are in $[0,z]$. Hence, we obtain $\dDTW(x_i,y_j) \le \max\{|x_i|,|y_j|\}\cdot z < (\ell_\x + \ell_\y)M/2$. Likewise, $\dDTW(x',M^{\kappa}) \ge \kappa(M-z)=\kappa M/2$. The inequality for $y_j$ follows symmetrically.

To prove~(4), consider an optimal traversal $T$ of $M^{a}x_iM^{a'}$ and $M^{b}y_jM^{b'}$. We construct a traversal $T'$ of $x_i$ and $y_j$ that has no larger cost. If $T$ does not already traverse $x_i[1]$ together with $y_j[1]$, then at some step in $T$ either a symbol in $x_i$ is traversed together with a symbol of the prefix $M^b$ or a symbol in $y_j$ is traversed together with a symbol of the prefix $M^a$. Let us assume the first case, since the second is symmetric. A contiguous part $T^H$ of $T$ consists of traversing a prefix $x'$ of $x_i$ together with all symbols in $M^b$, incurring a cost of at least $|x'|M/2$. Let $T^R$ be the remaining part of $T$ after $T^H$. We construct a traversal $T''$ of $x_iM^{a'}$ and $y_jM^{b'}$ as follows. We first traverse $x'$ together with $y_j[1]$ and then follow $T^R$, which is possible since $T^R$ starts at $y_j[1]$. Since traversing $x'$ together with $y_j[1]$ incurs a cost of at most $|x'|z= |x'|M/2$, which is smaller than the cost of~$T^H$, the cost of our constructed traversal $T''$ is no larger than the cost of $T$. Symmetrically, we eliminate the suffixes $M^{a'}$ and $M^{b'}$ and construct a traversal $T'$ of $x_i$ and $y_j$ of cost no larger than $T$.
\end{proof}

We first verify that
\[
\dDTW(x,y)\le(n-m)(\ell_{\x}M-s_{\x})+\min_{A\in\strc_{n,m}}\sum_{(i,j)\in A}\dDTW(x_{i},y_{j}),
\]
by designing a traversal (illustrated in \figref{dtw}) that achieves this bound. Let $A\in\strc_{n,m}$
be the alignment minimizing the expression, and note that $A = \{(\Delta+1,1),\dots,(\Delta+m,m)\}$ for some $0 \le \Delta \le n-m$. We first traverse
$M^{\kappa}x_{1}M^{\kappa}\dots M^{\kappa}x_{\Delta}$ together with the first symbol
of $y$, $M$, which contributes a cost of $\sum_{i=1}^{\Delta}\dDTW(x_{i},M)=\Delta(\ell_{\x}M-s_{\x})$.
For $i=1,\dots,m$ we repeat the following: We traverse $M^{\kappa}x_{\Delta+i}$ together
with $M^{\kappa}y_{i}$ by traversing $M^\kappa$ and $M^\kappa$ simultaneously, and $x_i$ and $y_i$ in a locally optimal manner; this incurs a cost of $\dDTW(x_{\Delta+i},y_{i})$
for each $i$. Finally, we traverse the last block $M^\kappa$ in $y$ with the current block $M^\kappa$ in $x$, and then traverse the remainder $x_{\Delta+m+1}M^{\kappa}\dots M^{\kappa}x_{n}M^{\kappa}$ of $x$
together with the last symbol of $y$, $M$. The total cost amounts
to $\Delta(\ell_{\x}M-s_{\x})+\sum_{i=1}^{m}\dDTW(x_{\Delta+i},y_{i})+(n-\Delta-m)(\ell_{\x}M-s_{\x})=(n-m)(\ell_{\x}M-s_{\x})+\sum_{(i,j)\in A}\dDTW(x_{i},y_{j})$.

\begin{figure}
\includegraphics[width=\textwidth]{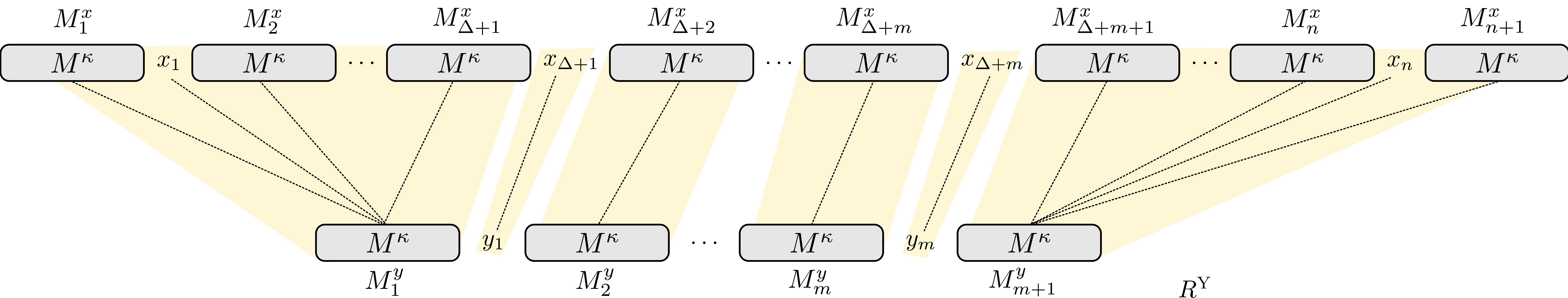}
\caption{Optimal traversal corresponding to structured alignment $A=\{(\Delta+j,j) \mid j\in [m]\} \in \strc_{n,m}$.}
\label{fig:dtw}
\end{figure}

In the remainder of the proof, we verify that
\begin{align*}
\dDTW(x,y)\ge(n-m)(\ell_{\x}M-s_{\x})+\min_{A\in\algn_{n,m}} \Big[ \sum_{(i,j)\in A}\dDTW(x_{i},y_{j})+(m-|A|)\max_{i,j}\dDTW(x_{i},y_{j}) \Big].
\end{align*}
Let $T^* = ((a_1^{*},b_1^*),\ldots,(a_t^{*},b_t^{*}))$ be an optimal traversal of $(x,y)$ (see \secref{preliminaries} for the definition of traversals). 
Substrings $x'$ of $x$ and $y'$ of $y$ are \emph{paired}
if for some index $i$ in $x'$ and some index $j$
in $y'$ we have $(i,j) = (a_{t'}^*,b_{t'}^*)$ for some $1 \le t' \le t$.

We call the $i$-th occurrence of $M^\kappa$ in $x$ the \emph{$i$-th $M$-block} $M^x_i$ of $x$, and similarly for $y$. Let $X := \{M^x_i \mid i \in [n+1]\}$, $Y := \{M^y_j \mid j \in [m+1]\}$ be the sets of all $M$-blocks of $x$ and $y$, respectively.
We define a bipartite graph $G_M$ with vertex set $X \cup Y$, where $M$-blocks $M^x_i$ and $M^y_j$ are connected by an edge if and only if they are paired. We show the following properties of $G_M$.

\begin{claim}[Planarity]
  For any paired $M^x_i,M^y_j$ and paired $M^x_{i'},M^y_{j'}$ we have $i\le i'$ and $j \le j'$ (or $i \ge i'$ and $j \ge j'$). 
\end{claim}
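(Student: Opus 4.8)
The plan is to obtain planarity as an immediate consequence of the monotonicity of the optimal traversal $T^*$. Recall that, by the definition of a traversal, the two coordinate sequences $(a^*_1,\dots,a^*_t)$ and $(b^*_1,\dots,b^*_t)$ are non-decreasing. Suppose $M^x_i$ and $M^y_j$ are paired and $M^x_{i'}$ and $M^y_{j'}$ are paired. By the definition of ``paired'' there is a step index $t'$ such that $a^*_{t'}$ is an index lying inside the block $M^x_i$ and $b^*_{t'}$ is an index lying inside $M^y_j$, and likewise a step index $t''$ with $a^*_{t''}$ inside $M^x_{i'}$ and $b^*_{t''}$ inside $M^y_{j'}$.

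Without loss of generality I would assume $t'\le t''$; the case $t''\le t'$ is handled symmetrically and produces the second alternative in the statement. Monotonicity of $T^*$ then gives $a^*_{t'}\le a^*_{t''}$ and $b^*_{t'}\le b^*_{t''}$. Since the $M$-blocks $M^x_1,\dots,M^x_{n+1}$ occupy pairwise disjoint, consecutive ranges of positions in $x$ (in this left-to-right order), the inequality $i>i'$ would force every position of $M^x_i$ to exceed every position of $M^x_{i'}$, contradicting $a^*_{t'}\le a^*_{t''}$; hence $i\le i'$. Running the identical argument on the $y$-side yields $j\le j'$, which is the desired conclusion.

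I do not anticipate any genuine difficulty here: the claim is essentially a restatement of the fact that a monotone staircase cannot cross itself. The only point that warrants a line of care is observing that choosing the \emph{earlier} of the two witness steps pins down one consistent direction for both coordinates simultaneously, so that the two possible orderings of $t'$ and $t''$ correspond precisely to the two disjuncts in the claim; in particular the same reasoning also shows that no $M$-block of $x$ is paired to two $M$-blocks of $y$ straddling a block that is itself paired, which is the form of planarity that the later counting arguments will rely on.
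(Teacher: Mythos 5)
Your proof is correct and follows essentially the same route as the paper's: both derive planarity directly from the monotonicity of the optimal traversal together with the fact that the $M$-blocks occupy disjoint, consecutively ordered ranges of indices in $x$ and $y$.
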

\begin{proof}
  By monotonicity of traversals, for $k \le k'$ we have $a_k^* \le a_{k'}^*$ and $b_k^* \le b_{k'}^*$. Thus, if $x[a_k^*]$ is in $M^x_i$ and $x[a_{k'}^*]$ is in $M^x_{i'}$, then $i \le i'$. Similarly, if $y[b_k^*]$ is in $M^y_j$ and $y[b_{k'}^*]$ is in $M^y_{j'}$, then $j \le j'$. Hence, for any paired $M^x_i,M^y_j$ and $M^x_{i'},M^y_{j'}$ we have $i\le i'$, $j \le j'$ or $i \ge i'$, $j \ge j'$. 
\end{proof}

\begin{claim}
  $G_M$ has no isolated vertices.
\end{claim}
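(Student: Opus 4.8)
The plan is to handle the four ``corner'' $M$-blocks directly and then rule out isolated \emph{interior} blocks by a local modification of the optimal traversal $T^*$. For the corners: since $(a^*_1,b^*_1)=(1,1)$ and $x[1]$ lies in $M^x_1$ while $y[1]$ lies in $M^y_1$, the blocks $M^x_1,M^y_1$ are paired; likewise $(a^*_t,b^*_t)=(|x|,|y|)$ pairs $M^x_{n+1}$ with $M^y_{m+1}$. So all four extreme blocks are incident to an edge of $G_M$. Because $\CandA_\x$ and $\CandA_\y$ are defined symmetrically, it then suffices to prove that no $M$-block $M^x_i$ of $x$ with $2\le i\le n$ is isolated.

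Assume $M^x_i$ ($2\le i\le n$) is isolated; first I would determine what this forces. Let $[t_{\min},t_{\max}]$ be the set of time steps $t'$ with $a^*_{t'}$ an index of $M^x_i$; this is a contiguous interval by monotonicity, and $a^*_{t_{\min}},a^*_{t_{\max}}$ are the first and last indices of $M^x_i$. For $t'\in[t_{\min},t_{\max}]$ the index $b^*_{t'}$ lies in no $M$-block of $y$ (otherwise $M^x_i$ would be paired with it); since $b^*$ is nondecreasing and consecutive input curves of $y$ are separated by $M$-blocks, all of $b^*_{t_{\min}},\dots,b^*_{t_{\max}}$ lie inside one input curve $y_j$. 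Hence the restriction of $T^*$ to $[t_{\min},t_{\max}]$ is a traversal of $M^x_i=M^\kappa$ against a substring $y'$ of $y_j$, which by \claref{usefuldtw}(3) costs at least $\kappa M/2=\tfrac32(\ell_\x+\ell_\y)M$. In addition, by monotonicity and isolation, every symbol of the $M$-block $M^y_{j+1}$ immediately after $y_j$ is traversed with an $x$-index strictly larger than the last index of $M^x_i$, i.e.\ inside $x_i$ or later.

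The final step, which I expect to be the crux, is to contradict the optimality of $T^*$ by surgery. I would re-route $T^*$ in a window around $M^x_i$ so that $M^x_i$ is traversed \emph{simultaneously} with $M^y_{j+1}$ (cost $0$, as both equal $M^\kappa$): keep $T^*$ up to the moment it enters $M^x_i$, then pair the first symbol of $M^x_i$ with the $\le\ell_\y$ symbols of $y_j$ lying after $y'$, pair $M^x_i$ one-for-one with $M^y_{j+1}$, and rejoin $T^*$ once $y$ has reached the end of $M^y_{j+1}$ (one could use $M^y_j$ symmetrically). Every symbol pair this creates is of the form ``input-curve symbol of $x_{i-1}$ or $x_i$ (value $\le z$) against a symbol of value $\le M$'' or ``suffix symbol of $y_j$ against $M$'', and there are $O(\ell_\x+\ell_\y)$ of them, so the surgery adds only $O((\ell_\x+\ell_\y)M)$ cost while removing the $M^x_i\times y'$ block of cost $\ge\tfrac32(\ell_\x+\ell_\y)M$. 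With $\kappa=3(\ell_\x+\ell_\y)$ the net change is negative, contradicting optimality; hence $G_M$ has no isolated vertex.

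The delicate point is that $T^*$ may be ``stretched'', so that the stretch of $x$ it covers over the relevant $y$-window exceeds just $x_{i-1},M^x_i,x_i$; this forces a small case analysis of how $T^*$ is locally aligned (whether it has already left $M^y_j$ when it enters $M^x_i$, how it exits towards $x_i$, etc.). What keeps the extra cost under control in every case — and what makes the argument genuinely local — is that any full $M$-block of $x$ traversed against input-curve material already costs $\ge\kappa M/2$ in $T^*$, so the re-routing can never destroy more than $O((\ell_\x+\ell_\y)M)$ of genuinely paid cost. The remaining estimates then follow routinely from \claref{usefuldtw} and the Planarity claim.
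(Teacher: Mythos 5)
Your overall plan — local surgery on the optimal traversal $T^*$, first lower-bounding the cost of $M^x_i$ against $y_j$ by $\kappa M/2$ via \claref{usefuldtw}.(3), then re-routing so that $M^x_i$ is paired with $M^y_{j+1}$ for free — is the paper's argument. The corner observation and the fact that all of $M^x_i$ is paired with a single $y_j$ are also correct and appear in the paper.

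The genuine gap is exactly the ``stretched'' point you flag and then hand-wave. The paper avoids it with one small but crucial move you do not make: it chooses $i$ to be the \emph{maximal} index of an isolated $M^x$-block. Maximality means $M^x_{i+1}$ is paired with some $M^y_{j'}$, and a short planarity argument then forces $M^y_{j+1}$ to be paired with $x_i$ or $M^x_{i+1}$. That gives a rejoin time $t_2$ with $a^*_{t_2}$ inside $x_i \cup M^x_{i+1}$, so the re-routed segment touches only $x_i$ and part of $M^x_{i+1}$, for total surgery cost at most $\dDTW(y_j,M)+\dDTW(x_i,M)\le(\ell_\x+\ell_\y)M<\kappa M/2$. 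No case analysis is needed. Your rejoin point (``once $y$ has reached the end of $M^y_{j+1}$'') is not controlled: $a^*$ at that time may lie arbitrarily far to the right, so step~(iii) of the surgery may traverse many full $x_{i'''}$ and $M^x_{i'''}$ blocks against $M$. Your proposed fix --- that every full $M^x$-block paired with input-curve material in $T^*$ already costs $\ge\kappa M/2$ --- is the right idea, but it is not sufficient as stated: the intervening $M^x$-blocks may be paired with $M^y_{j+1}$ (not input material, cost $0$ in $T^*$), and then you must instead argue symbol-by-symbol that the $x_{i'''}$-cost paid in $T^*$ against $M^y_{j+1}$ dominates the $x_{i'''}$-cost the surgery pays against $M$, and separately account for the $y_j$-suffix. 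This bookkeeping does close, but it is a real argument, not ``routine,'' and it is precisely what the maximality trick makes unnecessary. Also note a small slip: you write that the new pairs involve ``$x_{i-1}$ or $x_i$''; with your choice of $M^y_{j+1}$ as the target block, $x_{i-1}$ never appears (it would for the symmetric surgery toward $M^y_j$).
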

\begin{proof}
Assume that some $M$-block $M_{i}^{x}$ is not paired with any $M$-block of $y$, and let $i$ be maximal with this property. Note that $i < n+1$, as the last $M$-block of $x$ is always paired with the last $M$-block of $y$.
Then there is some $j \in [m]$ such that $M^x_i$ is paired with $y_j$, but $M^x_i$ is not paired with any part of $y$ outside $y_j$. By maximality of $i$ and planarity, $M^y_{j+1}$ is paired with $x_{i}$ or $M^x_{i+1}$, as otherwise $M^x_{i+1}$ is not paired with any $M^y_{j'}$.

We can find a cheaper traversal as follows. Consider the first time $t_1$ at which the traversal $T^*$ is simultaneously at the first symbol of $M^x_i$ and any symbol of $y_j$ (this exists since $M^x_i$ is paired to $y_j$, but to no part of $y$ outside $y_j$), and any time $t_2$ at which $T^*$ is at $M^y_{j+1}$ and $x_{i}$ or at $M^y_{j+1}$ and $M^x_{i+1}$. Between $t_1$ and $t_2$, $T^*$ has a cost of at least $\dDTW(y',M^\kappa)$, where $y'$ is any substring of $y_j$. By \claref{usefuldtw}.(3), this is at least $\kappa M/2$. 
We replace this part of $T^*$ by traversing (i) the remainder of $y_j$ with the first symbol of $M^x_i$, (ii) $M^x_i$ with the necessary part of $M^y_{j+1}$, and (iii) the necessary part of $x_i$ and $M^x_{i+1}$ with the current symbol in $y$, $M$. By \claref{usefuldtw}.(1), this incurs a cost of at most $\dDTW(x_i,M) + \dDTW(y_j,M) = \ell_\x M - s_\x + \ell_\y M - s_\y \le (\ell_\x + \ell_\y)M$. By our choice of $\kappa = 3(\ell_{\x}+\ell_{\y})$, we improve the cost of the traversal, contradicting optimality of~$T^*$. This shows that no vertex in $X$ is isolated, we argue similarly for vertices in $Y$.
\end{proof}

\begin{claim}
  $G_M$ contains no path of length 3.
\end{claim}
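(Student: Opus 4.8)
The plan is to argue by contradiction: assuming $G_M$ contains a path on four vertices, I modify the optimal traversal $T^*$ into a strictly cheaper traversal of $(x,y)$.

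\textbf{Normalizing the path.} A path of length $3$ in the bipartite graph $G_M$ uses two vertices of $X$ and two of $Y$ and begins on one of the two sides. Since the construction in \defref{C&Adtw}, \claref{usefuldtw}, and the Planarity claim are all symmetric under exchanging the names of the two input sequences, and since reversing the path toggles which side it begins on, I may assume the path is $M^x_{i_1} - M^y_{j_1} - M^x_{i_2} - M^y_{j_2}$ with $i_1 < i_2$. Applying the Planarity claim to the paired pairs $(M^x_{i_1},M^y_{j_1})$ and $(M^x_{i_2},M^y_{j_2})$ then forces $j_1 < j_2$. So $M^y_{j_1}$ is paired with both $M^x_{i_1}$ and $M^x_{i_2}$, and $M^x_{i_2}$ is paired with both $M^y_{j_1}$ and $M^y_{j_2}$.

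\textbf{A lower bound from $T^*$.} I pick times $\tau_1 \le \tau_2 \le \tau_3$ with $(a^*_{\tau_1},b^*_{\tau_1})\in M^x_{i_1}\times M^y_{j_1}$, $(a^*_{\tau_2},b^*_{\tau_2})\in M^x_{i_2}\times M^y_{j_1}$, and $(a^*_{\tau_3},b^*_{\tau_3})\in M^x_{i_2}\times M^y_{j_2}$; the ordering follows from monotonicity of traversals together with $i_1<i_2$ and $j_1<j_2$. On $[\tau_1,\tau_2]$ the $y$-coordinate stays inside $M^y_{j_1}$ (it is sandwiched between $b^*_{\tau_1}$ and $b^*_{\tau_2}$, both in that block) while the $x$-coordinate sweeps past all of the curve-blocks $x_{i_1},\dots,x_{i_2-1}$; symmetrically, on $[\tau_2,\tau_3]$ the $x$-coordinate stays inside $M^x_{i_2}$ while the $y$-coordinate sweeps past all of $y_{j_1},\dots,y_{j_2-1}$. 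Hence each of these curves is traversed entirely against symbols equal to $M$, and \claref{usefuldtw}.(1) gives that the cost of $T^*$ restricted to $[\tau_1,\tau_3]$ is at least
\[ \sum_{i=i_1}^{i_2-1}\dDTW(x_i,M)\ +\ \sum_{j=j_1}^{j_2-1}\dDTW(y_j,M). \]

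\textbf{A cheaper re-routing.} The $x$-positions visited on $[\tau_1,\tau_3]$ form the block $x[a^*_{\tau_1}..a^*_{\tau_3}]$, an alternating sequence of $M$-runs and the curve-blocks $x_{i_1},\dots,x_{i_2-1}$; likewise $y[b^*_{\tau_1}..b^*_{\tau_3}]$ contains the curve-blocks $y_{j_1},\dots,y_{j_2-1}$. Setting $g:=\min(i_2-i_1,j_2-j_1)\ge 1$, I build a traversal $\hat T$ of these two blocks that (i) matches $M$-runs against $M$-runs at cost $0$, (ii) aligns $x_{i_1+k}$ with $y_{j_1+k}$ for $0\le k<g$ at cost $\dDTW(x_{i_1+k},y_{j_1+k})$ each, and (iii) traverses each of the $|(i_2-i_1)-(j_2-j_1)|$ leftover curves against a single symbol $M$, at cost $\dDTW(\cdot,M)$ each. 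Comparing the cost of $\hat T$ with the lower bound above, the difference is
\[ \sum_{k=0}^{g-1}\Big(\dDTW(x_{i_1+k},M)+\dDTW(y_{j_1+k},M)-\dDTW(x_{i_1+k},y_{j_1+k})\Big), \]
and every summand is strictly positive by \claref{usefuldtw}, since $\dDTW(x_i,M)\ge \ell_{\x}M/2$ and $\dDTW(y_j,M)\ge \ell_{\y}M/2$ while $\dDTW(x_i,y_j)<(\ell_{\x}+\ell_{\y})M/2$; as $g\ge 1$ the sum is nonempty. Splicing $\hat T$ into $T^*$ in place of its restriction to $[\tau_1,\tau_3]$ therefore yields a traversal of $(x,y)$ strictly cheaper than $T^*$, contradicting optimality.

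\textbf{Main obstacle.} The delicate point is constructing $\hat T$ in step three without any curve ever being paired against a long run of $M$'s: since each $M$-block has length $\kappa=3(\ell_{\x}+\ell_{\y})$, such a pairing would cost $\Theta(\kappa M)$ and destroy the estimate. This is avoidable because the window $[\tau_1,\tau_3]$ begins and ends at positions \emph{inside} $M$-blocks — hence at $M$-valued symbols — and a single $M$-symbol can absorb arbitrarily many $M$-symbols at zero cost; together with the full $M$-blocks lying strictly between $M^x_{i_1}$ and $M^x_{i_2}$ (resp.\ between $M^y_{j_1}$ and $M^y_{j_2}$), this provides exactly the slack needed to realize the $g$ curve-alignments and to park each leftover curve on a single $M$. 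Verifying these bookkeeping details (and the straightforward glueing of $\hat T$ to the two untouched ends of $T^*$) is where essentially all the care in the proof goes.
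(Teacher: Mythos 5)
Your proof is correct, and it takes a mildly different route from the paper's. The key divergence is in the normalization step: the paper first invokes the preceding claim (``$G_M$ has no isolated vertices'') together with planarity to shrink the path to consecutive blocks $M^x_i - M^y_j - M^x_{i+1} - M^y_{j+1}$, so that the re-routing boils down to swapping a single pair $(x_i,y_j)$ from $M$-vs-curve pairing to a direct $x_i$-vs-$y_j$ traversal and the cost comparison is a one-liner ($\dDTW(x_i,M)+\dDTW(M,y_j) > \dDTW(x_i,y_j)$ by \claref{usefuldtw}.(1) and (2)). You instead keep the general path $M^x_{i_1}-M^y_{j_1}-M^x_{i_2}-M^y_{j_2}$ and build a re-routing $\hat T$ that realigns $g := \min(i_2-i_1,\,j_2-j_1)$ curve pairs, parking the $|(i_2-i_1)-(j_2-j_1)|$ leftover curves against a single $M$-symbol. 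This has the virtue of not depending on the no-isolated-vertices claim, so your proof of this claim is self-contained; the cost is the extra bookkeeping that you flag in your last paragraph — one has to check that the window $[\tau_1,\tau_3]$ begins and ends \emph{inside} $M$-blocks (so both endpoints are $M$-valued and the splice is well-defined), and that there is always at least one residual $M$-symbol in the shorter side to absorb the leftover curves, which your construction does provide since $a,a',b,b'\ge 1$. The paper's route, by paying the price of an extra application of the earlier claim, eliminates exactly this bookkeeping. Both reach the contradiction via the same inequality $\dDTW(x_i,M)+\dDTW(y_j,M) > \dDTW(x_i,y_j)$, so the core idea is shared; the difference is purely in whether one normalizes to consecutive blocks first.

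One small clarification worth adding: your ``lower bound from $T^*$'' tacitly sums two disjoint contributions (the $x$-cost on $[\tau_1,\tau_2]$ and the $y$-cost on $[\tau_2,\tau_3]$). This is sound because at time $\tau_2$ the $x$-coordinate sits inside $M^x_{i_2}$ (not inside any $x_i$) and the $y$-coordinate inside $M^y_{j_1}$ (not inside any $y_j$), so no step is counted twice; spelling this out would close the only gap a careful reader might question.
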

\begin{proof}
Assume that $G_M$ contains a path $M^x_i - M^y_j - M^x_{i'} - M^y_{j'}$. Without loss of generality we assume $i < i'$, the case $i > i'$ is symmetric. By planarity, we have $j < j'$. Since $G_M$ has no isolated vertices and by planarity, every $M^x_{i''}$ with $i \le i'' \le i'$ is paired with $M^y_j$, so we can assume that $i' = i+1$ (after replacing $i$ with $i'-1$). Similarly, we can assume $j' = j+1$, and the path is $M^x_i - M^y_j - M^x_{i+1} - M^y_{j+1}$.

We can find a cheaper traversal as follows. 
Consider any time $t_1$ at which the traversal $T^*$ is simultaneously at $M^x_i$ and $M^y_j$ (this exists since $M^x_i$ and $M^y_j$ are paired), and consider any time $t_2$ at which $T^*$ is simultaneously at $M^x_{i+1}$ and $M^y_{j+1}$. Between $t_1$ and $t_2$, $T^*$ traverses $x_i$ with (parts of) $M^y_j$, and $y_j$ with (parts of) $M^x_{i+1}$, which by \claref{usefuldtw}.(1) incurs a cost of at least $\dDTW(x_i,M) + \dDTW(M,y_j) \ge (\ell_\x + \ell_\y) M/2$. We replace this part of $T^*$ by traversing (i) the remaining parts of $M^x_i$ and $M^y_j$, (ii) $x_i$ and $y_j$ (in a locally optimal way), and (iii) the necessary parts of $M^x_{i+1}$ and $M^y_{j+1}$. This incurs a cost of $\dDTW(x_i,y_j) < (\ell_\x + \ell_\y) M/2$ (by \claref{usefuldtw}.(4)), which contradicts optimality of $T^*$. 
\end{proof}

By the above two claims, $G_M$ is a disjoint union of stars. By planarity and since $G_M$ has no isolated vertices, the leafs of any star in $G_M$ have to be consecutive $M$-blocks. Hence, we can write the components of $G_M$ as $C_1,\ldots,C_s$ with $C_k = \{M^x_{i_k}\} \cup \{M^y_{j_k}, M^y_{j_k+1}, \ldots, M^y_{j_k+d_k-1}\}$, and $C'_1,\ldots,C'_{s'}$ with $C'_k = \{M^y_{j'_k}\} \cup \{M^x_{i'_k}, M^x_{i'_k+1}, \ldots, M^x_{i'_k+d'_k-1}\}$.

\begin{claim} \label{cla:compsizes}
  We have $\sum_{k=1}^s d_k = m-s'+1$ and $ \sum_{k=1}^{s'} d'_k = n-s+1$.
\end{claim}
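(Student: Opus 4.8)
The plan is a double-counting argument on the vertex set of $G_M$. By the preceding claims, $G_M$ is a disjoint union of stars without isolated vertices, so the components listed as $C_1,\ldots,C_s$ (those whose center lies in $X$) together with $C'_1,\ldots,C'_{s'}$ (those whose center lies in $Y$) form a partition of $X \cup Y$. The only point requiring a moment's care is to confirm that every component of $G_M$ really is one of the listed stars and that the two lists are disjoint; this is immediate from the structural claims already established (disjoint union of stars, no isolated vertices). A component consisting of a single edge may in principle be assigned to either family, but this choice is irrelevant for the counting below.

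First I would count the vertices of $X = \{M^x_i \mid i \in [n+1]\}$. Each component $C_k$ with center in $X$ contributes exactly one vertex to $X$, namely its center $M^x_{i_k}$, while its $d_k$ leaves all lie in $Y$. Each component $C'_k$ with center in $Y$ contributes exactly $d'_k$ vertices to $X$, namely its leaves $M^x_{i'_k},\ldots,M^x_{i'_k+d'_k-1}$, while its single center lies in $Y$. Since the components partition $X \cup Y$, summing these $X$-contributions yields $n+1 = |X| = s + \sum_{k=1}^{s'} d'_k$, i.e.\ $\sum_{k=1}^{s'} d'_k = n - s + 1$.

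Symmetrically, counting the vertices of $Y = \{M^y_j \mid j \in [m+1]\}$: each $C_k$ contributes its $d_k$ leaves to $Y$ and each $C'_k$ contributes its single center to $Y$, so $m+1 = |Y| = \sum_{k=1}^{s} d_k + s'$, i.e.\ $\sum_{k=1}^{s} d_k = m - s' + 1$. Together these are exactly the two identities claimed. There is no real obstacle in this argument; it is a pure bookkeeping step, and the substantive work was done in the earlier claims that forced $G_M$ to be a disjoint union of stars with consecutive leaves and no isolated vertices.
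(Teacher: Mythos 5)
Your proof is correct and uses exactly the same double-counting argument as the paper: the components partition $X\cup Y$, so counting $|Y|=m+1$ gives $s'+\sum_k d_k=m+1$, and counting $|X|=n+1$ gives $s+\sum_k d'_k=n+1$. No differences worth noting.
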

\begin{proof}
  Since the components $C_1,\ldots,C_s$ and $C'_1,\ldots,C'_{s'}$ partition $G_M$, restricted to $Y$ we have $s' + \sum_{k=1}^s d_k = \sum_{k=1}^{s'} |C'_k \cap Y| + \sum_{k=1}^s |C_k \cap Y| = |Y| = m+1$. The second claim follows analogously.
\end{proof}

We construct an alignment by aligning the $x_i,y_j$ that lie between two consecutive components of $G_M$. More formally, we define an alignment $A$ by aligning $(i_k-1,j_k-1)$ (for all $k \in [s]$ with $i_k, j_k > 1$) and aligning $(i'_k-1,j'_k-1)$ (for all $k \in [s']$ with $i'_k,j'_k > 1$). Since $G_M$ has no isolated vertices, $A$ is a valid alignment. We have $|A| = s+s'-1$, since only the leftmost component of $G_M$ has $i_k=1$, $j_k=1$, $i'_k=1$, or $j'_k=1$, and all other components give rise to exactly one aligned pair.

Let us calculate the cost of $T^*$. Each $y_j$ that lies between the leafs of a star $C_k$ in $G_M$ (i.e., $j_k \le j < j_k + d_k$) has to be traversed together with (parts of) $M^x_{i_k}$. By \claref{usefuldtw}.(1), this incurs a cost of at least $\dDTW(M,y_j) = \ell_\y M - s_\y$. Likewise, each $x_i$ that lies between the leafs of a star $C'_k$ incurs a cost of at least $\ell_\x M - s_\x$.
For any $(i,j) \in A$, $x_i$ is traversed together with a substring of $M^\kappa y_j M^\kappa$, and $y_j$ is traversed together with a substring of $M^\kappa x_i M^\kappa$. Hence, there are $a,a',b,b' \ge 0$ such that we traverse $M^a x_i M^{a'}$ together with $M^b y_j M^{b'}$. 
By \claref{usefuldtw}.(4), this incurs a cost of at least $\dDTW(x_i,y_j)$. 
In total, the cost of the optimal traversal $T^*$ is 
$$ \dDTW(x,y) \ge \sum_{k=1}^s (d_k-1) (\ell_\y M - s_\y) + \sum_{k=1}^{s'} (d'_k - 1) (\ell_\x M - s_\x) + \sum_{(i,j) \in A} \dDTW(x_i,y_j). $$
By \claref{compsizes}, we have $\sum_{k=1}^s (d_k-1) = m - (s+s'-1) = m - |A|$. Similarly, $\sum_{k=1}^{s'} (d'_k-1) = n-|A| = (n-m)+(m-|A|)$. Additionally bounding $\ell_\y M -s_\y + \ell_\x M - s_\x \ge (\ell_\x + \ell_\y)M/2 > \max_{i,j} \dDTW(x_i,y_j)$, we obtain the desired inequality
$$ \dDTW(x,y) \ge (m-|A|) \max_{i,j} \dDTW(x_i,y_j) + (n-m) (\ell_\x M - s_\x) + \sum_{(i,j) \in A} \dDTW(x_i,y_j). \qedhere $$
\end{proof}

\section{Palindromic and Tandem Subsequences}
\label{sec:LPSLTS}

In this section, we prove quadratic-time hardness of longest palindromic subsequence (LPS) and longest tandem subsequence (LTS) by presenting reductions from LCS. This proves \thmref{LPSLTS}. We will use the following simple facts about LCS, where we regard LCS as a minimization problem by defining $\dLCS(x,y) := |x| + |y| - 2|\LCS(x,y)|$. In the whole section we let $\Sigma$ be any alphabet with $0,1 \in \Sigma$. 

\begin{fact} \label{fac:lcsgreedyStrong}
    Let $z,w$ be binary strings and $\ell,k \in \mathbb{N}_0$. Then we have (1) $\dLCS(1^k z, 1^k w) = \dLCS(z,w)$, (2) $\dLCS(1^k z, w) \ge \dLCS(z,w)-k$ and (3) $\dLCS(0^\ell z, 1^k w) \ge \min\{k, \dLCS(z,1^k w) + \ell \}$. We obtain symmetric statements by flipping all bits and by reversing all involved strings.
\end{fact}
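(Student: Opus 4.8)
The plan is to reduce all three statements, via the identity $\dLCS(x,y)=|x|+|y|-2|\LCS(x,y)|$, to inequalities about the \emph{lengths} of longest common subsequences, and to prove each of those by a short exchange argument on an optimal common subsequence together with the monotone matching it induces. For (1) I would argue exactly as in \claref{lcsgreedy}.(1): without loss of generality an LCS of $(1^k z,1^k w)$ matches the two leading blocks $1^k$ position by position, so $|\LCS(1^k z,1^k w)| = k + |\LCS(z,w)|$, and substituting into the definition of $\dLCS$ and cancelling the two extra $k$'s on each side gives $\dLCS(1^k z,1^k w)=\dLCS(z,w)$. For (2) it suffices to show $|\LCS(1^k z, w)| \le k + |\LCS(z,w)|$, since then
\[
\dLCS(1^k z, w) = k + |z| + |w| - 2|\LCS(1^k z, w)| \ge |z| + |w| - 2|\LCS(z,w)| - k = \dLCS(z,w) - k.
\]
To get the inequality, take an LCS $s$ of $(1^k z, w)$ and the monotone matching it induces into $1^k z$; write $s = 1^j s'$, where $s'$ is the part of $s$ matched into $z$, so $j \le k$ and $s'$ is a subsequence of $z$. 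Monotonicity on the $w$-side shows $s'$ is also a subsequence of $w$, hence $|s'| \le |\LCS(z,w)|$ and $|s| = j + |s'| \le k + |\LCS(z,w)|$.

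For (3) I would first dispose of the degenerate cases $\ell = 0$ and $k = 0$: in the former the right-hand side equals $\min\{k,\dLCS(z,1^k w)\}\le \dLCS(z,1^k w)$ and the claim is trivial, and in the latter the right-hand side is $\min\{0,\dLCS(z,w)+\ell\}=0$, so assume $\ell,k\ge 1$. Fix an LCS $s$ of $(0^\ell z, 1^k w)$ together with an optimal matching. If none of the $k$ leading ones of $1^k w$ is matched, then these $k$ symbols are unmatched and $\dLCS(0^\ell z, 1^k w) \ge k$. Otherwise some leading one is matched, so $s$ begins with a $1$; in $0^\ell z$ this $1$ is matched to a position lying strictly beyond the prefix $0^\ell$ (it is a $1$), and by monotonicity the entire prefix $0^\ell$ is then unmatched and all of $s$ is matched into $z$, so $s$ is a common subsequence of $z$ and $1^k w$. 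Hence $|\LCS(0^\ell z,1^k w)| = |s| \le |\LCS(z,1^k w)|$, and
\[
\dLCS(0^\ell z, 1^k w) = \ell + |z| + |1^k w| - 2|s| \ge \ell + \dLCS(z,1^k w).
\]
Taking the minimum over the two cases yields $\dLCS(0^\ell z,1^k w)\ge\min\{k,\dLCS(z,1^k w)+\ell\}$. The symmetric statements then follow for free, since $\dLCS$ is invariant under simultaneously flipping $0\leftrightarrow 1$ in both arguments and under simultaneously reversing both arguments.

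The only delicate point I anticipate is in the proof of (3): one must argue carefully that ``some leading one of $1^k w$ is matched'' really does force both that $s$ starts with a $1$ \emph{and} that this $1$ is matched inside $z$ rather than in the prefix $0^\ell$, and one must make sure the empty-$s$ and boundary cases are covered by the case split; beyond that, everything is routine bookkeeping with the definition $\dLCS(x,y)=|x|+|y|-2|\LCS(x,y)|$ and the monotonicity of matchings.
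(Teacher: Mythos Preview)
Your proof is correct and follows essentially the same approach as the paper: part (1) is the same reference to \claref{lcsgreedy}.(1), and your case split for (3) (``some leading one of $1^k w$ is matched'' vs.\ not) is exactly the paper's split on whether the LCS $s$ starts with $1$ or $0$. The only cosmetic difference is part (2): the paper simply cites the triangle-type bound from \facref{EDITgreedy} (applied to $\dLCS$, which is $\dEDIT$ with $\csubst=2$), whereas you reprove it inline via $|\LCS(1^k z,w)|\le k+|\LCS(z,w)|$; both are equally short and amount to the same exchange argument.
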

\begin{proof}
    (1) is a restatement of \claref{lcsgreedy}.(1). (2) follows from \facref{EDITgreedy}.(2). For (3), fix a \LCS\ $s$ of $(0^\ell z, 1^k w)$. If $s$ starts with a 0, then it does not contain the leading $1^k$ of the second argument, leaving at least $k$ symbols unmatched, so that $\dLCS(0^\ell z, 1^k w) \ge k$. Otherwise, if $s$ starts with a 1, then it does not contain the leading $0^\ell$ of the first argument, so that $|\LCS(0^\ell z, 1^k w)| = |\LCS(z, 1^k w)|$. Then we have $\dLCS(0^\ell z, 1^k w) = |0^\ell z| + |1^k w| - 2|\LCS(0^\ell z, 1^k w)| = \ell + |z| + |1^k w| - 2|\LCS(z, 1^k w)| = \ell + \dLCS(z,1^k w)$.
\end{proof}

\subsection{Longest Palindromic Subsequence}

We show that computing the length of the longest palindromic subsequence is essentially computationally equivalent to computing the length of the longest common subsequence of two strings. For completeness, we provide the following well known result which shows that LPS can be reduced to LCS in linear time. Recall that for a string $x$ we denote the reversed string by $\rev(x)$.

\begin{fact}[Folklore]
For any input $x\in \Sigma^*$, we have $|\LPS(x)| = |\LCS(x,\rev(x))|$.
\end{fact}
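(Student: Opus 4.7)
The plan is to establish both inequalities $|\LPS(x)| \le |\LCS(x,\rev(x))|$ and $|\LPS(x)| \ge |\LCS(x,\rev(x))|$ separately. The first is nearly immediate: if $z$ is any palindromic subsequence of $x$, then $z$ is a subsequence of $x$ by definition, and since $z = \rev(z)$ and any subsequence $w$ of $x$ induces $\rev(w)$ as a subsequence of $\rev(x)$, the string $z$ itself is a subsequence of $\rev(x)$. Hence $z$ is a common subsequence of $x$ and $\rev(x)$, and taking $z$ to be a longest palindromic subsequence gives the inequality.

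For the reverse direction I would fix an optimal LCS matching of length $L := |\LCS(x,\rev(x))|$ and translate the matched positions in $\rev(x)$ back to positions in $x$ via the bijection $j \mapsto |x|+1-j$. This produces indices $i_1 < i_2 < \ldots < i_L$ in $x$ paired with indices $j'_1 > j'_2 > \ldots > j'_L$ in $x$, satisfying $x[i_t] = x[j'_t]$ for every $t$. The key observation is that $f(t) := i_t - j'_t$ is strictly increasing, so $f$ changes sign at most once. Letting $t^*$ be the largest index with $f(t^*) \le 0$, the positions $i_1 < \ldots < i_{t^*} < j'_{t^*} < \ldots < j'_1$ form a strictly increasing position sequence in $x$ (collapsing the middle position if $i_{t^*} = j'_{t^*}$), and the matched characters spell a palindrome of length $2t^*$ or $2t^*-1$. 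Symmetrically, the positions $j'_L < \ldots < j'_{t^*+1} < i_{t^*+1} < \ldots < i_L$ yield a palindromic subsequence of length $2(L-t^*)$. A short arithmetic check shows that the maximum of these two lengths is always at least $L$, producing the required palindromic subsequence.

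The only subtlety is verifying that both interleaved position sequences are strictly increasing, which reduces to the inequalities $i_{t^*} \le j'_{t^*}$ (from the defining property of $t^*$) and $j'_{t^*+1} < i_{t^*+1}$ (from $f(t^*+1) > 0$), combined with the monotonicity of the $i_t$'s and $j'_t$'s. The boundary cases $t^* = 0$ and $t^* = L$ collapse one side to the empty palindrome, while the other side by itself has length $2L \ge L$, so the argument closes uniformly.
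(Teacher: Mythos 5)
Your proof is correct and uses essentially the same idea as the paper: translate the LCS positions in $\rev(x)$ back to $x$ via $j \mapsto |x|+1-j$, observe that the $i_t$'s increase while the $j'_t$'s decrease, and cut the matching at the point where the two sequences cross to extract a palindromic subsequence from one side. The paper fixes the cut at $m := \lfloor \ell/2 \rfloor + 1$ and case-splits on whether $a_m \le b_m$, whereas you locate the actual sign-change $t^*$ of $f(t) = i_t - j'_t$ and take the longer of the two halves; this is the same argument with a slightly different parameterization of the cut point, and your arithmetic check that $\max\{2t^*-1,\,2(L-t^*)\} \ge L$ closes it correctly.
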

\begin{proof}
Let $p$ be a palindromic subsequence of $x$. Then $p = \rev(p)$ is a common subsequence of $x$ and $\rev(x)$, yielding $|\LCS(x,\rev(x))| \ge |\LPS(x)|$.

For the other direction, let $c$ be any LCS of $x$ and $\rev(x)$ of length $\ell$. It remains to show that we can find a palindromic subsequence $p$ of $x$ with $|p| \ge \ell$ (observe that $c$ itself is not necessarily a palindrome). Note that $c$ gives rise to a sequence of pairs $(a_1, b_1),\dots, (a_\ell, b_\ell)$ such that $a_1 < \cdots < a_\ell$, $b_1 > \cdots > b_\ell$, and $c = (x[a_1],\dots,x[a_\ell]) = (x[b_1],\dots,x[b_\ell])$. Define $m:= \lfloor \frac{\ell}{2} \rfloor + 1$. If $a_m \le b_m$, then $a_1 < \cdots < a_m \le b_m < \cdots < b_1$ and hence 
$(x[a_1],\dots,x[a_{m-1}],x[a_m],x[b_{m-1}],\dots,x[b_1])$ 
is a palindromic subsequence of $x$ of length 
$2m - 1 = 2 \lfloor \frac{\ell}{2} \rfloor + 1 \ge \ell$. Otherwise, i.e., if $a_m > b_m$, then $b_\ell < \cdots < b_m < a_m < \cdots < a_\ell$ gives rise to the palindromic subsequence $(x[b_\ell], \dots, x[b_m], x[a_m], \dots, x[a_\ell])$ of $x$ with length $2(\ell - m + 1) = 2\ell - 2\lfloor \frac{\ell}{2} \rfloor \ge \ell$.
\end{proof}

To prove our lower bound for computing a longest palindromic subsequence of a string $x$, we present a simple reduction from LCS to LPS, and then appeal to our lower bound for \LCS, which is equivalent to $\Edit(1,1,0,2)$, see \thmref{EDIT}.

\begin{thm}\label{thm:lps}
On input $x,y\in \Sigma^*$, we can compute, in time $\bigOh(|x|+|y|)$, a string $z\in \Sigma^*$ and $\kappa \in \mathbb{N}$ such that $|\LPS(z)| = 3\kappa + 2|\LCS(x,y)|$. 
\end{thm}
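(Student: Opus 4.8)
The plan is to set $\kappa := |x|+|y|+1\in\mathbb N$ and $z := x\,1^{\kappa}\,0^{\kappa}\,1^{\kappa}\,\rev(y)$; this is computed in time $\Oh(|x|+|y|)$ and is a binary string, hence lies in $\Sigma^{*}$. Put $G := 1^{\kappa}0^{\kappa}1^{\kappa}$, so $\rev(G)=G$ and $\rev(z)=y\,G\,\rev(x)$. Since $|z|=|\rev(z)|=|x|+|y|+3\kappa$, the folklore identity $|\LPS(z)|=|\LCS(z,\rev(z))|$ turns the claim into $|\LCS(x\,G\,\rev(y),\,y\,G\,\rev(x))|=3\kappa+2|\LCS(x,y)|$, i.e.\ $\dLCS(z,\rev(z))=2\,\dLCS(x,y)$.

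The lower bound ``$\ge$'' is easy: writing $u:=\LCS(x,y)$, we have $u\preceq x$ and $u\preceq y$, so $u\,G\,\rev(u)$ is a palindromic common subsequence of $x\,G\,\rev(y)$ and $y\,G\,\rev(x)$ of length $2|u|+3\kappa$. For the upper bound ``$\le$'' I would take a longest common subsequence $s$ of the concatenation $z=x\concat G\concat\rev(y)$ and $\rev(z)$ and invoke \claref{lcsobs}: there is an ordered partition $\rev(z)=A_{1}A_{2}A_{3}$ with $|s|=|\LCS(A_{1},x)|+|\LCS(A_{2},G)|+|\LCS(A_{3},\rev(y))|$, so it suffices to bound this sum by $3\kappa+2|\LCS(x,y)|$ for \emph{every} partition of $\rev(z)=y\,G\,\rev(x)$, which I would do by cases on where the two cuts fall. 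In the main case both cuts lie inside the central $G$ of $\rev(z)$, so $A_{1}=y\,G[1..g_{1}]$, $A_{2}=G[g_{1}{+}1..3\kappa{-}g_{3}]$, $A_{3}=G[3\kappa{-}g_{3}{+}1..3\kappa]\,\rev(x)$; using the elementary estimates $|\LCS(uv,w)|\le|\LCS(u,w)|+|\LCS(v,w)|$, $|\LCS(G[a..b],G)|=b-a+1$, and $|\LCS(\rev(y),\rev(x))|=|\LCS(x,y)|$, one obtains $|\LCS(A_{1},x)|\le|\LCS(x,y)|+g_{1}$, $|\LCS(A_{2},G)|\le 3\kappa-g_{1}-g_{3}$, and $|\LCS(A_{3},\rev(y))|\le g_{3}+|\LCS(x,y)|$, summing to exactly $3\kappa+2|\LCS(x,y)|$. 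When a cut enters the $\rev(x)$-block of $\rev(z)$ deeply, the affected pieces are short and each of the three terms is $\le|x|$ or $\le|y|<\kappa$, so the total is far below $3\kappa$; and when the first cut enters the $y$-block, $A_{1}$ is a \emph{prefix} of $y$, so $|\LCS(A_{1},x)|\le|\LCS(x,y)|$, which is the key saving.

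The step I expect to be the real obstacle is the mixed sub-case where the first cut lies inside $y$ and the second inside $G$, so that $A_{2}=y''\,G[1..g]$ for a suffix $y''$ of $y$: a naive bound on $|\LCS(A_{2},G)|$ loses up to $|\LCS(y'',G)|\le|y''|$, and one must show that the central guard $1^{\kappa}0^{\kappa}1^{\kappa}$ cannot be matched against the ``junk'' chunk $y''$ profitably --- intuitively, any ones of $G$ diverted to $y''$ are paid for because $A_{3}$ then matches $\rev(y)$ correspondingly worse. Making this quantitative requires counting matched ones versus zeros against the three runs of $G$ and using $\kappa\ge|x|+|y|$ (so that, e.g., $3|y|\le3\kappa$), which should pin the total down to $3\kappa+2|\LCS(x,y)|$ in every case. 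Feeding the resulting reduction into \thmref{EDIT} (\SETH-hardness of $\dLCS=\EDIT(1,1,0,2)$ on binary strings) then proves the longest-palindromic-subsequence part of \thmref{LPSLTS}.
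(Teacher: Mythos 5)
Your overall strategy --- reducing to $|\LCS(z,\rev(z))|$ via the folklore identity, exhibiting $u\,G\,\rev(u)$ for the lower bound, and then bounding the LCS from above --- matches the paper's in outline, and your guard $1^\kappa0^\kappa1^\kappa$ is just the bit-flip of the paper's $0^\kappa1^\kappa0^\kappa$. But the core of the upper-bound argument is genuinely different from the paper's, and it has a real hole that you yourself flag.

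The paper does \emph{not} partition $\rev(z)$ into three pieces against the fixed factorization $z=x\,G\,\rev(y)$. Instead it takes the LPS $p$, splits it into halves $p_1$ and $\rev(p_1)$, lets $z_1$ be the shortest prefix of $z$ containing $p_1$ and $z_2$ the remainder, and bounds $\ell\le 2|\LCS(z_1,\rev(z_2))|$. After discarding the short-prefix/short-suffix cases, this forces $z_1 = x\,0^\kappa\,1^a$ and $\rev(z_2) = y\,0^\kappa\,1^{a'}$ --- i.e.\ the single cut is guaranteed to land inside the central run of the guard --- and then two applications of \facref{lcsgreedyStrong} finish the job. With a two-part split the ``mixed'' configurations you worry about simply never arise: there is only one cut, and the trivial-case filter pushes it into the middle run. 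Your three-part split has two cuts, and the mixed sub-cases (one cut in $y$ or $\rev(x)$, the other strictly inside $G$) are exactly where the elementary bounds $|\LCS(A_2,G)|\le\min\{|A_2|,3\kappa\}$ and $|\LCS(A_3,\rev(y))|\le|y|$ only yield $3\kappa+|\LCS(x,y)|+|y|$, overshooting the target by up to $|y|$. You correctly identify this as the obstacle and sketch the right intuition (charging ``diverted'' symbols of $G$ against the loss in the third piece), but this step is precisely what needs a precise ones-vs-zeros accounting and is not carried out; as written the proof does not close.

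A second, independent concern is the choice $\kappa=|x|+|y|+1$. The paper uses $\kappa=2(|x|+|y|+1)$, and its trivial-case inequality $\max\{|x|,|y|+1\}+\kappa<\tfrac32\kappa+|\LCS(x,y)|$ requires $\kappa/2>\max\{|x|,|y|+1\}$, which fails for your smaller $\kappa$ (take $|x|$ much larger than $|y|$). Your case analysis does not literally invoke that inequality, but the same slack is what you would need to absorb the $|y|$ (or $|x|$) overshoot in the mixed cases, so it is quite possible that your construction only works with a larger $\kappa$ --- or at minimum that you cannot re-use the paper's bookkeeping. In short: correct plan and correct intuition, but the crucial quantitative step is missing and the chosen $\kappa$ is likely too tight; the paper sidesteps both issues by using the palindrome-halves two-part split and the doubled $\kappa$.
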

\begin{proof}
Let $\kappa := 2(\ell_\x+\ell_\y+1)$, where $\ell_\x := |x|$, $\ell_\y := |y|$. 
We define 
\[ z:= \;\; x \; 0^\kappa \; 1^\kappa \; 0^\kappa \; \rev(y). \]
Clearly, $z$ and $\kappa$ can be computed in time $\bigOh(\ell_\x + \ell_\y)$. 
Let $s$ be a LCS of $x$ and $y$. Then $s 0^\kappa 1^\kappa 0^\kappa \rev(s)$ is a palindromic subsequence of $z$, which proves $|\LPS(z)| \ge 3\kappa + 2|\LCS(x,y)|$.

To show $|\LPS(z)| \le 3\kappa + 2|\LCS(x,y)|$, fix a LPS $p$ of $z$ and let $\ell$ be its length. We define $m:=\lfloor \frac{\ell}{2}\rfloor$ and denote by $p_1=(p[1],\dots,p[m])$ the first ``half'' of $p$. Let $z_1=(z[1],\dots,z[i])$ be the shortest prefix of $z$ that contains $p_1$ as a subsequence and let $z_2 := (z[i+1],\dots,z[|z|])$ be the remainder of $z$. Then $p_1$, which by definition equals $(p[\ell], \dots, p[\ell-m+1])$, is a subsequence of $\rev(z_2)$. This shows that if $\ell$ is even, then $\ell \le 2|\LCS(z_1, \rev(z_2))|$. If $\ell$ is odd, we may without loss of generality assume that $p[m+1] = z_2[1]$. Hence $\rev(p_1)$ is a subsequence of $z_2' := (z_2[2], \dots,z_2[|z_2|])$, so that $\ell \le 2|\LCS(z_1,\rev(z_2'))|+1$. It remains to show that (i) $|\LCS(z_1,\rev(z_2))|\le \frac{3}{2}\kappa + |\LCS(x,y)|$ and (ii) $|\LCS(z_1,\rev(z_2'))|\le \frac{3}{2}\kappa + |\LCS(x,y)| - \frac{1}{2}$.

Assume that $|z_1| \le \ell_\x + \kappa$ or $|z_2| \le (\ell_\y+1) + \kappa$, then by $|\LCS(x,y)| \le \min\{|x|,|y|\}$ we obtain that $|\LCS(z_1,\rev(z_2'))| \le |\LCS(z_1,\rev(z_2))| \le \max\{\ell_\x,\ell_\y+1\} + \kappa < \frac{3}{2}\kappa + |\LCS(x,y)|$. Hence without loss of generality, $z_1 = x 0^\kappa 1^a$ and $z_2 = 1^{a'} 0^\kappa \rev(y)$ with $a'\ge 1$, where we assume that $a'\ge a$ since the other case is symmetric. Note that (i) and (ii) are equivalent to $\dLCS(z_1,\rev(z_2))\ge \dLCS(x,y)$ and $\dLCS(z_1,\rev(z_2')) \ge \dLCS(x,y)$, respectively. We compute
\begin{align*}
\dLCS(z_1,\rev(z_2)) & = \dLCS(x \, 0^\kappa\, 1^a, y \,0^\kappa \, 1^{a'}) & & \\
& = \dLCS(x \, 0^\kappa , y \, 0^\kappa \, 1^{a'-a}) & &\text{(by \facref{lcsgreedyStrong}.(1))} \\
& \ge \min\{\kappa, \dLCS(x \, 0^\kappa , y \, 0^\kappa )\} & &\text{(by \facref{lcsgreedyStrong}.(3))} \\
& = \min\{\kappa, \dLCS(x, y)\} = \dLCS(x,y). & &\text{(by \facref{lcsgreedyStrong}.(1))}.
\end{align*}
By replacing $a'$ by $a'-1\ge 0$, we obtain $\dLCS(z_1,\rev(z_2')) \ge \dLCS(x,y)$ by the same calculation. This yields $|\LPS(z)| = \ell \le 3 \kappa + 2|\LCS(x,y)|$, as desired.

\end{proof}

\subsection{Longest Tandem Subsequence}

As for LPS, our lower bound for LTS follows from a simple reduction from LCS and appealing to our lower bound for $\LCS$ of Theorem~\ref{thm:EDIT}.

\begin{thm}\label{thm:lts}
On input $x,y \in \Sigma^*$, we can compute, in time $\bigOh(|x|+|y|)$, a string $z \in \Sigma^*$ and $\kappa \in \mathbb{N}$ such that $|\LTS(z)| = 4\kappa + 2|\LCS(x,y)|$.
\end{thm}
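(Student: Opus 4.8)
The plan is to follow the template of the LPS reduction in \thmref{lps}. Write $\ell_\x := |x|$, $\ell_\y := |y|$, set $\kappa := 2(\ell_\x + \ell_\y + 1)$, and define
\[
  z := 1^\kappa \, x \, 0^\kappa \, 1^\kappa \, y \, 0^\kappa .
\]
Clearly $z$ is a binary string computable in time $\bigOh(\ell_\x + \ell_\y)$. The first step is a general correspondence between LTS and LCS: for \emph{any} string $z'$ one has $|\LTS(z')| = 2 \max\{\, |\LCS(z_1,z_2)| : z' = z_1 z_2 \,\}$, the maximum ranging over all decompositions of $z'$ into a prefix $z_1$ and a suffix $z_2$. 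Indeed, if $c$ is a common subsequence of $z_1$ and $z_2$ then $cc$ occurs in $z_1 z_2 = z'$, which gives ``$\ge$''; conversely a longest tandem subsequence $ww$ of $z'$ occupies positions $i_1 < \dots < i_k < i_{k+1} < \dots < i_{2k}$, so cutting $z'$ directly after position $i_k$ exhibits $w$ as a common subsequence of the resulting prefix and suffix, which gives ``$\le$''. Hence it suffices to prove $\max\{\, |\LCS(z_1,z_2)| : z = z_1 z_2 \,\} = 2\kappa + |\LCS(x,y)|$.

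For the lower bound I would take the ``middle'' split $z_1 = 1^\kappa x 0^\kappa$, $z_2 = 1^\kappa y 0^\kappa$; peeling the common leading block $1^\kappa$ and the common trailing block $0^\kappa$ with \facref{lcsgreedyStrong}.(1) (and its reversed/bit-flipped variant) yields $\dLCS(z_1,z_2) = \dLCS(x,y)$, so $|\LCS(z_1,z_2)| = 2\kappa + |\LCS(x,y)|$ using $|z|=4\kappa+\ell_\x+\ell_\y$. For the upper bound I would show that \emph{every} split satisfies $\dLCS(z_1,z_2) \ge \dLCS(x,y)$, which by $|z| = 4\kappa + \ell_\x + \ell_\y$ is equivalent to $|\LCS(z_1,z_2)| \le 2\kappa + |\LCS(x,y)|$. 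This is a case analysis on where the cut falls, and up to the symmetry that reverses all strings and complements all bits (which turns the construction for $x,y$ into the one for the reverse-complements of $y$ and $x$, a pair with the same quantity $\dLCS(x,y)$) there are three cases. If the cut lies in one of the two outer blocks $1^\kappa$ or $0^\kappa$, then $\min\{|z_1|,|z_2|\} \le \kappa$, hence $|\LCS(z_1,z_2)| \le \kappa \le 2\kappa + |\LCS(x,y)|$. If the cut lies inside the $x$-block, say $z_1 = 1^\kappa x[1..j]$, then discarding the leading $1^\kappa$ via \facref{lcsgreedyStrong}.(2) costs at most $\kappa$, and combined with $|\LCS(x[1..j],z_2)| \le j \le \ell_\x$ this gives $\dLCS(z_1,z_2) \ge 2\kappa + \ell_\y - \ell_\x \ge \dLCS(x,y)$ by the choice of $\kappa$.

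The remaining --- and most delicate --- case is a cut inside the $0^\kappa$ block separating $x$ from the middle $1^\kappa$, i.e.\ $z_1 = 1^\kappa x 0^{\kappa - t}$ and $z_2 = 0^{t} 1^\kappa y 0^\kappa$ with $0 \le t \le \kappa$; here peeling blocks naively loses $2t$ and does not close. The key is to invoke \facref{lcsgreedyStrong}.(3) on the leading $0^t$ of $z_2$: this gives $\dLCS(z_1,z_2) \ge \min\{\kappa,\; \dLCS(1^\kappa y 0^\kappa, 1^\kappa x 0^{\kappa-t}) + t\}$, and the additive ``$+t$'' is exactly what compensates the $t$ lost when, after removing the common leading $1^\kappa$ and the common trailing $0^{\kappa-t}$ with \facref{lcsgreedyStrong}.(1), one estimates $\dLCS(y 0^{t}, x) \ge \dLCS(x,y) - t$ via \facref{lcsgreedyStrong}.(2). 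Thus $\dLCS(z_1,z_2) \ge \min\{\kappa, \dLCS(x,y)\} = \dLCS(x,y)$ since $\kappa \ge \ell_\x + \ell_\y \ge \dLCS(x,y)$; the symmetric case (a cut inside the middle $1^\kappa$) follows from the reverse-complement symmetry. Together these bounds give $|\LTS(z)| = 4\kappa + 2|\LCS(x,y)|$, and combining this reduction with the quadratic-time hardness of LCS on binary strings from \thmref{EDIT} (recall $\LCS \equiv \EDIT(1,1,0,2)$) yields the LTS part of \thmref{LPSLTS}. The main obstacle is precisely this third case: obtaining a lower bound on $\dLCS(z_1,z_2)$ that stays tight when the cut is only a few positions away from the ``middle'' split, which is why the sharper form of \facref{lcsgreedyStrong}.(3) is needed rather than the cruder length-based estimates that suffice elsewhere.
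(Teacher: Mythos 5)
Your argument is correct and is essentially the paper's own proof with the roles of $0$ and $1$ swapped and a larger (but equally valid) choice of $\kappa$: the key middle-cut case is handled by exactly the same chain --- \facref{lcsgreedyStrong}.(3), then peeling the common $1^\kappa$ prefix and $0^{\kappa-t}$ suffix via \facref{lcsgreedyStrong}.(1), then \facref{lcsgreedyStrong}.(2) --- that the paper uses, and the remaining cases differ only in that you split what the paper handles in a single length-based bound into separate subcases. The up-front identity $|\LTS(z')|=2\max\{|\LCS(z_1,z_2)|:z'=z_1z_2\}$ is a slightly cleaner packaging of the reduction step that the paper carries out inline.
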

\begin{proof}
Let $\kappa := \ell_\x+\ell_\y$, where $\ell_\x := |x|$ and $\ell_\y := |y|$. We define 
\[ z:= \;\; 0^\kappa \; x \; 1^\kappa \; 0^\kappa \;  y \; 1^\kappa . \]
Clearly, $z$ can be computed in time $\bigOh(\ell_\x + \ell_\y)$. Let $s$ be a LCS of $x$ and $y$. Then $t:= t'\, t'$ with $t':=0^\kappa s 1^\kappa$ is a tandem subsequence of $z$. Hence, we have $|\LTS(z)| \ge |t| = 4\kappa + 2|\LCS(x,y)|$.

To show $|\LTS(z)| \le 4\kappa + 2|\LCS(x,y)|$, fix a LTS $t=t' \,t'$ of $z$. Let $i$ be the smallest index such that $t'$ is a subsequence of $z_1 := (z[1],\dots,z[i])$ and let $z_2 := (z[i+1],\dots,z[|z|])$. By choice of $t$, $t'$ is also a subsequence of $z_2$, so that $|\LTS(z)| = 2|t'| \le 2 |\LCS(z_1, z_2)|$. Thus, it remains to prove that $2|\LCS(z_1,z_2)| \le 4\kappa + 2|\LCS(x,y)|$.

Assume that $|z_1| \le \kappa + \ell_\x$ or $|z_2| \le \kappa + \ell_\y$. Then, using $|\LCS(x,y)| \le \min\{|x|,|y|\}$, we conclude that $2|\LCS(z_1,z_2)| \le 2\kappa + 2(\ell_\x+\ell_\y) \le 4\kappa + 2|\LCS(x,y)|$. 

Hence, without loss of generality, we have (i) $z_1 = 0^\kappa x 1^\ell$ and $z_2 = 1^{\ell'} 0^\kappa y 1^\kappa$ or (ii) $z_1 = 0^\kappa x 1^\kappa 0^\ell$ and $z_2 = 0^{\ell'} y 1^\kappa$, for some $\ell, \ell'$ with $\ell + \ell' = \kappa$. We only consider case (i), since case (ii) is symmetric.
Note that $2|\LCS(z_1,z_2)| \le 4\kappa + 2|\LCS(x,y)|$ is equivalent to $\dLCS(z_1,z_2) \ge \dLCS(x,y)$. We obtain
\begin{align*}
\dLCS(z_1,z_2) & = \dLCS(0^\kappa x 1^\ell, 1^{\ell'} 0^\kappa y 1^\kappa) & & \\
& \ge \min\{ \kappa, \dLCS(0^\kappa x 1^\ell, 0^\kappa y 1^\kappa) + \ell' \} & & \text{(by \facref{lcsgreedyStrong}.(3))} \\
& = \min\{ \kappa, \dLCS(x 1^\ell, y 1^\kappa) + \ell' \} & & \text{(by \facref{lcsgreedyStrong}.(1))} \\
& = \min\{ \kappa, \dLCS(x , y 1^{\kappa-\ell}) + \ell' \} & & \text{(by \facref{lcsgreedyStrong}.(1))} \\
& \ge \min\{ \kappa, \dLCS(x , y) - (\kappa - \ell) + \ell' \} & & \text{(by \facref{lcsgreedyStrong}.(2))} \\
& = \min\{ \kappa, \dLCS(x , y) \} = \dLCS(x,y), 
\end{align*}
which proves the desired inequality $2|\LCS(z_1,z_2)| \le 4\kappa + 2|\LCS(x,y)|$.
\end{proof}

\section{Conclusion}

We prove conditional lower bounds for natural polynomial-time problems: Edit distance for general operation costs, including its special case longest common subsequence, dynamic time warping, longest palindromic subsequence, and longest tandem subsequence. Our results give strong evidence that the known algorithms for these problems are optimal up to lower order factors, even restricted to binary strings and one-dimensional curves, respectively.
We hope that the underlying framework will find application in hardness proofs for further similarity measures, and that the studied problems serve as starting points for further reductions. 

It remains an open question whether constant-factor approximations running in strongly subquadratic time can be ruled out for the above problems assuming SETH. Furthermore, most polynomial-time lower bounds show quadratic-time barriers, and it is challenging to prove matching SETH-based lower bounds for problems with, say, cubic or $\bigOh(n^{3/2})$-time algorithms (only few results are known in this direction~\cite{abboud_quadratic_2015,PatrascuW10}).

\bibliographystyle{plain}
\bibliography{edit}

\end{document}